\definecolor{blueviolet}{rgb}{0.2, 0.2, 0.6}
\definecolor{webgreen}{rgb}{0,.5,0}
\definecolor{webbrown}{rgb}{.6,0,0}
\DeclareFixedFont{\ttb}{T1}{txtt}{bx}{n}{9} 
\DeclareFixedFont{\ttm}{T1}{txtt}{m}{n}{9}  
\definecolor{deepblue}{rgb}{0,0,0.5}
\definecolor{deepred}{rgb}{0.6,0,0}
\definecolor{deepgreen}{rgb}{0,0.5,0}
\newcommand\pythonstyle{\lstset{
language=Python,
basicstyle=\ttm,
morekeywords={self},              
keywordstyle=\ttb\color{deepblue},
emph={MyClass,__init__},          
emphstyle=\ttb\color{deepred},    
stringstyle=\color{deepgreen},
frame=tb,                         
showstringspaces=false
}}
\newcommand\pythoninline[1]{{\pythonstyle\lstinline!#1!}}
\newtheorem{theorem}{Theorem}
\newtheorem{corollary}{Corollary}
\newtheorem{lemma}{Lemma}
\newtheorem{assumption}{Assumption}
\newcommand{\vct}[1]{\boldsymbol{\mathbf{#1}}}
\newcommand{\mtx}[1]{\boldsymbol{\mathbf{#1}}}
\newcommand{\half}{\tfrac{1}{2}}
\newcommand{\uvec}{\hat{u}}
\newcommand{\indicator}{\mathds{1}}
\DeclareMathOperator*{\sign}{sign}
\newtheorem{proposition}{Proposition}
\DeclareMathOperator{\Tr}{tr}
\DeclareMathOperator*{\E}{{\mathbb{E}}}
\newcommand{\ketbra}[2]{\lvert #1 \rangle \! \langle #2 \rvert}
\newcommand{\norm}[1]{\left\lVert#1\right\rVert}
\algrenewcommand\alglinenumber[1]{\sf\scriptsize\color{blue}{#1}}
\algrenewcommand\algorithmicrequire{\textbf{Input:}}
\algrenewcommand\algorithmicensure{\textbf{Output:}}
\begin{document}

\title{Provably efficient machine learning for quantum many-body problems}
\date{\today}
\author{Hsin-Yuan Huang}
\affiliation{Institute for Quantum Information and Matter and \\ Department of Computing and Mathematical Sciences, Caltech, Pasadena, CA, USA}
\author{Richard Kueng}
\affiliation{Institute for Integrated Circuits, Johannes Kepler University Linz, Austria}
\author{Giacomo Torlai}
\affiliation{AWS Center for Quantum Computing, Pasadena, CA, USA}
\author{Victor V. Albert}
\affiliation{Joint Center for Quantum Information and Computer Science, National Institute of Standards and Technology and University of Maryland, College Park, MD, USA}
\author{John Preskill}
\affiliation{Institute for Quantum Information and Matter and \\ Department of Computing and Mathematical Sciences, Caltech, Pasadena, CA, USA}
\affiliation{AWS Center for Quantum Computing, Pasadena, CA, USA}

\begin{abstract}
Classical machine learning (ML) provides a potentially powerful approach to solving challenging quantum many-body problems in physics and chemistry. However, the advantages of ML over more traditional methods have not been firmly established. 
In this work, we prove that classical ML algorithms can efficiently predict ground state properties of gapped Hamiltonians in finite spatial dimensions, after learning from data obtained by measuring other Hamiltonians in the same quantum phase of matter.
In contrast, under widely accepted complexity theory assumptions, classical algorithms that do not learn from data cannot achieve the same guarantee. We also prove that classical ML algorithms can efficiently classify a wide range of quantum phases of matter. Our arguments are based on the concept of a classical shadow, a succinct classical description of a many-body quantum state that can be constructed in feasible quantum experiments and be used to predict many properties of the state. Extensive numerical experiments corroborate our theoretical results in a variety of scenarios, including Rydberg atom systems, 2D random Heisenberg models, symmetry-protected topological phases, and topologically ordered phases.
\end{abstract}

\maketitle

\section{Introduction}

\begin{figure}[t]
    \centering
    \includegraphics[width=1.0\linewidth]{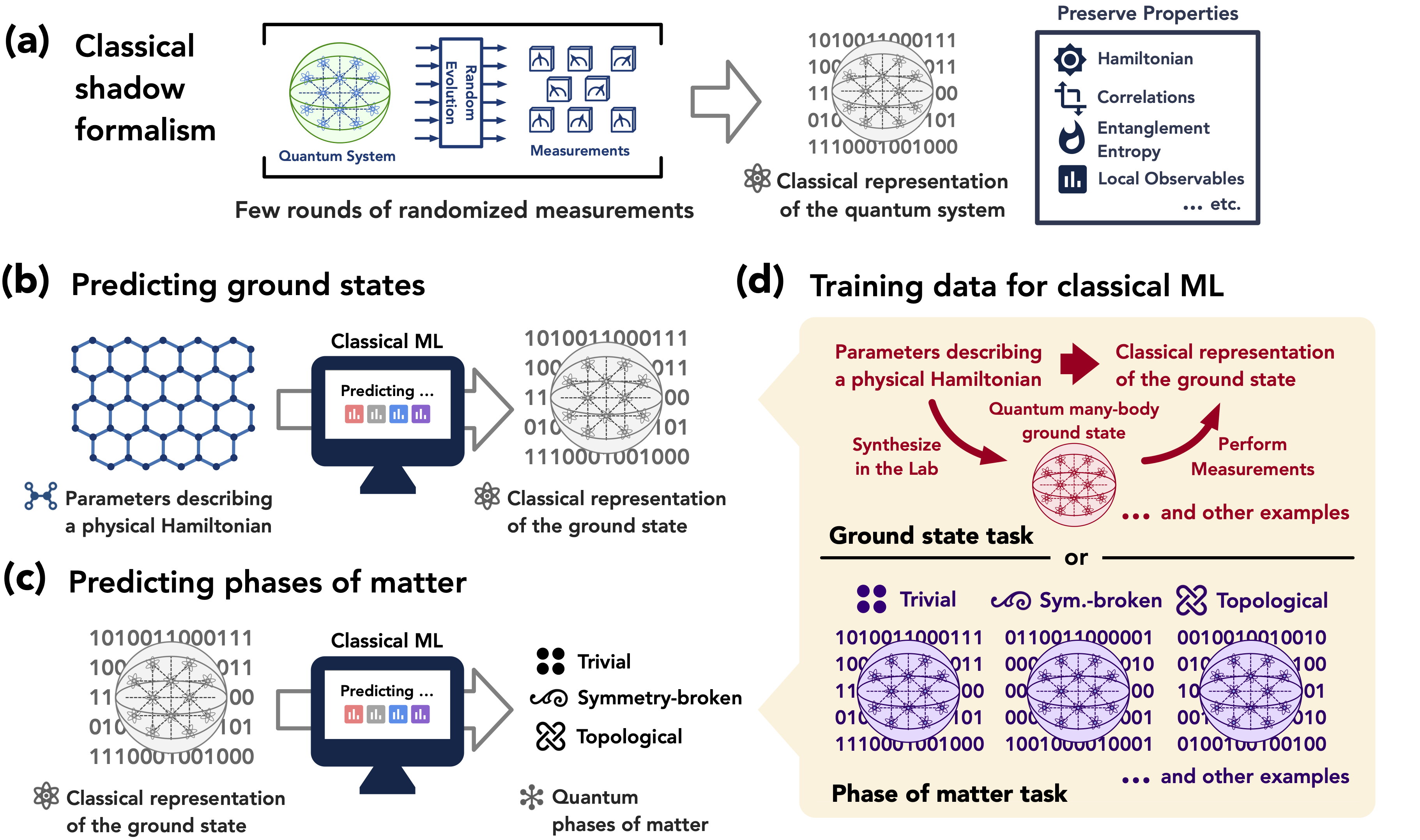}
    \caption{
    (a) \textsc{Efficient quantum-to-classical conversion.} The classical shadow of a quantum state, constructed by measuring very few copies of the state, can be used to predict many properties of the state with a rigorous performance guarantee.
    (b) \textsc{Predicting ground state properties.} 
    After training on data obtained in quantum experiments, a classical ML model predicts a classical representation of the ground state $\rho(x)$ of the Hamiltonian $H(x)$ for parameters $x$ spanning the entire phase. This representation yields estimates of the properties of $\rho(x)$, avoiding the need to run exhaustive classical computations or quantum experiments.
    (c) \textsc{Classifying quantum phases.}
    After training, a classical ML receives a classical representation of a quantum state, and predicts the phase from which the state was drawn.
    (d) \textsc{Training data.} 
    For predicting ground states, the classical ML receives a classical representation of $\rho(x)$ for each value of $x$ sampled during training. 
    For predicting quantum phases of matter, the training data consists of classical representations of quantum states accompanied by labels identifying the phase to which each state belongs.
    }
    \label{fig:main}
\end{figure}

Solving quantum many-body problems, such as finding ground states of quantum systems, has far-reaching consequences for physics, materials science, and chemistry. While classical computers have facilitated many profound advances in science and technology, they often struggle to solve such problems.
Powerful methods, such as density functional theory~\cite{HohenbergKohn, NobelKohn}, quantum Monte Carlo~\cite{CEPERLEY555,SandvikSSE,becca_sorella_2017} and density-matrix renormalization group~\cite{DMRG1,DMRG2}, have enabled solutions to certain restricted instances of many-body problems, but many general classes of problems remain outside the reach of even the most advanced classical algorithms.

Scalable fault-tolerant quantum computers will be able to solve a broad array of quantum problems, but are unlikely to be available for years to come. Meanwhile, how can we best exploit our powerful classical computers to advance our understanding of complex quantum systems?
Recently, classical machine learning (ML) techniques have been adapted to investigate problems in quantum many-body physics~\cite{CarleoRMP,APXReview}, with promising results~\cite{dassarma2017, carrasquilla2017nature,Carleo_2017,torlai_learning_2016,Nomura2017, evert2017nature,leiwang2016,gilmer2017neural,torlai_Tomo,vargas2018extrapolating,schutt2019unifying,Glasser2018,rodriguez2019identifying,qiao2020orbnet,choo_fermionicnqs2020,kawai2020predicting,moreno2020deep,Kottmann2021}.
So far these approaches are mostly heuristic, reflecting the general paucity of rigorous theory in ML. While shown to be effective in some intermediate-size experiments~\cite{Bohrdt2018,Rem2018,torlai_rydberg19}, these methods are generally not backed by convincing theoretical arguments to ensure good performance, particularly for problem instances where traditional classical algorithms falter.

In general, simulating quantum many-body physics is hard for classical computers, because accurately describing an $n$-qubit quantum system may require an amount of classical data that is exponential in $n$. 
In prior work, some of us addressed this bottleneck using  
\emph{classical shadows} --- succinct classical descriptions of quantum many-body states that can be used to accurately predict a wide range of properties with rigorous performance guarantees \cite{huang2020predicting, paini2019approximate}. 
Furthermore, this quantum-to-classical conversion technique can be readily implemented in various existing quantum experiments \cite{Struchalin2021Shadows,Elben2020Entanglement,choi2021emergent}. 
Classical shadows open new opportunities for addressing quantum problems using classical methods such as ML. In this paper, we build on the classical shadow formalism and devise {\color{black} polynomial-time} classical ML algorithms for quantum many-body problems which are supported by rigorous theory.

We consider two applications of classical ML, indicated in Figure~\ref{fig:main}. The first application we examine is learning to predict classical representations of quantum many-body ground states. We consider a family of Hamiltonians, where the Hamiltonian $H(x)$ depends smoothly on $m$ real parameters (denoted by $x$).
The ML algorithm is trained on a set of training data consisting of sampled values of $x$, each accompanied by the corresponding classical shadow for the ground state $\rho(x)$ of $H(x)$. This training data could be obtained from either classical simulations or quantum experiments.
During the prediction phase, the ML algorithm predicts a classical representation of $\rho(x)$ for new values of $x$ different from those in the training data. Ground state properties can then be estimated using the predicted classical representation.

This learning algorithm is efficient, provided that the ground state properties to be predicted do not vary too rapidly as a function of $x$. Indeed, sufficient upper bounds on the gradient can be derived for any family of gapped geometrically-local Hamiltonians in any finite spatial dimension, if the property of interest is the expectation value of a sum of few-body observables. The conclusion is that any such property can be predicted with a small average error, where the amount of training data and the classical computation time are polynomial in $m$ and linear in the system size. Furthermore, we show that classical algorithms that do not learn from data cannot provide the same rigorous guarantee without violating widely accepted complexity-theoretic conjectures.
This is a manifestation of the advantage of ML algorithms with data over those without data \cite{huang2020power}.

{\color{black} If the training data are obtained from quantum experiments, then one might choose to learn about properties of $\rho(x)$ for a new input $x$ by conducting new experiments rather than by using the classical ML to generalize from the training data. However, ML could be far more convenient in some cases, especially when changing some parameters may even require costly re-engineering of the entire experiment.
ML algorithms open up the possibility of efficiently and accurately predicting properties of quantum states that are extremely challenging to prepare and measure in the laboratory.
}

{\color{black}
Classical ML could be used to generalize from training data that are obtained from either quantum experiments or classical simulations; the same rigorous performance guarantees apply in either case. Even if the training data are generated classically, it could be more efficient and more accurate to use ML to predict properties for new values of the input $x$, rather than doing new simulations which could be computationally very demanding and of unverified reliability. Promising insights into quantum many-body physics are already being obtained using classical ML based on classical simulation data \cite{dassarma2017, Nomura2017,Carleo_2017,zhang2017machine,zhang2020interpreting, gilmer2017neural,vargas2018extrapolating,schutt2019unifying,qiao2020orbnet,choo_fermionicnqs2020,kawai2020predicting}. Our rigorous analysis identifies general conditions that guarantee the success of classical ML models, and elucidates the advantages of classical ML models over non-ML algorithms. These results enhance the prospects for interpretable ML techniques \cite{ribeiro2016should, arrieta2020explainable, zhang2020interpreting}  to further shed light on quantum many-body physics.
}

In the second application we examine, the goal is to classify quantum states of matter into phases \cite{Read2012} in a supervised learning scenario. Suppose that during training we are provided with sample quantum states which carry labels indicating whether each state belongs to phase $A$ or phase $B$. Our goal is to predict the phase label for new quantum states that were not encountered during training. We assume that, during both the learning and prediction stages, each quantum state is represented by its classical shadow, which could be obtained either from a classical computation or from an experiment on a quantum device.
The classical ML, then, trains on labeled classical shadows, and learns to predict labels for new classical shadows.

We assume that the $A$ and $B$ phases can be distinguished by a nonlinear function of marginal density operators of subsystems of constant size. This assumption is reasonable because we expect the phase to be revealed in subsystems which are larger than the correlation length, but independent of the total system size. We show that if such a function exists, a classical ML can learn to distinguish the phases using an amount of training data and classical processing which are polynomial in the system size. We do not need to know anything about this nonlinear function in advance, apart from its existence.

In what follows, we review the classical shadow formalism \cite{huang2020predicting}, and use this formalism to derive rigorous guarantees for ML algorithms in predicting ground state properties and classifying quantum phases of matter.
We also describe numerical experiments in a wide range of physical systems to support our theoretical results.

\section{Constructing efficient classical representations of quantum systems}

We begin with an overview of the randomized measurement toolbox \cite{Ohliger_2013,VanEnk2012,Elben2018,evans2019scalable,huang2020predicting,paini2019approximate}, relegating  further details to Appendix~\ref{sec:classical-shadows}.
We approximate an $n$-qubit quantum state $\rho$ by performing randomized single-qubit Pauli measurements on $T$ copies of $\rho$.
That is, we measure every qubit of the unknown quantum state $\rho$ in a random Pauli basis $X, Y$ or $Z$ to yield a measurement outcome of $\pm 1$.
Collapse of the wavefunction implies that this measurement procedure transforms $\rho$ into a random pure product state
$\ket{s^{(t)}} = \bigotimes_{i=1}^n \ket{s^{(t)}_i}$, where $\ket{s^{(t)}_i} \in \{\ket{0}, \ket{1}, \ket{+}, \ket{-}, \ket{\mathrm{i}+}, \ket{\mathrm{i}-}\}$ are eigenstates of the selected Pauli matrices.
Performing one randomized measurement grants us classical access to one such snapshot. 
Performing a total of $T$ randomized measurements grants us access to an entire collection 
$S_T (\rho) = \big\{ |s_i^{(t)} \rangle:\; i \in \{1,\ldots,n\},\; t \in \{1,\ldots,T\}\big\}$.
Each element is a highly structured single-qubit pure state,
and there are $nT$ of them in total. So, $3 n T$ bits suffice to store the entire collection in classical memory.
The randomized measurements can be performed in actual physical experiments or through classical simulations.
Resulting data can then be used to approximate the underlying $n$-qubit state $\rho$:
\begin{equation} \label{eq:sigma-T-shadow}
  \rho\approx\sigma_{T}(\rho)=\frac{1}{T}\sum_{t=1}^{T}\sigma_{1}^{(t)}\otimes\dots\otimes \sigma_{n}^{(t)}
  \quad \text{where} \quad
  \sigma_{i}^{(t)}=3\ketbra{s_{i}^{(t)}}{s_{i}^{(t)}}-\mathbb{I},
\end{equation}
and $\mathbb{I}$ denotes the $2 \times 2$ identity matrix.
This \emph{classical shadow} representation~\cite{huang2020predicting,paini2019approximate} exactly reproduces the global density matrix in the limit $T \to \infty$, but $T = \mathcal{O}(\mathrm{const}^r \log(n) / \epsilon^2)$ already provides an $\epsilon$-accurate approximation of \emph{all} reduced $r$-body density matrices (in trace distance).
This, in turn, implies that we can use $\sigma_T (\rho)$ to predict any function that depends on only reduced  density matrices, such as expectation values of (sums of) local observables and (sums of) entanglement entropies of small subsystems. 
Classical storage and postprocessing cost also remain  tractable in this regime. 
To summarize, the classical shadow formalism equips us with an efficient quantum-to-classical converter that allows classical machines to efficiently and reliably estimate subsystem properties of any quantum state $\rho$.

\section{Predicting ground states of quantum many-body systems}

We consider the task of predicting ground state representations of quantum many-body Hamiltonians in finite spatial dimensions. Suppose that a family of geometrically local, $n$-qubit Hamiltonians $\{H(x):\; x \in [-1,1]^m\}$ is parametrized by a classical variable $x$. That is, $H(x)$ smoothly maps a bounded $m$-dimensional vector $x$ (parametrization) to a Hermitian matrix of size $2^{n} \times 2^{n}$ ($n$-qubit Hamiltonian).
We do not impose any additional structure on this mapping; in particular, we do not assume knowledge about how the physical Hamiltonian depends on the parameterization. 
The goal is to learn a model $\hat{\sigma}(x)$ that can predict properties of the ground state $\rho (x)$ associated with Hamiltonian. 
This problem arises in many practical scenarios. Suppose diligent experimental effort has produced experimental data for ground state properties of various physical systems. 
We would like to use this data to train an ML model that predicts ground state representations of hitherto unexplored physical systems.

\subsection{An ML algorithm with rigorous guarantee}

We will prove that a classical ML algorithm can predict classical representations of ground states after training on data belonging to the same quantum phase of matter. Formally, we consider a smooth family of Hamiltonians $H(x)$ with a constant spectral gap.
During the training phase of the ML algorithm, many values of $x$ are randomly sampled, and for each sampled $x$, the classical shadow of the corresponding ground state $\rho(x)$ of $H(x)$ is provided, either by classical simulations or quantum experiments.
The full training data of size $N$ is given by $\big\{x_{\ell} \rightarrow \sigma_T(\rho(x_{\ell}))\big\}_{\ell = 1}^N$, where $T$ is the number of randomized measurements in the construction of the classical shadows at each value of $x_\ell$.

{\color{black} We train classical ML models using the size-$N$ training data, such that when given the input $x_\ell$, the ML model can produce a classical representation $\hat{\sigma}(x)$ that approximates $\sigma_T(\rho(x_{\ell}))$.
During prediction, the classical ML produces $\hat{\sigma}(x)$ for new values of $x$ different from those in the training data.
While $\hat{\sigma}(x)$ and $\sigma_T(\rho(x_{\ell}))$ classically represent exponentially large density matrices, the training and prediction can be done efficiently on a classical computer using various existing classical ML models, such as neural networks with large hidden layers \cite{jacot2018neural, li2019enhanced, du2019graph, neuraltangents2020} and kernel methods \cite{cortes1995support,CC01a}. In particular, the predicted output of the trained classical ML models can be written as the extrapolation of the training data using a learned metric $\kappa(x, x_{\ell}) \in \mathbb{R}$,
\begin{equation} \label{eq:sigma(x)}
    \hat{\sigma}(x) = \frac{1}{N} \sum_{\ell = 1}^N \kappa(x, x_{\ell}) \sigma_T(\rho(x_{\ell})).
\end{equation}
For example, prediction using a trained neural network with large hidden layers \cite{jacot2018neural} is equivalent to using the metric $\kappa(x, x_{\ell}) = \sum_{\ell' = 1}^N f^{\mathrm{(NTK)}}(x, x_{\ell'}) (F^{-1})_{\ell' \ell}$, where $f^{\mathrm{(NTK)}}(x, x')$ is the neural tangent kernel \cite{jacot2018neural} corresponding to the neural network and $F_{\ell' \ell} = f^{\mathrm{(NTK)}}(x_{\ell'}, x_{\ell})$; see Appendix~\ref{sec:neuralnetwork} for more discussion.}
The ground state properties are then estimated using these predicted classical representations $\hat{\sigma}(x)$.
Specifically, $f_O(x) = \mathrm{tr}\left( O \rho(x)\right)$ can be predicted efficiently whenever $O$ is a sum of few-body operators.

To derive a provable guarantee, we consider the simple metric $\kappa(x, x_{\ell}) = \sum_{k \in \mathbb{Z}^m, \norm{k}_2 \leq \Lambda} \cos(\pi k \cdot (x - x_{\ell}))$ with cutoff $\Lambda$, which we refer to as the $l_2$-Dirichlet kernel.
We prove that the prediction will be accurate and efficient if the function $f_O(x)$ does not vary too rapidly when $x$ changes in any direction.
Indeed, sufficient upper bounds on the gradient magnitude of $f_O(x)$ can be derived 
using quasi-adiabatic continuation \cite{hastings2005quasiadiabatic, bachmann2012automorphic}.

Under the $l_2$-Dirichlet kernel, the classical ML model is equivalent to learning a truncated Fourier series to approximate the function $f_O(x)$. The parameter $\Lambda$ is a cutoff for the wavenumber $k$ that depends on (upper bounds on) the gradient of $f_O (x)$.
Using statistical analysis, one can guarantee that $ \E_{x} |\Tr(O \hat{\sigma}(x)) - f_O(x)|^2 \leq \epsilon$ as long as the amount of training data obeys $N = m^{\mathcal{O}(1 / \epsilon)}$ in the $m \to\infty$ limit.
The conclusion is that any such $f_O(x)$ can be predicted with a small \emph{constant} average error, where the amount of training data and the classical computation time are polynomial in $m$ and at most linear in the system size $n$. Moreover, the training data need only contain a \textit{single} classical shadow snapshot at each point $x_\ell$ in the parameter space (i.e., $T=1$). An informal statement of the theorem is given below; we explain the proof strategy in Appendix~\ref{sec:proofideaGSUPP}, and provide more details in Appendix~\ref{sec:proofthmGSUPP}.
{\color{black}
We also discuss how one could generalize the proof to long-range interacting systems, electronic Hamiltonians, and other settings in Appendix~\ref{sec:generalize-GSUPP}. }

\begin{theorem}[Learning to predict ground state representations; informal] \label{thm:mainFourier}
For any smooth family of Hamiltonians $\{H(x):\; x\in[-1,1]^m\}$ in a finite spatial dimension with a constant spectral gap, the classical machine learning algorithm can learn to predict a classical representation of the ground state $\rho(x)$ of $H(x)$ that approximates few-body reduced density matrices up to a constant error $\epsilon$ when averaged over $x$. The required training data size $N$ and computation time are polynomial in $m$ and linear in the system size $n$.
\end{theorem}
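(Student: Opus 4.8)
The plan is to reduce the task of predicting a classical representation of $\rho(x)$ to the simpler task of learning the finitely many scalar functions $f_P(x) = \Tr(P \rho(x))$, where $P$ ranges over Pauli operators supported on geometrically local regions of constant size $r$. Since a classical shadow $\sigma_T(\rho(x))$ faithfully reproduces every $r$-body reduced density matrix, approximating all such $f_P(x)$ up to average error $\epsilon$ is equivalent to approximating the few-body reduced density matrices, and for geometrically local $r$-body observables there are only $\mathcal{O}(n)$ relevant Pauli strings. Thus it suffices to learn each $f_P(x)$ separately, show that the per-observable cost is polynomial in $m$, and note that the number of observables is linear in $n$.

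The first key step is to bound how fast each $f_P(x)$ can vary. Here I would invoke the constant spectral gap together with geometric locality and finiteness of the spatial dimension, and apply quasi-adiabatic continuation: changing $x$ evolves the ground state under a quasi-local generator, and a Lieb--Robinson bound confines the effect of this evolution on the local observable $P$ to a neighborhood of bounded radius. This yields a bound on $\norm{\nabla_x f_P(x)}$ that is uniform in $x \in [-1,1]^m$ and independent of the system size $n$. Establishing this gradient bound rigorously---carefully constructing the quasi-adiabatic generator, controlling its quasi-locality via the gap, and extracting an $n$-independent constant uniformly over the whole parameter region---is the step I expect to be the main obstacle, since it is where all of the physical hypotheses are consumed.

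Given such a gradient bound, I would pass to Fourier analysis on $[-1,1]^m$. A bounded gradient controls the Sobolev-type quantity $\sum_k \norm{k}_2^2 |\hat f_P(k)|^2$, so truncating the Fourier series to the $l_2$-ball $\norm{k}_2 \leq \Lambda$ incurs mean-square error at most of order $\Lambda^{-2} \norm{\nabla f_P}^2$; choosing $\Lambda = \mathcal{O}(1/\sqrt{\epsilon})$ makes this at most $\epsilon$. The decisive point is that the $l_2$ cutoff forces at most $\Lambda^2$ of the $m$ components of $k$ to be nonzero, so the number of surviving Fourier modes is $m^{\mathcal{O}(\Lambda^2)} = m^{\mathcal{O}(1/\epsilon)}$, which is polynomial in $m$ for fixed $\epsilon$. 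This is precisely why the $l_2$-Dirichlet kernel $\kappa$ is the right metric to use in Eq.~\eqref{eq:sigma(x)}.

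Finally I would control the statistical error. The prediction $\hat{\sigma}(x)$ is an empirical estimate of this truncated Fourier series in which the shadows $\sigma_T(\rho(x_\ell))$ enter as unbiased, bounded-variance estimators of the reduced density matrices; because these estimators are averaged against the kernel over all $N$ sampled points, the single-snapshot shadow noise averages out and a single snapshot ($T=1$) per $x_\ell$ suffices. A concentration argument over the $m^{\mathcal{O}(1/\epsilon)}$ relevant Fourier coefficients then shows that $N = m^{\mathcal{O}(1/\epsilon)}$ training points drive the estimation error below $\epsilon$ as well. Combining the truncation and estimation errors via the triangle inequality, and summing over the $\mathcal{O}(n)$ local observables, yields $\E_{x} |\Tr(O \hat{\sigma}(x)) - f_O(x)|^2 \leq \epsilon$ with training-data size and classical runtime polynomial in $m$ and linear in $n$.
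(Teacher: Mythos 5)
Your proposal follows essentially the same route as the paper's proof: a gradient bound from quasi-adiabatic continuation plus Lieb--Robinson bounds, Fourier truncation to the $l_2$-ball with $\Lambda = \mathcal{O}(1/\sqrt{\epsilon})$, the combinatorial count of $m^{\mathcal{O}(\Lambda^2)}$ surviving wave vectors, and a Hoeffding-type concentration argument exploiting that single-snapshot shadows are unbiased and averaged against the kernel over the $N$ sampled parameters. You have correctly identified the key technical obstacle (the $n$-independent smoothness bound) and the decisive combinatorial point that makes the sample complexity polynomial in $m$, so the plan is sound and matches the paper's argument in all essentials.
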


Though formally ``efficient'' in the sense that $N$ scales polynomially with $m$ for any fixed approximation error $\epsilon$, the required amount of training data scales badly with $\epsilon$. This unfortunate scaling is not a shortcoming of the considered ML algorithm, but a necessary feature. In Appendix~\ref{app:proofthmlowerbound}, we show that the data size and time complexity cannot be improved further without making stronger assumptions about the class of gapped local Hamiltonians.
However, in cases of practical interest, the Hamiltonian may obey restrictions such as translational invariance or graph structure that can be exploited to obtain better results.
Incorporating these restrictions can be achieved by using a suitable $\kappa(x, x_\ell)$, such as one that corresponds to a large-width convolutional neural network \cite{li2019enhanced} or a graph neural network \cite{du2019graph}.
Rigorously establishing that neural-network-based ML algorithms can achieve improved prediction performance and efficiency for particular classes of Hamiltonians is a goal for future work. 

\subsection{Computational hardness for non-ML algorithms}

In the following proposition, we show that a classical algorithm that does not learn from data cannot achieve the same guarantee in estimating ground state properties without violating the widely believed conjecture that NP-complete problems cannot be solved in randomized polynomial time. This proposition is a corollary of standard complexity-theoretic results \cite{Lichtenstein1982PlanarFA, VALIANT198685}. See Appendix~\ref{sec:proofhardnonML} for the detailed statement and proof.

\begin{proposition}[Informal] \label{thm:hardnonML}
Consider a randomized polynomial-time classical algorithm $\mathcal{A}$ that does not learn from data. Suppose for any smooth family of two-dimensional Hamiltonians $\{H(x):\; x\in[-1,1]^m\}$ with a constant spectral gap, $\mathcal{A}$ can efficiently compute expectation values of one-body observables in the ground state $\rho(x)$ of $H(x)$ up to a constant error when averaged over $x$. Then there is a randomized classical algorithm that can solve NP-complete problems in polynomial time.
\end{proposition}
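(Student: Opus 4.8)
The plan is to prove the statement by a reduction: I will exhibit a specific smooth, gapped, geometrically local two-dimensional family $\{H(x):\,x\in[-1,1]^m\}$ together with a one-body observable, such that any algorithm $\mathcal{A}$ meeting the stated average-error guarantee on this family decides an NP-complete problem. The key simplification is to use a \emph{classical} (computational-basis diagonal) Hamiltonian encoding a satisfiability instance. Because such a Hamiltonian has integer spectrum --- its eigenvalues count violated constraints --- the gap between the ground energy and the first excited energy level is automatically at least $1$, so the constant-spectral-gap hypothesis is satisfied for free. Meanwhile its ground state is a computational-basis product state $\ket{z^*}$ encoding a solution, and the one-body Pauli observable $Z_i$ satisfies $\bra{z^*} Z_i \ket{z^*} = (-1)^{z^*_i}$, so reading this expectation value reveals bit $i$ of the solution.

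First I would fix the combinatorial problem and its geometric embedding. To respect two-dimensional geometric locality with bounded-degree terms, I would start from planar circuit/formula satisfiability, which is NP-complete by Lichtenstein \cite{Lichtenstein1982PlanarFA}, and lay variables and clauses out on a 2D grid, using planar crossing gadgets so that every term $h_c$ in $H=\sum_c h_c$ acts only on neighboring qubits. To obtain a \emph{nondegenerate} ground state --- so that $\rho(x)$ and its one-body marginals are well defined and informative --- I would apply the Valiant--Vazirani randomized reduction \cite{VALIANT198685} to reduce satisfiability to instances with a unique satisfying assignment; this is the step that injects randomness and accounts for the ``randomized classical algorithm'' in the conclusion. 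The result is a geometrically local classical $H$ whose unique ground state is the product state $\ket{z^*}$.

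Next I would package this into a smooth family and invoke the average-case guarantee. Taking $H(x)$ to encode the fixed instance independently of $x$ gives a trivially smooth family with $f_{Z_i}(x)=(-1)^{z^*_i}$ constant in $x$ (one could instead use a smooth interpolation that is constant on a constant fraction of parameter space, but the constant family is cleanest and makes the hardness statement stronger). Hence an algorithm that outputs estimates $\hat f_i(x)$ with $\E_{x}\lvert \hat f_i(x) - f_{Z_i}(x)\rvert^2 \le \epsilon$ for a small constant $\epsilon$ must land within $1/2$ of $(-1)^{z^*_i}$ for a majority of $x$; sampling a few values of $x$ and taking the sign of the median recovers the bit $z^*_i$. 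Looping over $i=1,\dots,n$ reconstructs $z^*$, which I then verify directly, deciding satisfiability. Since $\mathcal{A}$, the planar embedding, and the Valiant--Vazirani reduction are all (randomized) polynomial time, this yields a randomized polynomial-time algorithm for an NP-complete problem.

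The hard part will be the gadget engineering that simultaneously preserves two-dimensional geometric locality and bounded degree, a \emph{unique} ground state, and the correctness of the encoding, all while keeping the construction polynomial in size. Degeneracy is the main conceptual hazard, since a degenerate ground space would render the one-body marginals ambiguous or uninformative; the Valiant--Vazirani step resolves this at the cost of randomization, but one must check that the uniqueness and crossing gadgets neither spoil the solution encoding nor alter the minimum number of violated constraints in a way that changes the decision. The constant-gap requirement itself, which would be the delicate point for a genuinely quantum Hamiltonian, is essentially free here because the classical spectrum is integer-valued.
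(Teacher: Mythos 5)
Your proposal is correct and follows essentially the same route as the paper's proof: a computational-basis-diagonal Hamiltonian built from a planar (rectilinear) SAT instance embedded on a 2D grid, integer spectrum giving a spectral gap $\geq 1$ for free, uniqueness of the ground state via Valiant--Vazirani, a family $H(x)$ constant in $x$, and recovery of the solution bits from $\Tr(Z_i\rho(x))$ by sampling $x$ and amplifying (the paper averages via Jensen's inequality where you take a median, but this is the same standard amplification). The only cosmetic difference is that the paper starts directly from planar rectilinear 3SAT so that edge routes on the grid never cross and no crossing gadgets are needed.
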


It is instructive to observe that a classical ML algorithm with access to data can perform tasks that cannot be achieved by classical algorithms which do not have access to data.
This phenomenon is studied in \cite{huang2020power}, where it is shown that the complexity class defined by classical algorithms that can learn from data is strictly larger than the class of classical algorithms that do not learn from data. (The data can be regarded as a restricted form of randomized advice string.)
We caution that obtaining the data to train the classical ML model could be challenging.
However, if we focus only on data that could be efficiently generated by quantum-mechanical processes, it is still possible that a classical ML that learns from data could be more powerful than classical computers.
In Appendix~\ref{sec:proofhardnonML} we present a contrived family of Hamiltonians that establishes this claim, based on the (classical) computational hardness of factoring.

\section{Classifying quantum phases of matter}

Classifying quantum phases of matter is another important application of machine learning to physics.
We will consider this classification problem in the case where quantum states are succinctly represented by their classical shadows. 
For simplicity, we consider the classification of two phases (denoted $A$ and $B$), but the analysis naturally generalizes to classifying any number of phases.

\subsection{ML algorithms}

We envision training a classical ML with classical shadows, where each classical shadow carries a label $y$ indicating whether it represents a quantum state $\rho$ from phase $A$ $(y(\rho) = 1)$ or phase $B$ $(y(\rho) = -1)$. We want to show that a suitably chosen classical ML can learn to efficiently predict the phase for new classical shadows beyond those encountered during training. Following a strategy which is standard in learning theory, we consider a classical ML that maps each classical shadow to a corresponding feature vector in a high-dimensional feature space, and then attempts to find a hyperplane that separates feature vectors in the $A$ phase from feature vectors in the $B$ phase. The learning is efficient if the geometry of the feature space is efficiently computable, and if the feature map is sufficiently expressive. Thus, our task is to construct a feature map with the desired properties. 

In the simpler task of classifying symmetry-breaking phases, there is typically a local order parameter $O = \sum_i O_i$ given as a sum of $r$-body observables for some $r > 0$ that satisfies
\begin{equation} \label{eq:observPhase}
    \Tr(O \rho) \geq 1, \forall \rho \in \mbox{phase $A$}, \quad\quad \Tr(O \rho) \leq -1, \forall \rho \in \mbox{phase $B$}.
\end{equation} 
Under this criterion, the classification function may be chosen to be $y(\rho) = \sign(\Tr(O \rho))$.
Hence, classifying symmetry-breaking phases can be achieved by finding a hyperplane that separates the two phases in the high-dimensional feature space 
that subsumes all $r$-body reduced density matrices of the quantum state $\rho$.
The feature vector consisting of all $r$-body reduced density matrices of the quantum state $\rho$ can be accurately reconstructed from the classical shadow representation $S_T(\rho)$ when $T$ is sufficiently large.

Finding a suitable choice of hyperplane in the feature space can be cast as a convex optimization problem known as the soft-margin support vector machine, discussed in more detail in Appendix~\ref{sec:trainSVM}.
With a sufficient amount of training data, the hyperplane found by the classical ML model will generalize so the phase $y(\rho)$ can be predicted accurately for a previously unseen quantum state $\rho$.
The classical ML is not merely a black box; it exhibits the order parameter (encoded by the hyperplane), guiding physicists toward a deeper understanding of the phase structure.

For more exotic quantum phases of matter, such as topologically ordered phases, the above classical ML model no longer suffices. The topological phase of a state is invariant under a constant-depth quantum circuit, and a phase containing the product state $\ket{0}^{\otimes n}$ is called the trivial phase.
Using these notions, we can prove that no observable --- not even one that acts on the entire system --- can be used to distinguish between two topological phases.
The proof, given in Appendix~\ref{sec:no-nonlocal-obs}, uses the observation that random single-qubit unitaries can confuse any global or local order parameter.
\begin{proposition} \label{prop:noobs}
Consider two distinct topological phases $A$ and $B$ (one of the phases could be the trivial phase). No observable $O$ exists such that
\begin{equation}
    \Tr(O \rho) > 0, \forall \rho \in \mbox{phase $A$}, \quad\quad \Tr(O \rho) \leq 0, \forall \rho \in \mbox{phase $B$}.
\end{equation}
\end{proposition}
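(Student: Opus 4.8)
The plan is to derive a contradiction by exploiting the defining property of topological phases: two states lie in the same phase precisely when they are connected by a constant-depth local quantum circuit. The key observation is that a tensor product of single-qubit unitaries $U = \bigotimes_{i=1}^n U_i$ is a depth-one circuit, so it maps every state to another state in the \emph{same} phase. Hence, after fixing representatives $\rho_A \in$ phase $A$ and $\rho_B \in$ phase $B$, I have $U \rho_A U^\dagger \in$ phase $A$ and $U \rho_B U^\dagger \in$ phase $B$ for \emph{every} choice of single-qubit unitaries. Assuming toward a contradiction that an observable $O$ with the stated separation property exists, I then obtain $\Tr(O\, U \rho_A U^\dagger) > 0$ and $\Tr(O\, U \rho_B U^\dagger) \leq 0$ simultaneously, for all product unitaries $U$.

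The idea is to average these inequalities over Haar-random single-qubit unitaries and show that both averages collapse to the \emph{same} phase-independent number. Concretely, the next step is to compute $\E_U[U \rho U^\dagger]$ with each $U_i$ drawn independently from the single-qubit Haar measure. Because the single-qubit Haar twirl is the completely depolarizing channel, $\E_{U_i}[U_i \sigma U_i^\dagger] = \Tr(\sigma)\,\mathbb{I}/2$ for any single-qubit operator $\sigma$; applying this to each qubit independently annihilates every Pauli string with a nontrivial component on any qubit, leaving $\E_U[U \rho U^\dagger] = \mathbb{I}/2^n$ for \emph{any} $n$-qubit state $\rho$. By linearity of the trace it follows that
\begin{equation}
  \E_U \Tr(O\, U \rho_A U^\dagger) = \E_U \Tr(O\, U \rho_B U^\dagger) = \frac{\Tr O}{2^n}.
\end{equation}
This is the precise sense in which random single-qubit unitaries ``confuse'' any candidate order parameter: after twirling, the observable can no longer tell the two phases apart.

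Combining the averaged inequalities then closes the argument. Averaging the strictly positive quantity $\Tr(O\, U \rho_A U^\dagger)$ over the compact unitary group yields $\Tr O / 2^n > 0$, while averaging the nonpositive quantity $\Tr(O\, U \rho_B U^\dagger)$ yields $\Tr O / 2^n \leq 0$, which is a contradiction. I expect the main obstacle to be conceptual rather than computational: the crux is recognizing that acting with product single-qubit unitaries suffices, that these stay within the phase, and that the twirled state is \emph{identical} for both phases, so the observable is forced to assign them the same value. The twirl computation itself is routine. I would also take care to check that the phases are nonempty (so representatives $\rho_A,\rho_B$ exist), and that the strict inequality survives passing to the average, which follows from continuity of the integrand over the compact group.
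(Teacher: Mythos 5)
Your proposal is correct and follows essentially the same route as the paper's proof: twirl by Haar-random single-qubit unitaries (a depth-one circuit, hence phase-preserving), note that the twirled state is $\mathbb{I}/2^n$ regardless of the phase, and derive the contradiction $\Tr(O)/2^n > 0$ and $\Tr(O)/2^n \le 0$. Your remark that the strict inequality survives averaging by continuity over the compact group is exactly the compactness/attained-infimum argument the paper uses, so no gap remains.
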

\noindent While this proposition implies that no linear function $\Tr(O \rho)$ can be used to classify topologically ordered phases, it does not exclude nonlinear functions, such as quadratic functions $\Tr(O \rho \otimes \rho)$, degree-$d$ polynomials $\Tr(O \rho^{\otimes d})$ and more general analytic functions.
For example, it is known that the topological entanglement entropy \cite{kitaev2006topological,levin2006detecting}, a nonlinear function of $\rho$, can be used to classify a wide variety of topologically ordered phases. For this purpose, it suffices to consider a subsystem whose size is large compared to the correlation length of the state, but is independent of the total size of the system.
The correlation length in the ground state of a local Hamiltonian increases when the spectral gap between the ground state and the first excited state becomes smaller \cite{hastings2006spectral}.
On the other hand, a linear function on the full system will fail even with constant correlation length.

To learn nonlinear functions, we need a more expressive ML model. For this purpose we devise a powerful feature map that takes the classical shadow $S_T(\rho)$ of the quantum state $\rho$ to a feature vector that includes arbitrarily-large $r$-body reduced density matrices, as well as an arbitrarily-high-degree polynomial expansion,
\begin{equation}\label{eq:featureshadow}
    \phi^{\mathrm{(shadow)}}(S_T(\rho)) =
    \lim_{D,R \to \infty}
    \bigoplus_{d=0}^D \sqrt{\frac{\tau^d}{d!}} \left(\bigoplus_{r=0}^R \sqrt{\frac{1}{r!} \left(\frac{\gamma}{n}\right)^{r}} \bigoplus_{i_1 = 1}^n \ldots \bigoplus_{i_r = 1}^n \mathrm{vec}\left[ \frac{1}{T}\sum_{t=1}^T \bigotimes_{\ell=1}^r \sigma_{i_{\ell}}^{(t)} \right]\right)^{\otimes d},
\end{equation}
where $\tau, \gamma > 0$ are hyper-parameters.
The direct sum $\bigoplus_{r=0}^R$ is a concatenation of all $r$-body reduced density matrices, and the other direct sum $\bigoplus_{d=0}^D$ subsumes all degree-$d$ polynomial expansions.
The computational cost of finding a hyperplane 
in feature space that separates the training data into two classes is dominated by the cost of computing inner products between feature vectors.
The inner product $\langle \phi^{\mathrm{(shadow)}}(S_T(\rho)), \phi^{\mathrm{(shadow)}}(S_T(\tilde{\rho})) \rangle $ can be analytically computed by reorganizing the direct sums,
writing it as a double series, and wrapping both series into an exponential, which gives
\begin{equation} \label{eq:doubleinfkernel}
k^{\mathrm{(shadow)}}\left( S_T(\rho), S_T(\tilde{\rho}) \right)
=     \exp \left(\frac{\tau}{T^2} \sum_{t,t'=1}^T \exp\left( \frac{\gamma}{n} \sum_{i=1}^n \Tr\left( \sigma_{i}^{(t)}\tilde{\sigma}_{i}^{(t^{\prime})} \right) \right)\right),
\end{equation}
where $S_T (\rho)$ and $S_T (\tilde{\rho})$ are classical shadow representations of $\rho$ and $\tilde{\rho}$, respectively.
The computation time for the inner product is $\mathcal{O}(n T^2)$, linear in the system size $n$ and quadratic in $T$, the number of copies of each quantum state which are measured to construct the classical shadow.

\subsection{Rigorous guarantee}

By statistical analysis, we can establish a rigorous guarantee for the classical ML model $\langle \alpha, \phi^{\mathrm{(shadow)}}(S_T(\rho)) \rangle$, where $\alpha$ is the trainable vector defining the classifying hyperplane. The result is the following theorem proven in Appendix~\ref{sec:proofthmPHASEC}.

\begin{theorem}[Classifying quantum phases of matter; informal] \label{thm:phaseclassification}
If there is a nonlinear function of few-body reduced density matrices that classifies phases, then the classical algorithm can learn to classify these phases accurately. The required amount of training data and computation time scale polynomially in system size.
\end{theorem}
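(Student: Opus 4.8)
The plan is to combine an \emph{expressivity} argument---showing that the assumed nonlinear classifier can be realized as a linear functional in the feature space of $\phi^{\mathrm{(shadow)}}$ with a controlled norm---with a standard \emph{margin-based generalization} bound for the soft-margin support vector machine. Formally, I would make the hypothesis precise as follows: there exist a constant $r$ and an analytic function $h$ of the $r$-body reduced density matrices such that $h(\rho) \geq 1$ for every $\rho$ in phase $A$ and $h(\rho) \leq -1$ for every $\rho$ in phase $B$. The goal is then to exhibit a hyperplane $\langle \alpha, \phi^{\mathrm{(shadow)}}(S_T(\rho))\rangle$ that reproduces this separation and generalizes from polynomially many labeled samples.

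First I would show that $h$ lies in the reproducing-kernel Hilbert space associated with $k^{\mathrm{(shadow)}}$. Since the inner direct sum $\bigoplus_{r=0}^{R}$ concatenates \emph{all} reduced density matrices and the outer direct sum $\bigoplus_{d=0}^{D}$ supplies \emph{all} tensor (polynomial) powers, any analytic function of the constant-body reduced density matrices can be expanded in this basis and written as $\langle \alpha^\star, \phi^{\mathrm{(shadow)}}\rangle$ for some coefficient vector $\alpha^\star$. The essential step is to bound $\norm{\alpha^\star}$: because the feature component indexed by a degree $d$, body-sizes $r_1,\dots,r_d$, and a specific tuple of sites carries the normalization $\sqrt{\tau^{d}/d!}\,\prod_{j}\sqrt{(\gamma/n)^{r_j}/r_j!}$, the coefficient of $\alpha^\star$ needed to reproduce a given monomial of the target scales as the reciprocal of this factor. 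Summing the squared coefficients over the $\mathcal{O}(n^{r d})$ relevant site tuples, and using that $r$ and the polynomial degree of $h$ are constants, I expect $\norm{\alpha^\star}^2 = \mathrm{poly}(n)$, with the hyper-parameters $\tau,\gamma$ tuned to the body-ness and degree of $h$.

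Next I would control the feature norm. Evaluating the closed form \eqref{eq:doubleinfkernel} on the diagonal gives $\norm{\phi^{\mathrm{(shadow)}}(S_T(\rho))}^2 = k^{\mathrm{(shadow)}}(S_T(\rho),S_T(\rho)) = \exp\!\big(\tfrac{\tau}{T^{2}}\sum_{t,t'}\exp(\tfrac{\gamma}{n}\sum_i \Tr(\sigma_i^{(t)}\sigma_i^{(t')}))\big)$, and since each term obeys $\Tr((\sigma_i^{(t)})^2)=5$ the exponents are $\mathcal{O}(1)$, so $\norm{\phi^{\mathrm{(shadow)}}}^2 = \mathcal{O}(1)$ for constant $\tau,\gamma$. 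Because the classical shadow provides an \emph{unbiased} estimator of each reduced density matrix, I would argue that $\langle \alpha^\star, \phi^{\mathrm{(shadow)}}(S_T(\rho))\rangle$ has expectation (over the measurement randomness) equal to $h(\rho)$ up to truncation error, with a variance suppressed by $T$; taking $T=\mathrm{poly}(n)$ then preserves the unit margin with high probability. Feeding the bounds $\norm{\alpha^\star}^2\,\norm{\phi^{\mathrm{(shadow)}}}^2 = \mathrm{poly}(n)$ and the unit margin into the usual Rademacher/margin generalization bound for kernel classifiers yields classification error $\le\epsilon$ from $N=\mathrm{poly}(n)/\epsilon^{2}$ labeled shadows, while the convexity of the soft-margin SVM together with the $\mathcal{O}(nT^2)$ cost of each kernel evaluation in \eqref{eq:doubleinfkernel} gives a total runtime polynomial in $n$ and $N$.

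The main obstacle is the norm bound in the second step: the feature space is infinite-dimensional, so I must verify that the precise normalization chosen in \eqref{eq:featureshadow} simultaneously keeps $\norm{\phi^{\mathrm{(shadow)}}}$ bounded \emph{and} keeps $\norm{\alpha^\star}$ only polynomially large for every target of constant body-ness and degree, rather than incurring the exponential-in-$n$ blow-up that a naive monomial expansion would produce. A secondary subtlety is propagating the shadow estimation error through the nonlinear feature map so that the margin of the ideal (exact-reduced-density-matrix) separator survives at finite $T$; I expect this to follow from concentration of the empirical averages $\tfrac{1}{T}\sum_t \bigotimes_\ell \sigma_{i_\ell}^{(t)}$ around the true reduced density matrices, but it requires care because the feature map is a high-degree polynomial in these averages.
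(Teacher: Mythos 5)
Your proposal follows essentially the same route as the paper's proof: realize the assumed classifier as a linear functional $\langle \alpha^\star, \phi^{\mathrm{(shadow)}}\rangle$ with $\norm{\alpha^\star}^2 = \mathrm{poly}(n)$ (the paper's Eq.~\eqref{eq:separator-size} gives $4(2rn)^{rd_p}d_p^{d_p}C^2$), bound the kernel diagonal by a constant, show the shadow estimate preserves the margin (the paper does this via a trace-norm concentration bound on all $r$-body marginals plus a telescoping argument through the tensor powers, needing only $T = \mathcal{O}(\mathrm{polylog}(n))$), and conclude with a Rademacher margin bound giving $N = \mathcal{O}(n^{rd_p}/\epsilon^2)$. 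The two obstacles you flag are exactly the ones the paper resolves, the first by working with finite truncations $k^{\mathrm{(finite)}}$ of the feature map and the second by the telescoping trick in its Lemma on shadow approximation of the truncated power series.
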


\noindent If there is an efficient procedure based on \emph{few-body reduced density matrices} for classifying phases, the proposed ML algorithm is guaranteed to find the procedure efficiently.
This includes local order parameters for classifying symmetry breaking phases, and topological entanglement entropy in a sufficiently large local region for partially classifying topological phases \cite{kitaev2006topological,levin2006detecting}.
We expect that, to classify topological phases accurately, the classical ML will need access to local regions that are sufficiently large compared to the correlation length, and
as we approach the phase boundary, the correlation length increases. As a result, the classifying function for topological phases may depend on $r$-body subsystems with a larger $r$, and the amount of training data and computation time required would increase accordingly.
Note that the classical ML not only classifies phases accurately, but also constructs a classifying function explicitly.

Our classical ML model may also be useful for classifying and understanding symmetry-protected topological (SPT) phases. SPT phases are characterized much like topological phases, but with the additional constraint that all structures involved (states, Hamiltonians, and quantum circuits) respect a particular symmetry. 
It is reasonable to expect that an SPT phase can be identified by examining reduced density matrices on constant-size regions \cite{Li2008,Pollmann2010,pollmann2012detection,Haegeman2012,Shapourian2017,Dehghani2021}, where the size of the region is large compared to the correlation length.
The existence of classifying functions based on reduced matrices have been rigorously established in some cases \cite{Kitaev2006,zhang2017quantum,Hastings2015,Kapustin2020,Bachmann2020,Bachmann2014a,tasakiPRL2018,Tasaki2020}.
In Appendix~\ref{sec:tasaki}, we prove that the ML algorithm is guaranteed to efficiently classify a class of gapped spin-$1$ chains in one dimension.
For more general SPT phases, the ML algorithm should be able to corroborate known classification schemes, determine new and potentially more compact classifiers, and shed light on interacting SPT phases in two or more dimensions for which complete classification schemes have not yet been firmly established.

{\color{black} The hypothesis of Theorem \ref{thm:phaseclassification}, stating that phases can be recognized by inspecting regions of constant size independent of the total system size, is particularly plausible for gapped phases, but might apply to some gapless phases as well. Our classical ML model would be able to efficiently classify such gapless phases. On the other hand, the contrapositive of Theorem \ref{thm:phaseclassification} asserts that if the classical ML is not able to distinguish between two distinct gapless phases, then nonlocal data is required to characterize at least one of those phases.
}

\section{Numerical experiments} \label{sec:numerics}

\begin{figure}[t]
    \centering
    \includegraphics[width=1.0\linewidth]{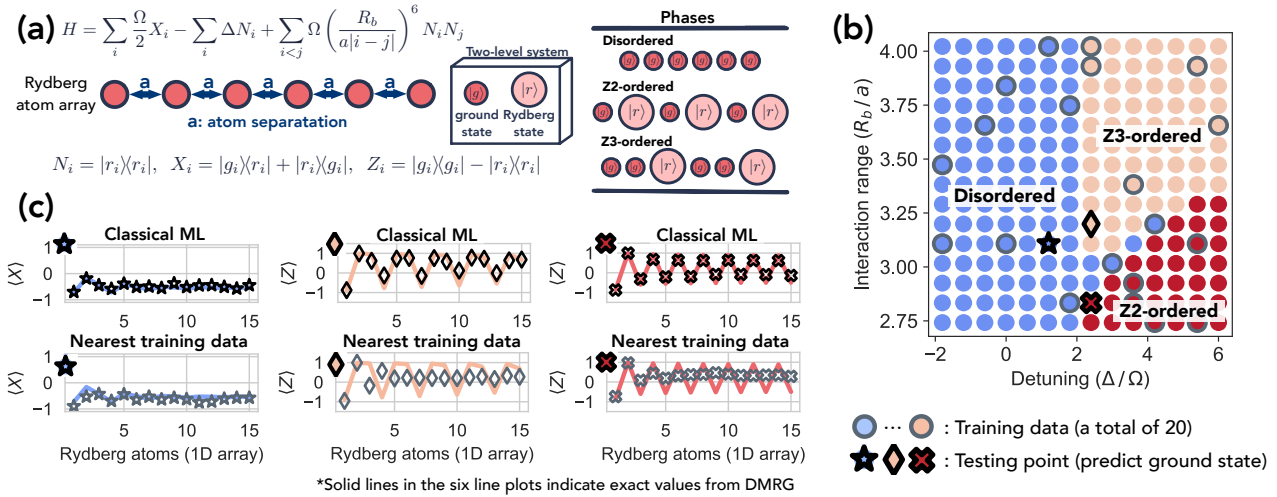}
    \caption{Numerical experiment for predicting ground-state properties in a 1D Rydberg atom system with 51 atoms. (a) \textsc{Hamiltonian.} Illustration of the Rydberg array geometry, Hamiltonian, and phases. (b) \textsc{Phase diagram.} The system's three distinct phases \cite{bernien2017probing} are characterized by two order parameters (for $Z_2$ and $Z_3$ orders). Training data are enclosed by gray circles, and three specific testing points are indicated by the star, diamond, and cross, respectively. (c) \textsc{Local expectation values.} We use classical ML (the best model is selected from a set of ML models) to predict the expectation values of Pauli operators $X_i$ and $Z_i$ for each atom at the three testing points. We compare with ``predictions'' obtained from the training data nearest to the testing points. The markers denote predicted values, while the solid lines denote exact values obtained from DMRG. Additional predictions are shown in Appendix~\ref{sec:more-numer}. }
    \label{fig:rydberg}
\end{figure}

We have conducted numerical experiments assessing the performance of classical ML algorithms in some practical settings. The results demonstrate that our theoretical claims carry over to practice, with the results sometimes turning out even better than our guarantees suggest. 

\subsection{Predicting ground state properties}

{\color{black} For predicting ground states, we consider classical ML models encompassed by Eq.~\eqref{eq:sigma(x)}. We examine various metrics $\kappa(x, x_{\ell})$ equivalent to training neural networks with large hidden layers \cite{jacot2018neural, neuraltangents2020} or training kernel methods \cite{cortes1995support, murphy2012machine}. We find the best ML model and the hyperparameters using a validation set to minimize root-mean-square error (RMSE) and report the predictions on a test set. The full details of the models and hyperparameters, as well as their comparison, are given in Appendix~\ref{sec:numdetail-groundstate}~and~\ref{sec:numdetail-groundstate2}. }

\emph{Rydberg atom chain} --- Our first example is a system of trapped Rydberg atoms~\cite{Fendley2004,browaeys_many-body_2020}, a programmable and highly-controlled platform for Ising-type quantum simulations~\cite{Schauss1455,Endres2016,bernien2017probing,Labuhn,Misha256,2020arXiv201212268S}. Following~\cite{bernien2017probing}, we consider a one-dimensional array of $n=51$ atoms, with each atom effectively described as a two-level system composed of a ground state $|g\rangle$ and a highly-excited Rydberg state $|r\rangle$. The atomic chain is characterized by a Hamiltonian $H(x)$ (given in Figure~\ref{fig:rydberg}(a)) whose parameters are the laser detuning $x_1 = \Delta / \Omega$ and the interaction range $x_2 = R_b / a$. The phase diagram (shown in Figure~\ref{fig:rydberg}(b)) features a 
disordered phase and several broken-symmetry phases,
stemming from the competition between the detuning and the Rydberg blockade (arising from the repulsive Van der Waals interactions).

We trained a classical ML model using $20$ randomly chosen values of the parameter $x=(x_1,x_2)$; these values are indicated by gray circles in Figure~\ref{fig:rydberg}(b). For each such $x$, an approximation to the exact ground state was found using DMRG~\cite{DMRG1} based on the formalism of matrix product states (MPS)~\cite{SCHOLLWOCK201196}. For each MPS, we performed $T=500$ randomized Pauli measurements to construct a classical shadow. The classical ML then predicted classical representations at the testing points in the parameter space, and these predicted classical representations were used to estimate expectation values of local observables at the testing points.

Predictions for expectation values of Pauli operators $Z_i$ and $X_i$ at the testing points are shown in Figure~\ref{fig:rydberg}(c), and found to agree well with exact values obtained from the DMRG computation of the ground state at the testing points. Additional predictions can be found in Appendix~\ref{sec:more-numer}. Also shown are results from a more naive procedure, in which properties are predicted using only the data at the point in the training set which is closest to the testing point. The naive procedure predicts poorly, illustrating that the considered classical ML model effectively leverages the data from multiple points in the training set.

This example corroborates our expectation that classical machines can learn to efficiently predict ground state representations. An important caveat is that the rigorous guarantee in Theorem \ref{thm:mainFourier} applies only when the training points and the testing points are sampled from the same phase, while in this example the training data includes values of $x$ from three different phases. Nevertheless, our numerics show that classical machines can still learn to predict well. 

\begin{figure}[t]
    \centering
    \includegraphics[width=1.0\linewidth]{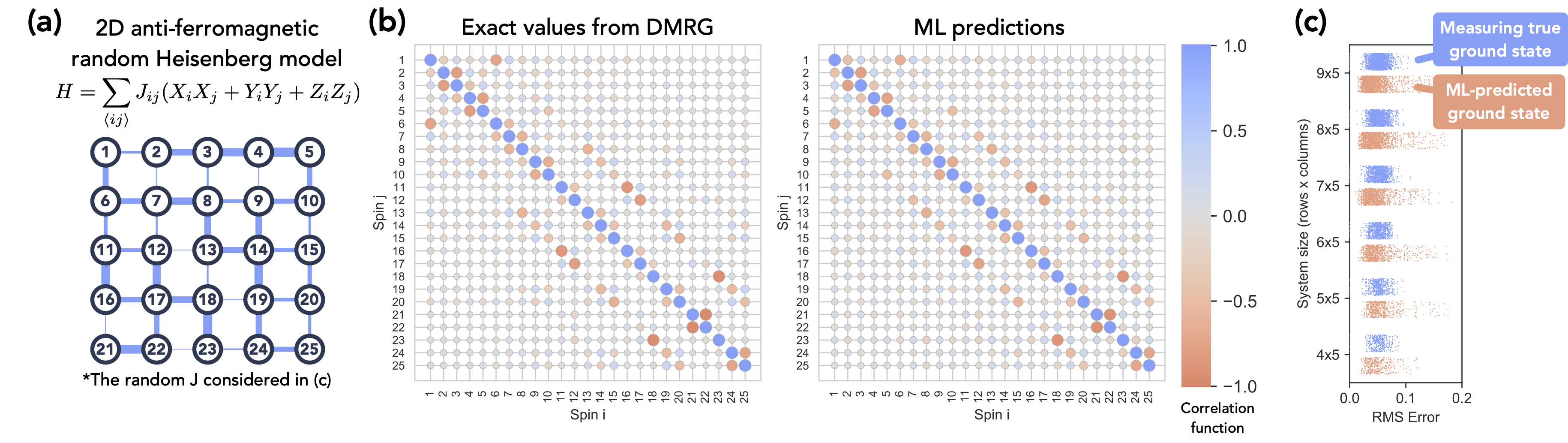}
    \caption{Numerical experiment for predicting ground state properties in the 2D antiferromagnetic Heisenberg model. (a) \textsc{Hamiltonian.} Illustration of the Heisenberg model geometry and Hamiltonian. We consider random couplings $J_{ij}$, sampled uniformly from $[0, 2]$. A particular instance is shown, with coupling strength indicated by the thickness of the edges connecting lattice points. (b) \textsc{Two-point correlator.} 
    Exact values and ML predictions of the expectation value of the correlation function $C_{ij}=\tfrac{1}{3} (X_i X_j + Y_i Y_j + Z_i Z_j)$ for all spin pairs $(ij)$ in the lattice, for the Hamiltonian instance shown in (a). The absolute value of $C_{ij}$ is represented by the size of each circle, while the circle's color indicates the actual value. (c) \textsc{Prediction error.} Each blue point indicates the root-mean-square error (averaged over Heisenberg model instances) of the correlation function for a particular pair $(ij)$, where the estimate of $C_{ij}$ is obtained using a classical shadow with $T=500$ randomized Pauli measurements of the true ground state. Red points indicate errors in ML predictions for $C_{ij}$.}
    \label{fig:heisenberg}
\end{figure}

\emph{2D antiferromagnetic Heisenberg model} --- Our next example is the two-dimensional antiferromagnetic Heisenberg model. Spin-$\tfrac{1}{2}$ particles (i.e. qubits) occupy sites on a square lattice, and for each pair $(ij)$ of neighboring sites the Hamiltonian contains a term $J_{ij}\left(X_iX_j+Y_iY_j+Z_iZ_j\right)$ where the couplings $\{J_{ij}\}$ are uniformly sampled from the unit interval $[0,2]$. The parameter $x$ is a list of all  $J_{ij}$ couplings; hence in this case the dimension of the parameter space is $m=O(n)$, where $n$ is the number of qubits. The Hamiltonian $H(x)$ on a $5\times 5$ lattice is shown in Figure~\ref{fig:heisenberg}(a).  

We trained a classical ML model using 90 randomly chosen values of the parameter $x=\{J_{ij}\}$. For each such $x$, the exact ground state was found using DMRG, and we simulated $T=500$ randomized Pauli measurements to construct a classical shadow. The classical ML predicted the classical representation at new values of $x$, and we used the predicted classical representation to estimate a two-body correlation function, the expectation value of $C_{ij}=\frac{1}{3}\left(X_iX_j + Y_iY_j+ Z_i Z_j\right)$, for each pair of qubits $(ij)$. In Figure~\ref{fig:heisenberg}(b), the predicted and actual values of the correlation function are displayed for a particular value of $x$, showing reasonable agreement.

Figure~\ref{fig:heisenberg}(c) shows the prediction performance for all pairs of spins and for variable system size. Each red point in the plot represents the RMSE in the correlation function estimated using our predicted classical representation, for a particular pair of spins and averaged over sampled values of $x$. For comparison, each blue point is the RMSE when the correlation function is predicted using the classical shadow obtained by measuring the actual ground state $T=500$ times.
For most correlation functions, the prediction error achieved by the best classical ML model is comparable to the error achieved by measuring the actual ground state.

\subsection{Classifying quantum phases of matter}\label{subsec:numerics-classification}

For classifying quantum phases of matter, we consider an unsupervised classical ML model that constructs an {\color{black} infinite-dimensional \emph{nonlinear}} feature vector for each quantum state $\rho$ by applying the map $\phi^{\mathrm{(shadow)}}$ in Eq.~\eqref{eq:featureshadow} with $\tau, \gamma = 1$ to the classical shadow $S_T(\rho)$ of the quantum state $\rho$. We then perform a principal component analysis (PCA) \cite{pearson1901liii} in the {\color{black} infinite-dimensional \emph{nonlinear}} feature space.
{\color{black} The low-dimensional subspace found by PCA in the nonlinear feature space corresponds to a nonlinear low-dimensional manifold in the original quantum state space.}
This method is efficient using the shadow kernel $k^{\mathrm{(shadow)}}$ given in Eq.~\eqref{eq:doubleinfkernel} and the kernel PCA procedure \cite{scholkopf1998nonlinear}. Details are given in Appendix~\ref{sec:numdetail-phases}~and~\ref{sec:numdetail-phases2}.

\textit{Bond-alternating XXZ model} --- We begin by considering the bond-alternating XXZ model with $n=300$ spins. The Hamiltonian is given in Figure~\ref{fig:XXZ}(a); it encompasses the bond-alternating Heisenberg model ($\delta=1$) and the bosonic version of the Su-Schrieffer-Heeger model  ($\delta=0$)~\cite{su1979solitons}.
The phase diagram in Figure~\ref{fig:XXZ}(b) is obtained by evaluating the partial reflection many-body topological invariant \cite{pollmann2012detection, elben2020many}.
There are three different phases: trivial, symmetry-protected topological, and symmetry broken.

For each value of $J$ and $\delta$ considered, we construct the exact ground state using DMRG, and find its classical shadow by performing randomized Pauli measurement $T=500$ times.
We then consider a two-dimensional principal subspace of the {\color{black} infinite-dimensional nonlinear} feature space found by the unsupervised ML based on the shadow kernel, which is visualized in Figure~\ref{fig:XXZ}(c, d).
We can clearly see that the different phases are well separated in the principal subspace. This shows that even without any phase labels on the training data, the ML model can already classify the phases accurately. Hence, when trained with only a small amount of labeled data, the ML model will be able to correctly classify the phases as guaranteed by Theorem~\ref{thm:phaseclassification}.

\begin{figure}[t]
    \centering
    \includegraphics[width=1.0\linewidth]{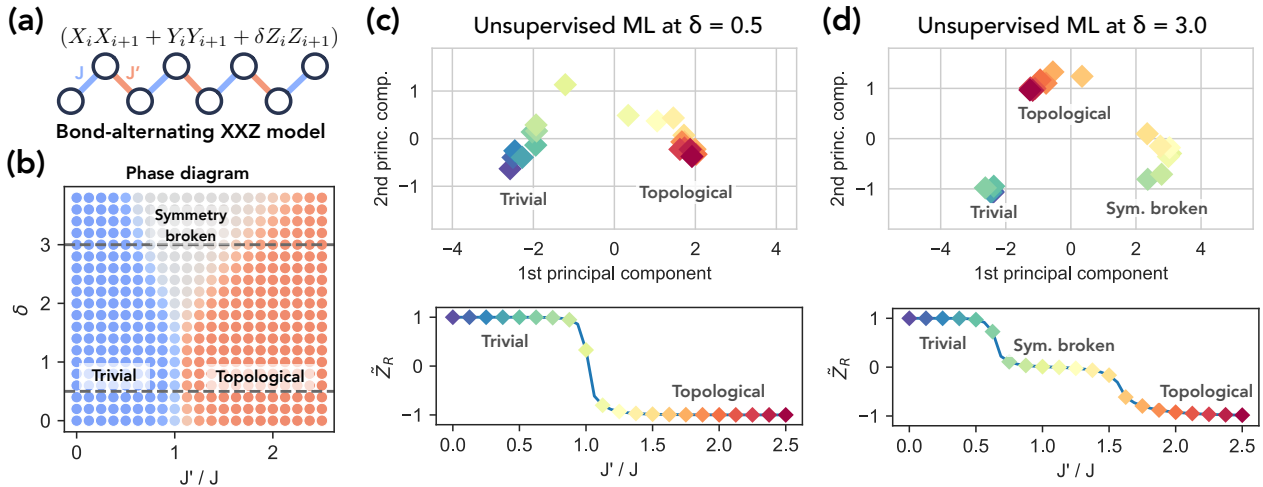}
    \caption{Numerical experiments for classifying quantum phases in the bond-alternating XXZ model. (a) \textsc{Hamiltonian.} Illustration of the model --- a one-dimensional qubit chain, where the coefficient of $(X_i X_{i+1} + Y_i Y_{i+1} + \delta Z_i Z_{i+1})$ alternates between $J$ and $J'$. (b)  \textsc{Phase diagram.} The system's three distinct phases are characterized by the many-body topological invariant $\tilde{Z}_R$ discussed in Refs.~\cite{pollmann2012detection, elben2020many}. Blue denotes $\tilde{Z}_R = 1$, red denotes $\tilde{Z}_R = -1$, and gray denotes $\tilde{Z}_R \approx 0$. (c,~d) \textsc{Unsupervised phase classification.} Bottom panels: $\tilde{Z}_R$ vs. $J'/J$ at cross sections (c) $\delta=0.5$ and (d) $\delta=3.0$ of the phase diagram. Top panels: visualization of the quantum states projected to two dimensions using the unsupervised ML ({\color{black} nonlinear} PCA with shadow kernel). In all panels, colors of the points indicate the value of $J' / J$, indicating that the two phases naturally cluster in the expressive feature space.}
    \label{fig:XXZ}
\end{figure}

\emph{Distinguishing a topological phase from a trivial phase} --- We consider the task of distinguishing the toric code topological phase from the trivial phase in a system of $n=200$ qubits. Figure~\ref{fig:topophase}(a) illustrates the sampled topological and trivial states. We generate representatives of the nontrival topological phase by applying low-depth geometrically local random quantum circuits to Kitaev's toric code state \cite{kitaev2003fault} with code distance 10, and we generate representatives of the trivial phase by applying random circuits to a product state.

Randomized Pauli measurements are performed $T=500$ times to convert the states to their classical shadows, and these classical shadows are mapped to feature vectors in the high-dimensional feature space using the feature map $\phi^{\mathrm{(shadow)}}$.
Figure~\ref{fig:topophase}(b) displays a one-dimensional projection of the feature space using the unsupervised classical ML for various values of the circuit depth, indicating that the phases become harder to distinguish as the circuit depth increases.
In Figure~\ref{fig:topophase}(c), we show the classification accuracy of the unsupervised classical ML model.
We also compare to training convolutional neural networks (CNN) that use measurement outcomes from the Pauli-$6$ POVM \cite{carrasquilla2019reconstructing} as input to learn an observable for classifying the phases.
Since Proposition~\ref{prop:noobs} establishes that no observable (even a global one) can classify topological phases, this CNN approach is doomed to fail.
On the other hand, if the CNN takes classical shadow representations as input, then it can learn nonlinear functions and successfully classify the phases.

\section{Outlook}

We have rigorously established that classical machine learning (ML) algorithms, informed by data collected in physical experiments, can effectively address {\color{black} some} quantum many-body problems. These results boost our hopes that classical ML trained on experimental data can solve practical problems in chemistry and materials science that would be too hard to solve using classical processing alone. 

Our arguments build on the concept of a classical shadow derived from randomized Pauli measurements. We expect, though, that other succinct classical representations of quantum states could be exploited by classical ML with similarly powerful results. 
For example, some currently available quantum simulators are highly programmable, but lack the local control needed to perform arbitrary single-qubit Pauli measurements. Instead, after preparing a many-body quantum state of interest, one might switch rapidly to a different Hamiltonian and then allow the state to evolve for a short time before performing a computational basis measurement. How can we make use of that measurement data to predict properties reliably? Answering such questions, and thereby expanding the reach of near-term programmable quantum platforms, will be an important goal for future research \cite{cotler2021emergent, hu2021hamiltonian}.

Viewed from a broader perspective, by illustrating how experimental data can be exploited to make accurate predictions about features of quantum systems that have never been studied directly, our work exemplifies a potentially powerful methodology for advancing the physical sciences. With further theoretical developments, perhaps we can learn how to use experimental data that is already routinely available to accelerate the discovery of new chemical compounds and materials with remarkable properties that could benefit humanity.

\begin{figure}[t]
    \centering
    \includegraphics[width=0.95\linewidth]{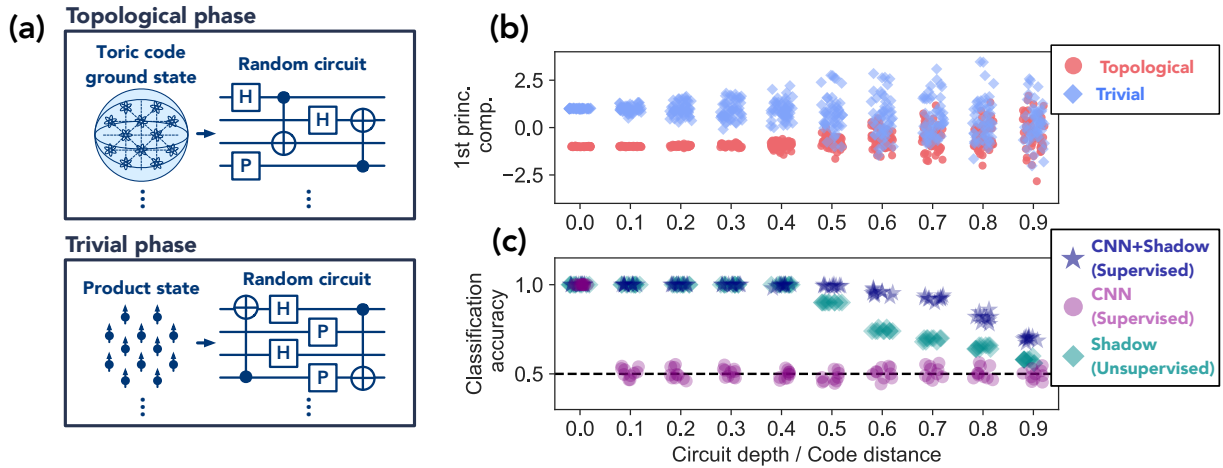}
    \caption{Numerical experiments for distinguishing between trivial and topological phases. (a) \textsc{State generation.} Trivial or topological states are generated by applying local random quantum circuits of some circuit depth to a product state or exactly-solved topological state, respectively. (b)
    \textsc{Unsupervised phase classification.} visualization of the quantum states projected to one dimension using the unsupervised ML ({\color{black} nonlinear} PCA with shadow kernel), shown for varying circuit depth (divided by the ``code distance'' 10, which quantifies the depth at which the topological properties are washed out). The feature space is sufficiently expressive to resolve the phases for a small enough depth without training, with classification becoming more difficult as the depth increases. (c) \textsc{Classification accuracy} for three ML algorithms described in Section~\ref{subsec:numerics-classification}.
    }
    \label{fig:topophase}
\end{figure}

\subsection*{Acknowledgments:}
\vspace{-0.5em}
{ The authors thank Nir Bar-Gill, Juan Carrasquilla, Sitan Chen, Yifan Chen, Andreas Elben, Matthew Fishman, Martin Fraas, Scott Glancy, Jeongwan Haah, Felix Kueng, Jarrod McClean, Spiros Michalakis, Jacob Taylor, Yuan Su, and Thomas Vidick for valuable input and inspiring discussions.
HH thanks Andreas Elben for providing the code on bond-alternating XXZ model.
The numerical simulations were performed on AWS EC2 computing infra-structure, using the software packages ITensors~\cite{itensor} and PastaQ~\cite{pastaq}.  
HH is supported by the J. Yang \& Family Foundation. 
JP acknowledges funding from  the U.S. Department of Energy Office of Science, Office of Advanced Scientific Computing Research, (DE-NA0003525, DE-SC0020290), and the National Science Foundation (PHY-1733907). The Institute for Quantum Information and Matter is an NSF Physics Frontiers Center. Contributions to this work by NIST, an agency of the US government, are not subject to US copyright. Any mention of commercial products does not indicate endorsement by NIST. VVA thanks Olga Albert, Halina and Ryhor Kandratsenia, as well as Tatyana and Thomas Albert for providing daycare support throughout this work.}

\newpage
\appendix

\renewcommand{\appendixname}{APPENDIX}
\renewcommand{\thesubsection}{\MakeUppercase{\alph{section}}.\arabic{subsection}}
\makeatletter
\renewcommand{\p@subsection}{}
\makeatother

\noindent 
\begin{center}\textbf{ \Large{}Appendices: contents \& navigation guide}{\Large\par}
\end{center}

\bigskip

\noindent \textbf{\ref{sec:classical-shadows}.~~\hyperref[sec:classical-shadows]{Background on classical shadows}} \dotfill\textbf{\pageref{sec:classical-shadows}}
\medskip

\noindent \textbf{\ref{sec:relatedwork}.~~\hyperref[sec:relatedwork]{Related work}} \dotfill\textbf{\pageref{sec:relatedwork}}
\medskip

\noindent \qquad \begin{minipage}{\dimexpr\textwidth-1.3cm}
 \hyperref[sec:relatedwork1]{Estimating ground state properties} $\bullet$ 
 \hyperref[sec:relatedwork2]{Classifying quantum phases of matter} $\bullet$ 
 \hyperref[sec:relatedwork3]{Classical representations of quantum systems}
\end{minipage}
\medskip

\noindent \textbf{\ref{sec:neuralnetwork}.~~\hyperref[sec:neuralnetwork]{Neural networks with classical shadow for quantum many-body problems}} \dotfill\textbf{\pageref{sec:neuralnetwork}}
\smallskip

\noindent \qquad \begin{minipage}{\dimexpr\textwidth-1.3cm}
 \hyperref[sec:neuralnetwork1]{Predicting ground state representation} $\bullet$ \hyperref[sec:neuralnetwork2]{Classifying phases of matter}
\end{minipage}
\medskip

\noindent \textbf{\ref{sec:numdetail}.~~\hyperref[sec:numdetail]{Details regarding numerical experiments}} \dotfill\textbf{\pageref{sec:numdetail}}
\medskip

\noindent \qquad \begin{minipage}{\dimexpr\textwidth-1.3cm}
 \hyperref[sec:more-numer]{Additional numerical experiments} $\bullet$ 
 \hyperref[sec:numdetail-groundstate]{Ground state properties of the Rydberg atom chain} $\bullet$ 
 \hyperref[sec:numdetail-groundstate2]{Ground state properties of the 2D antiferromagnetic Heisenberg model} $\bullet$ 
 \hyperref[sec:numdetail-phases]{Classifying phases of the bond-alternating XXZ model} $\bullet$ 
 \hyperref[sec:numdetail-phases2]{Distinguishing a topological phase from a trivial phase}
\end{minipage}
\medskip

\noindent \textbf{\ref{sec:proofideaGSUPP}.~~\hyperref[sec:proofideaGSUPP]{Proof idea for the efficiency in predicting ground states}} \dotfill\textbf{\pageref{sec:proofideaGSUPP}}
\medskip

\noindent \qquad \begin{minipage}{\dimexpr\textwidth-1.3cm}
 \hyperref[sec:qubit-GSUPP]{Main result} $\bullet$ 
 \hyperref[sec:generalize-GSUPP]{Generalization to other systems and settings}
\end{minipage}
\medskip

\noindent \textbf{\ref{sec:proofthmGSUPP}.~~\hyperref[sec:proofthmGSUPP]{Proof of efficiency for predicting ground states}} \dotfill\textbf{\pageref{sec:proofthmGSUPP}}
\medskip

\noindent \qquad \begin{minipage}{\dimexpr\textwidth-1.3cm}
 \hyperref[sec:overviewSCupGS]{Overview for sample complexity upper bound} $\bullet$ 
 \hyperref[sec:truncGS]{Controlling the truncation error} $\bullet$ 
 \hyperref[sec:MLTrun]{Controlling generalization errors from using the training data} $\bullet$ 
 \hyperref[sec:runtimeGS]{Computational time for training and prediction} $\bullet$ 
 \hyperref[sub:smoothness]{Spectral gap implies smooth parametrizations}
\end{minipage}
\medskip

\noindent \textbf{\ref{app:proofthmlowerbound}.~~\hyperref[app:proofthmlowerbound]{Sample complexity lower bound for predicting ground states}} \dotfill\textbf{\pageref{app:proofthmlowerbound}}
\smallskip

\noindent \qquad \begin{minipage}{\dimexpr\textwidth-1.3cm}
 \hyperref[sub:lowerboundsmooth-learning-problem]{Learning problem formulation} $\bullet$ \hyperref[sub:lowerboundsmooth-communication-protocol]{Communication protocol} $\bullet$ \hyperref[sub:lowerboundsmooth-analysis]{Information-theoretic analysis}
\end{minipage}
\medskip

\noindent \textbf{\ref{sec:proofhardnonML}.~~\hyperref[sec:proofhardnonML]{Computational hardness for non-ML algorithms to predict ground state properties}} \dotfill\textbf{\pageref{sec:proofhardnonML}}
\medskip

\noindent \qquad \begin{minipage}{\dimexpr\textwidth-1.3cm}
 \hyperref[sec:proofhardnonML1]{NP-hardness for estimating one-body observables in the ground state of 2D Hamiltonians} $\bullet$
 \hyperref[sec:factoringH]{Computational hardness for a class of Hamiltonians based on factoring}
\end{minipage}
\medskip

\noindent \textbf{\ref{sec:no-nonlocal-obs}.~~~\hyperref[sec:no-nonlocal-obs]{No observable can classify topological phases}} \dotfill\textbf{\pageref{sec:no-nonlocal-obs}}
\medskip

\noindent \textbf{\ref{sec:proofthmPHASEC}.~~~\hyperref[sec:proofthmPHASEC]{Proof of efficiency for classifying phases of matter}} \dotfill\textbf{\pageref{sec:proofthmPHASEC}}
\medskip

\noindent \qquad \begin{minipage}{\dimexpr\textwidth-1.3cm}
 \hyperref[sec:trainSVM]{Training support vector machines} $\bullet$ 
 \hyperref[sec:pred-SVM]{Prediction using support vector machines} $\bullet$ 
 \hyperref[sub:shadow-kernel]{Kernel functions for classical shadows} $\bullet$ 
 \hyperref[sec:phases-assumption]{Physical assumptions about classifying quantum phases of matter} $\bullet$ 
 \hyperref[sec:train-shadowkernel]{Training with shadow kernels} $\bullet$ 
 \hyperref[sec:predict-shadowkernel]{Prediction based on shadow kernels}
\end{minipage}
\bigskip

\noindent \textbf{\ref{sec:tasaki}.~~\hyperref[sec:tasaki]{Classifying SPT phases with $O(2)$ symmetry using a few-body observable}} \dotfill\textbf{\pageref{sec:tasaki}}
\smallskip

\noindent \qquad \begin{minipage}{\dimexpr\textwidth-1.3cm}
 \hyperref[sec:tasaki1]{Symmetry-protected topological phases} $\bullet$ 
 \hyperref[sec:tasaki2]{$O(2)$-symmetric qutrit spin chains}
\end{minipage}

\bigskip\bigskip
For those unfamiliar with the classical shadow formalism \cite{huang2020predicting}, Appendix \ref{sec:classical-shadows} provides a concise introduction and contains all the necessary information on classical shadows to follow this paper. Discussion of related literature on methods in many-body physics, in particular works for training machine learning models to solve quantum many-body problems, is given in Appendix~\ref{sec:relatedwork}.

Readers interested in the numerical experiments can jump directly to Appendix~\ref{sec:neuralnetwork}, which provides a detailed discussion on practical approaches for combining neural networks with classical shadows.
The reader could then continue to Appendix \ref{sec:numdetail}, which gives additional numerical experiments, details of numerical experiments, and example codes (in Python) for using the machine learning models.

The rest of the appendices are dedicated to providing a rigorous understanding in using classical machine learning models to solve quantum many-body problems.

\paragraph{Predicting ground states}: We recommend that readers start with Appendix~\ref{sec:proofideaGSUPP}, which provides the idea for why classical machine learning models can be trained to predict ground state representations of quantum systems. In Appendix~\ref{sec:proofideaGSUPP}, we also provide the proof idea for generalizing to other settings (such as Fermionic systems, long-range interacting systems, etc).
The detailed proof is given in Appendix~\ref{sec:proofthmGSUPP}.
In Appendix~\ref{app:proofthmlowerbound}, we give a fundamental lower bound in the required data size for learning to predict ground state properties for general classes of Hamiltonians.
In Appendix~\ref{sec:proofhardnonML}, we show why non-ML algorithms cannot achieve a similar guarantee as ML algorithms in predicting ground state representations.

\paragraph{Classifying phases of matter}: The reader could begin with the basic proposition given in Appendix~\ref{sec:no-nonlocal-obs}, which shows that no (local or global) observable $\Tr(O \rho)$ can be used to classify topological phases. This motivates the need to consider stronger machine learning models that can learn nonlinear functions in the quantum state $\rho$.
The readers could then proceed to Appendix~\ref{sec:proofthmPHASEC}, which provides a general theory for establishing provable guarantees in training machine learning models based on classical shadows to classify quantum phases of matter.
In Appendix~\ref{sec:tasaki}, we briefly introduce symmetry-protected topological phases and prove that the proposed machine learning model can classify a particular subset of such phases.

\newpage

\section{Background on classical shadows} \label{sec:classical-shadows}

The classical shadows formalism uses randomized (single-shot) measurements to predict many properties of an unknown quantum state $\rho$ at once \cite{huang2020predicting}, see also \cite{paini2019approximate, evans2019scalable}. The underlying idea dates back to \cite{Ohliger_2013} and also features prominently in \cite{VanEnk2012,Elben2018,Vermersch2018}. 
In particular, the classical shadows formalism comes with rigorous performance guarantees in terms of approximation accuracy, classical storage, as well as data processing.
Here, we focus on randomized single-qubit Pauli measurements and repeat the following procedure a total of $T$ times:
(i) prepare an independent copy of $\rho$; (ii) select $n$ single qubit Pauli measurements uniformly at random ($Z$, $X$ and $Y$ occur with probability $1/3$ each) and (iii), perform the associated measurement to obtain 
$n$ classical bits ($+1$ if we measure `up' and $-1$ if we measure `down'). Subsequently, we store the associated post-measurement state
\begin{equation}
| s_1^{(t)}\rangle \otimes \cdots \otimes | s_n^{(t)} \rangle \quad \text{with} \quad |s_1^{(t)}\rangle,\ldots,|s_n^{(t)} \rangle \in \left\{ |0 \rangle, |1 \rangle, |+ \rangle, |- \rangle, |\mathrm{i}+\rangle,|\mathrm{i}-\rangle \right\} \subset \mathbb{C}^2
\label{eq:snapshot}
\end{equation}
in classical memory. This is very cheap, because there are only six possibilities for each qubit. After $T$ repetitions, we obtain an entire collection of $nT$ single-qubit states
that we arrange in a two-dimensional array:
\begin{equation}
S_T (\rho) =\left\{ |s_i^{(t)} \rangle :\; i \in \{1,\ldots,n\}, t \in \{1,\ldots,T\}\right\} 
\in \left\{ |0 \rangle, |1 \rangle, |+ \rangle, |- \rangle, |\mathrm{i}+\rangle,|\mathrm{i}-\rangle \right\}^{n \times T}
\end{equation}
The distribution of product states contains valuable information about the underlying $n$-qubit density matrix $\rho$. 
In fact, we can use $S_T (\rho)$ to approximate $\rho$
via
\begin{equation}
\sigma_T (\rho) = \frac{1}{T}\sum_{t=1}^T \left( 3 |s_1^{(t)}\rangle \! \langle s_1^{(t)}| - \mathbb{I} \right) \otimes \cdots \otimes \left( 3 |s_n^{(t)} \rangle \! \langle s_n^{(t)}|-\mathbb{I} \right), \label{eq:classical-shadow}
\end{equation}
where $\mathbb{I}$ denotes the identity matrix (here, a 2-by-2 identity). 
It is instructive to view this as the empirical average of $T$ independent and identically (\textit{iid}) random matrices.
Each random matrix is an \textit{iid} copy of $\sigma_1 (\rho) = \big( 3 |s_1 \rangle \! \langle s_1|-\mathbb{I}\big) \otimes \cdots \otimes \big( 3|s_n \rangle \! \langle s_n|-\mathbb{I} \big)$. Each tensor factor is guaranteed to have eigenvalues $\lambda_+ = 2$ and $\lambda_- = -1$. This  ensures that \begin{subequations}
\label{eq:shadow-norm}
\begin{align}
\mathrm{tr} \left( \sigma_1 (\rho) \right) =& \mathrm{tr} \left( |s_1 \rangle \! \langle s_1| - \mathbb{I} \right) \cdots \mathrm{tr} \left( |s_n\rangle \! \langle s_n|-\mathbb{I} \right) =1 \quad \text{and} \\
\left\| \sigma_1 (\rho) \right\|_p =& \left\| 3 |s_1 \rangle \! \langle s_1| - \mathbb{I} \right\|_p \cdots \left\| 3 |s_n \rangle \! \langle s_n| - \mathbb{I} \right\|_p
= \left( |\lambda_+|^p + |\lambda_-|^p \right)^{n/p}
= \left( 2^p + 1^p \right)^{n/p},
\end{align}
\end{subequations}
regardless of the concrete realization (and the underlying quantum state $\rho$).
The most relevant Schatten-$p$ norms are $\|\sigma_1 (\rho)\|_1=3^n$, $\|\sigma_1 (\rho) \|_2 = 5^{n/2}$ and $\| \sigma_1 (\rho) \|_\infty = 2^n$.
Note, however, that the matrix $\sigma_1 (\rho)$ is never positive semidefinite. 

The random matrix $\sigma_1 (\rho)$ is a highly structured tensor product that can assume a total of $6^n$ values.
Each of them reflects the outcome of performing randomly selected single-qubit Pauli measurements on the $n$-qubit state $\rho$.
Let us denote these Pauli matrices by $W_1,\ldots,W_n \in \left\{X,Y,Z\right\}$ and let $o_1,\ldots,o_n \in \left\{ \pm1 \right\}$ be the observed outcomes ($+1$ if we measure `spin up' and $-1$ if we measure `spin down').
Elementary reformulations and Born's rule then imply
\begin{align}
\sigma_1 (\rho) = \frac{1}{2}\left( \mathbb{I}+ 3 o_1 W_1 \right) \otimes \cdots \otimes \frac{1}{2}\left( \mathbb{I} + 3 o_n W_n \right)
\quad \text{with prob.} \quad \frac{1}{3^n} \mathrm{tr} \left( \frac{1}{2}(\mathbb{I} + o_1 W_1) \otimes \cdots \otimes \frac{1}{2} (\mathbb{I} + o_n W_n) \rho \right).
\end{align}
This construction ensures that $\sigma_1 (\rho)$ exactly reproduces the underlying quantum state $\rho$ in expectation. That is, if we average over all $3^n$ choices of Pauli measurements and the associated (single-shot) outcomes $o_i \in \left\{\pm 1 \right\}$, we obtain
\begin{subequations}
\begin{align}
\E_{s_1,\ldots,s_n} \left[ \sigma_1 (\rho) \right]
&= \E_{s_1,\ldots,s_n}
\left[ \frac{1}{2}\left( \mathbb{I} +3 o_1  W_1 \right) \otimes \cdots \otimes \frac{1}{2}\left( \mathbb{I} +3 o_n 3 W_n \right)\right] \\
&=
\sum_{W_1,\ldots,W_n =X,Y,Z}\sum_{o_1,\ldots,o_n=\pm 1} \frac{1}{3^n} \mathrm{tr} \left( \frac{1}{2}(\mathbb{I} +o_1 W_1) \otimes \cdots \otimes \frac{1}{2} (\mathbb{I} +o_n W_n) \rho \right) \\
&~~~~~~~~~~~~~~~~~~~~~~~~~~~~~~~~~~~~~~~~~~~~~~~\times \frac{1}{2}\left( \mathbb{I} +o_1 3 W_1 \right) \otimes \cdots \otimes \frac{1}{2}\left( \mathbb{I} +o_n 3 W_n \right)  \\
&= \rho. 
\end{align}
\label{eq:snapshot-expectation}
\end{subequations}
We refer to Ref.~\cite{huang2020predicting} for a more detailed derivation and context. 

The classical shadow~\eqref{eq:classical-shadow} attempts to approximate this expectation value by an empirical average over $T$ independent samples, much like Monte Carlo sampling approximates an integral. 
The accuracy of the approximation increases with $T$,
but insisting on accurate approximations of the global state $\rho$ is prohibitively expensive. Known fundamental lower bounds~\cite{flammia2012quantum,haah2017sample} state that classical shadows of exponential size (at least) $T= \Omega \left( 2^n/\epsilon^2 \right)$ are required to $\epsilon$-approximate $\rho$ in trace distance.
This quickly becomes intractable in terms of both measurement budget, as well as classical storage and processing.

This bleak picture lightens up considerably if we restrict our attention to subsystem approximations. 
The classical shadow size required to accurately approximate \emph{all} reduced $r$-body density matrices scales exponentially in subsystem size $r$, but is independent of the total number of qubits $n$. 

\begin{lemma} \label{lem:subsystem-approximation}
Fix $\epsilon,\delta \in (0,1)$, a subsystem size $r \leq n$ and
let $\sigma_T (\rho)$ be a classical shadow~(\ref{eq:classical-shadow}) of an $n$-qubit quantum state $\rho$ with size 
\begin{equation}
T=(8/3) 12^r \left( r \left( \log (n) + \log (12) \right) + \log (1/\delta) \right)/\epsilon^2 = \mathcal{O}\left( r 12^r \log (n/\delta)/\epsilon^2 \right).
\end{equation}
Then, with probability at least $1-\delta$,
\begin{equation}
\left\| \mathrm{tr}_{\neg A} \left( \sigma_T (\rho) \right) - \mathrm{tr}_{\neg A}\left( \rho \right) \right\|_1 \leq \epsilon \quad \text{for all subsystems $A \subset \{1,\ldots,n\}$ with size $|A|\leq r$.}
\end{equation}
\end{lemma}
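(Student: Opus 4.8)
The plan is to exploit two structural facts about the classical shadow: it is an unbiased estimator of $\rho$ (Eq.~\eqref{eq:snapshot-expectation}), and tracing out a qubit from a single snapshot factor $\sigma_i^{(t)} = 3\ketbra{s_i^{(t)}}{s_i^{(t)}}-\mathbb{I}$ returns the scalar $\mathrm{tr}(\sigma_i^{(t)}) = 3-2 = 1$. Consequently, for any subsystem $A$ the reduced shadow collapses to a genuinely local object, $\mathrm{tr}_{\neg A}(\sigma_T(\rho)) = \frac{1}{T}\sum_{t=1}^T \bigotimes_{i\in A}\sigma_i^{(t)}$, whose expectation is exactly $\mathrm{tr}_{\neg A}(\rho)$. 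Thus the error matrix $\Delta_A := \mathrm{tr}_{\neg A}(\sigma_T(\rho)) - \mathrm{tr}_{\neg A}(\rho)$ is an average of $T$ i.i.d., mean-zero, Hermitian matrices acting on at most $r$ qubits, and the whole statement reduces to a concentration estimate on a $2^r$-dimensional space, held uniformly over all subsystems $A$.

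Next I would trade the trace norm for the (more concentration-friendly) Pauli coefficients. Expanding $\Delta_A$ in the Pauli basis of $A$ and using $\|\Delta_A\|_1 \le 2^{|A|/2}\|\Delta_A\|_2$ together with $\|\Delta_A\|_2^2 = 2^{-|A|}\sum_{P}\big(\mathrm{tr}(P\Delta_A)\big)^2$ gives $\|\Delta_A\|_1 \le \big(\sum_{P}\hat\Delta_P^2\big)^{1/2}$, where $P$ ranges over the (at most $4^r$) Pauli strings supported on $A$ and $\hat\Delta_P := \mathrm{tr}(P\sigma_T(\rho)) - \mathrm{tr}(P\rho)$. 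So it suffices to drive every such $\hat\Delta_P$ below $\epsilon\,2^{-r}$, since then $\sum_P \hat\Delta_P^2 \le 4^r (\epsilon 2^{-r})^2 = \epsilon^2$.

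The key single-shot computation is that for a weight-$w$ Pauli $P$ one has $\mathrm{tr}(P\sigma_1(\rho)) = \prod_{i\in\mathrm{supp}(P)} \mathrm{tr}(P_i\sigma_i)$, and each factor equals $3o_i$ when the randomly chosen basis on qubit $i$ matches $P_i$ (probability $1/3$) and vanishes otherwise. Hence $|\mathrm{tr}(P\sigma_1)|\le 3^w$ deterministically, while $\E[\mathrm{tr}(P\sigma_1)^2] = 9^w\cdot 3^{-w} = 3^w$, because only the $3^{-w}$ fraction of basis assignments that match all of $\mathrm{supp}(P)$ contribute — the crucial point being a variance of $3^w$ rather than $9^w$. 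With range $\le 3^w+1\le 2\cdot 3^r$ and variance $\le 3^w\le 3^r$, Bernstein's inequality bounds the deviation of the empirical mean $\hat\Delta_P$. Setting the target accuracy $s=\epsilon\,2^{-r}$, the variance term dominates and yields a tail $\le 2\exp(-3T\epsilon^2/(8\cdot 12^r))$; here $12^r = 4^r\cdot 3^r$ appears precisely from balancing the $2^{-2r}$ loss of the trace-to-Frobenius bound against the $3^r$ shadow variance, and the constant $8/3$ is exactly the Bernstein denominator.

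Finally I would take a union bound over all Pauli observables of weight at most $r$ on the $n$ qubits; their number is $\sum_{w=0}^r\binom{n}{w}3^w \le (r+1)(3n)^r \le \tfrac12(12n)^r$, so controlling all of them simultaneously controls every reduced density matrix on a subsystem of size $\le r$ at once. Demanding the total failure probability be at most $\delta$ turns the exponent into $\log\!\big((12n)^r/\delta\big) = r(\log n + \log 12) + \log(1/\delta)$, and solving for $T$ reproduces the stated sample size. I expect the main obstacle to be the uniformity over all subsystems combined with the delicate constant bookkeeping: one must verify that tracing out genuinely localizes the estimator (so the complexity is governed by $r$, not $n$), that the trace-norm-to-Pauli-coefficient reduction loses no more than the budgeted $2^{-r}$, and that the crude bound $(r+1)(3n)^r\le\tfrac12(12n)^r$ cleanly absorbs the extra $4^r$ into the logarithm — these together are what keep the $n$-dependence only logarithmic while pinning down the stated $12^r$ and $8/3$ constants.
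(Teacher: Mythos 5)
Your proposal is correct, and it reaches the stated constants by a genuinely different route than the paper. The paper works directly with the reduced snapshots $\bigotimes_{i\in A}\sigma_i^{(t)}$ as $2^{|A|}$-dimensional random matrices, applies the \emph{matrix} Bernstein inequality (importing the variance bound $\sigma^2 = 3^r$ from the literature), obtains concentration in operator norm, converts to trace norm at the cost of a dimension factor $2^r$, and union-bounds over the at most $n^r$ subsystems. You instead scalarize the problem: you pass from $\left\| \Delta_A \right\|_1$ to the Pauli coefficients via $\left\| \Delta_A \right\|_1 \le \bigl(\sum_P \hat\Delta_P^2\bigr)^{1/2}$, compute the single-shot variance $\E\bigl[\mathrm{tr}(P\sigma_1)^2\bigr] = 3^w$ from scratch, apply ordinary scalar Bernstein to each weight-$\le r$ Pauli, and union-bound over the at most $\tfrac12(12n)^r$ such Paulis. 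The two arguments lose the same factor of $4^r$ in different places --- the paper by squaring the $2^r$ in the trace-to-operator-norm conversion, you by squaring the per-Pauli accuracy budget $\epsilon\,2^{-r}$ --- so both land on $12^r = 3^r\cdot 4^r$ and on the identical exponent $3T\epsilon^2/(8\cdot 12^r)$; your union bound $\tfrac12(12n)^r$ and the paper's $n^r 2^{r+1} \le (12n)^r$ then give the same logarithm. What your route buys is self-containedness (no matrix concentration inequality and no external variance computation are needed) and a statement that simultaneously controls all low-weight Pauli expectation values, which is itself useful; what the paper's route buys is an operator-norm concentration statement for each reduced density matrix as an intermediate, and a formulation that generalizes more directly to non-product measurement ensembles where the Pauli-diagonal structure you exploit is unavailable.
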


\begin{proof}
Let us start by considering a fixed subsystem $A=\{i_1,\ldots,i_r\}$ comprised of (at most) $r$ qubits. 
Use linearity to exchange partial trace with expectation value to obtain
\begin{subequations}
\begin{align}
& \E_{s_{i_1}^{(t)},\ldots,s_{i_r}^{(t)}}
\left[ \big( 3|s_{i_1}^{(t)}\rangle \! \langle s_{i_1}^{(t)}| - \mathbb{I} \big) \otimes \cdots \otimes \big( 3 |s_{i_r}^{(t)} \rangle \! \langle s_{i_r}^{(t)}| -\mathbb{I} \big) \right] \\
&= \mathrm{tr}_{\neg A}
\left(
\E_{s_{1}^{(t)},\ldots,s_{n}^{(t)}}
\left[ \big( 3|s_{1}^{(t)}\rangle \! \langle s_{1}^{(t)}| - \mathbb{I} \big) \otimes \cdots \otimes \big( 3 |s_{n}^{(t)} \rangle \! \langle s_{n}^{(t)}| -\mathbb{I} \big) \right]
\right) \\
&= \mathrm{tr}_{\neg A} (\rho),
\end{align}
\end{subequations}
according to Eq.~\eqref{eq:snapshot-expectation}.
In words, each reduced tensor product 
is an independent random matrix that reproduces the $r$-qubit state $\mathrm{tr}_{\neg A}(\rho)$ exactly in expectation. 
Empirical averages of $T$ such independent and identically distributed (\textit{iid}) random matrices tend to concentrate sharply around this expectation value.
The matrix Bernstein inequality,  see e.g.\ \cite{Tro12:User-Friendly}, provides powerful tail bounds in terms of operator norm deviation. Let $X_1,\ldots,X_T$ be \textit{iid} random $D$-dimensional matrices that obey $\|X_t -\E X_t\|_\infty \leq R$ almost surely. Then, for $\tilde{\epsilon} >0$
\begin{equation}
\mathrm{Pr} \left[ \left\| \tfrac{1}{T} \sum_{t=1}^T \left( X_t - \E X_t \right) \right\|_\infty \geq \tilde{\epsilon} \right] \leq 2 D \exp \left( - \frac{T\tilde{\epsilon}^2/2 }{\sigma^2+R\tilde{\epsilon}/3}\right)
\quad \text{where} \quad \sigma^2 = \left\|\tfrac{1}{T} \sum_t \E X_t^2 \right\|_\infty.
\end{equation}
Let us apply this tail bound to classical shadow concentration.
We have $D \leq 2^r$ (at most $r$ qubits) and set $X_t = \big(3 |s_{i_1}^{(t)} \rangle \! \langle s_{i_1}^{(t)}-\mathbb{I}\big) \otimes \cdots \otimes \big( 3 |s_{i_r}^{(t)} \rangle \! \langle s_{i_r}^{(t)} | - \mathbb{I} \big)$, such that $\E X_t = \mathrm{tr}_{\neg A} (\rho)$.
Eq.~\eqref{eq:shadow-norm} then implies $\|X_t - \E X_t \|_\infty \leq \| X_t \| + \| \E X_t \|_\infty  \leq 2^r+1 =:R$.
Accurately bounding $\sigma^2$ is somewhat more involved, and we turn to existing literature.
A computation detailed in  \cite[Appendix~C.3]{guta2020fast} yields $\sigma^2=3^r$.
We are now ready to apply the matrix Bernstein inequality. For $\tilde{\epsilon} >0$,
\begin{align}
\mathrm{Pr} \left[ \left\| \mathrm{tr}_{\neg A} \left( \sigma_T (\rho) \right) - \mathrm{tr}_{\neg A}(\rho) \right\|_\infty \geq \tilde{\epsilon} \right]
\leq 2^{r+1} \exp \left( - \frac{T \tilde{\epsilon}^2/2}{3^r+(2^r+1) \tilde{\epsilon}/3} \right)
\leq 2^{r+1} \exp \left( - \frac{3 T \tilde{\epsilon}^2}{8 \times 3^r} \right),
\end{align}
for $\tilde{\epsilon} \in (0,1)$. This is a powerful concentration statement in operator norm.
We can use the  equivalence relation between trace- and operator norm, $\|X \|_\infty \leq \|X \|_1 \leq D \|X \|_\infty$, to obtain a tail bound for trace norm deviations:
\begin{equation}
\mathrm{Pr} \left[ \left\| \mathrm{tr}_{\neg A} \left( \sigma_T (\rho) \right) - \mathrm{tr}_{\neg A}(\rho) \right\|_\infty \geq \epsilon \right]
\leq \mathrm{Pr} \left[ \left\| \mathrm{tr}_{\neg A} \left( \sigma_T (\rho) \right) - \mathrm{tr}_{\neg A}(\rho) \right\|_1 \geq \epsilon /2^r \right]
\leq 2^{r+1} \exp \left( - \frac{3 T \epsilon^2}{8 \times 12^r} \right).
\end{equation}
We see that, for a fixed subsystem $A=\{i_1,\ldots,i_r\}$, the probability of an $\epsilon$-deviation in trace distance is exponentially suppressed in the size $T$ of the classical shadow.
A union bound allows us to extend this assertion to \emph{all} subsystems comprised of (at most) $r$ qubits:
\begin{subequations}
\begin{align}
\mathrm{Pr} \left[\max_{A \subset \{1,\ldots,n\},|A| \leq r} \left\| \mathrm{tr}_{\neg A}\left( \sigma_T (\rho) \right) - \mathrm{tr}_{\neg A} (\rho) \right\|_1 \geq \epsilon\right]
&\leq  \sum_{A \subset \{1,\ldots,n\},|A| \leq r}
\mathrm{Pr} \left[ \left\| \mathrm{tr}_{\neg A}\left( \sigma_T (\rho) \right) - \mathrm{tr}_{\neg A} (\rho) \right\|_1 \geq \epsilon \right] \\
&\leq  n^r 2^{r+1} \exp \left( - \frac{3 T \epsilon^2}{8 \times 12^r}\right).
\end{align}
\end{subequations}
Setting $T=(8/3)12^r \left(\log (n^r 12^r)+\log (1/\delta) \right)/\epsilon^2=
(8/3) 12^r \left( r \left( \log (n) + \log (12) \right) + \log (1/\delta) \right)/\epsilon^2$
ensures that this upper bound on failure probability does not exceed $\delta$.
\end{proof}

\section{Related work} \label{sec:relatedwork}

\subsection{Estimating ground state properties} \label{sec:relatedwork1}

Determining the ground state of a system governed by a known many-body Hamiltonian is a long-standing problem in quantum science.
Despite having several well-established and practically successful algorithms at our disposal, we are typically faced with either a runtime that scales exponentially with system size, and/or a lack of rigorous performance guarantees.
The literature is vast, and surveying it is beyond the scope of this article. Instead, we review a few families of established algorithms in order to put our work into proper context.

Density functional theory (DFT) has been a workhorse for determining properties of interacting electronic systems in quantum chemistry and solid-state physics. DFT recasts the problem of finding the many-body state with minimal energy into finding a few-body energy functional. While the ``true'' functional corresponding to the ground state is known to exist in theory~\cite{HohenbergKohn, NobelKohn}, determining it to polynomial accuracy in the number of electrons is QMA-hard \cite{Schuch2009}. Various efficient approximations to the true functional have seen much practical success, but they are difficult to justify rigorously (except so far for some special cases \cite{Lewin2020}). These limitations present an opportunity for ML approaches to be used instead of or to supplement DFT methods \cite{Schleder2019, moreno2020deep}.
In \cite{moreno2020deep}, they utilized the Hohenberg-Kohn theorem in DFT which implies the existence of a mapping from single-particle densities to correlation function. Then \cite{moreno2020deep} proposed to train deep learning models to approximately learn the mapping between correlation functions and the densities.
The authors in \cite{moreno2020deep} found that across the phase boundary, the mapping becomes hard to learn. This empirical observation could be seen as a manifestation of the fact that Theorem~\ref{thm:mainFourier} only holds when we do not consider the presence of phase transitions.
However, our numerical experiments on Rydberg atoms suggest that ML models could still predict well when there are multiple phase boundaries.
Understanding when predictions would be accurate in the presence of phase transitions is an interesting and important question to study further.

The family of algorithms known as Quantum Monte Carlo (QMC)~\cite{CEPERLEY555,SandvikSSE} utilizes probabilistic sampling techniques to estimate observable properties at either finite or zero temperature. For ground states, expectation values can be obtained using an imaginary-time evolution projector or a high-power of the model Hamiltonian~\cite{KaulQMC}. The efficiency of QMC methods depends on the structure of the Hamiltonian, specifically on whether all of its off-diagonal matrix elements are negative (i.e.\ it is {\it stoquastic}~\cite{BravyiStoquastic}). In this case, the ground state wavefunction is real-valued and positive, and the algorithmic complexity of the QMC estimators scales polynomially with the number of particles. For non-stoquastic Hamiltonians, the QMC suffers from the so-called {\it sign problem}, which makes evaluation of statistical properties of the system NP-hard and renders QMC intractable for large systems or low temperatures~\cite{TroyerSignProblem}. It is important to note that, even for stoquastic Hamiltonians with polynomial computational complexity, the success of QMC simulations heavily relies on the existence of efficient sampling schemes (e.g. cluster updates) which are sufficiently ergodic, and leading to small auto-correlation time~\cite{EvertzLoop}. In general, it is not possible to prove the existence of such update algorithms, nor their ergodicity.  

An alternative approach to solve for the ground state properties of a many-body Hamiltonian is based on the variational principle in quantum mechanics, which states that the expectation value of the energy on any valid wavefunction is always greater or equal than the ground state energy. It is then possible to design classical parametric representations of the many-body wavefunction, and update their parameters to minimize the corresponding energy estimator. A notable example is the density-matrix renormalization group~\cite{DMRG1,DMRG2} (DMRG). This algorithm can be interpreted as a variational optimization of a Matrix Product State (MPS)~\cite{Garcia07,Vestraete2008,SCHOLLWOCK201196}, which is a local decomposition of a wavefunction as a one-dimensional tensor network. These parametrized wavefunctions display area law of entanglement and exponentially-decaying correlations~\cite{ORUS2014117}, which lend themselves most effective for systems described by one-dimensional gapped Hamiltonians. Furthermore, a standout feature of DMRG is that modifications of the original procedure, such as rigorous renormalization group algorithms \cite{arad2017rigorous}, are guaranteed to find the ground state of one-dimensional geometrically-local gapped Hamiltonians in polynomial time \cite{landau2015polynomial, arad2017rigorous, abrahamsen2020polynomialtime}. In two spatial dimensions, MPS-based DMRG can still be applied to solve for ground states~\cite{Stoudenmire12,Wu2019,Moore2020}, though it suffers an exponential scaling in one of the two linear dimensions of the system. Projected entangled pair states (PEPS)~\cite{Jordan2008, Vanderstraeten2016, Corboz2016}, the two-dimensional generalization of MPSs, are instead a more suitable ansatz in this context. However, while improved algorithms for PEPS optimization are routinely put forward~\cite{Zalatel2020, Haghshenas2019, Hyatt2019}, the same level of performance achieved by DMRG in 1d systems is still out of reach.

Another class of variational wavefunctions that has recently received a lot of attention are {\it neural-network quantum states}~\cite{Carleo_2017}. In this framework, a neural network is used as a parametric function approximator of a many-body wavefunction $\psi_{\lambda}(\sigma)=\langle\sigma|\psi\rangle$, where the classical state $\sigma$ is interpreted as the neural-network input, and $\lambda$ is a set of neural-network parameters (i.e. weights and biases). In a variational setting, these parameters are iteratively optimized to lower the total energy~\cite{becca_sorella_2017}, or additionally the energy variance. Neural-network quantum states have been explored in a variety of setups, including topological phases~\cite{dassarma2017}, Fermi-Hubbard models~\cite{Nomura2017}, molecular ground states~\cite{choo_fermionicnqs2020}, frustrated magnetism~\cite{Ferrari2019}, and more~\cite{Glasser2018, Choo2018, rbm_su2, morawetz2020, Luo2021}. The auto-regressive property of some types of neural networks (e.g. recurrent neural networks, transformers, etc.) has also been leveraged to improve convergence of variational Monte Carlo~\cite{hibat2020recurrent}. In contrast to tensor-network states, this class of wavefunctions can more easily display non-local correlations, allowing in principle to capture quantum states with higher entanglement~\cite{Deng2017}.

Another class of machine learning methods \cite{gilmer2017neural,vargas2018extrapolating,schutt2019unifying,qiao2020orbnet,kawai2020predicting} train neural networks to predict the ground state or excited state properties directly. The input to the neural network is a description of the Hamiltonian, and the output is a ground state property of interest.
The training data is a set of different Hamiltonians (inputs) and their corresponding ground state properties (outputs).
This class of ML methods is closest to the setting considered in this paper. Methods in this class lack rigorous guarantees, so it is not clear when such approaches could outperform non-ML algorithms.
In one of our main contributions, given in Theorem~\ref{thm:mainFourier} and Proposition~\ref{thm:hardnonML}, we introduce an ML model that, when trained with experimental data, can accurately predict ground state representations better than any classical algorithm that does not learn from data.
Our model is relatively basic, utilizing the well-known $l_2$-Dirichlet kernel, but it is already enough to establish a rigorous guarantee.
Similarly determining when other ML models yield an advantage over non-ML algorithms is an interesting topic for future work.

\subsection{Classifying quantum phases of matter} \label{sec:relatedwork2}

Proposals for classifying quantum phases of matter abound. These include quantum neural networks \cite{cong2019quantum}, classical neural networks \cite{carrasquilla2017nature,evert2017nature,neupeurt2017,Beach2017,Greplova_2020}, or other classical machine learning models \cite{leiwang2016,Wetzel2017,rodriguez2019identifying, che2020topological}.  
Since these models do not come with rigorous guarantees, relying on them too much can lead one astray. For example, some deep neural networks can misclassify the original phase if the corresponding state is distorted by noise, even if the distortion is very slight \cite{jiang2019vulnerability}.

In this work, we provide rigorous machine learning approaches that are guaranteed to classify accurately under the specified conditions given in Theorem~\ref{thm:phaseclassification}. We believe similar analyses can be performed on other machine learning models to understand their limitation and potential, which will be an important future direction.
The ML model used in \cite{rodriguez2019identifying}, which is based on defining diffusion maps over classical spins systems, is the ML approach that seems most similar to that used in our work. Hence, it is very likely that the models considered in \cite{rodriguez2019identifying} can be rigorously analyzed via similar techniques.
Neural network approaches for classifying phases of matter will be harder to analyze, but one should be able to study neural network with large hidden layers using the theory developed in this work and the theory of neural tangent kernels \cite{jacot2018neural, neuraltangents2020}.

\subsection{Classical representations of quantum systems} \label{sec:relatedwork3}

One of the most important ingredients in designing classical ML procedures for understanding quantum spin systems is the construction of efficient classical representations of the underlying quantum systems. The properties of the quantum system retained by the classical representation directly determine the set of functions the classical ML procedure can learn.
The classical shadow formalism \cite{huang2020predicting,paini2019approximate}, developed by some of us and others and used throughout this work, is a versatile framework for this purpose. It has been extended to fermionic systems \cite{zhao2021fermionic, hadfield2020measurements}, suggesting that our ML approaches may be extendable as well. Classical shadows have also been shown to allow sample-efficient reconstruction of Hamiltonian from thermal states \cite{anurag2021sample}, although such an algorithm is not yet time efficient. However, classical shadows provide only one of many promising and actively studied approaches for efficient representation \cite{cotler2020quantum, paini2019approximate, huang2020predicting, evans2019scalable, koh2020classical, chen2020robust, huang2021efficient}. 

While the curse of dimensionality prevents one from representing general quantum spin systems both exactly and efficiently, simplifying assumptions can lift the curse and drastically reduce both the overhead and complexity of representing and characterizing the system. A prime example is a classical system, whose Hamiltonian is diagonal in the computational basis. ML methods for such systems, such as those in Ref.~\cite{rodriguez2019identifying} (discussed above), do not require an additional quantum-to-classical compression. Another example, relevant to electronic material characterization, is the family of solid-state band insulators \cite{topobook} --- gapped two- and three-dimensional non-interacing fermionic systems with various crystalline symmetries. Their myriad topological phases can typically be characterized by data at a discrete set of high-symmetry points in the Brillouin zone \cite{Bacry,Bradlyn2017,Po2017,Kruthoff2017,Po2020,Cano2021}. The techniques developed here should pave the way for certifying accuracy of current ML methods for band insulator characterization (e.g., \cite{Sun2018,Ming2019,Balabanov2021,Scheurer2020,Claussen2020,Andrejevic2020,Peano2021}) as well as developing new ones.

\section{Neural networks with classical shadow for quantum many-body problems} \label{sec:neuralnetwork}

Imposing inductive biases in the ML model is a common technique for boosting the prediction performance of ML models.
One approach is to enhance the proposed ML algorithms with neural networks, such as convolutional or graph neural networks. These neural networks could better capture structure of the underlying function we are trying to learn and hence may require significantly less data than the very expressive ML model given in the main text. We leave the proof that neural network enhancements can lead to better prediction performance as a goal for future work.

There are multiple ways of combining classical shadows and neural networks. Here, we will only showcase one such approach by utilizing the theory of neural tangent kernels \cite{jacot2018neural}.
Remarkably, this theory allows us to efficiently train various types of neural networks (convolutional/graph/etc.) with an infinite number of neurons in each hidden layer (\textit{infinite width}).
As such, this line of work has gained a lot of attention \cite{du2019graph,arora2019exact, neuraltangents2020} in recent years.
In the limit of infinite width, one can analytically solve for the neural network after training on a set of data $\{x_{\ell}, y_{\ell}\}_{\ell = 1}^N$, where $x_\ell$ and $y_\ell$ are vectors of some size. For example, consider training a neural network that takes in a vector $x$ and produces a vector $f^{\mathrm{NN}}_\theta(x)$ through the following optimization problem using gradient descent,
\begin{equation}
    \min_{\theta} \sum_{\ell = 1}^N \norm{f^{\mathrm{NN}}_\theta(x_{\ell}) - {y}_{\ell}}_2^2,
\end{equation}
where we begin on a randomly initialized $\theta$.
Note that due to the infinite number of neurons, $\theta$ is a vector of infinite dimension.
The trained neural network $f^{\mathrm{NN}}_{\theta^*}({x})$ can always be written in the following form
\begin{equation}\label{eq:ntkfunc}
    f^{\mathrm{NN}}_{\theta^*}(x) = \sum_{\ell = 1}^N \sum_{\ell'=1}^N k^{\mathrm{(NTK)}}(x, x_{\ell}) (K^{-1})_{\ell \ell'} y_{\ell'},
\end{equation}
where $k^{\mathrm{(NTK)}}(x, x')$ is a function called the neural tangent kernel \cite{jacot2018neural}, and $K_{\ell, \ell'} = k^{\mathrm{(NTK)}}({x}_{\ell}, {x}_{\ell'})$ is the kernel matrix of the neural tangent kernel.
One can see that the infinite-dimensional vector $\theta^*$ does not appear on the right hand side of Eq.~\eqref{eq:ntkfunc}.
And as long as we can efficiently evaluate the neural tangent kernel $k^{\mathrm{(NTK)}}(x, x')$, we can evaluate the infinite-dimensional neural network in polynomial time.
This is the main contribution of \cite{jacot2018neural}, which enables one to efficiently train infinite-width neural networks.
For a given neural network architecture, one can compute $k^{\mathrm{(NTK)}}({x}, {x}')$ efficiently using open-source software, such as \cite{neuraltangents2020}.
In Appendix~\ref{sec:numdetail-groundstate}, we give the code for training infinite-width neural networks using the open-source software: Neural Tangents \cite{neuraltangents2020}.

\subsection{Predicting ground state representation} \label{sec:neuralnetwork1}

For the task of predicting ground state representation, we consider the training data to be
\begin{equation}
\big\{{x}_{\ell} \rightarrow \sigma_T(\rho({x}_{\ell}))\big\}_{\ell = 1}^N~,
\end{equation}
where $\sigma_T(\rho({x}_{\ell}))$ is the classical shadow representation of $\rho({x}_{\ell})$ given in Eqs.~\eqref{eq:sigma-T-shadow} and \eqref{eq:classical-shadow} based on $T$ randomized Pauli measurements.
Recall that $\sigma_T(\rho({x}_{\ell}))$ is a $2^n \times 2^n$ matrix that reproduces $\rho({x}_{\ell})$ in expectation over the randomized Pauli measurements.
Suppose we now train an infinite-width neural network parameterized by $\theta$ that takes in an input ${x}$ and produces an exponential-size matrix $\sigma^{\mathrm{NN}}_\theta({x})$, by solving the optimization problem
\begin{align}
    \min_{\theta} & \quad \sum_{\ell = 1}^N \norm{\sigma^{\mathrm{NN}}_\theta({x}_{\ell}) -  \sigma_T(\rho({x}_{\ell}))}_F^2.
\end{align}
The squared Frobenius difference between two matrices is equal to the squared Euclidean norm of their vectorizations (flattenings).
In turn, the theory of infinite-width neural networks \cite{jacot2018neural} shows that the trained neural network $\sigma^{\mathrm{NN}}_{\theta^*}({x})$ could be written in the form
\begin{equation} \label{eq:nngroundstate}
    \sigma^{\mathrm{NN}}_{\theta^*}({x}) = \sum_{\ell = 1}^N \sum_{\ell'=1}^N k^{\mathrm{(NTK)}}({x}, {x}_{\ell}) (K^{-1})_{\ell \ell'} \sigma_T(\rho({x}_{\ell'})).
\end{equation}
The kernel function $k^{\mathrm{(NTK)}}({x}, {x}')$ depends on the neural network architecture and could be calculated utilizing existing open-source software \cite{neuraltangents2020}.
This also falls into the general form shown in the main text; see Eq.~\eqref{eq:sigma(x)}.
Hence, training an infinite-width neural networks to predict an exponentially large density matrix can be done efficiently on a classical computer.
For a given neural network architecture, all one has to do is compute the kernel function $k^{\mathrm{(NTK)}}({x}, {x}')$. Then the neural network optimized using the training data could be analytically solved as given in Eq.~\eqref{eq:nngroundstate}.
To estimate a property on the predicted ground state using the neural network is as simple as evaluating
\begin{equation}
    \Tr(O \sigma^{\mathrm{NN}}_{\theta^*}({x})) = \sum_{\ell = 1}^N \sum_{\ell'=1}^N k^{\mathrm{(NTK)}}({x}, {x}_{\ell}) (K^{-1})_{\ell \ell'} \Tr(O \sigma_T(\rho({x}_{\ell'}))),
\end{equation}
which can be done by first computing $\Tr(O \sigma_T(\rho({x}_{\ell}))), \forall \ell = 1, \ldots, N$ and compute the linear interpolation.

\subsection{Classifying phases of matter} \label{sec:neuralnetwork2}

We want to learn how to classify two phases of $n$-qubit states.
A fully classical training set would simply consist of $N$ labeled classical representations of quantum states $\{\rho_\ell \rightarrow y_\ell \}_{\ell=1}^N$, where $y_\ell=+1$ ($-1$) if $\rho_\ell$ belongs to phase $A$ ($B$). 
However, insisting on perfect knowledge of each $\rho_\ell$ is  impractical for a variety of reasons. Instead, we assume that we have access to classical shadows of $\rho_\ell$. 
The raw data $S_T (\rho_\ell)$ behind each classical shadow is a 2-dimensional array,
\begin{align}
S_T (\rho_\ell) = \left\{ |s_i^{(t)} \rangle:\; i \in \{1,\ldots,n\},t \in \{1,\ldots,T\} \right\}
\quad \text{where} \quad |s_i^{(t)}\rangle \in \{\ket{0}, \ket{1}, \ket{+}, \ket{-}, \ket{\mathrm{i}+}, \ket{\mathrm{i}-}\} . \label{eq:snapshot-array}
\end{align}
In the main text, we propose to use this data to train a support vector machine based on the shadow kernel 
\begin{equation}
k^{\mathrm{(shadow)}} \left( S_T (\rho_\ell),\tilde{S}_T (\rho_{\ell'}) \right)
= \exp \left( \frac{\tau}{T^2} \sum_{t,t'=1}^T \exp \left( \frac{\gamma}{n} \sum_{i=1}^n \mathrm{tr} \left( \big ( 3|s_i^{(t)}\rangle \! \langle s_i^{(t)}|-\mathbb{I} \big) \big( 3|\tilde{s}_i^{(t)}\rangle \! \langle \tilde{s}_i^{(t)}|-\mathbb{I} \big) \right) \right) \right).
\label{eq:shadow-kernel-ml}
\end{equation}
This specific choice of (deterministic) kernel function allows us to carry out a thorough theoretical analysis of the entire learning procedure; see Appendix~\ref{sec:proofthmPHASEC}. 

But there are other sensible kernels that may perform even better in practice.
For instance, we could feed the two-dimensional data array~\eqref{eq:snapshot-array} into a neural network architecture, e.g.\ a convolutional neural network. 
In the limit of an infinite number of neurons in each hidden layer, this produces the neural tangent kernel 
$k^{\mathrm{(NTK)}} \left( S_T (\rho_\ell),\tilde{S}_T (\rho_{\ell'}) \right)$ \cite{jacot2018neural}. 
This kernel is positive-semidefinite and should be viewed as a measure of similarity induced by the trained neural network. Mercer's theorem~\cite{mercer1909functions} allows us to make this intuition precise by reformulating the neural tangent kernel as a Gram matrix in feature space:
\begin{equation}
k^{\mathrm{(NTK)}} \left( S_T (\rho_\ell), \tilde{S}_T (\rho_{\ell'}) \right) = \left\langle \phi^{\mathrm{(NTK)}} \left( S_T (\rho_\ell) \right), \phi^{\mathrm{(NTK)}} \left( \tilde{S}_T (\rho_{\ell'}) \right) \right\rangle~.
\end{equation}
Hence, any infinite-width neural network with input array $S_T (\rho)$ induces a feature map $\phi^{\mathrm{(NTK)}}$ that can be used instead of the doubly-infinite feature map $\phi^{\mathrm{(shadow)}}$ (\ref{eq:featureshadow}) 
that is associated with the shadow kernel~\eqref{eq:shadow-kernel-ml}.

\section{Details regarding numerical experiments} \label{sec:numdetail}

In this appendix, we provide additional numerical experiments as well as more details about the numerical experiments described in the main text.

\subsection{Additional numerical experiments} \label{sec:more-numer}

\begin{figure}[t]
    \centering
    \includegraphics[width=1.0\linewidth]{rydberg_Z_all.pdf}
    \caption{Numerical experiment for predicting ground state properties (Pauli-$Z$ in each atom) in a 1D Rydberg atom system with 51 atoms. We use the classical ML to predict the ground state properties at the three testing points. Also shown are ``predictions'' obtained from the training data nearest to the testing points. The markers denote predicted values, while the solid lines denote exact values obtained from DMRG. }
    \label{fig:rydberg-Z}
    \centering
    \includegraphics[width=1.0\linewidth]{rydberg_X_all.pdf}
    \caption{Numerical experiment for predicting ground state properties (Pauli-$X$ in each atom) in a 1D Rydberg atom system with 51 atoms. We use the classical ML to predict the ground state properties at the three testing points. Also shown are ``predictions'' obtained from the training data nearest to the testing points. The markers denote predicted values, while the solid lines denote exact values obtained from DMRG. }
    \label{fig:rydberg-X}
    \vspace{1em}
    \centering
    \includegraphics[width=1.0\linewidth]{rydberg_Spline_all.pdf}
    \caption{``Predictions'' obtained by performing bivariate B-spline interpolation using the training data.
    The markers denote interpolated values, while the solid lines denote exact values obtained from DMRG. }
    \label{fig:rydberg-spl}
\end{figure}

\begin{figure}[t]
    \centering
    \includegraphics[width=0.82\linewidth]{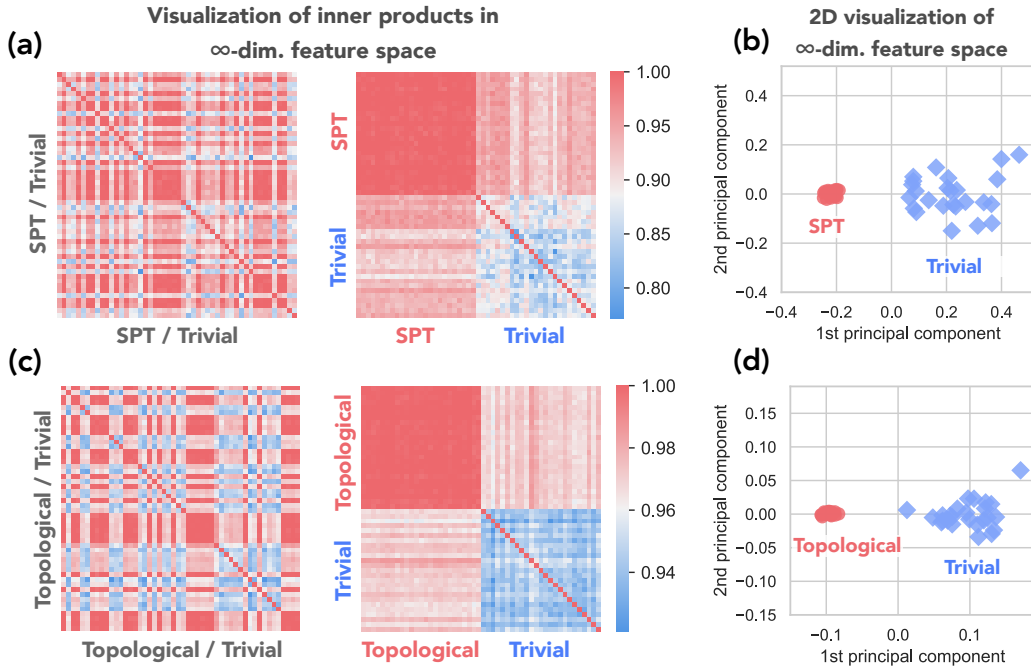}
    \caption{ Numerical experiments for distinguishing trivial and topological phases. Trivial or topological states are generated by applying low-depth local random quantum circuits to a product state or exactly solved topological state respectively. (a) \textsc{Kernel matrix for SPT/trivial phases} The exactly solved topological state is the cluster state. The $(i, j)$-entry denotes the inner product of the $i$-th and $j$-th feature vectors in the infinite-dimensional feature space defined by the classical shadow representation. To the left, states from the two phases are randomly mixed. To the right, the two phases are ordered. (b) \textsc{Kernel matrix for topologically-ordered/trivial phases.} The exactly solved topological state is the toric code ground state.}
    \label{fig:kernelmatrix}
\end{figure}

\emph{Rydberg atom chain} --- In the main text, we have provided partial prediction outcomes for a one-dimensional chain of $n=51$ Rydberg atoms; see Figure~\ref{fig:rydberg}. Here, we supply predictions of expectation values of Pauli operators $Z_i$ and $X_i$ on all $51$ atoms at the testing points marked in Figure~\ref{fig:rydberg}(b). These are shown in Figure~\ref{fig:rydberg-Z} and Figure~\ref{fig:rydberg-X}, respectively.
These extend the more restricted presentation in the main text to all qubits.
In Figure~\ref{fig:rydberg-spl}, we show a different baseline considering bivariate B-spline interpolation from the training data.

\emph{Distinguishing an SPT phase from a trivial phase} --- We consider a one-dimensional chain of $n=50$ qubits with $Z_2 \times Z_2$ symmetry. The 1D cluster state is in the nontrivial SPT phase. We generate other representatives of the nontrivial SPT phase by applying  symmetric depth-$3$ geometrically local random quantum circuits to the cluster state, and we generate representatives of the trivial phase by applying symmetric depth-$3$ random circuits to a product state.

Randomized Pauli measurements are performed $T=500$ times to convert the states to their classical shadows, and these classical shadows are mapped to feature vectors in the infinite-dimensional feature space using the feature map $\phi^{\mathrm{(shadow)}}$ (\ref{eq:featureshadow}).
In Figure~\ref{fig:kernelmatrix}(a), inner products of feature vectors (matrix elements of the shadow kernel) are displayed. 
Figure~\ref{fig:kernelmatrix}(b) shows the feature vectors projected onto a two-dimensional subspace using \emph{nonlinear} principal component analysis (PCA) based on the shadow kernel $k^{\mathrm{(shadow)}}$. Both figures show that feature vectors representing distinct phases can be distinguished easily. 
Correspondingly, the classical ML efficiently learns how to classify phases accurately, even if the training data is unlabeled. 

\emph{Distinguishing a topologically-ordered phase from a trivial phase} --- We consider the task of distinguishing the toric code \cite{kitaev2003fault} topologically-ordered phase from the trivial phase in a system of $n=200$ qubits. We generate other representatives of the topologically-ordered phase by applying two-dimensional depth-$3$ geometrically local random quantum circuits to the toric code state, and we generate representatives of the trivial phase by applying two-dimensional depth-$3$ random circuits to a product state.

Randomized Pauli measurements are performed $T=500$ times to convert the states to their classical shadows, and these classical shadows are mapped to feature vectors in the infinite-dimensional feature space using the feature map $\phi^{\mathrm{(shadow)}}$.
In Figure~\ref{fig:kernelmatrix}(c, d), inner products of feature vectors (matrix elements of the shadow kernel) and the projection of feature space data onto the two-dimensional subspace spanned by the largest principal components is shown. Once more, one can clearly see 
that feature vectors representing distinct phases can be distinguished easily.
Correspondingly, the classical ML efficiently learns how to classify phases accurately, even if the training data is unlabeled. 

\subsection{Ground state properties of the Rydberg atom chain}
\label{sec:numdetail-groundstate}

Our first example is a one-dimensional chain of $n = 51$ Rydberg atoms~\cite{Fendley2004,browaeys_many-body_2020,bernien2017probing}. Each atom can be in either its ground state or a highly excited Rydberg state. Such systems can effectively be regarded as a qubit, where the basis state $|0\rangle$ is the ground state $|g \rangle$ and the basis state $|1\rangle$ is the Rydberg state $|r\rangle$. The Hamiltonian of the atomic chain is
\begin{equation}
    H = \frac{\Omega}{2}\sum_i X_i - \Delta\sum_i N_i + \Omega\sum_{i<j}\left(\frac{R_b}{a|i-j|}\right)^6N_iN_j~,
\end{equation}
where $\Omega$ is the (fixed) Rabi frequency, $\Delta$ is the laser detuning, $N_i$ is the Rydberg occupation number operator, $a$ is the separations of the atoms, and $R_b$ is the so called Rydberg blockade radius. For large and negative $\Delta$, the ground state of $H$ is a vacuum state, where all atoms are in the ground state $|g\rangle$. In contrast, for large and positive $\Delta$, different broken-symmetry ground states can be engineered depending on the value of $R_b$.

Approximations of the exact ground states of the Rydberg chain were found using the density-matrix renormalization group (DMRG) based on matrix product states (MPS). Starting from a random MPS with bond dimension $\chi=10$, we variationally optimize the MPS using a singular value decomposition (SVD) cutoff of $10^{-9}$. We perform a number of DMRG sweeps until the change in energy is below $\epsilon=10^{-6}$. Upon convergence, we perform randomized Pauli measurements simply by performing local rotations into the corresponding Pauli bases, and sampling the resulting state~\cite{FerrisSampling}.

In Figure~\ref{fig:rydberg}(b), the color in the phase diagram corresponds to the phase obtained by two order parameters for characterizing $Z_2$ and $Z_3$ order.
For $Z_2$ order, where the atoms are in $\ket{rgrgrg\ldots}$ or $\ket{grgrgr\ldots}$, we consider the order parameter,
\begin{equation}
    O_{Z_2} = \frac{1}{n-1} \sum_{i=1}^{n-1} \left(\ketbra{r_i g_{i+1}}{r_i g_{i+1} } + \ketbra{g_i r_{i+1} }{g_i r_{i+1}} \right).
\end{equation}
For $Z_3$ order, where the atoms are in $\ket{rggrgg\ldots}$ or $\ket{grggrg\ldots}$ or $\ket{ggrggr\ldots}$, we consider the order parameter,
\begin{equation}
    O_{Z_3} = \frac{1}{n-2} \sum_{i=1}^{n-2} \left(\ketbra{r_i g_{i+1} g_{i+2}}{r_i g_{i+1} g_{i+2}} + \ketbra{g_i r_{i+1} g_{i+2}}{g_i r_{i+1} g_{i+2}} + \ketbra{g_i g_{i+1} r_{i+2}}{g_i g_{i+1} r_{i+2}} \right).
\end{equation}
We estimate the two order parameters of the ground state $\rho$.
First we check which order parameter ($O_{Z_2}$ or $O_{Z_3}$) yields a larger expectation value.
Then, we check if that expectation value is larger than the threshold value $0.8$.
If $O_{Z_2} > O_{Z_3}$ and $O_{Z_2} > 0.8$, we associate the state with the $Z_2$-order phase (red color).
Else if $O_{Z_3} > O_{Z_2}$ and $O_{Z_3} >0.8$, we say that the state is in the $Z_3$-order phase (vanilla color).
If neither of these conditions is satisfied (both expectation values are less than $0.8$), we assign the  disordered phase (blue color) to this state.

For the Rydberg atom experiment, the input parameter vector $x$ is two-dimensional.
We first normalize the values to lie within a square $[-1, 1]^2$.
Then we consider classical machine learning models given by
\begin{equation}
    \hat{\sigma}_N(x) = \sum_{\ell = 1}^N \kappa(x, x_\ell) \sigma_T(x_\ell) = \sum_{\ell = 1}^N \underbrace{\left( \sum_{\ell' = 1}^N k(x, x_{\ell'}) (K+\lambda I)^{-1}_{\ell' \ell} \right)}_{\kappa(x, x_\ell)} \sigma_T(x_\ell),
\end{equation}
where $\lambda > 0$ is a parameter to regularize the model when $K$ is not invertible, $\sigma_T(x_\ell)$ is shorthand for $\sigma_T \left( \rho_\ell \right)$ and denotes the classical shadow representation of the ground state $\rho_\ell = \rho(x_\ell)$ under $T$ randomized Pauli measurements.
Moreover, $K_{ij} = k(x_i, x_j)$ is the kernel matrix, $k(x, x')$ is a kernel function, and $\kappa(x, x_\ell)$ is a function that depends on the kernel function, the kernel matrix $K$, and $\lambda$.
We consider a set of different regularization parameters,
\begin{equation}
    \lambda \in \{0.0125, 0.025, 0.05, 0.125, 0.25, 0.5, 1.0, 2.0, 4.0, 8.0\},
\end{equation}
and we also consider a set of different kernel functions $k(x, x') = \tilde{k}(x, x') / \sqrt{\tilde{k}(x, x) \tilde{k}(x', x')}$, where
\begin{subequations}
\begin{align}
    \tilde{k}(x, x') &= \exp(-\gamma \norm{x - x'}_2^2), &\quad \mbox{(Gaussian kernel)},\\
    \tilde{k}(x, x') &= \sum_{k_1 = -3}^3 \sum_{k_2 = -3}^3 \cos\left(\pi (k_1 (x_1 - x_1') + k_2 (x_2 - x_2') ) \right), &\quad \mbox{(Dirichlet kernel)},\label{eq:numer-dirichlet-kernel}\\
    \tilde{k}(x, x') &= k^{\mathrm{(NTK)}}(x, x'), &\quad \mbox{(Neural tangent kernel)}.
\end{align}
\end{subequations}
The hyperparameter $\gamma > 0$ in the Gaussian kernel is chosen to be equal to $N^2 / \sum_{i=1}^N \sum_{j=1}^N \norm{x_i - x_j}_2^2$, the inverse of the average distance between $x_i$ and $x_j$.
We consider the neural tangent kernel $k^{\mathrm{(NTK)}}(x, x')$ \cite{jacot2018neural, neuraltangents2020} that is equivalent to an infinite-width feed-forward neural network with $2, 3, 4, 5$ hidden layers and that uses the rectified linear unit (ReLU) as the activation function. Computing the neural tangent kernel can be implemented easily using the open-source software Neural Tangents \cite{neuraltangents2020}.
Suppose that the input data $\{x_{\ell}\}_{\ell=1}^N$ is stored in a \pythoninline{numpy} array of size $N \times m$, denoted as \pythoninline{dataX} in the following code. We can use then use following code to generate the neural tangent kernel matrix. The imported package \pythoninline{neural_tangents} can be downloaded from \url{https://github.com/google/neural-tangents}.

\begin{python}
import jax
import numpy as np
from neural_tangents import stax

init_fn, apply_fn, kernel_fn = stax.serial(
    stax.Dense(32), stax.Relu(),
    stax.Dense(32), stax.Relu(),
    stax.Dense(1)
)
kernel_NN2 = kernel_fn(dataX, dataX, 'ntk')

init_fn, apply_fn, kernel_fn = stax.serial(
    stax.Dense(32), stax.Relu(),
    stax.Dense(32), stax.Relu(),
    stax.Dense(32), stax.Relu(),
    stax.Dense(1)
)
kernel_NN3 = kernel_fn(dataX, dataX, 'ntk')
                
init_fn, apply_fn, kernel_fn = stax.serial(
    stax.Dense(32), stax.Relu(),
    stax.Dense(32), stax.Relu(),
    stax.Dense(32), stax.Relu(),
    stax.Dense(32), stax.Relu(),
    stax.Dense(1)
)
kernel_NN4 = kernel_fn(dataX, dataX, 'ntk')

init_fn, apply_fn, kernel_fn = stax.serial(
    stax.Dense(32), stax.Relu(),
    stax.Dense(32), stax.Relu(),
    stax.Dense(32), stax.Relu(),
    stax.Dense(32), stax.Relu(),
    stax.Dense(32), stax.Relu(),
    stax.Dense(1)
)
kernel_NN5 = kernel_fn(dataX, dataX, 'ntk')

list_kernel_NN = [kernel_NN2, kernel_NN3, kernel_NN4, kernel_NN5]

# Normalization of the kernel matrix
for r in range(len(list_kernel_NN)):
    for i in range(len(list_kernel_NN[r])):
        for j in range(len(list_kernel_NN[r])):
            list_kernel_NN[r][i][j] /= (list_kernel_NN[r][i][i] \
                                        * list_kernel_NN[r][j][j]) ** 0.5
\end{python}

In order to predict 
the expectation value $\Tr(O \hat{\sigma}_N(x))$ of an observable $O$ for a new ground state
$\hat{\sigma}_N(x)$, we utilize the following property of expectation values,
\begin{equation}
    \Tr(O \hat{\sigma}_N(x)) = \sum_{\ell = 1}^N \kappa(x, x_\ell) \Tr(O \sigma_T(x_\ell)).
\end{equation}
Hence, we first compute $\Tr(O \sigma_T(x_\ell))$, which can be done efficiently for $r$-body observables that factorize nicely into tensor products. Indeed, an $O = O_{i_1} \otimes \ldots \otimes O_{i_r}$ ensures
\begin{equation}
    \Tr(O \sigma_T(x_\ell)) = \frac{1}{T} \sum_{t=1}^T \Tr\left(O \sigma_{1}^{(t)}(x_\ell) \otimes\dots\otimes \sigma_{n}^{(t)}(x_\ell) \right) = \frac{1}{T} \sum_{t=1}^T \Tr\left(O_{i_1} \sigma_{i_1}^{(t)}(x_\ell) \right) \ldots \Tr\left(O_{i_r} \sigma_{i_r}^{(t)}(x_\ell) \right),
\end{equation}
and the right hand side can be computed with $\mathcal{O}(Tn)$ arithmetic operations. 
Then, we can compute $\Tr(O \hat{\sigma}_N(x))$ by extrapolating $\Tr(O \sigma_T(x_\ell))$ using $\kappa(x, x_\ell)$.
We utilize scikit-learn, a Python package \cite{scikit-learn}, for the training of these machine learning models.

Due to the different classical ML models one could consider (corresponding to different regularization parameters $\lambda$ and kernel functions $k(x, x')$), we have to perform model selection to find an appropriate ML model.
Typically, the prediction performance will be quite sensitive to these parameters, so one has to select them carefully. 
To evaluate the ML models, we consider $100$ different points $x \in [-1,1]^2$ in parameter space.
Among these $100$ points, we select $N = 20$ to be training data. These are the circled points in Figure~\ref{fig:rydberg}(b).
For each property we would like to predict, we choose one of the the three kernels and the different values of $\lambda$ such that the prediction error is minimized on a validation set containing $80 - 3$ inputs of $x$.
The validation set is disjoint from the $20$ training points and the $3$ testing points for evaluating the prediction performances (special markers in Figure~\ref{fig:rydberg}(a)). Their purpose is to perform model selection.
Finally, we test on the three input $x$'s shown by the special markers (cross, diamond and star) in Figure~\ref{fig:rydberg}(b).

We found that for each property we would like to predict, the prediction performance for different classical ML model varies moderately.
When we have sufficiently large training data size $N$, most choices of $\lambda$ and the kernel function should yield good prediction performance.
However, we are using a very small number of training data in our experiments, hence the choice of these options becomes more important.
In particular, the best choice of $\lambda$ can differ quite significantly over the different properties we would like to predict.

\begin{figure}[t]
    \centering
    \includegraphics[width=0.9\linewidth]{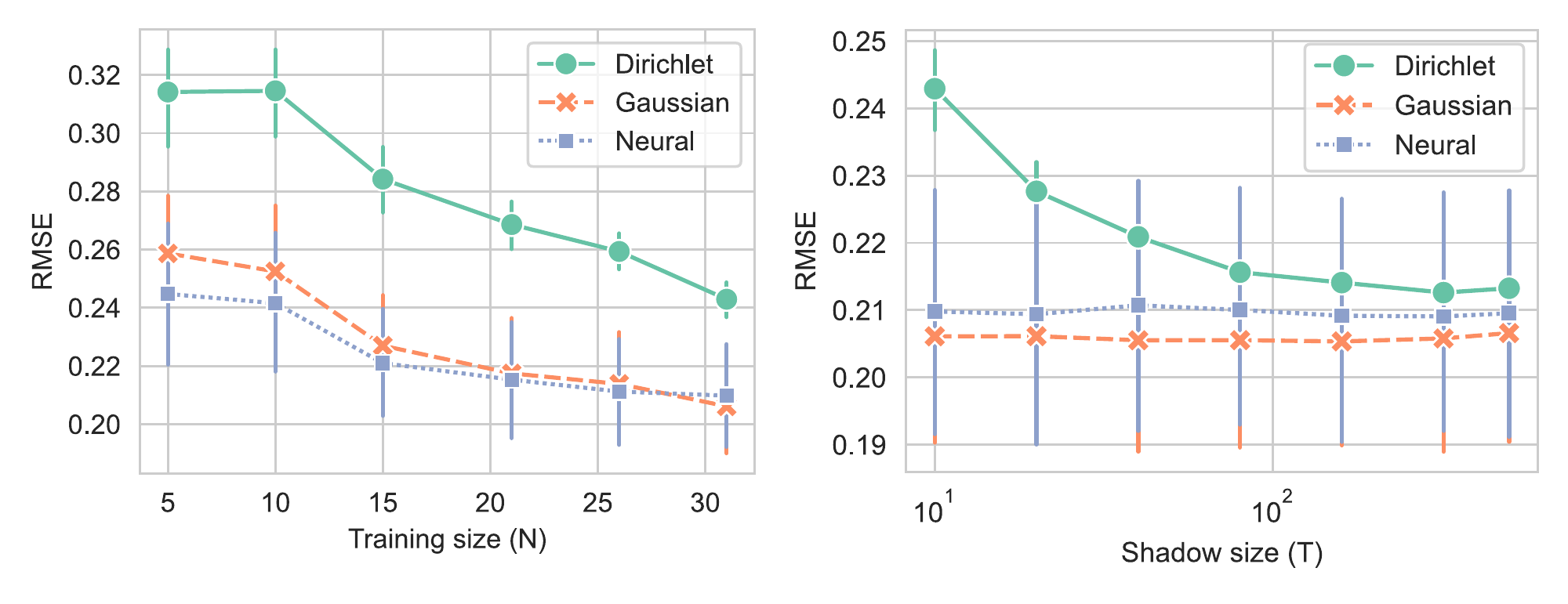}
    \caption{Numerical experiment for predicting ground state properties (Pauli-$X$ and $Z$ in each atom) in a 1D Rydberg atom system with 51 atoms under different hyperparameters. \textsc{(Left)} The prediction error (root-mean-square error) over different training sizes $N$ with a fixed number $T=10$ of randomized Pauli measurements, also referred to as the shadow size. \textsc{(Right)} The prediction error over different shadow sizes $T$ with a fixed training data size $N=31$. }
    \label{fig:rydberg-scaling}
\end{figure}

For completeness, we include a set of experiments where we vary the training data size $N$ or the classical shadow size $T$, where by ``shadow size'' we mean the number of randomized Pauli measurements used to approximate each state.
The result in given in Figure~\ref{fig:rydberg-scaling}.
{\color{black} For this set of experiments, we consider a fixed set of $70$ validation points in the phase space.
Recall that we are using the ML model to predict ground state properties.
Here, we consider the properties to be the expectation values of single-site Pauli-$X$ and Pauli-$Z$ operators.
Because there are a total of $51$ atoms, there are a total of $51 \times 2 = 102$ properties.
For each property, we randomly draw $10$ different points in the phase space (not in the training set or the validation set).
Therefore, the test set is of size $1020$, where each instance in the test set corresponds to a property of a point in the phase space.
The prediction error is given by the root-mean-square error over the $1020$ instances in the test set.
We can see that as training set size $N$ increases, the prediction becomes better.
However, we see that as the training size increases, the slope of the prediction error (RMSE) over $N$ flattens.
This is expected from the theorem we established showing that $N = m^{\mathcal{O}(1 / \epsilon)}$, where $N$ is the training set size, $m$ is the number of parameters, and $\epsilon$ is the prediction error.
While the theorem only provides an upper bound on $N$, if we assume the upper bound is saturated, then we can use elementary calculus to derive
\begin{equation}
\frac{d \epsilon}{d N} \mbox{ is proportional to } - \frac{\epsilon^2}{N\log(m)}.
\end{equation}
Hence, the analysis is compatible with the observation that the slope of RMSE over $N$ flattens as $N$ becomes larger.
}
While we proved a rigorous result using the Dirichlet kernel, other commonly used ML models may yield a better prediction performance in practice.
Proving rigorous prediction guarantees and understanding the limitations and strengths for other more commonly used ML models are important future directions.

\subsection{Ground state properties of the 2D antiferromagnetic Heisenberg model}
\label{sec:numdetail-groundstate2}

Our next example is the two-dimensional antiferromagnetic Heisenberg model. Spin-$\tfrac{1}{2}$ particles (i.e. qubits) occupy sites on a square lattice, and for each pair $(ij)$ of neighboring sites the Hamiltonian contains a term $J_{ij}\left(X_iX_j+Y_iY_j+Z_iZ_j\right)$ where the couplings $\{J_{ij}\}$ are uniformly sampled from the interval $[0,2]$. The parameter $x$ is a list of all  $J_{ij}$ couplings; hence in this case the dimension of the parameter space is $m=O(n)$, where $n$ is the number of qubits. The Hamiltonian $H(x)$ on a $5\times 5$ lattice is shown in Figure~\ref{fig:heisenberg}(a).  
The exact ground state was found using DMRG. Analogously to the Rydberg atoms experiments, we fixed the SVD cutoff to $10^{-8}$ and stopped the DMRG runs when the difference in energy was below $10^{-4}$.

The classical ML models we considered are the same as the Rydberg atom chain experiment. The only difference is that we slightly modify the Dirichlet kernel (\ref{eq:numer-dirichlet-kernel}) to
\begin{equation}
    k(x, x') = \sum_{i \neq j} \sum_{k_i = -3}^3 \sum_{k_j = -3}^3 \cos\left(\pi (k_i (x_i - x_i') + k_j (x_j - x_j') ) \right), \quad \mbox{(Dirichlet kernel)}.
\end{equation}
We trained the classical ML model using a training set containing $N = 90$ randomly chosen values of the parameter $x=\{J_{ij}\}$. 
Then, for each property we would like to predict, we find the top-performing ML model setting (out of all $\lambda$ parameters and kernel functions $k(x, x')$) on a validation set containing $100$ parameters $x$ distinct from the training set.
Finally, we test on $10$ newly sampled parameters $x$ to estimate the prediction error.
Figure~\ref{fig:heisenberg}(b) shows the prediction outcome from one of the input parameter $x$.
Figure~\ref{fig:heisenberg}(c) shows the RMSE from all $10$ input parameters.

\begin{figure}[t]
    \centering
    \includegraphics[width=0.9\linewidth]{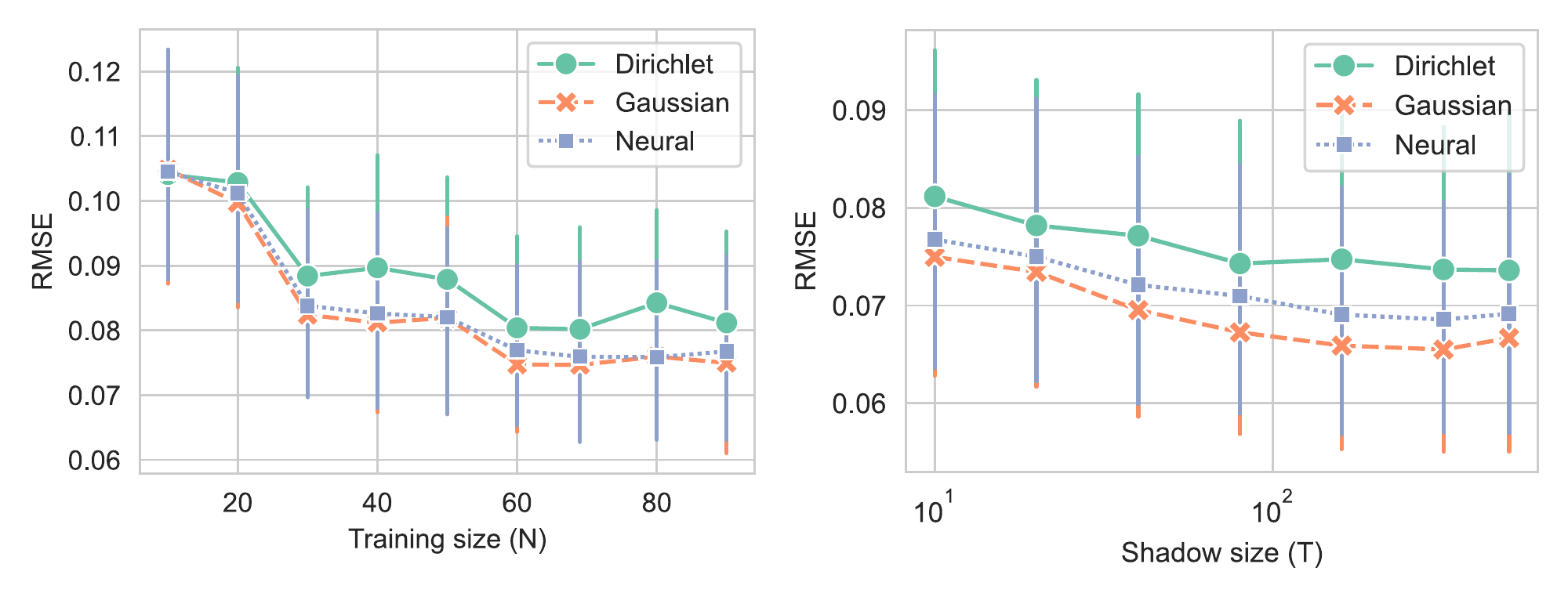}
    \caption{Numerical experiment for predicting ground state properties (two-point correlation functions) in a 2D antiferromagnetic Heisenberg model with $5\times 5$ spins under different hyperparameters. \textsc{(Left)} The predict error (root-mean-square error) over different training size $N$ with a fixed number of randomized Pauli measurements $T=10$, also referred to as the shadow size. \textsc{(Right)} The prediction error (root-mean-square error) over different shadow size $T$ with a fixed training data size $N=90$. }
    \label{fig:Heisenberg-scaling}
\end{figure}

Similar to the Rydberg atom experiment, the best-performing ML model setting differs across the properties we would like to predict.
The three kernels perform similarly at larger training data size $N$ and larger number of randomized Pauli measurements $T$.
But neural networks and Gaussian kernel methods tend to perform better in most cases.
The best choice of $\lambda$ differs substantially across the different properties: there is no single choice of $\lambda$ that performs uniformly better than the other choices.

To showcase these effects, we also include a set of experiments where we vary the training data size $N$ or the classical shadow size $T$, that is, the number of randomized Pauli measurements used to approximate each state.
The numerical results are summarized in Figure~\ref{fig:Heisenberg-scaling}.
For this set of experiments, we consider fixed sets of $100$ validation points.
{\color{black} For this set of experiments, we consider a fixed set of $70$ validation points in the phase space.
Recall that we are using the ML model to predict ground state properties.
Here, we consider the properties to be the two-point correlation functions over every pair of the $25$ spins.
This results in a total of $25 \times 25 = 625$ properties.
For each property, we randomly draw $10$ testing points in the $m=\mathcal{O}(n)$ dimensional parameter space (not in the training set or the validation set).
Therefore, the test set is of size $6250$, where each instance in the test set corresponds to a property of a point in the parameter space.
The prediction error is given by the root-mean-square error over the $6250$ instances in the test set.
The results resemble what was found in the Rydberg atom experiments, but with one notable difference --- in the Rydberg experiments, but not for the 2D antiferromagnet, the  Dirichlet kernel has the best performance for the largest shadow size $T$ we considered. This may be because the dimension $m$ of the parameter space is much lower in the Rydberg case. 
}

\subsection{Classifying phases of the bond-alternating XXZ model}
\label{sec:numdetail-phases}

To illustrate our classical ML for classifying quantum phases of matter, we consider the bond-alternating XXZ model with $n=300$ spin-$\frac{1}{2}$ particles (i.e.\ qubits). The Hamiltonian is given by
\begin{equation}
    \sum_{i: \mathrm{odd}} J(X_i X_{i+1} + Y_i Y_{i+1} + \delta Z_i Z_{i+1}) + \sum_{i: \mathrm{even}} J' (X_i X_{i+1} + Y_i Y_{i+1} + \delta Z_i Z_{i+1}),
\end{equation}
and encompasses the bond-alternating Heisenberg model ($\delta=1$), as well as the bosonic version of the Su-Schrieffer-Heeger model \cite{su1979solitons} ($\delta=0$).
The phase diagram in Figure~\ref{fig:XXZ}(b) is obtained by evaluating the partial reflection many-body topological invariant \cite{pollmann2012detection, elben2020many}.
It is given by
\begin{equation}
    \tilde{\mathcal{Z}}_{\mathcal{R}} = \frac{\mathcal{Z}_{\mathcal{R}}}{\sqrt{[\Tr(\rho_{I_1}^2) + \Tr(\rho_{I_2}^2)] / 2}}, \quad \mbox{where} \quad \mathcal{Z}_{\mathcal{R}} = \Tr(\rho_{I_1 \cup I_2} \mathcal{R}_{I_1 \cup I_2}),
\end{equation}
and we consider $I_1$ with $6$ spins: the $145$-th spin to the $150$-th spin.
Likewise, we fix $I_2$ to also contain $6$ spins: the $151$-th spin to the $156$-th spin.
Hence, the union $I_1 \cup I_2$ contains $12$ spins. The symbols $\rho_{I_1},\rho_{I_2}$ and $\rho_{I_1 \cup I_2}$ denote the reduced density matrices associated with each local region.
The reflection operator $\mathcal{R}_{I_1 \cup I_2}$ acts on the local region $I_1 \cup I_2$ and is given by
\begin{equation}
    \mathcal{R}_{I_1 \cup I_2} \ket{s_1, \ldots, s_{|I_1 \cup I_2|}} = \ket{s_{|I_1 \cup I_2|}, \ldots, s_1}, \quad \forall s_1, \ldots, s_{|I_1 \cup I_2|} \in \{0, 1\}.
\end{equation}
The partial reflection many body-topological invariant can resolve three phases: trivial ($\tilde{\mathcal{Z}}_{\mathcal{R}}=+1$), symmetry-protected topological (SPT) ($\tilde{\mathcal{Z}}_{\mathcal{R}}=-1$) and symmetry broken ($\tilde{\mathcal{Z}}_{\mathcal{R}}=0$).
In Figure~\ref{fig:XXZ}(b), we use the colors blue (trivial), red (SPT) and gray (symmetry broken) to visualize these different types of phases.

For each value of $J' / J$ and $\delta$ considered, we construct the exact ground state using DMRG, and find its classical shadow by performing randomized single-qubit Pauli measurements a total of $T=500$ times.
To simulate this experiment, we follow the same setting for DMRG used in \cite{elben2020many}.
We limit the maximum number of sweeps to $100$ and set the DMRG cutoff to $10^{-9}$.
We initialize the state to be the N\'eel state $\ket{0101\ldots}$.
To pin one of the degenerate ground state in the symmetry broken phase, we include a penalty term given by $0.1 J Z_1$ in the Hamiltonian.

After obtaining the classical shadow representation $S_T(\rho_\ell)$ for each quantum state $\rho_\ell$, we compute the kernel matrix $K \in \mathbb{R}^{N \times N}$, where each entry is given by the shadow kernel $k^{\mathrm{(shadow)}}(S_T(\rho_\ell), S_T(\rho_{\ell'}))$.
Recall that the shadow kernel is defined as
\begin{equation}
    k^{\mathrm{(shadow)}}(S_T(\rho), S_T(\tilde{\rho})) = \exp \left(\frac{1}{T^2} \sum_{t,t'=1}^T \exp\left( \frac{1}{n} \sum_{i=1}^n \Tr\left( \sigma_{i}^{(t)}\tilde{\sigma}_{i}^{(t^{\prime})} \right) \right)\right),
     \,\, \text{where} \,\,
  \sigma_{i}^{(t)}=3\ketbra{s_{i}^{(t)}}{s_{i}^{(t)}}-\mathbb{I},
\end{equation}
and the classical shadow representation is given by
\begin{equation}
    S_T(\rho) = \left\{ \ket{s_{i}^{(t)}}:\; i \in \left\{1,\ldots,n\right\},\; t \in \left\{1,\ldots,T\right\} \right\}, 
    \quad \text{where} \quad
    \ket{s_{i}^{(t)}} \in \{\ket{0}, \ket{1}, \ket{+}, \ket{-}, \ket{\mathrm{i}+}, \ket{\mathrm{i}-}\}~.
\end{equation}
Care should be taken when computing diagonal elements of the kernel matrix $K$. 
The problem is that for $\rho=\tilde{\rho}$ and $t=t'$, we necessarily have $\mathrm{tr} \left( \sigma_i^{(t)} \tilde{\sigma}_i^{(t)} \right) = 5$ for all $1 \leq i \leq n$. And the double exponential will amplify this already substantial contribution enormously. 
We found that counteracting this blow-up improves the numerical stability of the kernel method substantially. 
When $\ell = \ell'$, when we compute $k^{\mathrm{(shadow)}}(S_T(\rho_\ell), S_T(\rho_\ell))$, we sum over $t \neq t'$ instead of all $t, t'$. In particular, when $\rho = \tilde{\rho}$, we consider a slight modification to the kernel definition,
\begin{equation}
    k^{\mathrm{(shadow)}}(S_T(\rho), S_T(\rho)) = \exp \left(\frac{1}{T(T - 1)} \sum_{t \neq t' }
    \exp\left( \frac{1}{n} \sum_{i=1}^n \Tr\left( \sigma_{i}^{(t)} \sigma_{i}^{(t^{\prime})} \right) \right)\right),
\end{equation}
This modification also seems to slightly improve the classification performance.

After evaluating the kernel matrix $K$, we renormalize the entries to obtain 
the standardized kernel matrix
\begin{equation}
    \overline{K}_{ \ell \ell'} =  \frac{K_{\ell \ell'}}{\sqrt{K_{\ell \ell} K_{ \ell' \ell'}}} \quad \text{for} \quad  \ell,\ell' \in \left\{1, \ldots, N\right\}.
\end{equation}
Subsequently, we perform kernel principal component analysis (PCA) on $\overline{K}$.
The implementation we used for kernel PCA is based on scikit-learn \cite{sklearn_api}.
The output of kernel PCA is a list of low-dimensional vectors (the dimension can be chosen arbitrarily, but we choose two dimensions for this experiment).
Each low-dimensional vector corresponds to a quantum state.
In Figure~\ref{fig:XXZ}(c, d), we can see that the low-dimensional vectors are clustered into different quantum phases of matter.

\emph{Distinguishing an SPT phase from a trivial phase} --- We consider a one-dimensional chain of $n=50$ qubits with $Z_2 \times Z_2$ symmetry. The 1D cluster state is in the nontrivial SPT phase. We generate other representatives of the nontrivial SPT phase by applying symmetric low-depth geometrically local random quantum circuits to the cluster state, and we generate representatives of the trivial phase by applying symmetric random circuits to a product state. 
We simulate the application of symmetric low-depth geometrically local random quantum circuits to the cluster state through matrix product states (MPS). Each circuit layer consists of patterns of random two-qubit gates acting on next-to-nearest neighbors sites. We generate the random gates in a block-sparse structure in the parity symmetry sectors. This choice, together with the choice of connectivity, guarantees that the $Z_2 \times Z_2$ symmetry is conserved during the circuit evolution.

Randomized Pauli measurements are performed $T=500$ times to convert the states to their classical shadows. 
We perform kernel PCA to find low-dimensional representation for the quantum states using exactly the same method as the experiment on bond-alternating XXZ model.

\subsection{Distinguishing a topological phase from a trivial phase}
\label{sec:numdetail-phases2}

We consider the task of distinguishing the toric code topological phase from the trivial phase in a system of $n=200$ qubits. 
Kitaev's toric code state \cite{kitaev2003fault}  is in the nontrivial topologically-ordered phase, while a product state represents the trivial phase. To populate both phases, we apply low-depth geometrically local random Clifford circuits~\cite{aaronson2004improved}
 to Kitaev's toric code state \cite{kitaev2003fault} with code distance 10, and we generate representatives of the trivial phase by applying random Clifford circuits to a product state. We utilize Clifford circuits to ensure efficient simulation of in total $n=200$ qubits (and with a depth up to $9$) by means of the Gottesman-Knill theorem.
We again perform kernel PCA to find low-dimensional representations for the quantum states using exactly the same method as the experiment on bond-alternating XXZ model. This is used to generate the plot in Figure~\ref{fig:topophase}(b) for a one-dimensional projection of the feature space, as well as the plot in Figure~\ref{fig:kernelmatrix}(d) for a two-dimensional projection.

For the unsupervised ML model shown in Figure~\ref{fig:topophase}(c), we consider a combination of kernel PCA and randomized projections \cite{pmlr-v23-karnin12}.
First we perform kernel PCA to map the data to a six-dimensional subspace of the infinite-dimensional feature space.
Then we repeat the following procedure $500$ times.
We select a one-dimensional subspace uniformly at random in the six-dimensional subspace.
We project all the quantum states to the one-dimensional subspace.
Then, we find the center point (according to median instead of mean) to split up the quantum states into two phases.
We also record the sum of the absolute values from all points to the center point in the one-dimensional subspace.
Finally, we consider the classification obtained from the random one-dimensional projection that results in the largest sum of the absolute values.

For the convolutional neural network (CNN) approach shown in Figure~\ref{fig:topophase}(c), we consider the following CNN built from Keras \cite{chollet2015keras}.

\begin{python}
import tensorflow as tf
from tensorflow.keras import datasets, layers, models

model = models.Sequential()
model.add(layers.Conv2D(32, (2, 2), activation='relu', padding='same',
            input_shape=(2*L, L, 6)))
model.add(layers.MaxPooling2D((2, 2)))
model.add(layers.Conv2D(32, (2, 2), activation='relu', padding='same'))
model.add(layers.MaxPooling2D((2, 2)))
model.add(layers.Conv2D(32, (2, 2), activation='relu', padding='same'))
model.add(layers.Flatten())
model.add(layers.Dense(32, activation='relu'))
model.add(layers.Dense(2))
\end{python}

In the above code, $L$ is the code distance for the toric code and is equal to $10$ in this experiment (recall that toric code ground state has $n = 2L^2$ qubits).
This CNN model is supervised and requires a training data with a corresponding label for indicating which phase the training data point is in. 
We first perform the Pauli-$6$ POVM on each qubit \cite{carrasquilla2019reconstructing} to transform the quantum state into a array of size $n$ where each entry has six outcomes.
We perform one-hot encoding to yield a classical vector of size $6n$, where each entry in the classical vector is either $0$ or $1$.
Because the toric code ground state is two-dimensional ($2L \times L$), we restructure the classical vector into a three-dimensional tensor of size $2L \times L \times 6$.
The first two dimensions corresponds to the spatial dimension of the toric code ground state.
The last dimension corresponds to the one-hot encoded vector for the six-outcome POVM.
We then train the above model using the Adam optimizer \cite{kingma2014adam} with the categorical cross entropy as the loss function. The code is given below.

\begin{python}
model.compile(optimizer='adam',
    loss=tf.keras.losses.SparseCategoricalCrossentropy(from_logits=True),
    metrics=['accuracy'])
\end{python}

We train the convolutional neural network using $100$ training points (half are topologically-ordered states, and the other half are trivial states).
Then we use a validation set of $100$ points to perform early stopping.
This is because the longer we train, the more likely the neural network is going to overfit.
Hence, it is a good practice to perform model selection by choosing which model to use at different time points (during the training process).
We choose the model that performs the best on the validation set.
Then we test the classification accuracy (the percentage that the prediction of the phases is correct) on a testing set consisting of $100$ points.

The performance of the above ML model is not substantially different from random guessing.
Hence, we also consider a very simple CNN enhanced with classical shadow under $T = 500$ randomized Pauli measurements.
In particular, we compute the local reduced density matrix using the classical shadow.
Then for each qubit, we represent it with the local reduced density matrix.
For simplicity, we consider the $i$-th qubit to be represented by a vector of size $16$, which includes the $2$-body reduced density matrix for the subsystem consisting of the $i$-th and the $i+1$-th qubit.
Hence, each quantum state is now represented by a classical vector of dimension $2L^2 \times 16$.
We reshape the classical vector into a three-dimensional tensor of size $2L \times L \times 16$.
The classical vector is feed into the convolutional neural network structured as follows.
We also apply the Adam optimizer \cite{kingma2014adam} with the categorical cross entropy as the loss function.
The evaluation process is exactly the same as the CNN approach based on the Pauli-$6$ POVM.

\begin{python}
import tensorflow as tf
from tensorflow.keras import datasets, layers, models

model = models.Sequential()
model.add(layers.Conv2D(16, (1, 1), activation='relu',\
            padding='same', input_shape=(2*L, L, 16)))
model.add(layers.MaxPooling2D((2, 2)))
model.add(layers.Conv2D(16, (2, 2), activation='relu',\
            padding='same'))
model.add(layers.MaxPooling2D((2, 2)))
model.add(layers.Conv2D(16, (2, 2), activation='relu',\
            padding='same'))
model.add(layers.MaxPooling2D((2, 2)))
model.add(layers.Flatten())
model.add(layers.Dense(32, activation='relu'))
model.add(layers.Dense(2))

model.compile(optimizer='adam',
    loss=tf.keras.losses.SparseCategoricalCrossentropy(
            from_logits=True),
    metrics=['accuracy'])
\end{python}

\section{Proof idea for the efficiency in predicting ground states}
\label{sec:proofideaGSUPP}

\subsection{Main result}
\label{sec:qubit-GSUPP}

In order to illustrate the proof of Theorem~\ref{thm:mainFourier}, let us begin by looking at a simpler task: training a machine learning model to predict a specified ground state property instead of the classical representation of the ground state.
Consider the property $\Tr(O \rho)$, where $\rho$ is the ground state and $O$ is a local observable.
In this simpler task, we consider the training data to be
\begin{equation}
\big\{x_1 \rightarrow \Tr(O \rho(x_1)), \quad \ldots, \quad x_N \rightarrow \Tr(O \rho(x_N)) \big\},
\end{equation}
where $x_{\ell} \in [-1, 1]^m$ is a classical description of the Hamiltonian $H(x_\ell)$ and $\rho(x_\ell)$ is the ground state of $H(x)$.
Intuitively, in a quantum phase of matter, the ground state property $\Tr(O \rho(x))$ changes smoothly as a function of the input parameter $x$.
The smoothness condition can be rigorously established as an upper bound on the average magnitude of the gradient of $\Tr(O \rho(x))$ using quasi-adiabatic evolution \cite{hastings2005quasiadiabatic, bachmann2012automorphic}, assuming that the spectral gap of $H(x)$ is bounded below by a nonzero constant throughout the parameter space. 
The upper bound on the average gradient magnitude enables us to design a simple classical ML model based on an $l_2$-Dirichlet kernel for generalizing from the training set to a new input $x \in \left[-1,1\right]^m$:
\begin{equation}
    \hat{O}_N(x) = \frac{1}{N} \sum_{\ell=1}^N \kappa(x, x_{\ell}) \Tr(O \rho(x_{\ell})) \,\,\, \mbox{with} \,\,\, \kappa(x, x_{\ell}) = \sum_{k \in \mathbb{Z}^m, \norm{k}_2 \leq \Lambda} \cos(\pi k \cdot (x - x_{\ell})) \in \mathbb{R}.
\end{equation}
The $l_2$-Dirichlet kernel is often used in the study of high-dimensional Fourier series \cite{weisz2012summability} and the proposed ML model is equivalent to learning a truncated Fourier series to approximate the function $\Tr(O \rho(x))$, where the parameter $\Lambda$ is a cutoff on the wavenumber $k$ that depends on the upper bound on the gradient of $\Tr(O \rho(x))$.
Using statistical analysis, one can guarantee that $ \E_{x} |\hat{O}_N(x) - \Tr(O \rho(x))|^2 \leq \epsilon$ as long as the amount of training data $N = m^{\mathcal{O}(1 / \epsilon)}$ where our big-$\cal{O}$ notation is with respect to the $m \to\infty$ limit.
Hence, we can achieve a small \emph{constant} prediction error with an amount of training data and computational time that are both polynomial in the number $m$ of input parameters. The training is efficient because the number of modes needed for the truncated Fourier series to provide an accurate approximation to $\Tr(O \rho)$ scales polynomially with $m$.

The key to the statistical analysis is to bound the model complexity of the above machine learning model.
In particular, the model complexity depends on the number of wave vectors we consider in the $l_2$-Dirichlet kernel.
The more wave vectors $k$ we include, the higher the model complexity; and we would have to use more data to train the ML model to achieve good generalization performance.
Furthermore, one could show that the amount of data is proportional to the number of wave vectors we consider.
In order to achieve a prediction error $ \E_{x} |\hat{O}_N(x) - \Tr(O \rho(x))|^2 \leq \epsilon$, we would need to select $\Lambda$ to be of order $\sqrt{1 / \epsilon}$.
Hence, the number of wave vectors is proportional to the number of lattice points in an $m$-dimensional $l_2$ ball of radius $\Lambda$.
The volume of an $m$-dimensional $l_2$ ball with radius $\Lambda$ is proportional to $\Lambda^{m} = (1 / \epsilon)^{m/2}$.
If the number of lattices points is proportional to the volume, then this would imply an exponential scaling in the number of parameters $m$.
However, through a proper combinatorial analysis, we show that the number of lattices points is actually proportional to $m^{\mathcal{O}(\Lambda^2)} = m^{\mathcal{O}(1 / \epsilon)}$, which is only polynomial in the number of parameters $m$.

We can build on this idea to address the task of predicting ground state representations. Now instead of predicting $\Tr(O \rho)$ for a new input $x$, the goal is to predict the classical shadow of the ground state $\rho(x)$. 
We consider the training data to be $\big\{x_{\ell} \rightarrow \sigma_1(\rho(x_{\ell}))\big\}_{\ell = 1}^N$, where $\sigma_1(\rho(x_{\ell}))$ is the classical shadow representation of $\rho(x_{\ell})$ obtained from just a \textit{single} randomized Pauli measurement of the state (the $T=1$ case of Eq.~\eqref{eq:sigma-T-shadow}). Following the same approach as outlined above for the case of predicting a single property, the predicted ground state representation is now given by
\begin{equation}
    \hat{\sigma}_N(x) = \frac{1}{N} \sum_{\ell = 1}^N \kappa(x, x_{\ell}) \sigma_1(\rho(x_{\ell}))  \,\,\, \mbox{with} \,\,\, \kappa(x, x_{\ell}) = \sum_{k \in \mathbb{Z}^m, \norm{k}_2 \leq \Lambda} \cos(\pi k \cdot (x - x_{\ell})) \in \mathbb{R}.
\end{equation}
One can then guarantee that this representation accurately predicts expectation values for a wide range of observables.

The fact that only a single snapshot $\sigma_1$ per parameter point is required for our protocol may be surprising. However, since the snapshots depends on the parameters, sampling over training data indirectly samples over different snapshots, and is thus sufficient for a reasonable estimate of properties of the phase. The estimate can of course be further improved if multiple snapshots are used for each parameter point, and we leave proving such improved bounds as an exciting goal for future work.

\color{black}
\subsection{Generalization to other systems and settings}
\label{sec:generalize-GSUPP}

In this subsection, we discuss how one could generalize the proof of Theorem~\ref{thm:mainFourier} to various different scenarios.

\subsubsection{Prediction based on other quantum measurements}

Throughout this work, we considered classical shadows based on randomized Pauli measurements \cite{huang2020predicting}.
However, it may be difficult to perform randomized Pauli measurements in some experimental systems.
Theorem~\ref{thm:mainFourier} can be directly generalized to other kinds of measurement procedures.
Consider a restricted setting where the experimentalist can only obtain training data of the form
\begin{equation}
    \left\{ x_\ell \rightarrow \Tr(O \rho(x_\ell)) \right\}_{\ell = 1}^N,
\end{equation}
for a single observable $O$ (that can be written as a sum of local observables).
In this case, the classical ML model can no longer predict a classical representation of $\rho(x)$ for a new $x$.
Nevertheless, the classical ML model can still predict $\Tr(O \rho(x))$ accurately for a new $x$ by following the proof sketch in Appendix~\ref{sec:qubit-GSUPP}.

More generally, suppose the experimentalist can construct some classical representation of the ground state $\rho(x_\ell)$ through the available measurements, such as classical shadows based on another random unitary ensemble \cite{hu2021hamiltonian}, or simply a list of properties of $\rho(x_\ell)$. And suppose that the classical representation allows us to predict the expectation values of observables $O_1, O_2, \ldots, O_M$ in the ground state $\rho(x_\ell)$.
Then for a new $x$, the classical ML model can predict $\Tr(O_i \rho(x))$ accurately for $i = 1, \ldots, M$.

\subsubsection{A variable number of parameters}

So far, we have considered the input vector $x$ to be of a fixed dimension $m$.
Here we briefly discuss how to generalize Theorem~\ref{thm:mainFourier} to a setting where the input is not a fixed dimensional vector.
We can think of the input as $\xi = (m, x)$, where $m \in \mathbb{N}$ is a discrete variable specifying the number of parameters, and $x \in \mathbb{R}^m$ is an $m$-dimensional vector with continuous entries. The number of parameters $m$ may range from 
$m_{\mathrm{min}}$ to $m_{\mathrm{max}}$.
We consider a class of Hamiltonians $H(\xi) = H((m, x))$ that depends on both the discrete parameter $m$ and the continuous vector $x$.
For example, we may have
\begin{align}
    m&=1: & H((m, x)) &= \sum_{i=1}^n x_1 (X_i X_{i+1} + Y_i Y_{i+1}),\\
    m&=2: & H((m, x)) &= \sum_{i=1}^n x_1 (X_i X_{i+1} + Y_i Y_{i+1}) + x_2 ( Z_i Z_{i+1} ),
\end{align}
where $x_1, x_2$ denote the first and second entry of the vector $x$.
In order the train the ML model, we can consider training data to be of the form
\begin{equation}
    \big\{ \xi_{\ell} \rightarrow \sigma_T(\rho(\xi_{\ell})) \big\}_{\ell=1}^N,
\end{equation}
where $\rho(\xi_\ell)$ is the ground state of the Hamiltonian $H(\xi_\ell)$ (and $\xi_\ell = (m_\ell,x_\ell)$).
In this most general case, we can now simply train a distinct ML model for each $m \in [m_{\mathrm{min}}, m_{\mathrm{max}}]$. Using this direct method, we only need a training data size $N$ that is $(m_{\mathrm{max}} - m_{\mathrm{min}} + 1)$ times larger than the training data size when $m$ is fixed.

\subsubsection{Systems with long-range interactions}

For simplicity, the proof for our main theorem (Theorem~\ref{thm:mainFourier}) focuses on Hamiltonians that can be written as a sum of geometrically local terms,
\begin{equation}
    H(x) = \sum_{j} h_j(x),
\end{equation}
where $h_j(x)$ acts on a constant number of constituents that are contained in a ball of constant size in a finite-dimensional space.
Our proof can be generalized to some physical systems where $h_j(x)$ acts on constituents that are geometrically non-local.
The main condition we must impose is that the evolution under the Hamiltonian $H(x)$ in the ground state $\rho(x)$ has a bounded speed of information spreading. In the study of quantum many-body systems~\cite{chen2019finite, kuwahara2020strictly, tran2020hierarchy} this assumption is described as a \emph{linear light cone}, meaning that if a perturbation is applied at a point $P$ at time zero, then the effects of that perturbation at a later time $t$ are mostly confined to a region centered at $P$ with radius $vt$; here $v>0$ is called the Lieb-Robinson velocity.

To be more precise, consider two few-body operators, $O_A$ acting on a set of constituents $A$, and $O_B$ acting on a set of constituents $B$; the sets $A$ and $B$ need not be geometrically local. We denote by $d(O_A,O_B)$ the minimum Euclidean distance between constituents in $A$ and constituents in $B$. Recall that in the Heisenberg picture, operators evolve according to $O(t) = \mathrm{e}^{itH(x)}O \mathrm{e}^{-itH(x)}$, where $H(x)$ is the Hamiltonian. We require that the expectation value in the ground state $\rho(x)$ of the commutator of $O_A$ with $O_B(t)$ is highly suppressed when $d(O_A,O_B)$ is small compared to $vt$, i.e.,
\begin{equation} \label{eq:long-range-decay}
    \left| \Tr\left( \left[O_A,  O_B(t) \right] \rho(x) \right) \right| \leq \frac{c |t|^\beta}{\max(0, d(O_A, O_B) - v |t|)^\alpha} \norm{O_A}_\infty \norm{O_B}_\infty,
\end{equation}
where $c$ is a constant, 
and $\alpha > \beta > 0$ are constants that determine the decay, 

Such Lieb-Robinson bounds were proven for geometrically local Hamiltonians decades ago, but linear light cones in physical systems with non-local interactions had not been studied until comparatively recently \cite{chen2019finite, kuwahara2020strictly, tran2020hierarchy}.
It has now been established that, for many long-range interacting systems, Eq.~\eqref{eq:long-range-decay} applies, where $\alpha$ is sufficiently large compared to $\beta$ for our arguments to apply.
Specifically, in the proof given in Appendix~\ref{sec:proofthmGSUPP}, we can replace Eq.~\eqref{eq:uppslope} by 
\begin{align} \label{eq:uppslope-alt}
    |\Tr([O, D_{\uvec}(x)] \rho(x))| & \leq \sum_i \int_{-\infty}^{\infty}  W_\gamma(t) \sum_j \left| \Tr\left( \left[O_i, \mathrm{e}^{\mathrm{i} t H(x)} \frac{\partial h_j}{\partial \uvec}(x) \mathrm{e}^{-\mathrm{i} t H(x)}\right] \rho(x) \right) \right| \mathrm{d}t,
\end{align}
and also replace the Lieb-Robinson bound in Eq.~\eqref{eq:LRbound} by the bound in Eq.~\eqref{eq:long-range-decay}.
When $\alpha$ is sufficiently large compared to $\beta$ in Eq.~\eqref{eq:long-range-decay},
we can guarantee that the right hand side of Eq.~\eqref{eq:uppslope-alt} is upper bounded by
\begin{equation}
    \mbox{const} \times \sum_i \norm{O_i}_\infty,
\end{equation}
using an analysis similar to that given in Appendix~\ref{sub:smoothness}.
After establishing such an upper bound on $|\Tr([O, D_{\uvec}(x)] \rho(x))|$, we can follow exactly the same proof given in the other sections in Appendix~\ref{sec:proofthmGSUPP} to show that the classical ML model can accurately predict the classical representation of the ground state for long-range interacting systems with a similar guarantee as Theorem~\ref{thm:mainFourier}, assuming that the Lieb-Robinson velocity $v$ is bounded above by a constant.

\subsubsection{Fermionic systems}

We can also generalize the proof of Theorem~\ref{thm:mainFourier} to fermionic systems, such as those arising in studies of electronic structure; see for example \cite{helgaker2014molecular}.
We consider second quantization, also known as the occupation number representation, and use the abstract Fock space to represent the Hamiltonians of fermionic systems.
Given a system of $n$ spin orbitals, the Fock space is a $2^n$-dimensional space spanned by $\ket{c_0,c_1,\ldots, c_{n-1}}$, where $c_j=1$ indicates that mode $j$ is occupied and $c_j=0$ indicates that mode $j$ is unoccupied.
A vector in the Fock space is a linear combination of 
these $2^n$ basis states.
Given a mode $j \in \{1, \ldots, n\}$, a fermionic \emph{creation operator} $A_j$ is defined by
\begin{equation} \label{eq:createrule}
\begin{aligned}
	A_j^\dagger\ket{c_0,c_1,\ldots,0_j,\ldots, c_{n-1}}&=(-1)^{\sum_{k=0}^{j-1}c_k}\ket{c_0,c_1,\ldots,1_j,\ldots, c_{n-1}},\\
	A_j^\dagger\ket{c_0,c_1,\ldots,1_j,\ldots, c_{n-1}}&=0,\\
\end{aligned}
\end{equation}
whereas the fermionic \emph{annihilation operator} $A_j$ is defined by
\begin{equation} \label{eq:annihirule}
\begin{aligned}
	A_j\ket{c_0,c_1,\ldots,0_j,\ldots, c_{n-1}}&=0,\\
	A_j\ket{c_0,c_1,\ldots,1_j,\ldots, c_{n-1}}&=(-1)^{\sum_{k=0}^{j-1}c_k}\ket{c_0,c_1,\ldots,0_j,\ldots, c_{n-1}}.\\
\end{aligned}
\end{equation}
For a fermionic system, each local term $h_j(x)$ in the Hamiltonian $H(x) = \sum_j h_j(x)$ is a Hermitian matrix that can be expressed as a product of an
even number of fermionic creation and annihilation operators; we refer to such a Hermitian matrix as an \emph{even fermionic observable}.
For example, we could have $h_{pqrs}(x) = U_{pqrs}(x) A_{p}^\dagger A_{q}^\dagger A_{r} A_{s} + \overline{U_{pqrs}}(x) A_{s}^\dagger A_{r}^\dagger A_{q} A_{p}$, where $U_{pqrs}(x)$ is a complex-valued number. 
(This particular term conserves the total fermion number, but fermion number conservation is not actually required for our arguments to work.)
Two \emph{even fermionic observables} acting on disjoint sets of spin orbitals commute with one another, just as two local observables acting on disjoint sets of qubits commute.
As a result, several results in qubit systems based on the commutation relations of 
disjoint local observables can be easily generalized to \emph{even fermionic observables} in fermionic systems. In particular,
one can generalize the proof of Theorem~\ref{thm:mainFourier} as follows.
\begin{itemize}
    \item First we construct a classical shadow representation for fermionic systems.
    An efficient approach for constructing such a representation is given in \cite{zhao2021fermionic}. This work rigorously analyzes how to predict a large number of properties using outcomes of measurements performed after randomized fermionic Gaussian unitaries.
    We can 
    replace the classical shadow based on randomized Pauli measurements with the fermionic partial tomography introduced in \cite{zhao2021fermionic}.
    \item Secondly we establish a bounded speed of information spreading under evolution governed by $H(x)$ in the ground state $\rho(x)$.
    Intuitively, we would like the ``diameter'' of the support (by ``support'' we mean the set of spin orbitals that an observable acts on substantially) of an \emph{even fermionic observable} under Heisenberg evolution to grow at most linearly in time.
    As for qubit systems, this growth rate is known as the Lieb-Robinson velocity.
    Because two even fermionic observables acting on disjoint sets of spin orbitals commute with one another, one can establish an upper bound on the Lieb-Robinson velocity in fermionic systems by following the argument used for qubit systems \cite{bru2016lieb, nachtergaele2018lieb}. This argument does not work for arbitrary fermionic systems, but it does work if the interaction graph of the spin orbitals is suitably sparse.
\end{itemize}
After these replacements, the rest of the proof follows immediately, yielding a version of Theorem~\ref{thm:mainFourier} for fermionic systems.
As we noted, the argument used to bound the Lieb-Robinson velocity does not work for some fermionic systems; for example it fails in models where orbitals have all-to-all connectivity without any geometrical constraints (the same is true for qubit systems).
But the proof of Theorem~\ref{thm:mainFourier} does go through for tight-binding models, including the Fermi-Hubbard model. Since computing ground state properties of the Fermi-Hubbard model is notoriously difficult for classical computers, it is encouraging to find that our classical ML algorithm can compute these properties efficiently when provided with polynomial-size training data.

\color{black}

\section{Proof of efficiency for predicting ground states} \label{sec:proofthmGSUPP}

This section contains a detailed proof for one of our main contributions. Namely, a rigorous performance guarantee for learning to predict ground state representations.

\begin{theorem}[Theorem~\ref{thm:mainFourier}, detailed restatement] \label{thm:detailedFourier}
Consider any family of $n$-qubit geometrically-local Hamiltonians $\{H(x):\; x \in [-1, 1]^m\}$ in a finite spatial dimension, such that each local term in $H(x)$ depends smoothly on $x$, and the smallest eigenvalue and the next smallest eigenvalues have a constant gap $\gamma \geq \Omega(1)$ between them.
Let $\rho(x)$ be the ground state of $H(x)$, that is
\begin{align}
    \rho(x) =& \lim_{\beta \rightarrow \infty} e^{- \beta H(x)} / \Tr(e^{- \beta H(x)}) \in \left(\mathbb{H}_2\right)^{\otimes n} &\text{(ground state of Hamiltonian $H(x)$)}
\end{align}
where $\mathbb{H}_2$ is the vector space of $2 \times 2$ Hermitian matrices.
Suppose that we are interested in learning to predict a sum $O=\sum_{i=1}^L O_i$ of $L$ local observables that satisfies
$\sum_{i=1}^L \norm{O_i} \leq B$ (bounded norm).
Then, classical shadow data $\{ x_{\ell} \rightarrow \sigma_1(\rho(x_{\ell})) \}_{\ell = 1}^{N}$, with $x_\ell \sim \mathrm{Unif}[-1,1]^m$ and 
\begin{align}
    N &= B^2 m^{\mathcal{O}(B^2 / \epsilon)} & \text{(training data size)},
\end{align}
suffices to produce a ground state prediction model
\begin{equation}
    \hat{\sigma}_N(x) = \frac{1}{N} \sum_{\ell = 1}^N \kappa(x, x_{\ell})  \rho(x_{\ell}) \,\,\, \mbox{with} \,\,\, \kappa(x, x_{\ell}) = \sum_{k \in \mathbb{Z}^m, \norm{k}_2 \leq \Lambda} \cos(\pi k \cdot (x - x_{\ell})) \in \mathbb{R},
\end{equation}
that achieves
\begin{equation}
\E_{x \sim [-1, 1]^m} |\Tr(O \hat{\sigma}_N(x)) - \Tr(O \rho(x))|^2 \leq \epsilon,    
\end{equation}
with high probability.
The classical training time for constructing $\hat{\sigma}_N(x)$ and the prediction time for computing $\Tr(O \hat{\sigma}(x))$ are both upper bounded by $\mathcal{O}((n + L) B^2 m^{\mathcal{O}(B^2 / \epsilon)}).$
\end{theorem}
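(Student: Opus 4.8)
The plan is to reduce the theorem to a smoothness statement about the scalar function $f_O(x) = \Tr(O\rho(x))$ and then run a Fourier-analytic generalization argument. First I would establish that $f_O$ has a gradient whose magnitude is bounded by a constant multiple of $B$, uniformly in $x$ and independently of the number of parameters $m$. To do this I would differentiate the ground state using quasi-adiabatic continuation \cite{hastings2005quasiadiabatic, bachmann2012automorphic}: for any unit direction $\uvec$ one can write $\partial_{\uvec}\rho(x) = \mathrm{i}[D_{\uvec}(x), \rho(x)]$ for a quasi-local generator $D_{\uvec}(x)$ built by integrating the Heisenberg evolution of $\partial_{\uvec} H(x)$ against a filter function $W_\gamma$ whose tails are controlled by the spectral gap $\gamma$. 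Cyclicity of the trace then gives $\partial_{\uvec} f_O(x) = \mathrm{i}\,\Tr([O, D_{\uvec}(x)]\rho(x))$, and expanding $O = \sum_i O_i$ reduces the bound to commutators of each local $O_i$ with the evolved local terms of $H(x)$. A Lieb-Robinson bound forces these commutators to decay once the supports are separated by more than the light-cone radius; combined with the super-polynomial decay of $W_\gamma$ (guaranteed by $\gamma = \Omega(1)$), the sum collapses to $|\partial_{\uvec} f_O(x)| \le c\sum_i\norm{O_i} \le cB$. Since this holds for every unit $\uvec$, choosing $\uvec = \nabla f_O/\norm{\nabla f_O}_2$ yields $\norm{\nabla f_O(x)}_2 \le cB$, hence $\E_x\norm{\nabla f_O(x)}_2^2 \le c^2 B^2$ with no $m$-dependence.

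Next I would convert the gradient bound into a Fourier truncation bound. Extending $f_O$ to an even, $2$-periodic function in each coordinate, Parseval for the derivative gives $\sum_{k}\norm{k}_2^2\,|\hat f_O(k)|^2 = \pi^{-2}\,\E_x\norm{\nabla f_O(x)}_2^2$, so the tail past the cutoff obeys $\sum_{\norm{k}_2 > \Lambda}|\hat f_O(k)|^2 \le (\pi\Lambda)^{-2}\E_x\norm{\nabla f_O}_2^2 \le c^2B^2/(\pi\Lambda)^2$. By Parseval this equals $\E_x|f_O(x) - f_{O,\Lambda}(x)|^2$, where $f_{O,\Lambda}$ is the truncation to $\norm{k}_2\le\Lambda$; choosing $\Lambda = \mathcal{O}(B/\sqrt{\epsilon})$ makes the truncation error at most $\epsilon/4$. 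The efficiency then hinges on counting the modes $\mathcal{K} = |\{k\in\Z^m : \norm{k}_2\le\Lambda\}|$: since any such $k$ has at most $\Lambda^2$ nonzero entries, $\mathcal{K}\le\sum_{s\le\Lambda^2}\binom{m}{s}(2\Lambda+1)^s = m^{\mathcal{O}(\Lambda^2)} = m^{\mathcal{O}(B^2/\epsilon)}$, which is polynomial in $m$.

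I would then control the statistical error of the kernel estimator $\Tr(O\hat\sigma_N(x)) = \tfrac1N\sum_\ell \kappa(x,x_\ell)\,\Tr(O\sigma_1(\rho(x_\ell)))$. Because $\sigma_1$ reproduces $\rho$ in expectation, and because convolution with the $l_2$-Dirichlet kernel reproduces every Fourier mode with $\norm{k}_2\le\Lambda$ and annihilates the rest, the estimator is unbiased for $f_{O,\Lambda}$ over the joint randomness of $x_\ell\sim\mathrm{Unif}[-1,1]^m$ and the single-shot snapshots. The variance of an iid average is $\tfrac1N\,\E_x\E_{x'}[\kappa(x,x')^2\,\Tr(O\sigma_1(\rho(x')))^2]$; orthogonality of the cosines gives $\E_x[\kappa(x,x')^2]=\mathcal{O}(\mathcal{K})$, while the shadow second moment is bounded by $\mathrm{const}^{r}\sum_i\norm{O_i}^2 \le \mathrm{const}^r B^2$ (diagonal terms use $\sum_i\norm{O_i}^2\le B^2$ with locality absorbing the $\mathrm{const}^r$ factor, and off-diagonal terms with disjoint support factorize and contribute only to the signal). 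Thus the expected generalization error is $\mathcal{O}(\mathcal{K}B^2/N)$, which is at most $\epsilon/4$ once $N = B^2 m^{\mathcal{O}(B^2/\epsilon)}$. A triangle inequality combines truncation and generalization into $\E_x|\Tr(O\hat\sigma_N(x)) - f_O(x)|^2 \le \epsilon$ in expectation over the data, and Markov's inequality (absorbing the confidence level into the constants) upgrades this to the claimed high-probability guarantee. Finally, each $\Tr(O\sigma_1(\rho(x_\ell)))$ is computed by evaluating the $L$ local terms against the single $n$-qubit snapshot in time $\mathcal{O}(n+L)$, and summing over the $N$ training points gives the stated runtime $\mathcal{O}((n+L)B^2 m^{\mathcal{O}(B^2/\epsilon)})$.

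The main obstacle is the first step: establishing that the directional derivative of $f_O$ is bounded by $\mathcal{O}(B)$ \emph{uniformly in the direction and independently of $m$}. The naive estimate, bounding each partial derivative separately, would cost a factor $\sqrt{m}$ and destroy the polynomial scaling; the crucial point is that quasi-adiabatic continuation bounds the full directional derivative at once, so that the gradient norm — not merely its coordinates — is $\mathcal{O}(B)$, which is exactly what lets $\Lambda$, and hence the exponent $\mathcal{O}(B^2/\epsilon)$, be free of $m$. Getting the Lieb-Robinson and filter-function estimates to interlock cleanly enough to yield this $m$-independent bound is the technical heart of the argument.
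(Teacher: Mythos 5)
Your proposal is correct and follows essentially the same route as the paper's proof: quasi-adiabatic continuation combined with Lieb--Robinson bounds to get the $m$-independent directional-derivative bound $|\partial_{\uvec}f_O|\le cB$, Parseval to convert it into the truncation error $\mathcal{O}(B^2/\Lambda^2)$, the lattice-point count $K_\Lambda\le m^{\mathcal{O}(\Lambda^2)}$, an unbiasedness-plus-concentration argument for the generalization error $\mathcal{O}(K_\Lambda B^2/N)$, and the same runtime accounting. The only (inessential) deviation is in the concentration step, where you bound the expected squared error directly and invoke Markov, whereas the paper applies Hoeffding's inequality to each Fourier-coefficient deviation and takes a union bound over the $K_\Lambda$ wave vectors; both yield the same sample complexity up to the constant-confidence factors absorbed in the $\mathcal{O}$ notation.
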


Theorem~\ref{thm:detailedFourier} can be generalized to the following statement about learning a family of quantum states. In particular, we will prove the following theorem and use it to derive Theorem~\ref{thm:detailedFourier}.

\begin{theorem} \label{thm:stateFourier}
Consider a parametrized family of $n$-qubit states $\{\rho(x):\; x \in [-1, 1]^m\}$
and a sum $O=\sum_{i=1}^L O_i$ of $L$ local observables that obey
\begin{subequations}
\begin{align}
\E_{x \sim [-1, 1]^m} \norm{\nabla_x \Tr(O \rho(x))}_2^2 \leq C &\quad \text{(smoothness condition)}, \\
\sum_i \norm{O_i} \leq B & \quad \text{(bounded norm)}.
\end{align}
\end{subequations}
Then, classical shadow data $\{ x_{\ell} \rightarrow \sigma_1(\rho(x_{\ell})) \}_{\ell = 1}^{N}$, with $x_\ell \sim \mathrm{Unif}[-1,1]^m$ and 
\begin{align}
    N &= B^2 m^{\mathcal{O}(C / \epsilon)} & \text{(training data size)},
\end{align}
suffices to produce a state prediction model
we can learn from classical data $\{ x_{\ell} \rightarrow \sigma_1(\rho(x_{\ell})) \}_{\ell = 1}^{N}$ to produce a model
\begin{equation}
    \hat{\sigma}_N(x) = \frac{1}{N} \sum_{\ell = 1}^N \kappa(x, x_{\ell}) \Tr(O \rho(x_{\ell})) \,\,\, \mbox{with} \,\,\, \kappa(x, x_{\ell}) = \sum_{k \in \mathbb{Z}^m, \norm{k}_2 \leq \Lambda} \cos(\pi k \cdot (x - x_{\ell})) \in \mathbb{R},
\end{equation}
that achieves
\begin{equation}
\E_{x \sim [-1, 1]^m} |\Tr(O \hat{\sigma}_N(x)) - \Tr(O \rho(x))|^2 \leq \epsilon,    
\end{equation}
with high probability.
The classical training time for constructing $\hat{\sigma}_N(x)$ and the prediction time for computing $\Tr(O \hat{\sigma}(x))$ are both upper bounded by $\mathcal{O}((n + L) B^2 m^{\mathcal{O}(C/\epsilon)}).$
\end{theorem}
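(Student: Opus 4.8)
Write $f(x)=\Tr(O\rho(x))$ and $y_\ell=\Tr(O\sigma_1(\rho(x_\ell)))$, so the prediction of the model is $\Tr(O\hat\sigma_N(x))=\tfrac1N\sum_{\ell=1}^N\kappa(x,x_\ell)y_\ell$. The plan is to recognize this estimator as empirical truncated Fourier reconstruction on $[-1,1]^m$. Since $\{e^{\mathrm{i}\pi k\cdot x}\}_{k\in\Z^m}$ is orthonormal with respect to the uniform probability measure, and since the symmetric set $\{k:\norm{k}_2\le\Lambda\}$ renders the Dirichlet kernel real, I would first rewrite $\kappa(x,x_\ell)=\sum_{\norm{k}_2\le\Lambda}e^{\mathrm{i}\pi k\cdot(x-x_\ell)}$, giving
\[
\Tr(O\hat\sigma_N(x))=\sum_{\norm{k}_2\le\Lambda}\hat c_k\,e^{\mathrm{i}\pi k\cdot x},\qquad \hat c_k:=\frac1N\sum_{\ell=1}^N y_\ell\,e^{-\mathrm{i}\pi k\cdot x_\ell}.
\]
Letting $c_k:=\E_x\!\left[f(x)e^{-\mathrm{i}\pi k\cdot x}\right]$ be the true Fourier coefficients and $f_\Lambda:=\sum_{\norm{k}_2\le\Lambda}c_k e^{\mathrm{i}\pi k\cdot x}$, I would split $\E_x|\Tr(O\hat\sigma_N(x))-f(x)|^2\le 2\,\E_x|\Tr(O\hat\sigma_N(x))-f_\Lambda(x)|^2+2\,\E_x|f_\Lambda(x)-f(x)|^2$ and treat the \emph{statistical} and \emph{truncation} terms separately.

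For the truncation term I would apply Parseval to the gradient. Since $\nabla_x f$ has Fourier coefficients $\mathrm{i}\pi k\,c_k$, the smoothness hypothesis yields $\pi^2\sum_k\norm{k}_2^2|c_k|^2=\E_x\norm{\nabla_x f(x)}_2^2\le C$. Hence the high-frequency tail obeys $\E_x|f_\Lambda(x)-f(x)|^2=\sum_{\norm{k}_2>\Lambda}|c_k|^2\le\Lambda^{-2}\sum_k\norm{k}_2^2|c_k|^2\le C/(\pi^2\Lambda^2)$, so choosing $\Lambda=\mathcal{O}(\sqrt{C/\epsilon})$ makes the truncation error at most $\epsilon/4$.

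For the statistical term the crucial observation is that a \emph{single} snapshot already gives an unbiased estimate of each retained coefficient: because $\E_{\mathrm{shadow}}[\sigma_1(\rho)]=\rho$ and $x_\ell\sim\mathrm{Unif}[-1,1]^m$, one has $\E[y_\ell e^{-\mathrm{i}\pi k\cdot x_\ell}]=\E_{x_\ell}[f(x_\ell)e^{-\mathrm{i}\pi k\cdot x_\ell}]=c_k$, so $\E[\hat c_k]=c_k$. Parseval on the band-limited difference gives $\E_{\mathrm{data}}\E_x|\Tr(O\hat\sigma_N(x))-f_\Lambda(x)|^2=\sum_{\norm{k}_2\le\Lambda}\E|\hat c_k-c_k|^2\le\sum_{\norm{k}_2\le\Lambda}\tfrac1N\E[|y_\ell|^2]$. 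I would then bound $\E[|y_\ell|^2]$ using locality and bounded norm: for a product observable $O_i$ supported on at most $r$ qubits, $|\Tr(O_i\sigma_1)|\le 3^{r}\norm{O_i}_\infty$ pointwise (each single-qubit factor $3\ketbra{s}{s}-\mathbb{I}$ has trace norm $3$ and the complementary factors trace to one), whence $|y_\ell|\le\sum_i|\Tr(O_i\sigma_1)|\le 3^{r}B$ and $\E[|y_\ell|^2]\le 3^{2r}B^2=\mathcal{O}(B^2)$. Writing $M:=|\{k\in\Z^m:\norm{k}_2\le\Lambda\}|$, the statistical term is $\mathcal{O}(B^2 M/N)$.

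The hard part, and the step preventing an exponential blow-up in $m$, is the bound on $M$. A naive volume estimate gives the useless $\Lambda^m$; instead I would argue combinatorially that any integer $k$ with $\sum_i k_i^2\le\Lambda^2$ has at most $\lfloor\Lambda^2\rfloor$ nonzero entries, each of magnitude at most $\Lambda$, so summing over the support and the nonzero values yields $M\le\sum_{s\le\Lambda^2}\binom{m}{s}(2\Lambda)^s=m^{\mathcal{O}(\Lambda^2)}=m^{\mathcal{O}(C/\epsilon)}$. Taking $N=\mathcal{O}(B^2 M/\epsilon)=B^2 m^{\mathcal{O}(C/\epsilon)}$ then forces the statistical term below $\epsilon/4$ in expectation; together with the truncation bound the expected squared error is $\mathcal{O}(\epsilon)$, and a Markov inequality over the draw of the data upgrades this to the stated high-probability guarantee (after rescaling $\epsilon$ by a constant). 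The runtime bound follows by counting operations: each $y_\ell$ is evaluated from a snapshot in $\mathcal{O}(n+L)$ time, and forming and evaluating the predictor over the $m^{\mathcal{O}(C/\epsilon)}$ retained modes supplies the remaining factors, giving $\mathcal{O}((n+L)B^2 m^{\mathcal{O}(C/\epsilon)})$. Finally, Theorem~\ref{thm:detailedFourier} would follow from Theorem~\ref{thm:stateFourier} by checking that a constant spectral gap implies the smoothness condition with $C=\mathcal{O}(B^2)$ via quasi-adiabatic continuation, as carried out in Appendix~\ref{sub:smoothness}.
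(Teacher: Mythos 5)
Your proposal is correct and follows essentially the same route as the paper's proof in Appendix F: the same split into a truncation error (controlled by Parseval applied to the gradient, giving $C/(\pi^2\Lambda^2)$) and a generalization error (controlled by the unbiasedness of single-shot classical shadows and the lattice-point count $K_\Lambda = m^{\mathcal{O}(\Lambda^2)}$), with the same locality bound $|\Tr(O_i\sigma_1)|\le 3^r\norm{O_i}_\infty$. The only difference is cosmetic: you bound the statistical term in expectation and invoke Markov, whereas the paper applies Hoeffding's inequality to each Fourier coefficient and a union bound over the $K_\Lambda$ wavevectors, which yields a $\log(1/\delta)$ rather than $1/\delta$ dependence on the failure probability but does not change the stated scaling.
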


The following sections are structured as follows. 
In Section~\ref{sec:overviewSCupGS}, we provide an overview to illustrate the proof of the sample complexity upper bound.
The first step, given in Section~\ref{sec:truncGS}, bounds the truncation error when approximating the quantum state function $\rho(x)$ using a truncated Fourier series.
The second step, given in Section~\ref{sec:MLTrun}, bounds the generalization error for learning the Fourier approximation to the quantum state function $\rho(x)$.
Then, in Section~\ref{sec:runtimeGS}, we analyze the training and prediction time of the proposed classical machine learning model.
These three sections establish Theorem~\ref{thm:stateFourier}.
Finally, in Section~\ref{sub:smoothness}, we use Theorem~\ref{thm:stateFourier} and nice properties about ground states of Hamiltonians to prove Theorem~\ref{thm:detailedFourier}.

\subsection{Overview for sample complexity upper bound} 
\label{sec:overviewSCupGS}

The key intermediate step is to construct a truncated Fourier series of the quantum state function $\rho(x)$.
The Fourier series of the matrix-valued function $\rho(x)$ is given as
\begin{equation}
    \rho(x) = \sum_{k \in \mathbb{Z}^m} e^{\mathrm{i} \pi k \cdot x} A_{k},
    \label{eq:fourier-basis}
\end{equation}
where $A_k$ are matrix-valued Fourier coefficients
\begin{equation}
    A_k = \frac{1}{2^m} \int_{[-1, 1]^m} e^{-\mathrm{i} \pi k \cdot x} \rho(x) \mathrm{d}^m x.
\end{equation}
We define the truncated Fourier series as
\begin{equation}
    \rho_\Lambda(x) = \sum_{k \in \mathbb{Z}^m, \norm{k}_2 \leq \Lambda} e^{\mathrm{i} \pi k \cdot x} A_{k},
    \label{eq:truncated-series}
\end{equation}
where $\Lambda >0$ is a pre-specified cutoff value.
Given an observable $O$ that can be written as a sum of local observables $O = \sum_i O_i$ with $\sum_i \norm{O_i}_\infty \leq B$ and $\E_{x \sim [-1, 1]^m} \norm{\nabla_x \Tr(O \rho(x))}_2^2 \leq C$, the proof of Theorem~\ref{thm:detailedFourier} consists of two parts.

First, we bound the error between the truncated Fourier series $\rho_\Lambda(x)$ and the true quantum state function $\rho(x)$ in Section~\ref{sec:truncGS} giving
\begin{equation} \label{eq:errrhoandrhoL}
    \E_{x \sim [-1, 1]^m} \left|\Tr(O \rho(x)) - \Tr\left(O \rho_\Lambda(x)\right) \right|^2 \leq \mathcal{O}\left( \frac{C}{\Lambda^2}\right),
\end{equation}
We choose the truncation $\Lambda = \Theta(\sqrt{C / \epsilon})$ such that the error between truncated Fourier series and the true quantum state function obeys
\begin{equation}
\E_{x \sim [-1, 1]^m} \left|\Tr(O \rho(x)) - \Tr\left(O \rho_\Lambda(x)\right) \right|^2 \leq \frac{\epsilon}{4}.
\end{equation}
In the second part, we bound the error between the machine learning model $\hat{\sigma}(x)$ and the truncated Fourier series $\rho_\Lambda(x)$ in Section~\ref{sec:MLTrun}.
With high probability over the randomness in generating the training data, we have
\begin{equation} \label{eq:errgandrhoL}
    \E_{x \sim [-1, 1]^m} \left|\Tr(O \hat{\sigma}(x)) - \Tr\left(O \rho_\Lambda(x)\right) \right|^2 \leq \frac{B^2 m^{\mathcal{O}(\Lambda^2)}}{N}.
\end{equation}
The training data contains two sources of randomness, one from the sampling of $x_\ell$ and the other from the local randomized measurement to construct approximate classical representation for $\rho(x_\ell)$ that could be feed into the classical machine learning model.
We choose the training data size
\begin{equation}
    N = \frac{2 B^2 m^{\mathcal{O}(C / \epsilon)}}{\epsilon} \leq B^2 m^{\mathcal{O}(C / \epsilon) + \log(1/\epsilon) + 1} = B^2 m^{\mathcal{O}(C / \epsilon)},
\end{equation}
such that the error between the machine learning model and the truncated Fourier series obeys
\begin{equation}
    \E_{x \sim [-1, 1]^m} \left|\Tr(O \hat{\sigma}(x)) - \Tr\left(O \rho_\Lambda(x)\right) \right|^2 \leq \epsilon / 4,
\end{equation}
with high probability.
The two parts can be combined by a triangle inequality to yield
\begin{subequations}
\begin{align}
    & \E_{x \sim [-1, 1]^m} \left|\Tr(O \hat{\sigma}(x)) - \Tr\left(O \rho(x)\right) \right|^2\\
    &\leq \left( \sqrt{\E_{x \sim [-1, 1]^m} \left|\Tr(O \hat{\sigma}(x)) - \Tr\left(O \rho_\Lambda(x)\right) \right|^2} + \sqrt{\E_{x \sim [-1, 1]^m} \left|\Tr(O \rho(x)) - \Tr\left(O \rho_\Lambda(x)\right) \right|^2} \right)^2 = \epsilon,
\end{align}
\end{subequations}
with high probability over the randomness in the training data.
This establishes the sample complexity upper bound for Theorem~\ref{thm:detailedFourier}.

When the Hamiltonians $H(x)$ have  spectral gap $\geq \Omega(1)$ in the domain $x \in [-1, 1]^m$, for any observable $O = \sum_i O_i$ that can be written as a sum of local observables with $\sum_i \norm{O_i}_\infty \leq B$, we have
\begin{equation}
    \E_{x \sim [-1, 1]^m} \norm{\nabla_x \Tr(O \rho(x))}_2^2 \leq \mathcal{O}(B^2).
\end{equation}
Hence, we can prove the sample complexity upper bound in Theorem~\ref{thm:stateFourier} by utilizing Theorem~\ref{thm:detailedFourier} and the fact that $C = \mathcal{O}(B^2)$.

\subsection{Controlling the truncation error}
\label{sec:truncGS}

For a fixed observable $O$, we can define a function
\begin{equation}
    f(x) = \Tr(O \rho(x)) = \sum_{k \in \mathbb{Z}^m} e^{\mathrm{i} \pi k \cdot x} \Tr( O A_{k} ).
\end{equation}
And the truncated Fourier series of the function $f(x)$ is given by
\begin{equation}
    f_\Lambda(x) = \Tr(O \rho_\Lambda(x)) = \rho_\Lambda(x) = \sum_{k \in \mathbb{Z}^m, \norm{k}_2 \leq \Lambda} e^{\mathrm{i} \pi k \cdot x} \Tr( O A_{k} ).
\end{equation}

\begin{lemma}[truncation error] \label{lem:truncation-error}
Let $f(x) = \sum_{k \in \mathbb{Z}^m} \alpha_k e^{\mathrm{i} \pi k \cdot x}$ and $f_\Lambda(x) = \sum_{k \in \mathbb{Z}^m, \norm{k}_2 \leq \Lambda} \alpha_k e^{\mathrm{i} \pi k \cdot x}$. Then
\begin{equation}
\E_{x \sim [-1,1]^m} 
| f(x) - f_\Lambda(x) |^2 \leq \frac{1}{\pi^2 \Lambda^2} 
\E_{x \sim [-1,1]^m} \norm{\nabla_x f(x)}_2^2
\quad \text{for any cutoff} \quad  \Lambda >0.
\end{equation}
\end{lemma}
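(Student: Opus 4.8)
The plan is to exploit the orthonormality of the Fourier characters $\{e^{\mathrm{i}\pi k\cdot x}\}_{k\in\mathbb{Z}^m}$ with respect to the normalized uniform measure $\E_{x\sim[-1,1]^m}[\,\cdot\,]=\tfrac{1}{2^m}\int_{[-1,1]^m}\cdot\,\mathrm{d}^m x$, and then apply Parseval's identity twice: once to the residual $f-f_\Lambda$ and once to the gradient $\nabla_x f$. First I would record the orthonormality relation
\begin{equation}
\E_{x\sim[-1,1]^m}\left[e^{\mathrm{i}\pi k\cdot x}\,\overline{e^{\mathrm{i}\pi k'\cdot x}}\right]
=\prod_{j=1}^m\frac{1}{2}\int_{-1}^1 e^{\mathrm{i}\pi (k_j-k_j')x_j}\,\mathrm{d}x_j
=\delta_{k,k'},
\end{equation}
which holds because each character has period $2$ in every coordinate and hence integrates to zero over $[-1,1]$ unless $k_j=k_j'$.

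Next, since $f-f_\Lambda=\sum_{\norm{k}_2>\Lambda}\alpha_k e^{\mathrm{i}\pi k\cdot x}$, orthonormality gives the exact residual energy
\begin{equation}
\E_{x\sim[-1,1]^m}|f(x)-f_\Lambda(x)|^2=\sum_{k\in\mathbb{Z}^m,\,\norm{k}_2>\Lambda}|\alpha_k|^2.
\end{equation}
In parallel, I would differentiate the series term by term, $\partial_j f(x)=\sum_k \alpha_k(\mathrm{i}\pi k_j)e^{\mathrm{i}\pi k\cdot x}$, so that a second application of Parseval yields
\begin{equation}
\E_{x\sim[-1,1]^m}\norm{\nabla_x f(x)}_2^2
=\sum_{j=1}^m\E_{x}|\partial_j f(x)|^2
=\pi^2\sum_{k\in\mathbb{Z}^m}\norm{k}_2^2\,|\alpha_k|^2.
\end{equation}

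Finally I would compare the two sums. On the tail $\norm{k}_2>\Lambda$ we have $\norm{k}_2^2/\Lambda^2>1$, so each surviving coefficient can be inflated without decreasing the sum, and then the tail restriction can be dropped to pass to the full sum:
\begin{equation}
\sum_{\norm{k}_2>\Lambda}|\alpha_k|^2
\leq\frac{1}{\Lambda^2}\sum_{\norm{k}_2>\Lambda}\norm{k}_2^2\,|\alpha_k|^2
\leq\frac{1}{\Lambda^2}\sum_{k\in\mathbb{Z}^m}\norm{k}_2^2\,|\alpha_k|^2
=\frac{1}{\pi^2\Lambda^2}\,\E_{x\sim[-1,1]^m}\norm{\nabla_x f(x)}_2^2,
\end{equation}
which is exactly the claimed bound. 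The computation is essentially routine; the only point requiring care is justifying the term-by-term differentiation and the interchange of expectation with the (possibly infinite) sums, which I would handle by assuming $f$ is smooth enough that $\sum_k\norm{k}_2^2|\alpha_k|^2<\infty$ (equivalently $\nabla_x f\in L^2$, guaranteed once the right-hand side is finite). If that quantity were infinite the inequality holds trivially, so no generality is lost. I do not anticipate a genuine obstacle here; the lemma is purely a Sobolev-type tail estimate and all the content is in the orthonormality of the characters.
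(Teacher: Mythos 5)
Your proposal is correct and follows essentially the same route as the paper's proof: orthonormality of the characters $e^{\mathrm{i}\pi k\cdot x}$ on $[-1,1]^m$, Parseval applied to both the residual $f-f_\Lambda$ and to $\nabla_x f$, and the tail comparison $\sum_{\norm{k}_2>\Lambda}|\alpha_k|^2\leq\Lambda^{-2}\sum_{k}\norm{k}_2^2|\alpha_k|^2$. The added remark about the trivial case when the right-hand side is infinite is a harmless refinement the paper leaves implicit.
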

\begin{proof}
The claim follows from standard Harmonic analysis arguments. More precisely, we combine \emph{orthogonality} (
$
\int_{[-1,1]^m} \mathrm{e}^{\mathrm{i}(\pi (k-k') x} \mathrm{d}^m x = \delta_{(k,k')}$) with the fact that the Fourier transform exchanges differentials (``momentum'') with multiplications (``position''):
\begin{align}
\nabla_x f(x) = \sum_{k \in \mathbb{Z}^m} \alpha_k \nabla_x \mathrm{e}^{\mathrm{i} \pi k x}
= \mathrm{i} \pi \sum_{k \in \mathbb{Z}^m} \alpha_k k \mathrm{e}^{\mathrm{i} \pi k x}. \label{eq:fourier-property}
\end{align}
Use orthogonality to rewrite the truncation error as
\begin{subequations}
\begin{align}
\E_{x \sim [-1,1]^m}
\left| f(x) - f_\Lambda(x) \right|^2 =& \int_{[-1,1]^m} \Big| \sum_{k \in \mathbb{Z}^m: \norm{k}>\Lambda} \mathrm{e}^{\mathrm{i} \pi kx} \alpha_k \Big|^2 \mathrm{d}^m x \\
=& \sum_{k:\norm{k}_2>\Lambda} \sum_{k':\norm{k'}_2>\Lambda}
\Big( \int_{[-1,1]^m} \mathrm{e}^{\mathrm{i} \pi (k-k')x} \mathrm{d}^m x \Big)  \overline{\alpha_k} \alpha_k \\
=& \sum_{k: \norm{k}_2 >\Lambda} \left| \alpha_k \right|^2. \label{eq:truncation-error-aux1}
\end{align}
\end{subequations}
Conversely, we use orthogonality and Rel.~\eqref{eq:fourier-property} to rephrase this upper bound. 
Let $\langle k', k \rangle$ be the Euclidean inner product between two vectors $k,k' \in \mathbb{Z}^m$. Then,
\begin{subequations}
\begin{align}
\E_{x \sim [-1,1]^m}\norm{ \nabla_x f(x)}_2^2 =& \int_{[-1,1]^m} \norm{ \sum_{k \in \mathbb{Z}^m} \pi k \mathrm{e}^{\mathrm{i}\pi kx} \alpha_k}_2^2 \mathrm{d}^m x \\
=& \sum_{k,k' \in \mathbb{Z}^m} \pi^2 \langle k',k \rangle
\int_{[-1,1]^m} \mathrm{e}^{\mathrm{i} \pi (k-k')x} \mathrm{d}^m x \overline{\alpha_{k'}} \alpha_k  \\
=& \pi^2 \sum_{k \in \mathbb{Z}^m} \langle k,k \rangle |\alpha_k|^2 = \pi^2 \sum_{k \in \mathbb{Z}^m} \norm{k}_2^2 \left| \alpha_k \right|^2.
\end{align}
\end{subequations}
In words, the upper bound from Eq.~\eqref{eq:truncation-error-aux1} can be rephrased as the Euclidean norm 
$\norm{ \nabla_x f(x)}_2^2$ of the vector $\nabla_x f(x)$.
The advertised claim readily follows from comparing these two reformulations:
\begin{align}
\sum_{k:\norm{k}_2>\Lambda} \left| \alpha_k \right|^2
\leq \frac{1}{\Lambda^2} \sum_{k:\norm{k}_2>\Lambda} \norm{k}_2^2 \left| \alpha_k \right|^2
\leq \frac{1}{\pi^2 \Lambda^2}\Big( \pi^2 \sum_{k \in \mathbb{Z}^m} \norm{k}_2^2 
\left|\alpha_k \right|^2 \Big).
\end{align}
\end{proof}

Using Lemma~\ref{lem:truncation-error} and the condition that $\E_{x \sim [-1,1]^m} \norm{\nabla_x \Tr(O \rho(x))}_2^2 \leq C$, we can obtain the desired inequality for bounding the truncation error,
\begin{equation}
    \E_{x \sim [-1, 1]^m} \left|\Tr(O \rho(x)) - \Tr\left(O \rho_\Lambda(x)\right) \right|^2 \leq \mathcal{O}\left( \frac{C}{\Lambda^2}\right).
\end{equation}

\subsection{
Controlling generalization errors from using the training data
}\label{sec:MLTrun}

This section is devoted to a practical issue regarding training data based on classical shadows. Each label is obtained by performing a single-shot quantum measurement of a parametrized quantum state $\rho (x_i)$. We can use Eq.~\eqref{eq:classical-shadow} to convert the single-shot outcome into
$\sigma_1(\rho) = \bigotimes_{i=1}^{n} \left( 3 \ketbra{s_i}{s_i} - \mathbb{I} \right)$. Such a classical shadow approximation reproduces the underlying state in expectation, i.e., $\E _{s_1, \ldots, s_n} [\sigma_1(\rho)] = \rho$.
Recall that the training data $\mathcal{T} = \left\{ x_{\ell} \rightarrow \sigma_1(\rho (x_{\ell})) \right\}_{\ell = 1}^{N}$ consists of such classical shadow approximations. The machine learning model makes predictions based on a truncated Fourier kernel for future predictions. For new input $x \in \left[-1,1\right]^n$, we predict
\begin{subequations}
\label{eq:sigma-hat}
\begin{align}
\hat{\sigma} (x) &= \frac{1}{N} \sum_{\ell = 1}^{N}\kappa (x, x_{\ell}) \sigma_1 \left( \rho (x_{\ell}) \right) \quad \text{with} \quad \\
\kappa(x, x_{\ell}) &= \sum_{k \in \mathbb{Z}^m, \norm{k}_2 \leq \Lambda} \mathrm{e}^{\mathrm{i} \pi k \cdot (x - x_{\ell})} = \sum_{k \in \mathbb{Z}^m, \norm{k}_2 \leq \Lambda} \cos(\pi k \cdot (x - x_{\ell})).
\end{align}
\end{subequations}
In the following, we will show that machine learning model $\hat{\sigma} (x)$ is equal to the truncated Fourier series $\rho_\Lambda (x)$ of the true target quantum state if we take the expectation over the training data, which includes the sampled inputs $x_1, \ldots, x_N$ and the randomized measurement outcomes $S_1(\rho(x_{\ell})) = \{s_i\}_{i=1}^n$ for each input $x_\ell$.
Moreover, statistical flucutations due to shot noise will be small provided that we are interested in predicting an observable that decomposes nicely as a sum of local terms. These observations are the content of the following statement.

\begin{lemma}[Statistical properties of the predicted quantum state $\hat{\sigma}(x)$] \label{lem:groundstateprop}
Let  $\mathcal{T} = \{ x_\ell \rightarrow \sigma_1(\rho(x_\ell)) \}_{\ell=1}^{N}$ be a training set featuring uniformly random inputs $x_\ell \overset{\textit{unif}}{\sim} [-1,1]^m$ and classical shadows of the associated quantum states as labels.
Then, the machine learning model obeys
\begin{equation}
\E_\mathcal{T}[\hat{\sigma}(x)] = \rho_\Lambda(x) = \sum_{k \in \mathbb{Z}^m, \norm{k}_2 \leq \Lambda} e^{\mathrm{i} \pi k \cdot x} A_{k}.    
\end{equation}
Moreover, suppose that an observable $O=\sum_i O_i$ decomposes into a sum of $q$-local terms. Then, with probability at least $1-\delta$, we have
\begin{equation}
    \E_{x \sim [-1, 1]^m} \left|\Tr(O \hat{\sigma}(x)) - \Tr\left(O \rho_\Lambda(x)\right) \right|^2 \leq \frac{1}{N}9^q \big(\sum_i \norm{O_i}_\infty\big)^2 (2m+1)^{\Lambda^2} \left( \Lambda^2 \log(2m + 1) + \log\left(4/\delta\right) \right).
\end{equation}
\end{lemma}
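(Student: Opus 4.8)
The plan is to prove the two assertions in turn, treating the expectation first and then bootstrapping it into the high-probability $L^2(x)$ bound via a Fourier/Parseval reduction followed by a concentration-plus-counting argument.

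For the unbiasedness claim I would use the tower property of expectation, conditioning first on the sampled parameters $x_1,\dots,x_N$ and averaging over the randomized-measurement outcomes. Since the $N$ summands are i.i.d.\ and the single-shot classical shadow is unbiased, $\E_s[\sigma_1(\rho(x_\ell))]=\rho(x_\ell)$ (as in Eq.~\eqref{eq:snapshot-expectation}), so $\E_{\mathcal{T}}[\hat\sigma(x)] = \E_{x_1\sim[-1,1]^m}[\kappa(x,x_1)\rho(x_1)]$. Expanding $\kappa(x,x_1)=\sum_{\norm{k}_2\le\Lambda} e^{\mathrm{i}\pi k\cdot(x-x_1)}$ and $\rho(x_1)=\sum_{k'}e^{\mathrm{i}\pi k'\cdot x_1}A_{k'}$, the integral over $x_1$ is exactly the orthogonality relation $\tfrac{1}{2^m}\int_{[-1,1]^m}e^{\mathrm{i}\pi(k'-k)\cdot x_1}\mathrm{d}^m x_1=\delta_{k,k'}$ already used in Lemma~\ref{lem:truncation-error}, which collapses the double sum onto the diagonal and yields $\sum_{\norm{k}_2\le\Lambda}e^{\mathrm{i}\pi k\cdot x}A_k=\rho_\Lambda(x)$.

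For the error bound, the key observation is that both $\Tr(O\hat\sigma(x))$ and $\Tr(O\rho_\Lambda(x))$ are trigonometric polynomials whose frequencies lie in the lattice ball $B_\Lambda=\{k\in\mathbb{Z}^m:\norm{k}_2\le\Lambda\}$, since $\kappa(x,x_\ell)$ carries only frequencies $\norm{k}_2\le\Lambda$ in its $x$-dependence. Hence $\Delta(x):=\Tr(O\hat\sigma(x))-\Tr(O\rho_\Lambda(x))=\sum_{k\in B_\Lambda}c_k e^{\mathrm{i}\pi k\cdot x}$, and Parseval gives $\E_{x}|\Delta(x)|^2=\sum_{k\in B_\Lambda}|c_k|^2$, where $c_k=\tfrac1N\sum_\ell e^{-\mathrm{i}\pi k\cdot x_\ell}\Tr(O\sigma_1(\rho(x_\ell)))-\Tr(OA_k)$. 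Repeating the computation above shows each $c_k$ is a centered average of i.i.d.\ terms, $\E_{\mathcal{T}}c_k=0$. I would then record the deterministic per-sample bound: since $O=\sum_i O_i$ is a sum of $q$-local terms and each factor $3\ketbra{s_j}{s_j}-\mathbb{I}$ has unit trace but $\ell_1$-norm $3$, tracing out the complement of $\mathrm{supp}(O_i)$ and applying H\"older's inequality gives $|\Tr(O_i\sigma_1(\rho))|\le 3^q\norm{O_i}_\infty$, so $|\Tr(O\sigma_1(\rho))|\le 3^q\sum_i\norm{O_i}_\infty$ uniformly over outcomes.

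It remains to concentrate $\sum_{k\in B_\Lambda}|c_k|^2$. I would fix $k$, split $c_k$ into real and imaginary parts (each a centered average of i.i.d.\ variables bounded by $3^q\sum_i\norm{O_i}_\infty$), and apply a Hoeffding/Bernstein tail bound to obtain $|c_k|^2\le \tfrac{9^q(\sum_i\norm{O_i}_\infty)^2}{N}\big(\Lambda^2\log(2m+1)+\log(4/\delta)\big)$ with failure probability at most $\delta/|B_\Lambda|$; a union bound over the $|B_\Lambda|$ frequencies and summing then reproduces the claimed estimate, provided $|B_\Lambda|\le(2m+1)^{\Lambda^2}$ so that $\log|B_\Lambda|\le\Lambda^2\log(2m+1)$. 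The two genuinely nontrivial points are where I expect the difficulty to concentrate. First, the lattice count: I would prove $|B_\Lambda|\le(2m+1)^{\Lambda^2}$ by an injective encoding, noting that every $k$ with $\norm{k}_2^2\le\Lambda^2$ satisfies $\sum_j|k_j|\le\sum_j k_j^2\le\Lambda^2$, so writing, in increasing coordinate order, $|k_j|$ copies of the symbol $(j,\mathrm{sign}\,k_j)$ and padding with a null symbol produces a distinct word of length $\lfloor\Lambda^2\rfloor$ over an alphabet of size $2m+1$; this replaces the naive volume estimate $\Lambda^m$ (exponential in $m$) by $m^{\mathcal{O}(\Lambda^2)}$, and is the crux that keeps the sample complexity polynomial in $m$. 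Second, obtaining the $\log(1/\delta)$ dependence rather than the $1/\delta$ that Markov applied to the already-established expectation bound $\E_{\mathcal{T}}\sum_k|c_k|^2\le 9^q(\sum_i\norm{O_i}_\infty)^2|B_\Lambda|/N$ would give; this forces the exponential tail bound, and matching the precise constant multiplying $(\Lambda^2\log(2m+1)+\log(4/\delta))$ is the main place demanding care with the concentration inequality and with the constants (the almost-sure bound $3^q\sum_i\norm{O_i}_\infty$ and the variance proxy $9^q(\sum_i\norm{O_i}_\infty)^2$).
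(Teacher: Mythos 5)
Your proposal is correct and follows essentially the same route as the paper's proof: unbiasedness via the tower property and Fourier orthogonality, a Parseval reduction of the $L^2(x)$ error to the squared deviations of the empirical Fourier coefficients $c_k$ (the paper's $D_{(k)}(\mathcal{T})$), the matrix H\"older bound $|\Tr(O\sigma_1(\rho))|\le 3^q\sum_i\norm{O_i}_\infty$, a complex Hoeffding bound plus union bound over the lattice ball, and the count $K_\Lambda\le(2m+1)^{\Lambda^2}$ via $\norm{k}_1\le\norm{k}_2^2$ for integer vectors. Your injective word-encoding of the lattice points is only a cosmetic variant of the paper's increment/decrement process, so there is nothing substantively different to flag.
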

The advertised bound can be further streamlined if the observable locality $q$ and confidence level $\delta$ are constant. Assuming $q, \delta = \mathcal{O}(1)$ ensures the following simplified scaling:
\begin{equation}
\E_{x \sim [-1, 1]^m} \left|\Tr(O \hat{\sigma}(x)) - \Tr\left(O \rho_\Lambda(x)\right) \right|^2
= \mathcal{O} \left( \frac{1}{N} \left( \sum_i \norm{O_i} \right)^2 (2m + 1)^{\Lambda^2 + \log (\Lambda^2) +1} \right) = \frac{\left(\sum_i \norm{O_i}\right)^2 m^{\mathcal{O}(\Lambda^2)}}{N}.
\end{equation}
Using the condition that $\sum_i \norm{O_i} \leq B$, we have
\begin{equation}
\E_{x \sim [-1, 1]^m} \left|\Tr(O \hat{\sigma}(x)) - \Tr\left(O \rho_\Lambda(x)\right) \right|^2
 = \frac{B^2 m^{\mathcal{O}(\Lambda^2)}}{N},
\end{equation}
which controls the generalization error from quantum measurements.
The argument is based on fundamental properties of classical shadows that have been reviewed in Appendix~\ref{sec:classical-shadows}.

\begin{proof}[Proof of Lemma~\ref{lem:groundstateprop}]
We begin by condensing notation somewhat. Here, we only consider classical shadows of size $T=1$. Hence, we may replace the superscript $(t)$ by $(x_\ell)$ to succinctly keep track of classical input parameters.
More precisely, we let
$ |s_i^{(x_\ell)}\rangle$ be the randomized Pauli measurement outcome for the $i$-th qubit when measuring the quantum state $\rho (x_\ell)$.
The training data $\mathcal{T} = \{ x_\ell \rightarrow \sigma_1(\rho(x_\ell)) \}_{\ell=1}^{N}$ is determined by the following random variables
\begin{subequations}
\label{eq:training-data}
\begin{align}
    x_\ell \in [-1,1]^m,& & \text{for $\ell \in \{1,\ldots,N\}$}, \label{eq:training-xell} \\
    s^{(x_\ell)}_i \in \left\{\ket{0},\ket{1},\ket{+},\ket{-},\ket{\mathrm{i}+},\ket{\mathrm{i}-}\right\},& & \text{for $i \in \{1,\ldots,n\}$ and $\ell \in \{1,\ldots,N\}$}. \label{eq:training-sell}
\end{align}
\end{subequations}
The first claim is an immediate consequence of Eq.~\eqref{eq:snapshot-expectation}:
\begin{subequations}
\begin{align}
    \E_\mathcal{T}[\hat{\sigma}(x)] &= \frac{1}{N} \sum_{\ell=1}^{N} \E_{x_\ell\sim[-1,1]^{m}}\left[\kappa(x, x_\ell) \E_{s^{(x_\ell)}_1, \ldots, s^{(x_\ell)}_n}[\sigma_1(\rho(x_\ell))]\right] \\
    &= \frac{1}{N} \sum_{\ell=1}^{N} \E_{x_\ell\sim[-1,1]^{m}}\left[\kappa(x, x_\ell) \rho(x_\ell) \right] \\
    &= \E_{x_1\sim[-1,1]^{m}}\left[\kappa(x, x_1) \rho(x_1) \right] \\
    &= \sum_{k \in \mathbb{Z}^m, \norm{k}_2 \leq \Lambda} e^{\mathrm{i} \pi k \cdot x} \E_{x_1\sim[-1,1]^{m}}\left[ e^{-\mathrm{i} \pi k \cdot x_1} \rho(x_1)\right] \\
    &= \sum_{k \in \mathbb{Z}^m, \norm{k}_2 \leq \Lambda} e^{\mathrm{i} \pi k \cdot x} \frac{1}{2^m} \int_{[-1, 1]^m} e^{-\mathrm{i} \pi k \cdot x_1} \rho(x) \mathrm{d}^m x_1 \\
    &= \sum_{k \in \mathbb{Z}^m, \norm{k}_2 \leq \Lambda} e^{\mathrm{i} \pi k \cdot x} A_k \\
    &= \rho_\Lambda (x).
\end{align}
\end{subequations}
Here, we have also used the fact that each $x_\ell$ is sampled independently and uniformly from $[-1, 1]^m$. 

The second result is contingent on the training data for predicting the ground state representation $\mathcal{T} = \{ x_\ell \rightarrow \sigma_1(\rho(x_\ell)) \}_{\ell=1}^{N}$. We begin with using the definitions of $\hat{\sigma} (x)$ (\ref{eq:sigma-hat}) and $\rho_\Lambda (x)$ to rewrite the expression of interest as
\begin{subequations}
\label{eq:shot-noise-concentration-aux1}
\begin{align}
  &\!\!\!\!\!\!\!\!\!\!\!\!\!\!\!\!\!\!\E_{x\sim[-1,1]^{m}}\left|\Tr(O\hat{\sigma}(x))-\Tr\left(O\rho_{\Lambda}(x)\right)\right|^{2}\nonumber\\=&\frac{1}{2^{m}}\int_{[-1,1]^{m}}\mathrm{d}^{m}x\left|\sum_{k\in\mathbb{Z}^{m},\norm k_{2}\leq\Lambda}\mathrm{e}^{\mathrm{i}\pi k\cdot x}\left(\frac{1}{N}\sum_{\ell=1}^{N}\mathrm{e}^{-\mathrm{i}\pi k\cdot x_{\ell}}\Tr\left(O\sigma_{1}\left(\rho(x_{\ell})\right)\right)-\Tr\left(OA_{k}\right)\right)\right|^{2}\\=&\sum_{k\in\mathbb{Z}^{m},\norm k_{2}\leq\Lambda}\left|\frac{1}{N}\sum_{\ell=1}^{N}\mathrm{e}^{-\mathrm{i}\pi k\cdot x_{\ell}}\Tr\left(O\sigma_{1}\left(\rho(x_{\ell})\right)\right)-\Tr\left(OA_{k}\right)\right|^{2}~,\\
  \equiv&\sum_{k\in\mathbb{Z}^{m},\norm k_{2}\leq\Lambda}D_{(k)}({\cal T})^{2}~,\label{eq:shot-noise-concentration-aux1.1}
\end{align}
\end{subequations}
where we have evaluated the Fourier integral over $x$ and introduced shorthand notation $D_{(k)}({\cal T})^{2}$ for each summand. 

The next key step is to notice that each $A_k$ is an expectation value over both the parameters and the shadows. Writing out $A_k$ and expressing $\rho$ in terms of shadows using Eq.~(\ref{eq:snapshot-expectation}),
\begin{subequations}
\begin{align}
  \Tr \left( O A_{k} \right)&=\frac{1}{2^{m}}\int_{[-1,1]^{m}}\mathrm{e}^{-\mathrm{i}\pi k\cdot x_{\ell}}\Tr\left(O\rho(x_{\ell})\right)\mathrm{d}^{m}x_{\ell}\\&=\E_{x_{\ell}\sim[-1,1]^{m}}\mathrm{e}^{-\mathrm{i}\pi k\cdot x_{\ell}}\Tr\left(O\rho(x_{\ell})\right)\\&=\E_{x_{\ell}\text{ and }s_{1}^{(x_{\ell})},\ldots,s_{n}^{(x_{\ell})}}\mathrm{e}^{-\mathrm{i}\pi k\cdot x_{\ell}}\Tr\left(O\sigma_{1}\left(\rho(x_{\ell})\right)\right)~.
\end{align}
\end{subequations}
Plugging this back into the summand in Eq.~(\ref{eq:shot-noise-concentration-aux1.1}) yields
\begin{equation}
    D_{(k)}({\cal T})^{2}=\left|\frac{1}{N}\sum_{\ell=1}^{N}\mathrm{e}^{-\mathrm{i}\pi k\cdot x_{\ell}}\Tr\left(O\sigma_{1}\left(\rho(x_{\ell})\right)\right)-\E_{x_{\ell}\text{ and }s_{1}^{(x_{\ell})},\ldots,s_{n}^{(x_{\ell})}}\mathrm{e}^{-\mathrm{i}\pi k\cdot x_{\ell}}\Tr\left(O\sigma_{1}\left(\rho(x_{\ell})\right)\right)\right|^{2}~.
\end{equation}
Therefore, each $D_{(k)}({\cal T})^{2}$ is the (square-)deviation of an empirical average from the true expectation value $A_k$. Hence, we can use Hoeffding's inequality to bound it, provided that $O$ is local and bounded. This may come as a surprise, as the empirical average samples only different parameters $x_\ell$ and not different shadows $\sigma$. However, the shadows depend on the parameters, so sampling only over the parameters turns out to be sufficient for a reasonable estimate.

In order to apply Hoeffding's inequality, we first have to make sure the expectation value is bounded. Recall that $O=\sum_i O_i$ decomposes nicely into a sum of $q$-body terms. More formally, $\mathrm{supp}(O_j) \subset\{1,\ldots,n\}$ contains at most $q$ qubits.
We also know trace and trace norm of each single-qubit contribution to $\sigma_1 (\rho (x_\ell))$, $\mathrm{tr} \big( 3|s_j^{(x_\ell)} \rangle \! \langle s_j^{(x_\ell)} |- \mathbb{I} \big)=1$, and Eq.~\eqref{eq:shadow-norm} asserts $\big\| 3 |s_j^{(x_\ell)} \rangle \! \langle s_j^{(x_\ell)}|-\mathbb{I} \big\|_1 =3$.
The matrix Hoelder inequality then implies, for every $x_{\ell} \in [-1,1]^m$,
\begin{subequations}
\begin{align}
\left| \mathrm{e}^{\mathrm{i} \pi k \cdot x_{\ell}} \Tr \left( O \sigma_1 \left( \rho (x_{\ell}) \right) \right)
\right|
& \leq  \sum_i \left| \mathrm{tr} \left( O_i \sigma_1 (\rho (x_\ell)) \right) \right| \\
&= \sum_i \left| \mathrm{tr} \left( O_{A_i} \mathrm{tr}_{\neg A_i} \left( \sigma_1 \left( \rho (x_\ell) \right) \right) \right) \right| \\
&\leq  \sum_i \left\| O_{A_i} \right\|_\infty \left\| \mathrm{tr}_{\neg A_i} \left( \sigma_1 \left( \rho (x_\ell) \right) \right) \right\|_1 \\
&= \sum_i \|O_i \|_\infty \prod_{j \in \mathrm{supp}(O_i)} \left\| 3 |s_j^{(x_\ell)}\rangle \! \langle s_j^{(x_\ell)}|-\mathbb{I} \right\|_1 \\
&= \sum_i \|O_i \|_\infty 3^{|\mathrm{supp}(O_j)|} \leq 3^q \sum_i \|O_i\|_\infty.
\end{align}
\end{subequations}
Thus, the expectation value is bounded.

We are now ready to bound the likelihood of a large deviation $D_{(k)}(\mathcal{T})^{2}$. To recap, for each $k \in \mathbb{Z}^m$ obeying $\norm{k}_2 \leq \Lambda$, we face a contribution that collects the (square-)deviation of a sum of \textit{iid} and \emph{bounded} random variables around their expectation value. These variables are complex, but one can analyze their real and imaginary parts separately and collect them into a complex version of Hoeffding's inequality:
\begin{subequations}
\label{eq:shot-noise-concentration-aux2}
\begin{align}
    \mathrm{Pr}\left[D_{(k)}(\mathcal{T})^{2}\geq\tau^{2}\right]&=\mathrm{Pr}\left[D_{(k)}(\mathcal{T})\geq\tau\right]\\&\leq2\exp\left(-\frac{2N\tau^{2}}{9^{q}\left(\sum_{i}\norm{O_{i}}\right)^{2}}\right)\quad\text{for all}\quad\tau>0.
\end{align}
\end{subequations}
This concentration bound connects training data size $N = |\mathcal{T}|$ with the size of a (fixed, but arbitrary) contribution to the expected deviation \eqref{eq:shot-noise-concentration-aux1}. For fixed magnitude $\tau$ and confidence $\delta$, there is always a (finite) training data size $N=N (\tau,\delta)$ that ensures $D_{(k)}(\mathcal{T})^2 \leq \tau$ with probability at least $1-\delta$. 
We can extend this reasoning to the entire sum in Eq.~\eqref{eq:shot-noise-concentration-aux1} by exploiting that Rel.~\eqref{eq:shot-noise-concentration-aux2} 
is independent of $k$, and the summation only ranges over finitely many terms. 
Introduce $K_\Lambda = \left| \left\{k \in \mathbb{Z}^m:\; \norm{k}_2 \leq \Lambda \right\} \right|$ --- the number of wave-vectors $k \in \mathbb{Z}^m$ whose Euclidean norm is bounded by $\Lambda$ --- and apply a union bound to conclude
\begin{subequations}
\begin{align}
\mathrm{Pr} \left[ \sum_{k \in \mathbb{Z}^m,\norm{k}_2 \leq \Lambda} D_{(k)} (\mathcal{T})^2 \geq K_\Lambda \tau^2 \right]
&\leq \mathrm{Pr} \left[ \exists k \in \mathbb{Z}^m: \norm{k}_2 \leq \Lambda, \,\, \mathrm{s.t.} \,\, D_{(k)} (\mathcal{T})^2 \geq \tau^2 \right] \\
&\leq \sum_{k \in \mathbb{Z}^m,\norm{k}_2 \leq \Lambda} \mathrm{Pr} \left[ D_{(k)}(\mathcal{T})^2 \geq \tau^2 \right] \\
&\leq 2 K_\Lambda \exp \left( - \frac{2 N \tau^2}{9^q \left(\sum_i \norm{O_i}\right)^2} \right).
\end{align}
\end{subequations}
for all $\tau >0$.
To finish the argument, we take guidance from Eq.~\eqref{eq:shot-noise-concentration-aux2}. Fix a confidence level $\delta \in (0,1)$ and set 
\begin{equation}
\tau^2 = \frac{1}{N}9^q \left( \sum_i \norm{O_i}\right)^2 \log (2 K_\Lambda/\delta)
\quad \text{to ensure} \quad \mathrm{Pr} \left[  \E_{x \sim [-1, 1]^m} \left|\Tr(O \hat{\sigma}(x)) - \Tr\left(O \rho_\Lambda(x)\right) \right|^2 \geq K_\Lambda \tau^2 \right]
\leq \delta.
\end{equation}
The advertised bound follows from inserting an explicit bound on the number of relevant wavevectors: 
\begin{equation}\label{eq:numklambda}
K_\Lambda = \left| \left\{ k \in \mathbb{Z}^m:\; \norm{k}_2 \leq \Lambda \right\} \right| \leq (2m+1)^{\Lambda^2}. 
\end{equation} 
To see this, note that $\norm{k}_2^2 = \sum_{i=1}^n |k_i|^2 \geq \sum_{i=1}^n |k_i| = \norm{k}_{1}$, because $k_i \in \mathbb{Z}$. Conversely, every $k \in \mathbb{Z}^m$ that obeys $\norm{k}_2 \leq \Lambda$ also obeys $\norm{k}_1 \leq \Lambda^2$. 
Next, we enumerate all wave-vectors that obey the relaxed condition $\norm{k}_1 \leq \Lambda^2$.
To this end, we consider a simple process: select an index $i \in [m]$,
and update the associated wave number by $+1$ (increment), $0$ (do nothing) or $-1$ (decrement). 
Repeating this process a total of $\Lambda^2$ times allows us to generate no more than $(2m+1)^{\Lambda^2}$ different wavevectors. But, at the same time, every wave vector $k \in \mathbb{Z}^m$ that obeys $\norm{k}_2 \leq \Lambda^2$ can be reached in this fashion. Hence, we conclude $K_\Lambda \leq \left| \left\{ k \in \mathbb{Z}^m:\; \norm{k}_1 \leq \Lambda^2 \right\} \right|\leq (2m+1)^{\Lambda^2}$.
\end{proof}

\subsection{Computational time for training and prediction} \label{sec:runtimeGS}

We have proposed a very simple prediction model that is based on approximating a truncated Fourier series ($l_2$-Dirichlet kernel). The training time is equivalent to loading the training data $\mathcal{T} = \left\{ x_{\ell} \rightarrow \sigma_1(\rho (x_{\ell})) \right\}_{\ell = 1}^{N}$. Only a single snapshot is provided for each sampled parameter $x_\ell$ (i.e., $T=1$), so we relabel $s^{(t)}\to s^{(x_\ell)}$.  The training data is given by the collection of $x_\ell$ and shadows $\{s_i^{x_\ell}\}_{i=1}^n$, following Eq.~(\ref{eq:training-data}). Therefore, one only needs 
\begin{align}
    \mathcal{O}\left((n + m) N\right) =  \mathcal{O}\left( (n+m) B^2 m^{\mathcal{O}(C / \epsilon)} \right)  = \mathcal{O}\left( n B^2 m^{\mathcal{O}(C / \epsilon)} \right)  &\quad \text{(training time)}
\end{align}
computational time to load the relevant data into a classical memory. 
Next, suppose that $O=\sum_{i=1}^L O_i$ is comprised of $L$ $q$-local terms. Then, we can compute the associated expectation value for the predicted quantum state $\hat{\sigma}(x)$ by evaluating
\begin{equation}
    \Tr(O \hat{\sigma}(x)) = \frac{1}{N} \sum_{\ell=1}^{N} \sum_{i=1}^L \kappa(x, x_{\ell}) \Tr(O_i \sigma_1(\rho(x_{\ell}))).
 \end{equation}
Recall that the kernel function is defined as
\begin{align}
\kappa(x, x_{\ell}) = \sum_{k \in \mathbb{Z}^m, \norm{k}_2 \leq \Lambda} \mathrm{e}^{\mathrm{i} \pi k \cdot (x - x_{\ell})} = \sum_{k \in \mathbb{Z}^m, \norm{k}_2 \leq \Lambda} \cos(\pi k \cdot (x - x_\ell)). 
\end{align}
This can be computed in time 
$\mathcal{O} \left( K_\Lambda \right)$, where $K_\Lambda = \left| \left\{ k \in \mathbb{Z}^m:\; \norm{k}_2 \leq \Lambda \right\} \right| \leq (2m+1)^{\Lambda^2}$, according to Rel.~\eqref{eq:numklambda} above.
Because we have chosen $\Lambda = \Theta(\sqrt{C / \epsilon})$, the runtime to evaluate one kernel function is upper bounded by $m^{\mathcal{O}(C / \epsilon)}$.

On the other hand, the computation of each $\Tr(O_j \sigma_1(\rho(x_{\ell})))$ can be performed in constant time after storing the data in a classical memory. This is a consequence of the tensor product structure of $\sigma_1(\rho(x_\ell)) = \bigotimes_{i=1}^{n} \left( 3 \ketbra{s^{(x_\ell)}_i}{s^{(x_\ell)}_i} - \mathbb{I} \right)$ which ensures
\begin{equation}
    \Tr(O_j \sigma_1(\rho(x_{\ell}))) = \Tr\Big(O_j \bigotimes_{i \in \mathrm{supp}(O_j)} \left( 3 \ketbra{s^{(x_\ell)}_i}{s^{(x_\ell)}_i} - \mathbb{I} \right)\Big),
\end{equation}
where $\mathrm{supp}(O_j)$ is the set of qubits in $\{1, \ldots, n\}$ the local observable $O_j$ acts on.
Because $|\mathrm{supp}(O_j)| \leq q= \mathcal{O}(1)$, computing $\Tr(O_j \sigma_1(\rho(x_{\ell})))$ takes only constant time. However, the computation time does scale exponentially in $|\mathrm{supp}(O_j)|$.
This can become a problem if $|\mathrm{supp}(O_j)|$ ceases to be a \emph{small} constant.
Putting everything together implies that  $\Tr(O \hat{\sigma}(x))$ can be computed in time (at most)
\begin{align}
    \mathcal{O}\left(N L m^{\mathcal{O}(C / \epsilon)}\right) = \mathcal{O}\left(L B^2 m^{\mathcal{O}(C / \epsilon)}\right) & ~~~~~~~~\text{(prediction time)}.
\end{align}
We conclude that both classical training time and prediction time for $\Tr(O \hat{\sigma}(x))$ are upper bounded by
\begin{equation}
    \mathcal{O}((n + L) B^2 m^{\mathcal{O}(C/\epsilon)}).
\end{equation}
This concludes the proof of all statements given in Theorem~\ref{thm:stateFourier}.

\subsection{Spectral gap implies smooth parametrizations} \label{sub:smoothness}

We attempt to deduce Theorem~\ref{thm:detailedFourier} from Theorem~\ref{thm:stateFourier}. 
The key step involves showing that the ground state $\rho(x)$ in a quantum phase of matter satisfies the following condition: For any observable $O = \sum_i O_i$ that can be written as a sum of local observables with $\sum_i \norm{O_i}_\infty \leq B$, we have
\begin{equation}
    \E_{x \sim [-1, 1]^m} \norm{\nabla_x \Tr(O \rho(x))}_2^2 \leq \mathcal{O}(B^2).
\end{equation}
Then we can apply Theorem~\ref{thm:stateFourier} with $C = \mathcal{O}(B^2)$  to derive Theorem~\ref{thm:detailedFourier}.

The average gradient magnitude $\E_{x \sim [-1, 1]^m} \norm{\nabla_x \Tr(O \rho(x))}_2^2$ depends on the observable $O$ in question, but also on the parametrization $x \mapsto H(x) \mapsto \rho (x)$. This section provides a useful smoothness bound based on physically meaningful assumptions:
\begin{enumerate}
    \item[(a)] \emph{Physical system:} 
    We consider $n$ finite-dimensional quantum many-body systems that are arranged at locations, or sites, in a $d$-dimensional space, e.g.,\ a spin chain ($d=1$), a square lattice ($d=2$), or a cubic lattice ($d=3$). Unless specified otherwise, our big-$\mathcal{O},\Omega, \Theta$ notation will be with respect to the thermodynamic limit $n\to\infty$.
    \item[(b)] \emph{Hamiltonian:} $H(x)$ decomposes into a sum of geometrically local terms $H(x) =\sum_j h_j (x)$, each of which only acts on an $\mathcal{O}(1)$ number of sites in a ball of $\mathcal{O}(1)$ radius. Individual terms $h_j (x)$ obey $\norm{h_j(x)}_\infty \leq 1$ and also have bounded directional derivative: $\norm{\partial h_j / \partial \uvec}_\infty \leq 1$, where $\uvec$ is a unit vector in parameter space. However, each term $h_j(x)$ can depend on the entire input vector $x \in [-1,1]^m$.
    
    \item[(c)] \emph{Ground-state subspace:} We consider ``the'' ground state $\rho(x)$ for the Hamiltonian $H(x)$ to be defined as $\rho(x) = \lim_{\beta \rightarrow \infty} \mathrm{e}^{-\beta H(x)} / \Tr(\mathrm{e}^{-\beta H(x)})$. This is equivalent to a uniform mixture over the eigenspace of $H(x)$ with the minimum eigenvalue.
    
    \item[(d)] \emph{Observable:} $O$ decomposes into a sum of few-body observables $O = \sum_i O_i$, each of which only acts on an $\mathcal{O}(1)$ number of sites. Each few-body observables $O_i$ can act on geometrically-nonlocal sites.
\end{enumerate}

Assumptions (a)--(c) should be viewed as mild technical assumptions that are often met in practice. The main result of this section bounds the smoothness condition based on an additional requirement.

\begin{lemma}[Spectral gap implies smoothness condition]\label{lem:truncatedFourier}
Consider a class of local Hamiltonians
\begin{equation}
    \left\{H(x):\;x \in [-1,1]^m\right\}
\end{equation}
and an observable $O=\sum_i O_i$ that obey the technical requirements (a)--(c) above. 
Moreover, suppose that the \emph{spectral} gap of each $H(x)$ is lower bounded by (constant) $\gamma > \Omega (1)$. Then,
\begin{equation}
\E_{x \sim [-1, 1]^m} \norm{\nabla_x \Tr(O \rho(x))}_2^2
\leq c_{\mathrm{all}}\Big(\sum_i \norm{O_i}_\infty\Big)^2.
\end{equation}
Here, $c_{\mathrm{all}}>0$ is a constant that depends on spatial dimension $d$, spectral gap $\gamma$, as well as the Lieb-Robinson velocities.
\end{lemma}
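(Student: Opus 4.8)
The plan is to reduce the average gradient bound to a \emph{pointwise, uniform} bound on the directional derivative and then to estimate that derivative by quasi-adiabatic continuation. The starting observation is that for any differentiable scalar function $\norm{\nabla_x f(x)}_2 = \sup_{\uvec:\,\norm{\uvec}_2 = 1} |\partial_{\uvec} f(x)|$, so it suffices to prove
\[
|\partial_{\uvec}\Tr(O\rho(x))| \le c\,\sum_i \norm{O_i}_\infty
\]
for a constant $c$, uniformly over unit vectors $\uvec$ and over $x \in [-1,1]^m$. Squaring this pointwise bound and taking $\E_{x}$ then gives the lemma with $c_{\mathrm{all}} = c^2$ and no residual dependence on $x$, $m$, or $n$.

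The central tool is quasi-adiabatic continuation \cite{hastings2005quasiadiabatic, bachmann2012automorphic}. Since $H(x)$ is gapped with $\gamma > \Omega(1)$ throughout the domain, its ground state is nondegenerate and $\rho(x)$ depends smoothly on $x$; its directional derivative admits the representation $\partial_{\uvec}\rho(x) = \mathrm{i}\,[\rho(x), D_{\uvec}(x)]$ with the quasi-adiabatic generator
\[
D_{\uvec}(x) = \int_{-\infty}^{\infty} W_\gamma(t)\, \mathrm{e}^{\mathrm{i} t H(x)} \Big(\frac{\partial H}{\partial \uvec}(x)\Big) \mathrm{e}^{-\mathrm{i} t H(x)}\, \mathrm{d}t,
\]
where $W_\gamma$ is the standard filter function whose tails decay faster than any polynomial, at a rate controlled by $\gamma$. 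Cyclicity of the trace then gives $\partial_{\uvec}\Tr(O\rho(x)) = \mathrm{i}\,\Tr([O, D_{\uvec}(x)]\rho(x))$. Expanding $O = \sum_i O_i$ and $\partial H/\partial\uvec = \sum_j \partial h_j/\partial\uvec$ and passing absolute values through the integral and both sums reduces everything to bounding the ground-state expectation of a single commutator $[O_i, (\partial h_j/\partial\uvec)(t)]$ between a few-body observable and a Heisenberg-evolved local term.

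The remaining estimate is carried out with a Lieb--Robinson bound. For geometrically local interactions the Lieb--Robinson velocity $v$ is a constant, so $\norm{[O_i, (\partial h_j/\partial\uvec)(t)]}_\infty$ is exponentially suppressed once $v|t| < d(O_i, h_j)$, with $\norm{O_i}_\infty$ and $\norm{\partial h_j/\partial\uvec}_\infty \le 1$ as prefactors. Convolving this light-cone estimate against the super-polynomial decay of $W_\gamma$---splitting the time integral at, say, $v|t| = \tfrac12 d(O_i, h_j)$, so that on one side the commutator is negligible and on the other $W_\gamma(t)$ is negligible---produces an effective weight $g(d(O_i, h_j))$ that still decays faster than any polynomial in the distance. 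Summing over the Hamiltonian terms, the number of $h_j$ whose support lies within distance $R$ of the $\mathcal{O}(1)$ sites of $O_i$ grows only polynomially (as $R^{d}$) in the $d$-dimensional arrangement of assumption (a), and this polynomial growth is overwhelmed by the decay of $g$, so $\sum_j g(d(O_i, h_j))$ converges to a constant independent of $n$, $m$, and $\uvec$. This delivers $|\partial_{\uvec}\Tr(O\rho(x))| \le c \sum_i \norm{O_i}_\infty$.

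Combining the reduction in the first paragraph with this pointwise directional bound yields $\norm{\nabla_x \Tr(O\rho(x))}_2 \le c\sum_i \norm{O_i}_\infty$ for every $x$, and hence the stated average bound. The main obstacle is the convolution in the third step: one must fix the filter $W_\gamma$ (equivalently, the split point of the time integral relative to the distance) so that the product of the Lieb--Robinson decay and the filter decay genuinely beats the polynomial growth $R^{d}$ in the number of nearby terms. This trade-off is the quantitative heart of quasi-adiabatic continuation, and carefully tracking the resulting constant---into which the spatial dimension $d$, the gap $\gamma$, and the Lieb--Robinson velocity all enter---is what produces $c_{\mathrm{all}}$.
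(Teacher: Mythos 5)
Your proposal is correct and follows essentially the same route as the paper's proof: reduce to a uniform bound on the directional derivative, represent it via the quasi-adiabatic generator $D_{\uvec}(x)$, and control $\sum_j\int|W_\gamma(t)|\,\norm{[O_i,\,\mathrm{e}^{\mathrm{i}tH}(\partial h_j/\partial\uvec)\mathrm{e}^{-\mathrm{i}tH}]}_\infty\,\mathrm{d}t$ by combining the Lieb--Robinson light cone with the super-polynomial decay of $W_\gamma$ and the $R^d$ count of local terms in a ball. The only organizational difference is that you integrate over $t$ first (splitting at $v|t|=\tfrac12 d(O_i,h_j)$) and then sum over $j$, whereas the paper fixes $t$, splits the $j$-sum into terms inside and outside the light cone to get a polynomial in $|t|$, and then integrates against $|W_\gamma(t)|$; the two orderings are interchangeable and yield the same constant structure.
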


The proof is based on combining two powerful techniques from quantum many body physics. Namely, Lieb-Robinson bounds \cite{Lieb1972} to exploit locality and the spectral flow formalism \cite{bachmann2012automorphic}, also referred to as quasi-adiabatic evolution or continuation \cite{hastings2005quasiadiabatic, osborne2007simulating}, to exploit the spectral gap.

\paragraph{Quasi-adiabatic continuation for gapped Hamiltonians \cite{hastings2005quasiadiabatic, osborne2007simulating, bachmann2012automorphic}:} 
Given a quantum system satisfying the above assumptions (a)-(c), it is reasonable to expect that small changes in $x$
only lead to small changes in the associated ground state $\rho (x)$. 
Spectral flow makes this intuition precise.
Let the spectral gap of $H(x)$ be lower bounded by a constant $\gamma$ over $[-1,1]^m$. Then, the directional derivative of an associated ground state, in the direction defined by the parameter unit vector $\uvec$, obeys
\begin{equation}
\frac{\partial \rho}{\partial \uvec}(x) = \mathrm{i} [D_{\uvec}(x), \rho(x)] \quad \text{where} \quad 
    D_{\uvec}(x) = \int_{-\infty}^{\infty} W_\gamma(t) \mathrm{e}^{\mathrm{i} t H(x)} \frac{\partial H}{\partial \uvec}(x) \mathrm{e}^{-\mathrm{i} t H(x)} \mathrm{d} t .
\end{equation}
Here, $W_\gamma(t)$ is a fast decaying weight function that obeys $\sup_t \left|W_\gamma (t) \right|=1/2$ and only depends on the spectral gap. More precisely,
\begin{equation} \label{eq:Wbound}
    |W_\gamma(t)| \leq \begin{cases} \frac{1}{2} & 0 \leq \gamma |t| \leq \theta, \\ 35 \mathrm{e}^2 (\gamma |t|)^4 \mathrm{e}^{-\frac{2}{7} \frac{\gamma |t|}{\log(\gamma |t|)^2}} & \gamma |t| > \theta. \end{cases}
\end{equation}
The constant $\theta$ is chosen to be the largest real solution of $35 \mathrm{e}^2 \theta^4 \exp(-\frac{2}{7} \frac{\theta}{\log(\theta)^2}) = 1/2$.

\paragraph{Lieb-Robinson bounds for local Hamiltonians/observables \cite{Lieb1972,hastings2010locality}:} 
Let $\mathrm{supp}(X)$ denote the sites on which a many-body operator $X$ acts nontrivially.
Furthermore, for any two operators $X_1, X_2$, we define the distance $\Delta(X_1, X_2)$ to be the minimum distance between all pairs of sites acted on by $X_1$ and $X_2$, respectively, in the $d$-dimensional space.
We also consider the number of local terms in a ball of radius $r$. For any operator $X$ acting on a single site, 
this ball contains $\mathcal{O}(r^d)$ local terms in $d$-dimensional space,
\begin{equation} \label{eq:termsinball}
    \sum_{j: \Delta(X, h_j) \leq r} 1 \leq b_d + c_d r^d, \forall r \geq 0,
\end{equation}
where we recall the definition that $H = \sum_j h_j$ is a sum of local terms $h_j$.
The bound on the number of local terms in a ball of radius $r$ implies the existence of a Lieb-Robinson bound \cite{bravyi2006lieb, hastings2010locality}.
It states that for any two operator $X_1, X_2$ and any $t \in \mathbb{R}$, we have
\begin{equation} \label{eq:LRbound}
    \norm{[\exp(\mathrm{i}t H(x)) X_1 \exp(-\mathrm{i}t H(x)), X_2]}_\infty \leq c_{\mathrm{lr}} \norm{X_1}_\infty \norm{X_2}_\infty |\mathrm{supp}(X_1)| \mathrm{e}^{-a_{\mathrm{lr}} (\Delta(X_1, X_2) - v_{\mathrm{lr}} |t|)},
\end{equation}
for some constants $a_{\mathrm{lr}}, c_{\mathrm{lr}}, v_{\mathrm{lr}} = \Theta(1)$.

Apart from these two concepts, we will also need a bound on integrals of certain fast-decaying functions.

\begin{lemma}[Lemma~2.5 in \cite{bachmann2012automorphic}]
\label{lem:boundintua}
Fix $a>0$ and define the function $u_a (x)= \exp (- ax/\log(x)^2)$ on the domain $x \in (1,\infty)$. Then,
\begin{equation}
\int_{t}^{\infty} x^k u_a(x) \mathrm{d} x \leq \frac{2k + 3}{a} t^{2k + 2} u_a(t)
\quad \text{for all $t>\mathrm{e}^4$ and $k \in \mathbb{N}$ that obey} \quad 2k+2 \leq  \frac{at}{\log(t)^2}.
\end{equation}
\end{lemma}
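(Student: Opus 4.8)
The plan is to prove the bound by an integration-by-parts argument that converts the tail integral into a self-referential inequality with a contraction factor strictly below one, where the hypothesis $2k+2 \le at/\log(t)^2$ is exactly what forces the contraction. I write $\psi(x) = ax/\log(x)^2$, so that $u_a = \mathrm{e}^{-\psi}$ and $u_a'(x) = -\psi'(x)\,u_a(x)$ with $\psi'(x) = a(\log x - 2)/\log(x)^3 > 0$ for $x > \mathrm{e}^4$. Since $x/\log(x)^2 \to \infty$, the function $u_a$ decays faster than any polynomial, so every boundary term at $+\infty$ vanishes and all integrals below converge.

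First I would solve for $u_a = -u_a'/\psi'$ and substitute, obtaining $\int_t^\infty x^k u_a\,\mathrm{d}x = -\int_t^\infty v(x)\,u_a'(x)\,\mathrm{d}x$ with the weight $v(x) = x^k/\psi'(x) = x^k\log(x)^3/(a(\log x - 2))$. Integrating by parts gives
\begin{equation}
\int_t^\infty x^k u_a(x)\,\mathrm{d}x = v(t)\,u_a(t) + \int_t^\infty v'(x)\,u_a(x)\,\mathrm{d}x,
\end{equation}
the boundary term at infinity dropping out. A direct computation (writing $L = \log x$) yields $v'(x) = \frac{x^{k-1}}{a}\cdot\frac{L^2[kL(L-2)+2(L-3)]}{(L-2)^2} > 0$ for $x \ge t > \mathrm{e}^4$, so the remainder integral is positive and equals $\int_t^\infty x^k u_a(x)\,R(x)\,\mathrm{d}x$ with $R(x) = \frac{1}{ax}\cdot\frac{L^2[kL(L-2)+2(L-3)]}{(L-2)^2}$.

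The crux is to show $R(x) \le \frac{2k+1}{2k+2} < 1$ for all $x \ge t$. I would use $\frac{L}{L-2} < 2$ and $\frac{L-3}{(L-2)^2} < \frac{1}{2}$ (both valid since $L \ge \log t > 4$) to get $R(x) < (2k+1)\log(x)^2/(ax)$, and then invoke monotonicity of $\log(x)^2/x$ for $x > \mathrm{e}^2$ together with the hypothesis $\log(t)^2/t \le a/(2k+2)$ to conclude $R(x) < (2k+1)/(2k+2) =: r$. The remainder is then at most $r\int_t^\infty x^k u_a\,\mathrm{d}x$, so the identity collapses to $(1-r)\int_t^\infty x^k u_a\,\mathrm{d}x \le v(t)\,u_a(t)$, i.e. $\int_t^\infty x^k u_a\,\mathrm{d}x \le (2k+2)\,v(t)\,u_a(t)$ because $1/(1-r) = 2k+2$. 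Finally I would estimate $v(t) = t^k\log(t)^3/(a(\log t - 2)) \le 2t^k\log(t)^2/a$ and absorb the trivial comparison $(2k+2)\cdot 2\log(t)^2 \le (2k+3)\,t^{k+2}$ (immediate for $t > \mathrm{e}^4$, since $t^{k+2}$ dominates any power of $\log t$) to reach the stated form $\frac{2k+3}{a}\,t^{2k+2}\,u_a(t)$.

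The main obstacle is precisely the third step: squeezing the contraction constant $R(x)$ below one while keeping $1/(1-r)$ only linear in $k$. The leading behaviour of $R$ near $x = t$ is $\approx k\log(t)^2/(at) \approx k/(2k+2) < \tfrac12$, but the subleading logarithmic factors $L/(L-2)$ and $(L-3)/(L-2)^2$ near $L = 4$ can inflate the constant, so the intermediate estimates must be tight enough that the final coefficient stays strictly below $1$; this is exactly where the constraint $2k+2 \le at/\log(t)^2$ is consumed, and it is what makes the clean constant $2k+2$ — and hence $2k+3$ — emerge. Everything else, namely convergence of the integrals, the integration by parts, and the final loose comparison to the $t^{2k+2}$ form, is routine.
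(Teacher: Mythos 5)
Your proof is correct. One important point of orientation: the paper does not prove this lemma at all --- it is imported verbatim as Lemma~2.5 of the cited reference on automorphic equivalence and used as a black box inside the proof of the smoothness bound (Lemma~\ref{lem:truncatedFourier}), so there is no in-paper argument to compare against; your derivation supplies what the paper outsources. Checking it step by step: with $\psi(x)=ax/\log(x)^2$ one indeed has $\psi'(x)=a(\log x-2)/\log(x)^3>0$ for $x>\mathrm{e}^2$; the integration by parts is legitimate since $u_a$ decays superpolynomially (e.g.\ $u_a(x)\le\mathrm{e}^{-a\sqrt{x}}$ for large $x$), which makes all integrals finite and kills the boundary term at infinity; the computation of $v'$, hence of $R(x)=\frac{1}{ax}\cdot\frac{L^2[kL(L-2)+2(L-3)]}{(L-2)^2}$, is right; and the two elementary bounds $\frac{L}{L-2}<2$ and $\frac{L-3}{(L-2)^2}<\frac{1}{2}$ (the latter equivalent to $(L-3)^2+1>0$) give $R(x)<(2k+1)\log(x)^2/(ax)$. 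Monotone decrease of $\log(x)^2/x$ beyond $\mathrm{e}^2$ then converts the hypothesis $2k+2\le at/\log(t)^2$ into the pointwise contraction $R(x)\le(2k+1)/(2k+2)$, and finiteness of $\int_t^\infty x^k u_a\,\mathrm{d}x$ justifies solving the self-referential inequality to get the factor $2k+2$. The only place where you compress the rigor is the closing comparison, which you call ``immediate''; it is, but it deserves its one line: $2\log(t)^2\le t^2\le t^{k+2}$ for all $t>\mathrm{e}^4$ (check at $t=\mathrm{e}^4$, where $32\le\mathrm{e}^8$, and use that $t^2/\log(t)^2$ is increasing), so $(2k+2)\cdot 2\log(t)^2\le(2k+3)\,t^{k+2}$ follows from $2k+2\le 2k+3$, and multiplying by $t^k/a$ yields the stated $\frac{2k+3}{a}t^{2k+2}u_a(t)$. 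That the bound you actually prove, $\frac{2(2k+2)}{a}t^k\log(t)^2 u_a(t)$, is far stronger than the target reflects the lemma's deliberately generous constant, and your use of the hypothesis exactly once --- to force the contraction ratio below one --- matches where that hypothesis is genuinely needed.
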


\begin{proof}[Proof of Lemma~\ref{lem:truncatedFourier}]

Fix an input $x \in \left[-1,1\right]^n$ and a unit vector $\uvec \in \mathbb{R}^n$ (direction). We may then rewrite the associated directional derivative of $\rho (x)$ in two ways, namely
\begin{subequations}
\begin{align}
\frac{\partial \rho}{\partial \uvec}(x) &=
\uvec \cdot \nabla_x \rho(x), \quad \text{and}\\
\frac{\partial \rho}{\partial \uvec}(x)
&= -\mathrm{i} \left[D_{\uvec}(x),\rho (x) \right] \quad \text{with} \quad D_{\uvec}(x) = \int_{-\infty}^{\infty} \mathrm{d}t W_\gamma(t) \mathrm{e}^{\mathrm{i} t H(x)} \frac{\partial H}{\partial \uvec}(x) \mathrm{e}^{-\mathrm{i} t H(x)}.
\end{align}
\end{subequations}
When evaluated on an observable $O$, this establishes the following correspondence:
\begin{equation}
\uvec \cdot \nabla_x \Tr(O \rho(x))
= \Tr \left( O \left[ D_{\uvec}(x),\rho (x) \right] \right) = \Tr \left( \left[O,D_{\uvec}(x)\right] \rho (x) \right),
\end{equation}
for any $\uvec$. Choosing $\uvec= \uvec(x,O) = \frac{ \nabla_x \Tr(O \rho(x)) }{\norm{\nabla_x \Tr(O \rho(x))}_2}$ implies
\begin{equation}
    \norm{\nabla_x \Tr(O \rho(x))}_2^2 = \left|\Tr([O, D_{\uvec(x, O)}(x)] \rho(x))\right|^2.
\end{equation}
The left hand side is the magnitude of steepest slope in a phase for the particular observable $O$.
The average slope over the entire domain $[-1, 1]^m$ is thus given as
\begin{equation} \label{eq:gradaverage}
    \E_{x \sim [-1, 1]^m} \norm{\nabla_x \Tr(O \rho(x))}_2^2 = \frac{1}{2^m} \int_{[-1, 1]^m} \left|\Tr([O, D_{\uvec(x, O)}(x)] \rho(x))\right|^2 \mathrm{d}^m x.
\end{equation}
Intuitively, thermodynamic observables should not change too rapidly within a phase.
Making this intuition precise will allow us to upper bound the average slope by a constant $C$.

We first expand $D_{\uvec}(x)$ and apply a triangle inequality to obtain
\begin{align} \label{eq:uppslope}
    |\Tr([O, D_{\uvec}(x)] \rho(x))| & \leq \sum_i \int_{-\infty}^{\infty}  W_\gamma(t) \sum_j \norm{\left[O_i, \mathrm{e}^{\mathrm{i} t H(x)} \frac{\partial h_j}{\partial \uvec}(x) \mathrm{e}^{-\mathrm{i} t H(x)}\right]}_\infty \mathrm{d}t.
\end{align}
For fixed $t$, we can separate local Hamiltonian terms into two groups, defines using the constants in the Lieb-Robinson bound (\ref{eq:LRbound}). The first group contains all terms $h_j$ that obey $\Delta(O_i, h_j) \leq v_{\mathrm{lr}} |t|$. The second group contains all $h_j$ that obey $\Delta(O_i, h_j) > v_{\mathrm{lr}} |t|$ instead.
Equation~\eqref{eq:termsinball} above provides a useful bound on the size of the first group.
It contains at most $|\mathrm{supp}(O_i)| (b_d + c_d (v_{\mathrm{lr}} |t|)^d) \leq c_O (b_d + c_d (v_{\mathrm{lr}} |t|)^d)$ local terms $h_j$, for some constant $c_O\leq \mathrm{supp}(O)$.
We can bound the summation over these terms using $\norm{[A, B]}_\infty \leq 2 \norm{A}_\infty \norm{B}_\infty$ to obtain
\begin{subequations}
\begin{align}
    \sum_{j: \Delta(O_i, h_j) \leq v_{\mathrm{lr}} t} \norm{\left[O_i, \mathrm{e}^{\mathrm{i} t H(x)} \frac{\partial h_j}{\partial \uvec}(x) \mathrm{e}^{-\mathrm{i} t H(x)}\right]}_\infty &\leq c_O (b_d + c_d (v_{\mathrm{lr}} |t|)^d) \times 2 \norm{O_i}_\infty \norm{\frac{\partial h_j}{\partial \uvec}}_\infty\\
    &\leq 2 c_O \norm{O_i}_\infty (b_d + c_d (v_{\mathrm{lr}} |t|)^d).
\end{align}
\end{subequations}
The second inequality follows from 
technical assumption (b): $\norm{\partial h_j / \partial \uvec}_\infty \leq 1$.

The contributions from the second group can be controlled via the 
Lieb-Robinson bound from Eq.~\eqref{eq:LRbound}. For every 
$h_j$ that obeys $\Delta(O_i, h_j) > v_{\mathrm{lr}} |t|$, we have
\begin{subequations}
\begin{align}
    \norm{\left[O_i, \mathrm{e}^{\mathrm{i} t H(x)} \frac{\partial h_j}{\partial \uvec}(x) \mathrm{e}^{-\mathrm{i} t H(x)}\right]}_\infty &\leq c_{\mathrm{lr}} \norm{O_i}_\infty \norm{\partial h_j / \partial \uvec}_\infty |\mathrm{supp}(h_j)| \mathrm{e}^{-a_{\mathrm{lr}} (\Delta(O_i, h_j) - v_{\mathrm{lr}} |t|)}\\
    &\leq c_{\mathrm{lr}} c_h \norm{O_i}_\infty \mathrm{e}^{-a_{\mathrm{lr}} (\Delta(O_i, h_j) - v_{\mathrm{lr}} |t|)}.
\end{align}
\end{subequations}
Reusing Eq.~\eqref{eq:termsinball}, we conclude that there are at most $|\mathrm{supp}(O_i)| (b_d + c_d (v_{\mathrm{lr}} |t| + r + 1)^d)$ local terms $h_j$ with $\Delta(O_i, h_j) \in [v_{\mathrm{lr}} |t| + r, v_{\mathrm{lr}} |t| + r + 1]$. This ensures
\begin{subequations}
\begin{align}
    & \sum_{j: \Delta(O_i, h_j) > v_{\mathrm{lr}} |t|} \norm{\left[O_i, \mathrm{e}^{\mathrm{i} t H(x)} \frac{\partial h_j}{\partial \uvec}(x) \mathrm{e}^{-\mathrm{i} t H(x)}\right]}_\infty \nonumber\\ 
    & \leq \sum_{r = 0}^\infty \sum_{j: \Delta(O_i, h_j) \in [v_{\mathrm{lr}} |t| + r, v_{\mathrm{lr}} |t|+ r + 1]} \norm{\left[O_i, \mathrm{e}^{\mathrm{i} t H(x)} \frac{\partial h_j}{\partial \uvec}(x) \mathrm{e}^{-\mathrm{i} t H(x)}\right]}_\infty \\
    & \leq \int_{r = 0}^{\infty} \mathrm{d} r c_{\mathrm{lr}} c_h \norm{O_i}_\infty \mathrm{e}^{-a_{\mathrm{lr}} r} \times \mathrm{supp}(O_i) (b_d + c_d (v_{\mathrm{lr}} |t| + r + 1)^d)\\
    & \leq c_{\mathrm{lr}} c_h c_O \norm{O_i}_\infty \int_{r = 0}^{\infty} \mathrm{d} r \mathrm{e}^{-a_{\mathrm{lr}} r} (b_d + c_d (v_{\mathrm{lr}} |t| + r + 1)^d)\\
    & \leq c_{\mathrm{lr}} c_h c_O \norm{O_i}_\infty \left( \frac{b_d}{a_{\mathrm{lr}}} + c_d \sum_{p=0}^d \frac{d!}{p! a_{\mathrm{lr}}^{d-p+1}} (v_{\mathrm{lr}} |t| + 1)^p \right).
\end{align}
\end{subequations}
We can now combine the two bounds into a single statement:
\begin{equation}
    \sum_j \norm{\left[O_i, \mathrm{e}^{\mathrm{i} t H(x)} \frac{\partial h_j}{\partial \uvec}(x) \mathrm{e}^{-\mathrm{i} t H(x)}\right]}_\infty \leq \norm{O_i}_\infty \sum_{p=0}^d C_p |t|^p. \label{eq:gap-aux1}
\end{equation}
Here, we have implicitly defined a new set of constants $C_p$ that depend on the constants $c_O, c_h, c_{\mathrm{lr}}, c_d, a_{\mathrm{lr}}, v_{\mathrm{lr}}, d$ that had already featured before. Plugging the above into Eq.~\eqref{eq:uppslope} and substituting the spectral flow weight function $W$ \eqref{eq:Wbound} for its absolute value allows us to bound the maximum slope of $\Tr(O \rho(x))$ when the Hamiltonian moves from $H(x)$ to $H(x+\mathrm{d}\uvec)$. Indeed,
\begin{equation}
    |\Tr([O, D_{\uvec}(x)] \rho(x))| \leq
    \Big(\sum_i \norm{O_i}_\infty\Big) \sum_{p=0}^d C_p \int_{-\infty}^{\infty}  |W_\gamma(t)| |t|^p \mathrm{d}t~.
\end{equation}
To bound the resulting integral, we recall that $W_\gamma (t)$ obeys $\sup_t |W_\gamma(t)| = 1/2$, define $t^* = \max(\mathrm{e}^4, 7(d+5), \theta) / \gamma$, and split up the integration into two parts, $t \in [-t^*, t^*]$ and $t \notin [-t^*, t^*]$. Symmetry then ensures
\begin{subequations}
\begin{align}
    \int_{-\infty}^{\infty} \mathrm{d} t |W_\gamma(t)| |t|^p &\leq \frac{1}{2} \int_{-t^*}^{t^*} \mathrm{d} t |t|^p + 2 \int_{t^*}^\infty \mathrm{d} t \, 35 \mathrm{e}^2 (\gamma t)^4 \mathrm{e}^{-\frac{2}{7} \frac{\gamma t}{\log(\gamma t)^2}} t^p \\
    & = \int_{0}^{t^*} \mathrm{d} t \, t^p + 70 \mathrm{e}^2 \gamma^{-p-1} \int_{x = \gamma t^*}^\infty \mathrm{d} x \, x^{p+4} \mathrm{e}^{-\frac{2}{7} \frac{x}{\log(x)^2}}.
\end{align}
\end{subequations}
The first integral is straightforward, and the second integral can be bounded using Lemma~\ref{lem:boundintua}. Set $a=2/7$, $k =p+4$ and note that we have chosen $t^*$ such that all assumptions are valid. 
Applying Lemma~\ref{lem:boundintua} ensures
\begin{subequations}
\begin{align}
    \int_{-\infty}^{\infty} \mathrm{d} t |W_\gamma(t)| |t|^p \mathrm{d} t &\leq \frac{|t^*|^{p+1}}{p+1} + 70 \mathrm{e}^2 \gamma^{-p-1} \frac{2k + 3}{a} (\gamma t^*)^{2k+2} \mathrm{e}^{-\frac{2 \gamma t^*}{7 \log(\gamma t^*)^2}}\\
    & = \frac{|t^*|^{p+1}}{p+1} + 35 \mathrm{e}^2 \gamma^{-p-1} 7(2p + 11) (\gamma t^*)^{2p + 10} \mathrm{e}^{-\frac{2 \gamma t^*}{7 \log(\gamma t^*)^2}},
\end{align}
\end{subequations}
for any integer $0 \leq p \leq d$. Inserting these bounds into the sum \eqref{eq:gap-aux1} implies
\begin{align}
    |\Tr([O, D_{\uvec}(x)] \rho(x))| &\leq \Big(\sum_i \norm{O_i}_\infty\Big) \sum_{p=0}^d C_p \left( \frac{|t^*|^{p+1}}{p+1} + 35 \mathrm{e}^2 \gamma^{-p-1} 7(2p + 11) (\gamma t^*)^{2p + 10} \mathrm{e}^{-\frac{2 \gamma t^*}{7 \log(\gamma t^*)^2}} \right).
\end{align}
Recall that $t^* = \max(\mathrm{e}^4, 7(d+5), \theta) / \gamma$ is a constant
that only depends on $d$ and $\gamma$, and the $C_p$'s are also constants that depend on on $c_O, c_h, c_{\mathrm{lr}}, c_d, a_{\mathrm{lr}}, v_{\mathrm{lr}}, d$.
We may subsume all of these constant contributions in a new constant $c_{\mathrm{all}}$ and conclude
\begin{equation} \label{eq:TrODrho}
    |\Tr([O, D_{\uvec}(x)] \rho(x))| \leq c_{\mathrm{all}} \Big(\sum_i \norm{O_i}_\infty\Big).
\end{equation}
Inserting this uniform upper bound into Eq.~\eqref{eq:gradaverage} completes the proof of Lemma~\ref{lem:truncatedFourier}.
\end{proof}

\section{Sample complexity lower bound for predicting ground states}
\label{app:proofthmlowerbound}

This section establishes an information-theoretic lower bound for the task of predicting ground state approximations.
It highlights that, without further assumptions on the Hamiltonians, the training data size required in Theorem~\ref{thm:stateFourier} is essentially tight.

\begin{theorem} \label{thm:lowerboundsmooth-restatement}
Fix a prediction error tolerance $\epsilon$, a number $m$ of parameters, as well as constants $C,B>0$ such that
$C / (9 \epsilon) \leq m^{0.99}$.
Consider a quantum ML model that learns from  quantum data $\{x_\ell \rightarrow \rho(x_\ell)\}_{\ell=1}^{N}$ of size $N$ to generate ground state predictions $\hat{\sigma}(x)$, where $x \in [-1, 1]^m$.
Suppose the quantum ML model can achieve
\begin{equation}
    \E_{x \sim [-1, 1]^m} |\Tr(O \hat{\sigma}(x)) - \Tr(O \rho(x))|^2 \leq \epsilon,
\end{equation}
with high probability, for every class of Hamiltonians $H(x)$ and for every observable $O$ given as a sum of local observables $\sum_i O_i$ that obey
\begin{subequations}
\begin{align}
\E_{x \sim [-1, 1]^m} \norm{\nabla_x \Tr(O \rho(x))}_2^2 &\leq C & \text{(smoothness condition)}, \label{eq:smthcond}\\
\sum_i \norm{O_i} &\leq B  & \text{(bounded norm)}. \label{eq:bddnormO}
\end{align}
\end{subequations}
Then, the (quantum) training data size must obey
\begin{equation}
    N \geq B^2 m^{\Omega(C / \epsilon)} / \log(B). \label{eq:lower-bound-appendix}
\end{equation}
This is also a lower bound on quantum computational time associated with the quantum ML model.
\end{theorem}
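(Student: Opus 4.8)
The plan is to prove Theorem~\ref{thm:lowerboundsmooth-restatement} by an information-theoretic argument that mirrors the matching upper bound of Theorem~\ref{thm:stateFourier}: I exhibit a single hard family of parametrized states together with one observable $O$ for which achieving prediction error $\epsilon$ is equivalent to decoding a long random message, and then I show that quantum training data of size $N$ cannot carry enough information about that message. Concretely, following the three-part structure (\emph{learning problem formulation}, \emph{communication protocol}, \emph{information-theoretic analysis}), I set up a communication game in which an encoder holds a uniformly random message $a\in\{1,\dots,M\}$, uses it to define a family $\{\rho_a(x)\}$ and observable $O$ obeying the smoothness constraint~\eqref{eq:smthcond} and norm constraint~\eqref{eq:bddnormO}, hands the learner the $N$ quantum samples $\{x_\ell\to\rho_a(x_\ell)\}$, and recovers $a$ from the trained predictor by evaluating $\Tr(O\hat\sigma(x))$ on a grid of points. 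If the learner meets $\E_x|\Tr(O\hat\sigma(x))-\Tr(O\rho_a(x))|^2\le\epsilon$ with high probability, decoding succeeds, and Fano's inequality then forces $N$ to grow with $\log M$.

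The encoding construction is where the $m^{\Omega(C/\epsilon)}$ factor is born. I would concentrate the message into the $Z$-Bloch component of a single near-maximally-mixed qubit and take $O=B\,Z_1$, so that $f_O(x)=\Tr(O\rho_a(x))$ has norm budget $B$ and equals $B$ times a truncated Fourier series $z_1(x)=\sum_k\alpha_k\,\mathrm{e}^{\mathrm{i}\pi k\cdot x}$. Using only the $K_\Lambda$ frequencies with $\norm{k}_2\le\Lambda$ (counted by Eq.~\eqref{eq:numklambda}), I encode the $\pm$ signs of the $\alpha_k$ through a Gilbert--Varshamov packing, giving $M=2^{\Omega(K_\Lambda)}$ codewords whose pairwise $L^2$-separations are $\Theta$ of the common coefficient energy. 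The amplitude is pinned by two competing requirements: the smoothness bound, which via the Parseval identity underlying Lemma~\ref{lem:truncation-error} reads $B^2\pi^2\sum_k\norm{k}_2^2|\alpha_k|^2\le C$, and the separation requirement, which demands that distinct codewords differ in $\Tr(O\,\cdot\,)$ by more than $2\sqrt\epsilon$ in $L^2$ so that no single $\hat\sigma$ can be $\sqrt\epsilon$-accurate for two of them. Balancing these at $\norm{k}_2\le\Lambda$ forces $\Lambda^2=\Theta(C/\epsilon)$ and a per-mode energy $|\alpha_k|^2=\Theta(\epsilon/(B^2K_\Lambda))$; the hypothesis $C/(9\epsilon)\le m^{0.99}$ is exactly what guarantees that the lattice-point count is genuinely $K_\Lambda\ge m^{\Omega(\Lambda^2)}=m^{\Omega(C/\epsilon)}$ (the counting estimate behind Eq.~\eqref{eq:numklambda} is two-sided tight in this regime), so $\log M=m^{\Omega(C/\epsilon)}$.

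The information-theoretic analysis then proceeds through two opposing bounds. From the decoding side, Fano gives $I(a;\text{trained model})\ge(1-o(1))\log M-1=m^{\Omega(C/\epsilon)}$. From the data-processing side, the trained model is a function of the $N$ quantum copies, so $I(a;\text{model})\le I\!\left(a;\rho_a(x_1)\otimes\cdots\otimes\rho_a(x_N)\right)$, which I bound by the Holevo quantity of the underlying ensemble. The key feature of the $O=B\,Z_1$ construction is that the norm budget leaves the encoding qubit only weakly polarized, with Bloch length $\Theta(\sqrt\epsilon/B)$, because the separation energy $\epsilon$ is diluted through the factor $B$. A Holevo/accessible-information bound for such nearly maximally mixed states therefore caps the information extractable from each copy at $O(\log(B)/B^2)$, and summing over the $N$ independent samples yields $I(a;\text{data})\le N\cdot O(\log(B)/B^2)$. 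Combining the two inequalities and solving for $N$ produces $N\ge B^2 m^{\Omega(C/\epsilon)}/\log(B)$, matching Eq.~\eqref{eq:lower-bound-appendix}; the same count lower-bounds the runtime since any time-$N$ quantum algorithm accesses at most $N$ copies.

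The main obstacle I anticipate is the joint calibration of the construction, namely choosing the single amplitude scale so that \emph{all four} requirements hold at once: smoothness $\le C$, norm $\le B$, pairwise $L^2$-separation $\gtrsim\epsilon$ (to make prediction equivalent to decoding), and, most delicately, a per-copy Holevo information small enough to produce the $1/B^2$ suppression. Getting the Holevo estimate right is the crux: one must use the quadratic small-Bloch-vector expansion of the von Neumann entropy rather than the trivial $\log(\dim)$ bound, which would be far too weak. A secondary delicate point is certifying the matching lower bound $K_\Lambda\ge m^{\Omega(\Lambda^2)}$ precisely in the regime $\Lambda^2\le m^{0.99}$ fixed by the hypothesis, since outside this regime the lattice count saturates and the exponent $\Omega(C/\epsilon)$ would no longer be attainable.
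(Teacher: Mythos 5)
Your overall strategy is the same as the paper's: a single-qubit family whose ground-state expectation value reproduces a random-sign truncated Fourier series with cutoff $\Lambda = \Theta(\sqrt{C/\epsilon})$, a Gilbert--Varshamov packing of the sign vectors, Fano's inequality on the decoded message, a Holevo-plus-subadditivity bound of $N$ times the per-copy information, and the binomial-coefficient lower bound $K_\Lambda \geq m^{\Omega(C/\epsilon)}$ that uses the hypothesis $C/(9\epsilon)\leq m^{0.99}$. The structure and all the quantitative balances match.

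There is, however, a genuine gap in your construction of the hard instances. You encode the signal in the $Z$-Bloch component of a ``near-maximally-mixed'' qubit with Bloch length $\Theta(\sqrt{\epsilon}/B)$, and you identify the crux as the quadratic expansion of the von Neumann entropy around the maximally mixed state. But a near-maximally-mixed single-qubit state is not a ground state: in this paper a ground state is $\lim_{\beta\to\infty}e^{-\beta H}/\Tr(e^{-\beta H})$, i.e.\ the normalized projector onto the ground eigenspace, which for one qubit is either a pure state or exactly $\mathbb{I}/2$ --- never a weakly polarized mixed state. (Realizing your state as the marginal of a larger pure ground state does not rescue the argument, because the Holevo bound must then be evaluated on the full state handed to the learner, and your entropy estimate no longer applies.) The paper avoids this by taking \emph{pure} ground states anchored near $\ket{0}$ with the signal $f_a(x)/B$ placed in the transverse ($Y$) Bloch component, and measuring $O = BY$; the $\epsilon/B^2$ suppression of the per-copy Holevo information then comes from the ensemble average $\E_a \rho_a(x)$ being nearly \emph{pure} (largest eigenvalue $\lambda_+ \geq 1 - \tfrac{1}{2}\E_a f_a(x)^2/B^2$, hence entropy at most $g\log(e/g)$ with $g = \Theta(\epsilon/B^2)$), not from near-maximal mixedness. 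Your construction is repairable by the same device --- put the Bloch vector at $(\sqrt{1-(f_a/B)^2},\,0,\,f_a/B)$ so the state is pure and the $Z$-expectation is still $f_a$ --- but as written the mechanism you name as the crux does not apply to any admissible family of ground states. A minor additional slip: your per-copy cap ``$O(\log(B)/B^2)$'' is missing the factor of $\epsilon$; the correct per-copy bound is $O\bigl((\epsilon/B^2)\log(B^2/\epsilon)\bigr)$, which is what combines with $I \geq \Omega(K_\Lambda)$ to give Eq.~\eqref{eq:lower-bound-appendix}.
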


The assumption $C / (9 \epsilon) \leq m^{0.99}$ is required for technical reasons outlined below. It is equivalent to demanding that the prediction error tolerance is large enough compared to the inverse of $m$, i.e., $\epsilon \geq C / (9 m^{0.99})$.
If the quantum ML model can achieve an even smaller prediction error, such that $\E_{x \sim [-1, 1]^m} |\Tr(O \hat{\sigma}(x)) - \Tr(O \rho(x))|^2 < C / (9 m^{0.99})$, then we choose $\epsilon = C / (9 m^{0.99})$.
For such a choice of $\epsilon$, the training data size lower bound becomes $N \geq B^2 m^{\Omega(m^{0.99})} / \log(B)$, which is exponential in $m^{0.99}$.
Hence, in all cases, we need $\epsilon$ to be a constant for any (quantum or classical) machine learning algorithm to obtain a sample complexity that scales polynomially in $m$.

We prove Theorem~\ref{thm:lowerboundsmooth-restatement} by means of an information-theoretic analysis. Conceptually, it resembles arguments developed in prior work \cite{huang2021information} (sample complexity lower bound for general quantum machine learning models).
Section~\ref{sub:lowerboundsmooth-learning-problem} formulates a learning problem that involves predicting ground state properties of a certain class of Hamiltonians.
Subsequently, Section~\ref{sub:lowerboundsmooth-communication-protocol} incorporates a hypothetical (quantum ML) solution to this learning problem as a decoding procedure in a communication protocol. Information-theoretic bottlenecks then beget fundamental restrictions on the sample complexity of any ML model that solves the learning problem, see Section~\ref{sub:lowerboundsmooth-analysis}.

\subsection{Learning problem formulation} \label{sub:lowerboundsmooth-learning-problem}

We consider a family of single-qubit Hamiltonians, i.e.\ $n=1$, that is parametrized by $m$ degrees of freedom. 
We first map $x \in \left[-1,1\right]^m$ to a real number by evaluating a truncated Fourier series $f_a$.
Fix a cutoff $\Lambda = \sqrt{C/(9\epsilon)}$ and let
\begin{equation}\label{eq:defLambda}
K_\Lambda= \left| \left\{k \in \mathbb{Z}^m:\; \norm{k}_2 \leq \Lambda = \sqrt{C/(9\epsilon)}\right\} \right|
\end{equation}
denote the number of $n$-dimensional wave-vectors with Euclidean norm at most $\Lambda$. 
We equip each of these wave vectors $k$ with a sign $a_k \in \left\{ \pm 1 \right\}$ and define the function
\begin{equation}
f_{a}(x) = \sqrt{\frac{9 \epsilon}{K_\Lambda}} \sum_{k \in \mathbb{Z}^m, \norm{k}_2 \leq \Lambda} a_k \cos \left( \pi k \cdot x \right),
\quad \text{where} \quad a \in \left\{ \pm 1 \right\}^{K_\Lambda},\label{eq:lowerboundsmooth-function}
\end{equation}
subsumes all sign choices involved.
We use this function to define a single-qubit Hamiltonian. For Pauli matrices $X$ and $Z$, we set
\begin{equation}
H_{a}(x) = \exp \left(+ \tfrac{\mathrm{i}}{2} \arcsin \left( f_{a}(x)/B\right) X \right) \left(-Z \right) \exp \left(- \tfrac{\mathrm{i}}{2} \arcsin \left(f_{a}(x)/B\right) X \right), \label{eq:lowerboundsmooth-hamiltonian}
\end{equation}
where $B$ is a constant that will reflect the size of the target observable, see Eq.~\eqref{eq:bddnormO}.
To summarize, each choice of $a \in \left\{\pm 1\right\}^{K_\Lambda}$ yields an entire class of single-qubit Hamiltonians $H_a (x)$ that is parametrized by $m$-dimensional inputs $x \in \left[-1,1\right]^m$.
These stylized Hamiltonians are simple enough to compute their (nondegenerate) ground state explicitly:
\begin{equation}
\rho_{a}(x) = \ketbra{\psi_{a}(x)}{\psi_{a}(x)} \quad \text{with} \quad
| \psi_{a}(x) \rangle
= \left(
\begin{array}{c}
\cos \left( \tfrac{1}{2}\arcsin (f_a(x)/B)\right) \\
\mathrm{i} \sin \left(\tfrac{1}{2} \arcsin (f_a(x)/B)\right)
\end{array}
\right) \in \mathbb{C}^2.
\label{eq:lowerboundsmooth-ground-state}
\end{equation}
Finally, we fix the single-qubit observable $O$ to be a scaled version of Pauli $Y$. Setting
$
    O = BY
$
yields a $1$-local observable. And, more importantly,
\begin{subequations}
\label{eq:lowerboundsmooth-learning-problem}
\begin{align}
\Tr \left( O \rho_{a}(x)\right)
&= B \langle \psi_{a}| Y |\psi_{a}\rangle
= B \left(-\mathrm{i} \overline{\langle 0| \psi_{a}(x) \rangle} \langle 1| \psi_{a}(x) \rangle + \mathrm{i} \langle 0| \psi_{a}(x) \rangle \overline{ \langle 1 | \psi_{a}(x) \rangle} \right)\\
&= 
2B \cos \left(\tfrac{1}{2} \arcsin \left( f_a (x)/B\right) \right) \sin \left( \tfrac{1}{2}\arcsin \left( f_a (x)/B\right)\right) \\
&= B \sin \left( \arcsin \left(f_a(x)/B\right)\right) = f_a(x).
\end{align}
\end{subequations}
By construction, the expectation value $\mathrm{tr}\left(O \rho_{a}(x) \right)$ exactly reproduces the function $f_a (x)$ defined in Eq.~\eqref{eq:lowerboundsmooth-function}.
Being able to accurately predict it will be equivalent to accurately learning this function -- regardless of the underlying sign parameter $a \in \left\{\pm 1\right\}^{K_\Lambda}$.

To complete the formulation of the learning problem, we recall that the training parameters are sampled from the uniform distribution over the hypercube, $\mathrm{Unif} \left[-1,1\right]^m$, and that we will evaluate the expectation $\E$ over $x$ with respect to this distribution from now on. This choice of distribution implies a nice closed-form expression for the average squared distance of two functions $f_a, f_b$. For $a,b \in \left\{\pm 1\right\}^{K_\Lambda}$, 
\begin{subequations}
\label{eq:lowerboundsmooth-hamming}
\begin{align}
\E_{x} \left| f_a (x) - f_b (x) \right|^2 &= \frac{9\epsilon}{K_\Lambda}\sum_{k,l \in \mathbb{Z}^m, \norm{k}_2,\norm{l}_2 \leq \Lambda} (a_k-b_k)(a_l-b_l) \int_{[-1,1]^m}\cos \left(\pi k \cdot x \right) \cos \left( \pi l \cdot x \right)\mathrm{d}^m x \\
&= \frac{9\epsilon}{K_\Lambda}\sum_{k \in \mathbb{Z}^m, \norm{k}_2 \leq \Lambda} \left( a_k - b_k \right)^2 \\
&= \frac{9\epsilon}{K_\Lambda}\sum_{k \in \mathbb{Z}^m, \norm{k}_2 \leq \Lambda}4 \times \mathbf{1} \left\{ a_k \neq b_k \right\} \\
&= \frac{36\epsilon}{K_\Lambda}d_H (a,b)~, 
\end{align}
\end{subequations}
where we have used orthonormality of the Fourier basis $\cos (\pi k \cdot x)$, and $d_H (a,b) = \sum_k \mathbf{1} \left\{a_k \neq b_k \right\}$ is the \emph{Hamming distance} on $\left\{\pm 1\right\}^{K_\Lambda}$.

We conclude this expository section by examining whether the construction fulfills the requirement stated in the theorem and presenting a technical lemma.
First of all, we have
\begin{equation}
 \norm{O} = B \norm{Y} = B   ,
\end{equation}
which satisfies the bounded norm constraint in Eq.~\eqref{eq:bddnormO}.
Furthermore, we can use the orthonormality of $\cos(\pi k \cdot t)$ to find that
\begin{equation}
     \E_{x \sim [-1, 1]^m} \norm{\nabla_x \Tr(O \rho_a(x))}_2^2 = \frac{9 \epsilon}{K_\Lambda} \sum_{k \in \mathbb{Z}^m: \norm{k}_2 \leq \Lambda} \norm{k}_2^2 |a_k|^2 \leq \frac{9 \epsilon}{K_\Lambda} K_\Lambda \Lambda^2 = C.
\end{equation}
Thus the smoothness condition in Eq.~\eqref{eq:smthcond} is also satisfied.
Now, we turn our attention to the ground state~\eqref{eq:lowerboundsmooth-ground-state}. The following technical lemma exposes the function $f_a(x)/B$ directly in the amplitudes of ground states.

\begin{lemma} \label{lem:lowerboundsmooth-groundstate}
Let $|\psi_{a}(x) \rangle$ be the ground state of Hamiltonian $H_{a}$ defined in Eq.~\eqref{eq:lowerboundsmooth-ground-state}. Then,
\begin{equation}
\rho_{a}(x)=\ketbra{\psi_{a}(x)}{\psi_{a}(x)}
= \tfrac{1}{2}\left(
\begin{array}{cc}
1+\sqrt{1- (f_a(x)/B)^2} &
- \mathrm{i}f_a (x)/B \\
\mathrm{i} f_a (x)/B &
1- \sqrt{1- (f_a(x)/B)^2}
\end{array}
\right). \label{eq:lowerboundsmooth-ground-state2}
\end{equation}
\end{lemma}

\begin{proof}
The proof is based on double-angle and half-angle trigonometric identities. 
Suppressing $x$ dependence, the first diagonal entry becomes
\begin{subequations}
\begin{align}
\langle 0| \rho_a |0 \rangle
&= | \langle 0| \psi_{a}\rangle|^2
= \cos^2 \left( \tfrac{1}{2} \arcsin \left(f_a /B\right) \right) 
= \tfrac{1}{2} \left( 1 + \cos \left( \arcsin \left(f_a /B\right) \right) \right) \\
&= \tfrac{1}{2} \left( 1+ \sqrt{1- \sin^2 \left( \arcsin \left( f_a /B \right) \right)} \right) = \tfrac{1}{2} \left( 1 + \sqrt{1- (f_a/B)^2}\right),
\end{align}
\end{subequations}
and normalization implies that $\langle 1| \rho_{a}|1 \rangle = 1 - \langle 0| \rho_{a} |0 \rangle$.
The off-diagonal entries are
\begin{subequations}
\begin{align}
\overline{\langle 1| \rho_{a}|0 \rangle} &=
 \langle 0| \rho_{a} |1 \rangle
= \langle 0| \psi_{a} \rangle \langle \psi_{a} |1 \rangle
=-\mathrm{i} \cos \left(\tfrac{1}{2} \arcsin \left(f_a /B\right)\right) \sin \left( \tfrac{1}{2} \arcsin \left(f_a /B\right) \right) \\
&= -\tfrac{\mathrm{i}}{2}\sin \left( \arcsin \left( f_a /B\right) \right) 
= -\tfrac{\mathrm{i}}{2}f_a /B~.
\end{align}
\end{subequations}
\end{proof}

\subsection{Communication protocol} \label{sub:lowerboundsmooth-communication-protocol}

Consider the learning problem introduced in the previous section.
Suppose that a quantum ML model can use training data $\mathcal{T}=\left\{x_\ell,\rho_{a}(x_\ell)\right\}_{\ell=1}^{N}$ 
to learn a function $f^Q(x)$ that (on average) predicts $\mathrm{tr} \left( O \rho_{a}(x) \right) = f_a (x)$ for a particular unknown $a \in \left\{\pm 1\right\}^{K_\Lambda}$, up to some accuracy $\epsilon$,
\begin{equation}
    \E_{x } \left| f^Q(x)-f_a (x) \right|^2 \leq
    \epsilon~.
\end{equation}
Such a model will not fare as well in estimating the expectation value associated with $b\neq a$, whenever $b$ is sufficiently far away from $a$. Using the triangle inequality and Eq.~\eqref{eq:lowerboundsmooth-hamming},
\begin{equation}
    \E_{x } \left| f^Q(x)-f_b (x) \right|^2
\geq \E_{x } \left|f_a (x) - f_b (x) \right|^2 - \E_{x } \left| f^Q(x)-f_a (x)\right|^2 \geq \frac{36 \epsilon}{K_\Lambda}d_H (a,b)-\epsilon~.
\end{equation}
The model's accuracy significantly worsens at $d_H (a,b) > K_\Lambda/18$, where we recall $K_\Lambda= \left| \left\{k \in\mathbb{Z}^m:\; \norm{k}_2 \leq \Lambda \right\} \right|$ from Eq.~\eqref{eq:defLambda}. In other words, a good quantum ML model would allow us to use training data $\mathcal{T}$ in order to recover the underlying parameter $a \in \left\{\pm 1 \right\}^{K_\Lambda}$ up to resolution $K_\Lambda/18$ in Hamming distance.

We can use this assertion as an effective decoding procedure in a two-way communication protocol involving Alice and Bob. To accommodate imperfect resolution, Alice and Bob agree on a dictionary of sign vectors $\left\{a^{(1)},\ldots,a^{(M)} \right\} \subset \left\{ \pm 1 \right\}^{K_\Lambda}$ whose pairwise Hamming distance is large enough: $d_H (a_i,a_j) > K_\Lambda/18$ for all $i \neq j$. Let $M$ denote the cardinality of this dictionary.
Alice and Bob use this dictionary and the ML procedure to transmit integers up to size $M$ over a quantum channel.
Alice samples an integer $j \in \left\{1,\ldots,M\right\}$ and sets $a=a^{(j)} \in \left\{\pm 1 \right\}^{K_\Lambda}$. Subsequently, she uses $a$ to generate (quantum) training data $\mathcal{T} = \left\{ (x_\ell,\rho_{a}(x_\ell))\right\}_{\ell=1}^{N}$ with $x_1,\ldots,x_{N}\sim \mathrm{Unif}\left[-1,1\right]^m$ which she passes on to Bob. Subsequently, Bob uses $\mathcal{T}$ to train a quantum ML model to predict the underlying function $\Tr \left( O \rho_{a}(x)\right) = f_a (x)$. By checking $\E_{x } \left| f_{\bar{a}}(x)-f^Q (x) \right|^2 \leq \epsilon$ for every possible dictionary element $\bar{a}$, he will retrieve the correct message with high probability, i.e., $\bar{a} = a$. 

This is a protocol that conveys classical information via a quantum dataset. It is subject to fundamental constraints from information theory. These will allow us to deduce a lower bound on the required training data size $N = |\mathcal{T}|$.
An important figure of merit in this argument is the cardinality $M$ of the dictionary. That is, the number of different integers that can be communicated. The larger $M$, the more powerful the communication protocol, and following result, sometimes attributed to Gilbert and Varshamov \cite{gilbert1952comparison}, is a lower bound on how many bits one can ``pack'' into the space of $L$-bit strings while maintaining the required distance. 

\begin{lemma}[Lemma~5.12 in \cite{rigollet201518}] \label{lem:dictionary}
There exists a dictionary $\left\{a^{(1)},\ldots,a^{(M)}\right\} \in \left\{\pm 1 \right\}^{K_\Lambda}$ of cardinality $M \geq \lfloor \exp \left( K_\Lambda/32\right)\rfloor$ that achieves $d_H \left( a^{(i)},a^{(j)}\right) \geq K_\Lambda/4$ whenever $i \neq j$.
\end{lemma}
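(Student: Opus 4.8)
The statement is precisely the Gilbert--Varshamov bound \cite{gilbert1952comparison} specialized to the Hamming cube $\{\pm 1\}^{K_\Lambda}$ with relative minimum distance $1/4$, so the plan is to prove it by a greedy (maximal) packing argument rather than to reconstruct the cited reference. Writing $L = K_\Lambda$ for brevity, I would fix the target minimum distance $d = \lceil L/4 \rceil \geq L/4$ and build the dictionary by a maximal-code construction: let $\mathcal{C} \subseteq \{\pm 1\}^L$ be any collection of strings whose pairwise Hamming distances are all at least $d$, chosen to be maximal with respect to inclusion. By construction every two elements satisfy $d_H(a^{(i)}, a^{(j)}) \geq d \geq L/4 = K_\Lambda/4$, so the only remaining task is to lower bound $M := |\mathcal{C}|$.

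The key step is the covering consequence of maximality. If some $y \in \{\pm 1\}^L$ had Hamming distance at least $d$ from every element of $\mathcal{C}$, then $\mathcal{C} \cup \{y\}$ would again be a valid code, contradicting maximality. Hence the closed Hamming balls of radius $d-1$ centered at the codewords cover the entire cube, which yields the volume inequality
\begin{equation}
M \cdot V(L, d-1) \geq 2^L, \qquad \text{where } V(L, r) = \sum_{i=0}^{r} \binom{L}{i},
\end{equation}
and therefore $M \geq 2^L / V(L, d-1)$.

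It then remains to bound the ball volume $V(L, d-1)$. Since $d - 1 \leq \lfloor L/4 \rfloor$, I would interpret $V(L, d-1)/2^L$ as the probability that a sum $S$ of $L$ i.i.d.\ $\mathrm{Bernoulli}(1/2)$ variables (equivalently, the Hamming distance from a fixed string to a uniformly random one) is at most $L/4$. Hoeffding's inequality applied to the deviation $L/2 - L/4 = L/4$ gives $\Pr[S \leq L/4] \leq \exp(-2(L/4)^2/L) = \exp(-L/8)$, so that $V(L, d-1) \leq 2^L \exp(-L/8)$ and consequently $M \geq \exp(L/8)$. Because $\exp(L/8) \geq \exp(L/32) \geq \lfloor \exp(K_\Lambda/32) \rfloor$, the claimed cardinality follows.

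I do not expect a genuine obstacle here, as the argument is entirely standard and the prescribed constants are generous. The only point meriting attention is the bookkeeping of the constant $1/32$, which is comfortably met: the packing argument in fact delivers the stronger exponent $1/8$, so any sufficiently clean volume estimate suffices (either Hoeffding as above, or the binary-entropy bound $V(L, \alpha L) \leq 2^{H(\alpha)L}$ with $H(1/4) \approx 0.81$, giving exponent $\approx 0.13$), and one simply retains the weaker exponent $1/32$ to match the cited statement. The residual subtleties---the distinction between radius $d$ and $d-1$ and the floor appearing in $M$---are absorbed by the inequalities $d-1 \leq \lfloor L/4\rfloor$ and $\exp(L/8)\geq\lfloor\exp(L/32)\rfloor$ and require no further work.
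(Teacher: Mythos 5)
Your argument is correct and complete. The paper itself does not prove Lemma~\ref{lem:dictionary}; it simply imports it as Lemma~5.12 of \cite{rigollet201518} (the Varshamov--Gilbert bound), so there is no in-paper proof to compare against. Your greedy maximal-packing derivation --- maximality forces the radius-$(d-1)$ balls to cover $\{\pm 1\}^{K_\Lambda}$, the volume bound gives $M \geq 2^{L}/V(L,d-1)$, and Hoeffding gives $V(L,d-1) \leq 2^{L}\exp(-L/8)$ --- is a standard and valid route; it actually yields the stronger cardinality $M \geq \exp(K_\Lambda/8)$, which trivially implies the stated $M \geq \lfloor \exp(K_\Lambda/32)\rfloor$. (The cited reference obtains its constant by a probabilistic selection of codewords plus a union bound rather than by greedy packing, but the two arguments are interchangeable here, and the slack in the exponent $1/32$ absorbs all the minor rounding issues you flag at the end.)
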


\subsection{Information-theoretic analysis}
\label{sub:lowerboundsmooth-analysis}

Let us now take a closer look at the communication protocol introduced above by bounding the correlation between Alice's original randomly chosen message $a$ and Bob's decoded signal $\bar{a}$. Up to now, we have stablished the following.
Per the bound in Lemma~\ref{lem:dictionary}, the dictionary of available $a$'s can be chosen to be rather large: $M=\lfloor \exp \left( K_\Lambda/32\right)\rfloor$. Moreover, the existence of a good quantum ML procedure, in the sense of Proposition~\ref{thm:lowerboundsmooth-restatement}, ensures that $\bar{a}=a$ with high probability. 

Correlations between Alice's and Bob's variables are quantified by the (classical) mutual information
\begin{equation}
I (a: \bar{a} ) \geq \Omega \left( \log (M) \right) = \Omega (K_\Lambda),
\label{eq:lowerboundsmooth-aux1}
\end{equation}
which we have bounded from below using Fano's inequality \cite{yu1997assouad}. Our task now is to provide an upper bound on $I(a:\bar{a})$, in terms of $N,B$ and $\epsilon$, in order to relate those parameters to $K_\Lambda$ and obtain the desired result in Theorem~\ref{thm:lowerboundsmooth-restatement}.

Since the parameters $x_1, \ldots, x_N$ are sampled independently from $a$, we have $I(a : x_1, \ldots, x_N) = 0$ and $a|_{x_1, \ldots, x_N} = a$. Therefore, we can upper bound the mutual information as follows,
\begin{subequations}\label{eq:mutual-information0}
    \begin{align}
    I\left(a:\bar{a}\right)&\leq I\left(a:\bar{a}, x_1, \ldots, x_N\right)\\
    &= I\left(a: x_1, \ldots, x_N\right) + I\left(a: \bar{a} | x_1, \ldots, x_N\right)\\
    &= I\left(a: \bar{a} | x_1, \ldots, x_N\right)\\
    &= \E_{x_{1},\dots, x_{N}} I\left(a|_{x_1, \ldots, x_N} : \bar{a}|_{x_1, \ldots, x_N}\right) \label{eq:xarerandomvar} \\
    &= \E_{x_{1},\dots, x_{N}} I\left(a : \bar{a}|_{x_1, \ldots, x_N}\right)~,
\end{align}
\end{subequations}
where $Q|_{x}$ denotes the random variable $Q$ conditioned on the random variable $x$.

Next, recall that Bob reconstructs the classical $\bar{a}$ by performing quantum operations on the training data $\mathcal{T} = \left\{ (x_\ell, \rho_{a}(x_\ell) \right\}_{\ell=1}^{N}$.
For each instance of randomly chosen parameters $x_1, \ldots, x_{N} \sim \mathrm{Unif}[-1, 1]^m$, Bob performs a quantum measurement on the state $\bigotimes_{\ell = 1}^N \rho_{a}(x_\ell)$ and uses the measurement outcomes to reconstruct $\bar{a}$. Bob's procedure is equivalent to performing the quantum ML algorithm that we have been promised in Sec.~\ref{sub:lowerboundsmooth-communication-protocol}. Thus we can use Holevo's theorem \cite{holevo}[\citealp{wildebook}, Sec.~11.6.1] to write
\begin{equation}
    I\left(a : \bar{a}|_{x_1, \ldots, x_N}\right) \leq \chi\left(a:\bigotimes_{\ell=1}^{N} \rho_{a}\left(x_{\ell}\right)\Big|_{x_1, \ldots, x_N} \right)~,
    \label{eq:mutual-information}
\end{equation}
where the Holevo information $\chi$ quantifies correlations between a random variable $z$ and a quantum state $\rho_z$,
\begin{equation}\label{eq:holevo}
    \chi(z : \rho_z) = S\left(\E_z \rho_z \right) - \E_z S(\rho_z)~,
\end{equation} 
and $S(\rho) = - \Tr(\rho \log \rho)$ is the von Neumann entropy. In other words, for each instance of parameters, the correlation between $a$ and $\bar{a}$ is bounded by the Holevo information of Bob's ensemble of quantum states.

Next, we use the subadditivity of von Neumann entropy, $S(\E_z \rho_z \otimes \sigma_z) \leq S(\E_z \rho_z) + S(\E_z \sigma_z)$, and the additivity of entropy for independent systems, $S(\rho \otimes \sigma) = S(\rho) + S(\sigma)$, to obtain
\begin{equation}\label{eq:subadd}
    \chi\left(a:\bigotimes_{\ell=1}^{N} \rho_{a}\left(x_{\ell}\right)\Big|_{x_1, \ldots, x_N} \right) \leq\sum_{\ell=1}^{N}\chi\left(a:\rho_{a}\left(x_{\ell}\right)\big|_{x_{1},\cdots,x_{N}}\right)~.
\end{equation}
Plugging Eqs.~(\ref{eq:mutual-information}) and (\ref{eq:subadd}) into Eq.~(\ref{eq:mutual-information0}) and using the fact that $\rho_a(x_\ell)$ is independent to $x_{\ell'}$ for any $\ell' \neq \ell$, we obtain
\begin{subequations}
    \begin{align}
        I\left(a:\bar{a}\right)&\leq\sum_{\ell=1}^{N}\E_{x_{1},\cdots,x_{N}}\chi\left(a:\rho_{a}\left(x_{\ell}\right)\big|_{x_{1},\cdots,x_{N}}\right)\\&=\sum_{\ell=1}^{N}\E_{x_{\ell}}\chi\left(a:\rho_{a}\left(x_{\ell}\right)\big|_{x_{\ell}}\right)\\&=N\E_{x}\chi\left(a:\rho_{a}\left(x\right)\right)~.
    \end{align}
\end{subequations}
The last equality follows from the fact that each $(x_\ell, \rho_a(x_\ell))$ is generated independently and in an identical fashion for all $\ell = 1, \ldots, N$.

We have thus reduced the problem of bounding the correlations between classical variables $a$ and $\bar{a}$ to that of bounding the Holevo information of the ensemble of states $\rho_a$ --- a much simpler problem because $\rho_a$ is a two-by-two matrix. In Lemma~\ref{lem:technical-bound} at the end of section, we obtain the bound
\begin{equation}\label{eq:technical-bound}
\E_{x} \chi \left( a: \rho_{a}(x) \right)
\leq \frac{9 \epsilon}{4B^2} \log \left( \frac{4 \mathrm{e}B^2}{9 \epsilon} \right)~. 
\end{equation}
Using this bound, the first claim in Theorem~\ref{thm:lowerboundsmooth-restatement} readily follows, provided that we are allowed to choose 
\begin{equation}\label{eq:assumption}
    K_\Lambda=m^{\Omega (C/\epsilon)}~.
\end{equation}
This assumption, combined with Eqs.~(\ref{eq:lowerboundsmooth-aux1}-\ref{eq:technical-bound}) ensures that
\begin{equation}
N \frac{9 \epsilon}{4B^2} \log \left( \frac{4 \mathrm{e}B^2}{9 \epsilon} \right)\geq \Omega (K_\Lambda) = m^{\Omega (C/\epsilon)}
\quad \text{which implies} \quad 
N \geq \frac{B^2 m^{\Omega (C/\epsilon)}}{\log (B)}.
\end{equation}
Because the quantum ML has to process quantum training data of size $N \geq \frac{B^2 m^{\Omega (C/\epsilon)}}{\log (B)}$, the runtime of the quantum ML has to be lower bounded by that amount as well.

Let us now verify the assumption (\ref{eq:assumption}) on the number of Fourier modes $K_\Lambda$ available for estimating the quantum state. While we have already determined that $K_\Lambda \leq m^{\mathcal{O}(C/\epsilon)}$ in Eq.~(\ref{eq:numklambda}), here we need a lower bound. We utilize the assumption that $C / (9 \epsilon) \leq m^{0.99}$, which implies $\left\lfloor C / (9 \epsilon) \right\rfloor \leq m^{0.99}$.
To establish Eq.~(\ref{eq:assumption}), we restrict our attention to binary wavevectors $k \in \left\{0,1\right\}^m$, such that the number of ones is exactly equal to $\lfloor C/(9\epsilon)\rfloor$. Clearly, every such wavevector obeys $\norm{k}_2 \leq \sqrt{C/(9\epsilon)}$, so the number of such wavevectors lower bounds $K_\Lambda$. This observation, along with some combinatorics, yields
\begin{subequations}
\begin{align}
K_\Lambda &\geq \Big| \big\{k \in \left\{0,1\right\}^m:\; \sum_{j=1}^m k_j = \lfloor C/(9 \epsilon) \rfloor \big\} \Big|\\
&= \binom{m}{\lfloor C/(9\epsilon) \rfloor} 
\geq  \frac{m^{\lfloor C/(9\epsilon) \rfloor}}{(\lfloor C/9\epsilon \rfloor)^{\lfloor C/(9\epsilon) \rfloor}} \\
&= m^{\lfloor C/(9\epsilon) \rfloor - (\lfloor C/9\epsilon \rfloor)\log(\lfloor C/(9 \epsilon) \rfloor) /\log (m)} 
\geq m^{0.01 \lfloor C/(9\epsilon) \rfloor}=m^{\Omega (C/\epsilon)}.
\end{align}
\end{subequations}

We now prove the upper bound (\ref{eq:technical-bound}) on the mutual information. It follows from analyzing the ground state representations provided by Lemma~\ref{lem:lowerboundsmooth-groundstate}.

\begin{lemma}\label{lem:technical-bound}
The learning problem from Section~\ref{sub:lowerboundsmooth-learning-problem} is set up to obey 
\begin{equation}
\E_{x \sim \mathrm{Unif}[-1,1]^m} \chi \left( a: \rho_{a}(x) \right)
\leq \frac{9 \epsilon}{4B^2} \log \left( \frac{4 \mathrm{e}B^2}{9 \epsilon} \right).
\end{equation}
\end{lemma}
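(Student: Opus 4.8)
The plan is to exploit the fact that each ground state $\rho_a(x)$ is \emph{pure}, so that its von Neumann entropy vanishes and the Holevo information collapses to a single term: $\chi(a:\rho_a(x)) = S(\E_a\rho_a(x)) - \E_a S(\rho_a(x)) = S(\E_a\rho_a(x))$. The whole problem thus reduces to bounding the entropy of the \emph{averaged} single-qubit state, which is just the binary entropy of its eigenvalues. Reading off the Bloch representation from Lemma~\ref{lem:lowerboundsmooth-groundstate}, I would note that $\rho_a(x) = \tfrac{1}{2}\big(\mathbb{I} + (f_a(x)/B)\,Y + \sqrt{1-(f_a(x)/B)^2}\,Z\big)$, i.e. the Bloch vector is $\big(0,\,f_a(x)/B,\,\sqrt{1-(f_a(x)/B)^2}\big)$ and has unit length. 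Taking $a$ uniform over $\{\pm1\}^{K_\Lambda}$ kills the $Y$-component (since $\E_a f_a(x)=0$), so the averaged state has Bloch length exactly $g(x):=\E_a\sqrt{1-(f_a(x)/B)^2}$ and $S(\E_a\rho_a(x)) = h\!\big(\tfrac{1+g(x)}{2}\big)$, where $h(p)=-p\log p-(1-p)\log(1-p)$ is the binary entropy. For the dictionary distribution actually used in the protocol, a nonzero $Y$-component only \emph{lengthens} the Bloch vector, and since the qubit entropy decreases in Bloch length, $h\!\big(\tfrac{1+g(x)}{2}\big)$ remains a valid upper bound.

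Next I would control $1-g(x)$ using the elementary inequality $1-\sqrt{1-u^2}\le u^2$ for $|u|\le 1$, which gives $1-g(x)\le \E_a(f_a(x)/B)^2$ and hence $q(x):=\tfrac{1-g(x)}{2}\le \tfrac{1}{2B^2}\E_a f_a(x)^2$. Averaging over $x$ is a pure Fourier computation: because $a_k^2=1$ and the modes $\cos(\pi k\cdot x)$ are orthogonal on $\mathrm{Unif}[-1,1]^m$, I get $\E_x\E_a f_a(x)^2 = \tfrac{9\epsilon}{K_\Lambda}\sum_{\norm{k}_2\le\Lambda}\E_x\cos^2(\pi k\cdot x)$, where each term equals $\tfrac{1}{2}$ except the $k=0$ mode (which equals $1$, a negligible $\mathcal{O}(1/K_\Lambda)$ overage). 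This yields $\E_x\E_a f_a(x)^2\le 9\epsilon/2$ and therefore $\E_x q(x)\le \tfrac{9\epsilon}{4B^2}$. A useful structural point is that $\E_x f_a(x)^2$ does not depend on $a$ at all, so this estimate is insensitive to whether $a$ is dictionary-uniform or fully uniform, which is exactly what allows it to feed into the communication protocol.

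Finally I would assemble the pieces. Concavity of $h$ together with Jensen's inequality gives $\E_x S(\E_a\rho_a(x))\le \E_x h(q(x))\le h(\E_x q(x))$, and the boundary estimate $h(p)\le p\log(\mathrm{e}/p)$ — whose right-hand side is increasing in $p$ on $(0,1]$ — lets me substitute $\E_x q(x)\le \tfrac{9\epsilon}{4B^2}$ to conclude $\E_x\chi(a:\rho_a(x))\le \tfrac{9\epsilon}{4B^2}\log\!\big(\tfrac{4\mathrm{e}B^2}{9\epsilon}\big)$, as claimed. The conceptual core (purity annihilating the $\E_a S(\rho_a)$ term and reducing everything to a binary entropy) is clean; the step demanding the most care is this last chain, where I must apply Jensen in the correct upper-bound direction via concavity of $h$ and then invoke monotonicity of $p\mapsto p\log(\mathrm{e}/p)$ to replace $\E_x q(x)$ by its explicit bound. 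A secondary point to verify carefully is that the Bloch-length argument indeed makes $h\!\big(\tfrac{1+g}{2}\big)$ an upper bound for the relevant dictionary distribution, which hinges on the observation that the extra $Y$-component can only increase $\norm{\vec r}_2$.
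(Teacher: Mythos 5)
Your argument is correct and follows essentially the same route as the paper's proof: purity of $\rho_a(x)$ collapses the Holevo quantity to the entropy of the averaged qubit, whose largest eigenvalue is bounded below by $1-\tfrac{1}{2}\E_a f_a(x)^2/B^2$ (your Bloch-length phrasing is equivalent to the paper's trace--determinant eigenvalue computation), and the Fourier orthogonality identity $\E_{x,a}f_a(x)^2 = 9\epsilon/2$ is then fed into the binary-entropy bound $p\log(\mathrm{e}/p)$ via Jensen. The only cosmetic difference is the order of operations at the end --- you apply Jensen to the concave binary entropy first and then invoke the pointwise bound together with monotonicity of $p\mapsto p\log(\mathrm{e}/p)$ on $(0,1]$, whereas the paper applies the pointwise bound first and then Jensen on the concave function $p\log(\mathrm{e}/p)$ --- and both orderings are valid.
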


\begin{proof}
Using the definition (\ref{eq:holevo}) of the Holevo information and the von Neumann entropy,
\begin{subequations}
\begin{align} 
    \E_{x \sim \mathrm{Unif}[-1,1]^m} \chi \left( a: \rho_{a}(x) \right) &= \E_x \left[ \E_a [ \Tr( \rho_a(x) \log \rho_a(x) )] - \Tr \left(\left(\E_a \rho_a(x)\right) \log \left(\E_a \rho_a(x)\right) \right) \right]\\
    &= -\E_x \Tr\left[ \left(\E_a \rho_a(x)\right) \log \left(\E_a \rho_a(x)\right) \right]~.
    \end{align}\label{eq:Isxrho}
\end{subequations}
The second equality follows from the fact that $\rho_a(x)$ is a pure state, so we have $\Tr( \rho_a(x) \log \rho_a(x) ) = 0$. We also consider $\E_x$ to be $\E_{x \sim \mathrm{Unif}[-1,1]^m}$.
Recalling Lemma~\ref{lem:lowerboundsmooth-groundstate} yields
\begin{equation}
\E_a \rho_a(x) = \frac{1}{2} \E_a \begin{pmatrix}
1 + \sqrt{1 - (f_a(x) / B)^2} & -\mathrm{i} f_a(x) / B \\
\mathrm{i} f_a(x) / B & 1 - \sqrt{1 - (f_a(x) / B)^2}  \\
\end{pmatrix}.
\end{equation}
The eigenvalues $\lambda_\pm$ of $\E_a \rho_a(x)$, like those of any two-by-two matrix, can be expressed in terms of the trace and determinant. Using the formula for the eigenvalues and evaluating the trace and determinant yield
\begin{subequations}
\begin{align}
    \lambda_{\pm} &= \frac{1}{2}\Tr\left[\E_a \rho_a(x)\right] \pm \frac{1}{2} \sqrt{\left(\Tr \left[\E_a \rho_a(x)\right]\right)^2 - 4 \det\left[\E_a \rho_a(x)\right]}\\
    &= \frac{1}{2} \pm \frac{1}{2} \sqrt{ \left(\E_a f_a(x)\right)^2 / B^2 + \left(\E_a \sqrt{1 - f_a(x)^2 / B^2}\right)^2}~.
\end{align}
\end{subequations}
We will use following lower bound for $\lambda_{+}$
\begin{subequations}
\begin{align}
    \lambda_{+} & \geq \frac{1}{2} + \frac{1}{2} \E_a \sqrt{1 - f_a(x)^2 / B^2} \\
    &\geq \frac{1}{2} + \frac{1}{2}  (1 - \E_a f_a(x)^2 / B^2)\\
    & = 1 - \frac{1}{2} \E_a f_a(x)^2 / B^2 \geq \frac{1}{2}.
\end{align}
\end{subequations}
The first inequality follows from dropping the term $(\E_a f_a(x))^2 / B^2$.
The second inequality follows from the fact that $\sqrt{1 - z} \geq 1-z$ for all $z \in [0,1]$.

We now proceed to bounding the von Neumann entropy of $\E_a \rho_a(x)$,
\begin{subequations}
\begin{align}
    - \Tr\left( \left(\E_a \rho_a(x)\right) \log \left(\E_a \rho_a(x)\right) \right) &= - \lambda_{+} \log \lambda_{+} - \lambda_{-} \log \lambda_{-} = H(\lambda_{+})\\
    & \leq H\left(1 - \frac{1}{2} \E_a f_a(x)^2 / B^2\right)\\
    & = H\left(\frac{1}{2} \E_a f_a(x)^2 / B^2\right) \\
    & \equiv H(g(x)) \leq g(x) \log(\mathrm{e} / g(x))~, 
\end{align}
\label{eq:intrhocalc}
\end{subequations}
where $H(x) = -x \log x - (1-x) \log(1-x)$ is the binary entropy, and $g(x) = \frac{1}{2} \E_a f_a(x)^2 / B^2$.
The first inequality follows from the fact that $H(x) \leq H(x')$
for all $1/2 \leq x' \leq x$.
Going back to Eq.~\eqref{eq:Isxrho}, 
\begin{subequations}
\begin{align}
    \E_{x} \chi \left( a: \rho_{a}(x) | x \right) &= - \E_x  \Tr\left[ \left(\E_a \rho_a(x)\right) \log \left(\E_a \rho_a(x)\right) \right] \\
    & \leq \E_x [g(x) \log(\mathrm{e} / g(x))]\\
    & \leq \left(\E_x g(x)\right) \log\left(\frac{\mathrm{e}}{\E_x g(x)}\right)\\
    & = \frac{\E_{x,a}  f_a(x)^2}{2 B^2} \log\left( \frac{2 \mathrm{e} B^2}{ \E_{x,a}  f_a(x)^2 } \right).\label{eq:finishing}
\end{align}
\end{subequations}
The first inequality follows from Eq.~\eqref{eq:intrhocalc}. The second inequality follows Jensen's inequality and the fact that $z \log(\mathrm{e} / z)$ is concave for all $z \geq 0$.
Orthogonality of the $\cos(\pi k \cdot x)$ terms in $f_a$ \eqref{eq:lowerboundsmooth-function} yields
\begin{equation}
    \E_{x,a} f_a(x)^2 = \frac{1}{2} \times \frac{9 \epsilon}{L} \sum_{k \in \mathbb{Z}^m, \norm{k}_2 \leq \Lambda} \E_a |a_k|^2 = \frac{9 \epsilon}{2}.
\end{equation}
Plugging the above into Eq.~\eqref{eq:finishing}, we obtain the advertised bound.
\end{proof}

\section{Computational hardness for non-ML algorithms to predict ground state properties}
\label{sec:proofhardnonML}

\subsection{NP-hardness for estimating one-body observables in the ground state of 2D Hamiltonians} \label{sec:proofhardnonML1}

We begin by showing that the task of estimating one-body observables in the ground state of any smooth class of two-dimensional Hamiltonians with a constant spectral gap is NP-hard.
The task is hard even if we consider the computation to yield a small error averaged over the smooth class of Hamiltonians.

\begin{figure}
    \centering
    \includegraphics[width=1.0\linewidth]{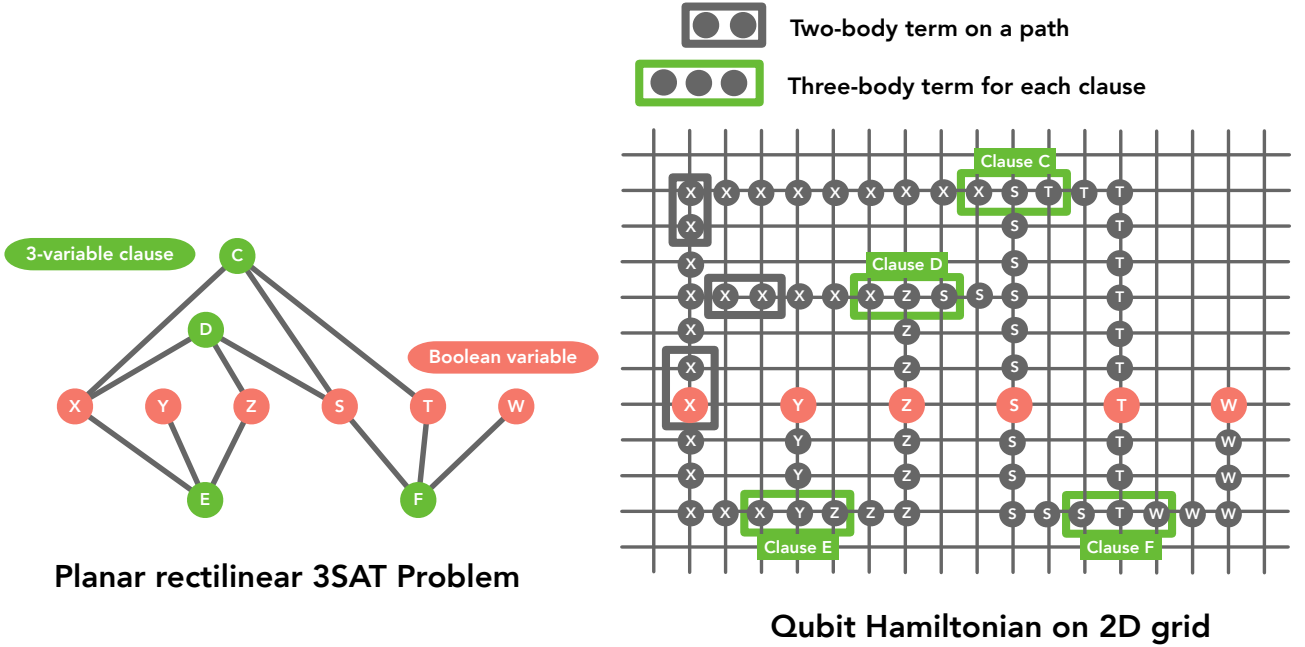}
    \caption{Reduction of planar rectilinear 3SAT (\textsc{left}) to a qubit Hamiltonian on a 2D grid (\textsc{right}). Each pair $(i, j)$ of nearby grid points on a path (originating from variable $X, Y, Z, S, T, W$) contains a two-body local term $-Z_i Z_j$ (illustrated by boxes with gray stroke). Each clause $(C, D, E, F)$ corresponds to a three-body local term that imposes the Boolean constraint, e.g., $X \vee Z \vee S$ would correspond to $-\sum_{x, z, s \in \{0, 1\}} \indicator[x \vee z \vee s = 1] \cdot \ketbra{x}{x} \otimes \ketbra{z}{z} \otimes \ketbra{s}{s}$. Every empty grid point (the irrelevant qubits) contain a single body term $-Z_i$.}
    \label{fig:planarrect3SAT}
\end{figure}

\begin{proposition}[Detailed restatement of Proposition~\ref{thm:hardnonML}; a variant of Lemma 1.4 in \cite{abrahamsen2020sub}] \label{prop:phaseclassification-app}
Consider a randomized polynomial-time classical algorithm $\mathcal{A}(H, i, r)$ whose inputs are the description of a Hamiltonian $H$, an index $i$ that enumerates the qubits in the Hamiltonian, and a random bit string $r$.
Suppose that for any smooth class of Hamiltonians on a two-dimensional grid with a spectral gap $\geq 1$ and a unique ground state,
\begin{equation}
    H(x) = \sum_{a} h_a(x) \,\,\, \mathrm{with} \,\, \rho(x) \,\, \mbox{: the ground state of} \,\, H(x),
\end{equation}
where $x \in [-1, 1]^m$ is a parameter and $h_a(x)$ is a three-qubit geometrically-local observable, and for each one-body Pauli-Z observable $Z_i$, the randomized classical algorithm $\mathcal{A}$ outputs $\mathcal{A}(H, i, r)$ that approximates $\Tr(Z_i \rho(x))$ up to an average error $\E_{x \sim [-1, 1]^m} \left| \E_r \mathcal{A}(H, i, r) - \Tr(Z_i \rho(x)) \right| \leq 1/4.$ Then $\mathrm{RP} = \mathrm{NP}$.
\end{proposition}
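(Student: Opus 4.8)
The plan is to give a randomized polynomial-time reduction from 3SAT to the promised estimation task, so that a hypothetical algorithm $\mathcal{A}$ becomes a one-sided-error (RP) decision procedure for satisfiability. Since 3SAT is NP-complete and $\mathrm{RP} \subseteq \mathrm{NP}$ always holds, this forces $\mathrm{RP} = \mathrm{NP}$. The backbone is the gadget of Figure~\ref{fig:planarrect3SAT}, following \cite{Lichtenstein1982PlanarFA, abrahamsen2020sub}. Starting from a planar rectilinear 3SAT instance $\phi$ (NP-complete by Lichtenstein \cite{Lichtenstein1982PlanarFA}), I build the diagonal, geometrically $3$-local, $2$D qubit Hamiltonian $H_\phi$ consisting of ferromagnetic wire terms $-Z_iZ_j$ that force each variable's value to propagate along its path, three-body clause terms that reward a satisfied clause by $-1$ (and cost $0$ when violated), and single-body terms $-Z_i$ pinning every unused site to $|0\rangle$. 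Every term has integer eigenvalues, so distinct computational-basis configurations have integer energies; hence whenever the ground state is nondegenerate the spectral gap is automatically $\geq 1$. I embed $H_\phi$ as the constant family $H(x)\equiv H_\phi$ over $x\in[-1,1]^m$, which is trivially smooth, so the average-over-$x$ error hypothesis collapses to a single-point guarantee.

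The crux is that a satisfiable $\phi$ may have many satisfying assignments, producing a degenerate ground manifold that violates the uniqueness hypothesis. I remove this obstruction by randomized isolation: either pre-process $\phi$ with Valiant--Vazirani \cite{VALIANT198685} into a formula with at most one satisfying assignment (composed with parsimonious planarization gadgets so that planarity, $3$-locality, and the count of satisfying assignments are preserved), or equivalently add a random single-body field $-\sum_i \eta_i Z_i$ with small integer weights and invoke the isolation lemma. After scaling the clause and wire terms by a weight $W = \Theta(n^2)$, the satisfied-versus-unsatisfied energy separation stays $\geq 1$, while the random field breaks ties among minimum-energy consistent assignments; with probability $\geq 1/2$ the ground state is then the \emph{unique} computational-basis state $|b^\ast\rangle$ encoding a satisfying assignment (when one exists), and the gap remains $\geq 1$. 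For such a promise-satisfying Hamiltonian the hypothesis yields $|\E_r \mathcal{A}(H,i,r) - \Tr(Z_i\rho)| \leq 1/4$; since $\Tr(Z_i\rho) = (-1)^{b^\ast_i} = \pm 1$, clipping $\mathcal{A}$ to $[-1,1]$ and averaging $O(\log(n/\delta))$ independent runs (Hoeffding) recovers each bit $b^\ast_i$ correctly with high probability.

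Given the recovered string $b$, I deterministically check whether $b$ satisfies $\phi$ and accept iff it does. If $\phi$ is unsatisfiable, no string satisfies it, so the verifier always rejects, \emph{irrespective} of whether the constructed Hamiltonian actually met the uniqueness/gap promise; this gives zero false positives. If $\phi$ is satisfiable, isolation succeeds with constant probability, the promise holds, $\mathcal{A}$ returns $b^\ast$, and the verifier accepts. Standard amplification by repetition boosts the acceptance probability above $1/2$. Hence $\textsc{3SAT}\in\mathrm{RP}$, so $\mathrm{NP}\subseteq\mathrm{RP}$ and therefore $\mathrm{RP}=\mathrm{NP}$.

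The main obstacle is precisely the uniqueness-and-gap precondition: the estimation guarantee is available only for Hamiltonians with a unique ground state and gap $\geq 1$, whereas a raw satisfiability gadget generically has a degenerate ground manifold. The decisive ideas are the randomized isolation step, which manufactures a unique, gapped ground state exactly in the regime where $\mathcal{A}$'s guarantee must be honored, and the classical verification safety net, which lets me invoke $\mathcal{A}$ only where its promise is valid while still deciding satisfiability with one-sided error. A secondary technical point, handled by parsimonious gadgets in the spirit of \cite{Lichtenstein1982PlanarFA}, is maintaining planarity, geometric $3$-locality, and the number of satisfying assignments through the composed reductions.
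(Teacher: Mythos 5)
Your proposal is correct and follows essentially the same route as the paper: the same planar-rectilinear-3SAT gadget of Figure~\ref{fig:planarrect3SAT} (ferromagnetic $-Z_iZ_j$ wires, diagonal three-body clause terms, $-Z_i$ pinning terms) embedded as the constant family $H(x)\equiv H_\phi$, with repeated single-qubit Pauli-$Z$ estimates and a union bound recovering the unique satisfying assignment bit by bit. The only difference is presentational: you carry out the Valiant--Vazirani isolation and the final verification step explicitly, whereas the paper folds both into the cited complexity-theoretic fact that finding the solution of a uniquely satisfiable planar rectilinear 3SAT instance in randomized polynomial time already implies $\mathrm{RP}=\mathrm{NP}$.
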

\begin{proof}
From standard results in complexity theory \cite{Lichtenstein1982PlanarFA, knuth1992problem, VALIANT198685}, it is 
known that if there is a randomized polynomial-time classical algorithm that can find the solution for any planar rectilinear 3SAT problem with a unique solution with probability at least $1/2$, then $\mathrm{RP} = \mathrm{NP}$. ($\mathrm{RP}$, also known as Randomized Polynomial Time, is the class of decision problems such that there is a polynomial-time randomized classical algorithm that outputs YES with probability at least 1/2 when the correct answer is YES, and outputs NO with probability one when the correct answer is NO. RP is contained in BPP, the class of decision problems that can be solved efficiently by a randomized classical computer.)
The planar rectilinear 3SAT problem is a constrained version of 3SAT, where all the Boolean variables $x_1,\ldots, x_n$ are vertices on the $x$-axis and all the clauses containing three variables are vertices that lie above or below the $x$-axis.
Each clause is connected by an edge to each of the the variables that the clause contains.
The vertices and the edges form a planar graph; see Figure~\ref{fig:planarrect3SAT} (left) for an illustration.

We can embed such a planar graph in a two-dimensional grid with a single qubit on each grid point; see Figure~\ref{fig:planarrect3SAT} (right) for an illustration of the embedding.
First, we distinguish between the variable vertices and the clause vertices in the planar graph.
Variable vertices lie on the $x$-axis of the two-dimensional qubit grid, and clause vertices lie above or below the $x$-axis. 
Edges of the planar graph become embedded paths on the the 2D grid connecting clause vertices to variable vertices. 
Because the original graph is planar, we can ensure that the paths corresponding to each edge on the planar graph do not overlap (except when they terminate at the same variable) by choosing a large enough spacing between the variable vertices on the $x$-axis.
For each path on the 2D grid, we add a $-Z_i Z_j$ term to the Hamiltonian for every pair of nearest neighbors along the path.
The two body $-Z_i Z_j$ term ensures that, in the unique ground state, the qubits on the path must be either all $\ket{0}$'s or all $\ket{1}$'s.
Then, for every clause vertex on the planar graph, we add a three-body geometrically-local term (diagonal in the $Z$-basis) to the Hamiltonian enforcing that in the ground state the endpoints of the three corresponding paths satisfy the Boolean constraint of the corresponding clause.
For example, the Boolean clause $X \vee Z \vee S$ would correspond to the three body local term $-\sum_{x, z, s \in \{0, 1\}} \indicator[x \vee z \vee s = 1] \cdot \ketbra{x}{x} \otimes \ketbra{z}{z} \otimes \ketbra{s}{s}$, where $\indicator[A]$ is $1$ if $A$ is true and $0$ otherwise.
The qubits on paths are called the ``relevant'' qubits, and the rest of the qubits are called ``irrelevant.'' 
We add a $-Z_i$ term to the Hamiltonian for all the irrelevant qubits, fixing these qubits to be $|0\rangle$ in the ground state. 

Moreover, the eigenstates of the Hamiltonian are computational basis states, because all the local terms are diagonal in the $Z$-basis.
We can also see that there are no terms connecting the relevant and irrelevant qubits, hence the ground state space of the constructed Hamiltonian must be the tensor product of the ground state space for the relevant qubits and the ground state space for the irrelevant qubits.
The unique ground state for the irrelevant qubits is the all-zero state $\ket{0} \otimes\dots\otimes \ket{0}$ due to the $-Z_i$ term.
Because the original planar rectilinear 3SAT problem has a unique solution, the ground state for the relevant qubits is also unique.
We denote the ground state by $\ketbra{b}{b}, b \in \{0, 1\}^n,$ where $n$ is the total number of qubits in the two dimensional grid.
In this ground state, all variable vertices are fixed at the values that solve the 3SAT problem.
Furthermore, because all eigenvalues of the Hamiltonian are integers, the spectral gap is at least one.

Let us define $\sum_a h_a$ to be the Hamiltonian constructed from a planar rectilinear 3SAT problem.
Note that $h_a$ is diagonal in the $Z$-basis and acts on at most three geometrically-local qubits.
We define a trivial class of two-dimensional Hamiltonians with a spectral gap $\geq 1$,
\begin{equation}
    H(x) = \sum_{a} h_a(x) = \sum_{a} h_a = H,
\end{equation}
where $x \in [-1, 1]^m$ is the parameter, and $H(x)$ does not depend on $x$.
Let $\rho(x)$ be the unique ground state of $H(x)$.
We have $\rho(x) = \ketbra{b}{b}, b \in \{0, 1\}^n,$ where $b$ encodes the solution to the planar rectilinear 3SAT problem.

We apply the randomized classical algorithm to provide estimates for all the expectation values of Pauli-$Z$ observables in the ground state space $\rho(x)$ of $H(x)$.
Let $\mathcal{A}$ be the randomized classical algorithm.
By the assumption that the randomized classical algorithm could output an estimate of $\Tr(Z_i \rho(x))$ up to an additive error $1/4$ averaged uniformly over $x \in [-1, 1]^m$, we have
\begin{equation}
    \E_{x \sim [-1, 1]^m} \left| \E_r \mathcal{A}(H(x), i, r) - \Tr(Z_i \rho(x)) \right| \leq 1/4, \quad \forall i = 1, \ldots, n.
\end{equation}
Using Jensen's inequality, we have the following bound,
\begin{equation}
    \left| \E_{x \sim [-1, 1]^m} \E_r \mathcal{A}(H(x), i, r) - \E_{x \sim [-1, 1]^m} \Tr(Z_i \rho(x)) \right| \leq 1/4, \quad \forall i = 1, \ldots, n.
\end{equation}
We can see that $\E_{x \sim [-1, 1]^m} \Tr(Z_i \rho(x)) = \bra{b_i} Z_i \ket{b_i},$
where $b_i$ is the $i$-th bit in the $n$-bit string $b$ that encodes the solution to the planar rectilinear 3SAT problem.

We sample random $x$ uniformly from $[-1, 1]^m$ and sample the random string $r$, obtaining the output value $\mathcal{A}(H(x), i, r)$ using the randomized classical algorithm $\mathcal{A}$.
As a result of the above analysis, by sampling $\mathcal{O}(\log(n))$ times and computing the average over the output $\mathcal{A}(H(x), i, r)$, we can obtain an estimate for $\bra{b_i} Z_i \ket{b_i}$ up to an additive error $1/2$ with probability at least $1 - \tfrac{1}{2n}$, where $n$ is the total number of qubits in the 2D grid.
Because $\bra{b_i} Z_i \ket{b_i} \in \{-1, 1\}$, an estimate for $\bra{b_i} Z_i \ket{b_i}$ up to an additive error $1/2$ allows us to obtain $b_i \in \{0, 1\}$.
Using the union bound, with probability at least $1/2$, we can obtain $b_i$, for all $i = 1, \ldots, n$.  This implies that we can obtain the bit string $b$ with probability at least $1/2$.
Hence, we can use the randomized classical algorithm $\mathcal{A}$ to find the unique solution for the planar rectilinear 3SAT problem with probability at least $1/2$. Therefore, RP$=$NP if such an algorithm exists.
\end{proof}

We remark that a similar argument still applies if we replace the constant Hamiltonian $H(x)$ considered above by a suitably chosen class of Hamiltonians $\{H(x) = H : x \in [-1, 1]^m\}$ with nontrivial dependence on $x$. 
For example, we can consider $H(x) = \sum_a h_a(x) = \sum_a \left(U_1(x_1) \otimes \ldots \otimes U_n(x_n) \right) h_a \left(U_1(x_1) \otimes \ldots \otimes U_n(x_n) \right)^\dagger$, where $n$ is the number of qubits in the Hamiltonian, $U_i(x_i) = \exp(- \mathrm{i} (\pi / 4) X_i x_i)$ is a single-qubit rotation, $X_i$ is the Pauli-$X$ matrix on the $i$-th qubit, and $m = n$.
It is not hard to see that $h_a(x)$ still acts on at most three geometrically-local qubits.
Furthermore, one can adapt the proof to show that predicting ground state properties averaged over $x$ for this nonconstant class of Hamiltonians is still hard.

\subsection{Computational hardness for a class of Hamiltonians based on factoring}
\label{sec:factoringH}

Theorem~\ref{thm:mainFourier} and Proposition \ref{prop:phaseclassification-app} together implies that an NP-hard problem could be solved by performing single-qubit measurements on a modest number of copies of the ground state of a two-dimensional local Hamiltonian, and then performing an efficient classical computation with the measurement outcomes as input. We may therefore conclude that, in hard instances, the preparation of the ground state is itself an NP-hard task. Because we do not expect any NP-hard task to be performed efficiently in the physics lab, or in any other physically realizable process, Proposition \ref{prop:phaseclassification-app} does not usefully characterize the computational power of data under realistic conditions. 

In contrast, it is reasonable to expect that simple measurements performed on quantum states that are efficiently prepared by quantum computers, combined with classical processing, suffice for solving computational problems that are beyond the reach of classical processing alone. Indeed, proposals for using variation quantum eigensolvers to study quantum chemistry and materials \cite{peruzzo2014variational, mcclean2016theory} are motivated by this expectation. Theorem~\ref{thm:mainFourier} is of potential practical interest for a class of Hamiltonians $\{H(x)\}$ such that the ground state of $H(x)$ can be prepared efficiently by a feasible quantum process, yet cannot be efficiently prepared classically.

The rest of this subsection outlines a stylized example that illustrates this idea. Leveraging the efficient quantum algorithm for factoring large numbers, and the assumption that factoring is classically hard, we construct a smooth class of local Hamiltonians whose ground states are easy to prepare quantumly, such that expectation values of one-local observables can be learned efficiently from training data, yet are hard to learn by any classical procedure without access to data. 

The first step is to construct two-dimensional Hamiltonians such that computing expectation values of one-local observables in the ground state is equivalent to solving a factoring problem. This can be done by noting a series of well-known facts in complexity theory.
\begin{enumerate}
    \item The following task is expected to be hard for classical computers. Given a $n$-bit number $R$ guaranteed to be a product of two prime numbers $p < q$, find $p, q$. When $R$ is large, all known classical algorithms scale superpolynomially with $n$. Solving this problem suffices to break the RSA encryption \cite{rivest1978method}.
    \item We can 
    represent $p, q$ using at most $2n$ binary variables (bits), and  we can write down a propositional formula for these $2n$ variables, which corresponds to a logical circuit 
    that computes the multiplication of $p, q$ and checks if the product equals $R$.
    The propositional formula can be written without any additional Boolean variable.
    This yields a SAT problem with $2n$ Boolean variables whose unique solution is equal to the two prime numbers $p, q$.
    \item A SAT problem with a unique solution can be efficiently mapped to a 3SAT problem with a unique solution; see \cite{kozen1992design}.
    \item A 3SAT problem with a unique solution can be efficiently mapped to a planar rectilinear 3SAT problem with a unique solution; see \cite{Lichtenstein1982PlanarFA,knuth1992problem}.
    \item A planar rectilinear 3SAT problem with a unique solution can be efficiently mapped to a two-dimensional 3-local Hamiltonian with a spectral gap of one and a unique ground state, such that estimating one-local observables in the ground state of the Hamiltonian to a constant error with a constant probability is sufficient to find the unique solution for the planar rectilinear 3SAT problem; see the proof of Proposition~\ref{prop:phaseclassification-app}.
\end{enumerate}
We now focus on any smooth class of two-dimensional Hamiltonians $H^{\mathrm{RSA}}(x)$ with a constant spectral gap such that there exists $x^{\mathrm{RSA}} \in [-1, 1]^m$ such that $H^{\mathrm{RSA}}(x^{\mathrm{RSA}})$ can be written as a two-dimensional Hamiltonian that is mapped from a factoring problem. We refer to such a class of Hamiltonians as an RSA-based two-dimensional gapped Hamiltonian class.

For any RSA-based Hamiltonian class $H^{\mathrm{RSA}}$, we can efficiently obtain the training data from a quantum experiment. We first prepare the ground state for $H^{\mathrm{RSA}}(x^{\mathrm{RSA}})$ by applying Shor's algorithm. Then we can adiabatically evolve the ground state for $H^{\mathrm{RSA}}(x^{\mathrm{RSA}})$ to obtain the ground state for $H^{\mathrm{RSA}}(x), \forall x \in [-1, 1]$ due to the existence of a constant spectral gap \cite{aharonov2008adiabatic, wan2020fast}.
Hence, according to Theorem~\ref{thm:mainFourier}, for any RSA-based Hamiltonian class, a classical ML algorithm trained from data obtained in quantum experiments can predict efficiently expectation values of one-local observables in the ground state.
In contrast, a classical algorithm that does not learn from training data is unable to efficiently estimate 1-body observables in the ground state, assuming that RSA encryption can not be broken by classical computers.

\section{No observable can classify topological phases}
\label{sec:no-nonlocal-obs}

Recall that ground states of two Hamiltonians are in the same topological phase if there exists a constant-depth geometrically-local quantum circuit that can transform one state to another \cite{zeng2019quantum}. The goal of this section is to establish the following proposition.

\begin{proposition}
Consider two distinct topological phases $A$ and $B$ (one of the phases could be the trivial phase). No observable $O$ exists such that
\begin{equation}
    \Tr(O \rho) > 0, \forall \rho \in \mbox{phase $A$}, \quad\quad \Tr(O \rho) \leq 0, \forall \rho \in \mbox{phase $B$}.
\end{equation}
\end{proposition}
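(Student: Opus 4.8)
The plan is to argue by contradiction, exploiting two facts about topological phases. First, a single layer of single-qubit unitaries $U=\bigotimes_{i=1}^{n}U_i$ is a depth-one geometrically-local quantum circuit, so by the definition of topological phase recalled above it maps any state to another state in the \emph{same} phase. Second, Haar-averaging such a layer erases all information about the input state. So I would begin by supposing that a separating observable $O$ exists, fix any representative $\rho_A$ in phase $A$ and any $\rho_B$ in phase $B$, and consider all product unitaries $U=\bigotimes_{i=1}^{n}U_i$.

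Because each such $U$ is a depth-one circuit, $U\rho_A U^\dagger$ stays in phase $A$ and $U\rho_B U^\dagger$ stays in phase $B$ for every choice of the $U_i$. The separation hypothesis then forces $\Tr(O\,U\rho_A U^\dagger)>0$ and $\Tr(O\,U\rho_B U^\dagger)\le 0$ simultaneously for all such $U$. The key computation is the local twirl: expanding an arbitrary $n$-qubit state in the Pauli basis and using $\E_{U_i\sim\mathrm{Haar}}U_i P U_i^\dagger=0$ for every nontrivial single-qubit Pauli $P$ (while the identity component is fixed), only the all-identity term survives, giving
\[
\E_{U_1,\dots,U_n}\,(U_1\otimes\cdots\otimes U_n)\,\rho\,(U_1\otimes\cdots\otimes U_n)^\dagger=\frac{\mathbb{I}}{2^n}
\]
for \emph{any} state $\rho$. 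Averaging the two inequalities above over the product Haar measure and pulling the expectation inside the trace then yields $\Tr(O)/2^n>0$ from the phase-$A$ condition and $\Tr(O)/2^n\le 0$ from the phase-$B$ condition, which is the desired contradiction.

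The one step needing care is preserving the \emph{strict} inequality under averaging: a priori the Haar average of everywhere-positive quantities could tend to zero. I would justify strictness by noting that $U\mapsto\Tr(O\,U\rho_A U^\dagger)$ is continuous on the compact product of unitary groups and strictly positive everywhere, hence bounded below by a positive minimum, so its average is strictly positive. This is really the only subtlety; verifying the twirling identity is routine Pauli-basis bookkeeping, and the distinctness of the two phases enters only through the existence of the representatives $\rho_A,\rho_B$. I expect the strict-positivity-under-averaging bookkeeping to be the main (modest) obstacle, and everything else to follow immediately once the local twirl is in hand.
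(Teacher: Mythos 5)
Your proposal is correct and follows essentially the same route as the paper's proof: a local Haar twirl collapsing any state to $\mathbb{I}/2^n$, combined with a compactness argument to preserve the strict inequality on the phase-$A$ side before averaging. The only cosmetic difference is that you verify the twirl via Pauli-basis bookkeeping while the paper invokes the first-moment Haar identity directly.
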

\begin{proof}
We consider depth-$1$ quantum circuits consisting of single-qubit unitaries $U_1, \ldots, U_n$.
We let $\ket{\psi_A}, \ket{\psi_B}$ be the signature quantum state for phase $A$ and $B$.
Suppose there is an observable such that
\begin{equation}
    \Tr(O \rho) > 0, \forall \rho \in \mbox{phase $A$}, \quad\quad \Tr(O \rho) \leq 0, \forall \rho \in \mbox{phase $B$}.
\end{equation}
Then, by definition, we have
\begin{subequations}
\begin{align}
    \bra{\psi_A} (U_1^\dagger \otimes \ldots \otimes U_n^\dagger) O (U_1 \otimes \ldots \otimes U_n)\ket{\psi_A} & > 0, \forall U_1, \ldots, U_n \in U(2), \\
    \bra{\psi_B} (U_1^\dagger \otimes \ldots \otimes U_n^\dagger) O (U_1 \otimes \ldots \otimes U_n)\ket{\psi_B} & \leq 0, \forall U_1, \ldots, U_n \in U(2),
\end{align}
\end{subequations}
However, from Lemma~\ref{lem:noobserU1...Un}, no such observable exists. Hence no observable exists that can be used to classify two topologically ordered phases.
\end{proof}

The key lemma utilized in the above proof is the following.

\begin{lemma} \label{lem:noobserU1...Un}
For any two $n$-qubit states $\ket{\psi_1}, \ket{\psi_2}$, no observable $O$ exists such that
\begin{subequations}
\begin{align}
    \bra{\psi_1} (U_1^\dagger \otimes \ldots \otimes U_n^\dagger) O (U_1 \otimes \ldots \otimes U_n)\ket{\psi_1} & > 0, \forall U_1, \ldots, U_n \in U(2), \label{eq:psi1condition} \\
    \bra{\psi_2} (U_1^\dagger \otimes \ldots \otimes U_n^\dagger) O (U_1 \otimes \ldots \otimes U_n)\ket{\psi_2} & \leq 0, \forall U_1, \ldots, U_n \in U(2), \label{eq:psi2condition}
\end{align}
\end{subequations}
where $U(2)$ is the unitary group of $2\times 2$ unitary matrices.
\end{lemma}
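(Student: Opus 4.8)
The plan is to integrate the two hypothesized inequalities against the Haar measure over single-qubit unitaries and extract a contradiction about the single scalar $\Tr(O)$. Concretely, for $j \in \{1,2\}$ I would introduce the continuous functions
\begin{equation}
f_j(U_1,\ldots,U_n) = \bra{\psi_j}(U_1^\dagger \otimes \cdots \otimes U_n^\dagger) O (U_1 \otimes \cdots \otimes U_n)\ket{\psi_j} = \Tr\big( O\, (U_1 \otimes \cdots \otimes U_n)\ketbra{\psi_j}{\psi_j}(U_1^\dagger \otimes \cdots \otimes U_n^\dagger) \big),
\end{equation}
defined on the compact group $U(2)^{\times n}$. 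The assumptions \eqref{eq:psi1condition} and \eqref{eq:psi2condition} say precisely that $f_1 > 0$ everywhere and $f_2 \leq 0$ everywhere.

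The key step is to average each $U_i$ independently over its Haar measure. Single-qubit twirling is completely depolarizing: $\int_{U(2)} U_i\, X\, U_i^\dagger\, \mathrm{d}U_i = \tfrac{1}{2}\Tr(X)\,\mathbb{I}$ for any single-qubit operator $X$, which is immediate from Schur's lemma (the only matrices commuting with all of $U(2)$ are multiples of the identity). Applying this qubit by qubit to the product channel shows that
\begin{equation}
\E_{U_1,\ldots,U_n}\big[ (U_1 \otimes \cdots \otimes U_n)\ketbra{\psi_j}{\psi_j}(U_1^\dagger \otimes \cdots \otimes U_n^\dagger) \big] = \frac{\mathbb{I}}{2^n},
\end{equation}
\emph{independently} of the state $\ket{\psi_j}$. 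This is the precise sense in which random single-qubit rotations wash out all the information that distinguishes the two states. Linearity of the trace then gives $\E[f_1] = \E[f_2] = \Tr(O)/2^n$.

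Finally I would combine these observations. Since $f_2 \leq 0$ pointwise, its Haar average satisfies $\Tr(O)/2^n = \E[f_2] \leq 0$, so $\Tr(O) \leq 0$. Since $f_1 > 0$ pointwise and is continuous on the compact set $U(2)^{\times n}$, it attains a strictly positive minimum, hence $\Tr(O)/2^n = \E[f_1] > 0$ and $\Tr(O) > 0$. These two conclusions contradict each other, so no observable $O$ with the stated property can exist. The only point requiring genuine care — and the one I would flag as the main (though minor) obstacle — is preserving the \emph{strict} inequality under averaging: this is exactly where compactness of $U(2)^{\times n}$ together with continuity of $f_1$ is used, since a non-strict argument would yield only $\Tr(O) \geq 0$ and no contradiction.
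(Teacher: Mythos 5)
Your proposal is correct and follows essentially the same route as the paper's proof: Haar-averaging each single-qubit unitary to depolarize both states to $\mathbb{I}/2^n$, concluding $\E[f_1]=\E[f_2]=\Tr(O)/2^n$, and using compactness of $U(2)^{\times n}$ together with continuity to ensure the strict inequality for $\ket{\psi_1}$ survives the averaging (the paper phrases this via the attained infimum $o_1>0$). The point you flag as the main subtlety is exactly the one the paper handles the same way.
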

\begin{proof}
We will prove this result by contradiction. Assume the existence of an observable $O$ such that Eq.~\eqref{eq:psi1condition}~and~\eqref{eq:psi2condition} both hold.
Consider $U_1, \ldots, U_n$ to be independent random matrices that follows the Haar measure on the unitary group $U(2)$. Then using the identity for the first order moment of Haar integration,
\begin{equation}
    \E _{U \sim \mathrm{Haar}(U(d))} U X U^\dagger = \Tr(X) \frac{\mathbb{I}}{d},
\end{equation}
we can obtain the following identity,
\begin{equation} \label{eq:depolarizing}
    \E _{U_1, \ldots, U_n \sim \mathrm{Haar}(U(2))} \left[ (U_1 \otimes \ldots \otimes U_n)\ketbra{\psi_1}{\psi_1}(U_1^\dagger \otimes \ldots \otimes U_n^\dagger) \right] = \Tr( \ketbra{\psi_1}{\psi_1} ) \frac{\mathbb{I}}{2^n} = \frac{\mathbb{I}}{2^n}.
\end{equation}
The key property is the compactness of the unitary group $U(2)$.
Consider the following infimum,
\begin{equation}
    o_1 = \inf_{U_1, \ldots, U_n \in U(2)} \bra{\psi_1} (U_1^\dagger \otimes \ldots \otimes U_n^\dagger) O (U_1 \otimes \ldots \otimes U_n)\ket{\psi_1}.
\end{equation}
Because the infimum is always attained by an element in the compact set, $\exists U^{\inf}_1, \ldots, U^{\inf}_n \in U(2)$ such that
\begin{equation}
    o_1 = \bra{\psi_1} ((U^{\inf}_1)^\dagger \otimes \ldots \otimes (U^{\inf}_n)^\dagger) O (U^{\inf}_1 \otimes \ldots \otimes U^{\inf}_n)\ket{\psi_1}.
\end{equation}
Therefore, we have $o_1 > 0$ from Eq.~\eqref{eq:psi1condition}.
Using the property of infimum, we have
\begin{equation}
    \bra{\psi_1} (U_1^\dagger \otimes \ldots \otimes U_n^\dagger) O (U_1 \otimes \ldots \otimes U_n)\ket{\psi_1} \geq o_1, \forall U_1, \ldots, U_n \in U(2),
\end{equation}
we have the following inequality,
\begin{equation}
    \E _{U_1, \ldots, U_n \sim \mathrm{Haar}(U(2))} \bra{\psi_1} (U_1^\dagger \otimes \ldots \otimes U_n^\dagger) O (U_1 \otimes \ldots \otimes U_n)\ket{\psi_1} \geq o_1 > 0.
\end{equation}
By the linearity of expectation and Eq.~\eqref{eq:depolarizing}, we have
\begin{align}
    &\E _{U_1, \ldots, U_n \sim \mathrm{Haar}(U(2))} \bra{\psi_1} (U_1^\dagger \otimes \ldots \otimes U_n^\dagger) O (U_1 \otimes \ldots \otimes U_n)\ket{\psi_1}\\
    &= \Tr\left(O \E _{U_1, \ldots, U_n \sim \mathrm{Haar}(U(2))} \left[ (U_1 \otimes \ldots \otimes U_n)\ketbra{\psi_1}{\psi_1}(U_1^\dagger \otimes \ldots \otimes U_n^\dagger) \right] \right) = \frac{\Tr(O)}{2^n}.
    \nonumber
\end{align}
Together, we have
\begin{equation} \label{eq:Oleqo1}
    \frac{\Tr(O)}{2^n} \geq o_1 > 0.
\end{equation}
The argument for $\ket{\psi_2}$ is slightly simpler. Consider the following supremum,
\begin{equation}
    o_2 = \sup_{U_1, \ldots, U_n \in U(2)} \bra{\psi_2} (U_1^\dagger \otimes \ldots \otimes U_n^\dagger) O (U_1 \otimes \ldots \otimes U_n)\ket{\psi_2}.
\end{equation}
From Eq.~\eqref{eq:psi2condition}, we have $o_2 \leq 0$.
Using the fact that
\begin{equation}
    \bra{\psi_2} (U_1^\dagger \otimes \ldots \otimes U_n^\dagger) O (U_1 \otimes \ldots \otimes U_n)\ket{\psi_2} \leq o_2, \forall U_1, \ldots, U_n \in U(2),
\end{equation}
we have the following inequality,
\begin{equation}
    \E _{U_1, \ldots, U_n \sim \mathrm{Haar}(U(2))} \bra{\psi_2} (U_1^\dagger \otimes \ldots \otimes U_n^\dagger) O (U_1 \otimes \ldots \otimes U_n)\ket{\psi_2} \leq o_2 \leq 0.
\end{equation}
By the linearity of expectation and Eq.~\eqref{eq:depolarizing}, we have
\begin{align}
    &\E _{U_1, \ldots, U_n \sim \mathrm{Haar}(U(2))} \bra{\psi_2} (U_1^\dagger \otimes \ldots \otimes U_n^\dagger) O (U_1 \otimes \ldots \otimes U_n)\ket{\psi_2}\\
    &= \Tr\left(O \E _{U_1, \ldots, U_n \sim \mathrm{Haar}(U(2))} \left[ (U_1 \otimes \ldots \otimes U_n)\ketbra{\psi_2}{\psi_2}(U_1^\dagger \otimes \ldots \otimes U_n^\dagger) \right] \right) = \frac{\Tr(O)}{2^n}.
    \nonumber
\end{align}
Together, we have
\begin{equation} \label{eq:Oleqo2}
    \frac{\Tr(O)}{2^n} \leq o_2 \leq 0.
\end{equation}
From Eq.~\eqref{eq:Oleqo1}~and~\eqref{eq:Oleqo2}, we have derived the following result
\begin{equation}
    \frac{\Tr(O)}{2^n} \leq o_2 \leq 0 < o_1 \leq \frac{\Tr(O)}{2^n},
\end{equation}
which is a contradiction. Therefore, no such observable $O$ exists.
\end{proof}

\section{Proof of efficiency for classifying phases of matter} \label{sec:proofthmPHASEC}

This section contains a detailed proof for another one of our main contributions. Namely, a rigorous performance guarantee for learning to predict quantum phases of matter.

\subsection{Training support vector machines}
\label{sec:trainSVM}

Let us start by reviewing the textbook framework for reasoning about supervised learning tasks: support vector machines (SVMs). 
The underlying idea is simple and intuitive. Suppose that we have $N$ data points $\vct{x}_\ell \in \mathbb{R}^D$ with binary labels $y_\ell \in \left\{ \pm 1 \right\}$ that form two well separated clusters. Then, we may try to separate these training clusters with a linear hyperplane $\mathsf{H}_{\vct{\alpha}} = \left\{ \vct{x} \in \mathbb{R}^D:\; \langle \vct{\alpha},\vct{x} \rangle = 0 \right\} \subset \mathbb{R}^D$, defined using any vector $\vct{\alpha}$ that is perpendicular to all vectors in the hyperplane. Here, we implicitly assume that the hyperplane $\mathsf{H}_{\vct{\alpha}}$ must contain the origin $\vct{0} \in \mathbb{R}^D$. This simplifies exposition and will suffice for our purposes, but also constitutes an actual restriction (linear SVMs typically also allow for affine shifts). Such a hyperplane divides $\mathbb{R}^D$ up into two half-spaces. For linear classification, we want that these half-spaces perfectly capture the labels of training data: $\langle \vct{\alpha},\vct{x}_\ell \rangle > 0$ whenever $y_\ell=+1$ and $\langle \vct{\alpha},\vct{x}_\ell \rangle <0$ whenever $y_\ell= -1$. The hope is that this simple linear classification strategy generalizes to data we haven't yet seen. 
When we get a new data point, we simply check which halfspace it belongs to and assign the label accordingly.
In the training stage, the main question is: how do we find a suitable hyperplane?
Several strategies are known in the literature. One of them is the \emph{soft margin} problem:
\begin{subequations}
\begin{align}
\underset{\vct{\alpha} \in \mathbb{R}^D}{\text{minimize}} & \quad 
\sum_{\ell=1}^N \max \left\{0,1-y_\ell\langle \vct{\alpha},\vct{x}_\ell \rangle \right\} \\
\text{subject to} & \quad \langle \vct{\alpha},\vct{\alpha} \rangle \leq \Lambda^2. 
\end{align}
\end{subequations}
For both label values, a positive product $y_\ell\langle \vct{\alpha},\vct{x}_\ell \rangle$ is theoretically sufficient. However, numerical precision considerations warrant a nonzero separation between the clusters, so the product is optimized to be at least as large as a positive number (here, $1$). Otherwise, a hyperplane defined by $\vct{\alpha}$ does not perfectly classify the data, yielding the training error $\mathrm{E}_{\mathrm{tr}}(\vct{\alpha})=\sum_{\ell=1}^N \max \left\{0,1-y_\ell\langle \vct{\alpha},\vct{x}_\ell \rangle \right\}$. 
The task is to find $\vct{\alpha}_\sharp$ that achieves the smallest training error: $\mathrm{E}_{\mathrm{tr}}(\vct{\alpha}_\sharp) \leq \mathrm{E}_{\mathrm{tr}}(\vct{\alpha})$ for all vectors that obey $\langle \vct{\alpha},\vct{\alpha} \rangle \leq \Lambda^2$.
This is a convex optimization problem that can be solved in 
polynomial time and we refer to Figure~\ref{fig:SVM-illustration} for a visual illustration. 
The most interesting situation occurs if we manage to achieve an optimal objective value of 0. This corresponds to zero training error. In this case, we have found a hyperplane $\mathsf{H}_{\vct{\alpha}_\sharp}$ that perfectly separates training data. What is more, the constraint $\langle \vct{\alpha}_\sharp, \vct{\alpha}_\sharp \rangle \leq \Lambda^2$ ensures that the margin of separation is strictly positive. Let $\hat{\vct{\alpha}} = \vct{\alpha}/\| \vct{\alpha} \|$ be the unit vector that characterizes a hyperplane. Then, zero training error implies $\langle \hat{\vct{\alpha}},\vct{x}_\ell \rangle \geq 1/ \| \vct{\alpha} \| \geq  1/\Lambda$ for all $\vct{x}_\ell$ with $y_\ell=+1$ and $\langle \hat{\vct{\alpha}},\vct{x}_\ell \rangle < - 1/\Lambda$ for all $\vct{x}_\ell$ with $y_\ell=-1$. In turn, the minimal margin amounts to $2/\Lambda$.

\begin{figure}
    \centering
    \includegraphics[height=0.3\textwidth]{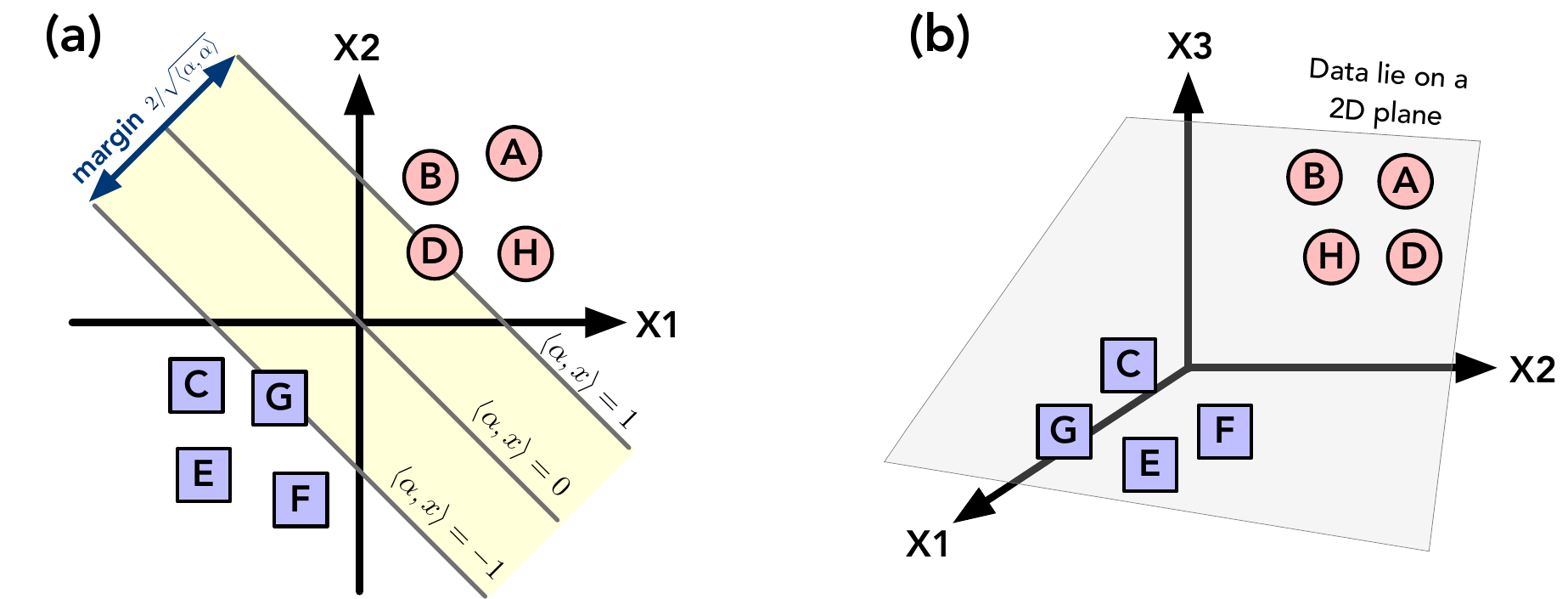}
    \caption{(a) \textsc{Geometric intuition behind support vector machines (SVMs)}. 
     The idea is to separate clusters of labeled data with a linear hyperplane. The separation margin (yellow) is inversely proportional to the length $\sqrt{\langle \alpha,\alpha\rangle}$ of the hyperplane normal vector. 
     During the training stage we try to find a hyperplane that separates points with label +1 (blue) from points with label -1 (red) such that the margin is as large as possible (left). This hyperplane separates the data space into two halfspaces. In order to predict the label of a new data point, we simply check which halfspace it belongs to. (b) \textsc{Geometric intuition behind the representer theorem}. When trying to find a separating hyperplane, the total dimension of the data space does not matter. We can without loss restrict our attention to the smallest subspace that contains all the data points. This is because orthogonal directions don't matter during training and has two implications: (i) the cost of finding a separating hyperplane depends on the training data size $N$, not feature space dimension and (ii) we can express the hyperplane vector as a linear combination of training data points.  
    }
    \label{fig:SVM-illustration}
\end{figure}

However, it should not come as a surprise that such linear classification strategies are often inadequate. Most labeled collections of data simply cannot be separated by a linear hyperplane. However, it has been observed that this drawback can be overcome by first transforming data into a (usually much larger) feature space $\vct{x}_\ell \mapsto \phi (\vct{x}_\ell)$
and trying to find a separating hyperplane there.
This transformation is typically nonlinear and increases the expressiveness of hyperplane classification. Although the separating hyperplane is linear in feature space, it may be highly nonlinear in the original data space. Denote the feature space by $\mathcal{F}$ and suppose that it comes with an inner product $\langle \cdot, \cdot \rangle_{\mathcal{F}}$ and dual space $\mathcal{F}^*$. We can then formally phrase the search for a linear classifier in feature space as
\begin{subequations}
\begin{align}
\underset{\vct{\alpha} \in \mathcal{F}^*}{\text{minimize}} & \quad  
\sum_{\ell=1}^N \max \left\{0,1-y_\ell\langle \vct{\alpha},\phi (\vct{x}_\ell) \rangle_{\mathcal{F}} \right\} \label{eq:SVM-classifier}\\
\text{subject to} & \quad \langle \vct{\alpha},\vct{\alpha} \rangle_{\mathcal{F}} \leq \Lambda^2. 
\end{align}
\end{subequations}
This problem looks more daunting than its linear counterpart, especially because the feature space $\mathcal{F}$ may have an exceedingly large -- perhaps even infinite -- dimension. But we are still interested in identifying a hyperplane that separates a total of $N$ transformed data points $\phi (\vct{x}_1),\ldots,\phi(\vct{x}_N) \in \mathcal{F}$ in a linear fashion: $\langle \vct{\alpha}, \phi (\vct{x}_\ell) \rangle_{\mathcal{F}} >0$ if $y_\ell=+1$ and $\langle \vct{\alpha},\phi (\vct{x}_\ell) \rangle_{\mathcal{F}} <0$ else if $y_\ell = -1$. 
And in order to achieve this, we can without loss restrict ourselves to the $N$-dimensional subspace
$\mathrm{span} \left\{ \phi (\vct{x}_1),\ldots,\phi (\vct{x}_N) \right\} \subset \mathcal{F}$  that is spanned by the data points themselves
(all other directions are orthogonal to \emph{all} data points and do not play a role for classification). For finite dimensional feature spaces $(\mathcal{F},\langle \cdot,\cdot \rangle_{\mathcal{F}})$, this is an intuitive observation that follows from basic orthogonality arguments. 
We refer to Figure~\ref{fig:SVM-illustration} for a visual illustration.
For infinite-dimensional feature spaces it is the content of the celebrated generalized representer theorem~\cite{scholkopf2001generalized}.
More formally, this insight allows us to decompose every (relevant) hyperplane normal vector $\vct{\alpha}$ in the optimization problem~\eqref{eq:SVM-classifier} as $\vct{\alpha} = \sum_{\ell=1}^N \alpha_\ell \phi (\vct{x}_\ell)$. 
Linearity then ensures $\langle \vct{\alpha}, \phi (\vct{x}_{\ell'}) \rangle_{\mathcal{F}} = \sum_{\ell=1}^N \alpha_\ell \langle \phi (\vct{x}_\ell),\phi (\vct{x}_{\ell'}) \rangle_{\mathcal{F}}$ for each $\ell' \in \left\{1,\ldots,N\right\}$ and also $\langle \vct{\alpha},\vct{\alpha} \rangle_{\mathcal{F}} = \sum_{\ell,\ell'=1}^N \alpha_\ell \alpha_{\ell'} \langle \phi (\vct{x}_\ell), \phi (\vct{x}_{\ell'}) \rangle_{\mathcal{F}}$.
These expressions only depend on the elements of a $N \times N$ Gram matrix 
in feature space:
\begin{equation}
\left[ \mtx{K}\right]_{ \ell \ell'}= \langle \phi (\vct{x}_\ell), \phi (\vct{x}_{\ell'}) \rangle_{\mathcal{F}}=: k \left( \vct{x}_\ell,\vct{x}_{\ell'} \right) \quad \text{for $\ell,\ell' \in \left\{1,\ldots,N\right\}$.}
\label{eq:kernel-matrix}
\end{equation}
The expression $k(\vct{x}_\ell,\vct{x}_{\ell'})$ is called the \emph{kernel} associated with the feature map $\phi$ and the matrix $\mtx{K}$ is the \emph{kernel matrix}. Kernels are a measure of similarity between (training) data points that is often easier to compute than performing the underlying feature map $\phi: \mathbb{R}^D \to \mathcal{F}$. But, for linear classification (in feature space), both contain exactly the same amount of information. Indeed, we may re-express the optimization problem~\eqref{eq:SVM-classifier} as
\begin{subequations}
\begin{align}
\underset{\vct{\alpha} \in \mathbb{R}^N}{\text{minimize}} & \quad 
\sum_{\ell=1}^N \max \left\{0,1 - y_{\ell}\vct{\alpha}^T \mtx{K} \vct{e}_{\ell} \right\} \label{eq:SVM-classifier-2}\\
\text{subject to} & \quad \vct{\alpha}^T \mtx{K} \vct{\alpha} \leq \Lambda^2. 
\end{align}
\end{subequations}
We can also collect the classification labels in a diagonal matrix $\mtx{Y}=\mathrm{diag} \left( y_1,\ldots,y_N\right)$ of compatible dimension and linearize the loss function by means of an entry-wise nonnegative slack variable $\vct{\beta} \geq \vct{0}$. Let $\vct{1}=(1,\ldots,1)^T$ denote the vector of ones. Then, problem~\eqref{eq:SVM-classifier-2} is equivalent to solving
\begin{subequations}
\begin{align}
\underset{\vct{\alpha},\vct{\beta} \in \mathbb{R}^N}{\text{minimize}} & \quad \langle \vct{1}, \vct{\beta} \rangle \label{eq:training-problem} \\
\text{subject to} & \quad \vct{\beta} \geq \vct{1} - \mtx{Y} \mtx{K} \vct{\alpha} \\
& \quad \vct{\beta} \geq \vct{0},\; \vct{\alpha}^* \mtx{K} \vct{\alpha} \leq \Lambda^2. 
\end{align}
\end{subequations}
Similar to before, the optimal function value denotes the minimal \emph{training error} $\mathrm{E}_{\mathrm{tr}} (\vct{\alpha}_\sharp)=\langle \vct{1},\vct{\beta}_\sharp \rangle$.
Apart from a single quadratic constraint ($\vct{\alpha}^* \mtx{K} \vct{\alpha} \leq \Lambda^2$), this optimization problem looks like a linear program in $2N$ dimensions. It is a convex instance of a quadratically constrained quadratic program (QCQP) and can be solved in time at most polynomial 
in the training data size $N$ \cite{boyd2004convex}. In practice, one could use existing software packages, such as scikit-learn \cite{scikit-learn} or LIBSVM \cite{CC01a}.
If the time to compute the kernel function $k(\vct{x}_\ell, \vct{x}_{\ell'})$ is $t_{\mathrm{kernel}}$, then the time complexity for training a support vector machine is given by
\begin{align}
    \mathcal{O}(t_{\mathrm{kernel}} N^2  + \mathrm{poly}(N)) &&& \mbox{(training time)}. \label{eq:general-training-time}
\end{align}
Hence, for support vector machines with efficiently computable kernel functions $k(\vct{x}_\ell, \vct{x}_{\ell'})$, 
small training data sizes $N$ directly ensure a short training time.
The polynomial scaling in training data size depends on the type of algorithm. Dedicated solvers for the soft margin problem \cite{joachims1998making, CC01a, hazan2011beating} require (at most) $\mathcal{O} \left( N^3 + \Lambda^2 N/\epsilon^2 \right)$ arithmetic operations to produce a solution $\vct{\alpha}_{\sharp,\epsilon}$ that is $\epsilon$-close to optimal: $\mathrm{E}_{\mathrm{tr}}(\vct{\alpha}_{\sharp,\epsilon}) \leq \mathrm{E}_{\mathrm{tr}}(\vct{\alpha}_\sharp)+\epsilon$. 
For the concrete training problems considered here, such an approximation is good enough and the associated runtime bound simplifies to $\mathcal{O} \left( t_{\mathrm{kernel}} N^2 + N^3 \right)$.
Interior point methods offer an alternative that scale worse in training data size, but much better in the approximation error $\epsilon$, see e.g.\ \cite{boyd2004convex}.

\subsection{Prediction using support vector machines} \label{sec:pred-SVM}

In the last section, we have explained how feature maps and kernels can considerably boost the expressiveness of initially linear classifiers. We have also explained how to use labeled training data of size $N$ to find a separating hyperplane in feature space by solving a quadratic program~\eqref{eq:training-problem} that depends on the kernel matrix~\eqref{eq:kernel-matrix}. Ideally, $\mathrm{E}_{\mathrm{tr}} (\vct{\alpha}_\sharp)=0$ (zero training error) and the optimal solution $\vct{\alpha}_\sharp \in \mathbb{R}^N$ parametrizes a separating hyperplane with minimal margin $2/\Lambda$ in feature space:
\begin{equation}
h_\sharp (\vct{x}_{\ell'}) = \sum_{\ell=1}^N \left[\vct{\alpha}_\sharp\right]_\ell \langle \phi (\vct{x}_\ell), \phi (\vct{x}_{\ell'}) \rangle_{\mathcal{F}} = \sum_{\ell=1}^N \left[\vct{\alpha}_\sharp\right]_\ell k \left( \vct{x}_\ell, \vct{x}_{\ell'} \right) 
\quad
\begin{cases}
> + 1/\Lambda & \text{if $y_{\ell'}=+1$}, \\
< - 1/\Lambda & \text{else if $y_{\ell'}=-1$,}
\end{cases}
\end{equation}
for all (labeled) training data points $(\vct{x}_1,y_1),\ldots,(\vct{x}_N,y_N)$.The sign of this classifier, in turn, correctly reproduces training labels:
\begin{equation}
y_\sharp (\vct{x}_{\ell'}) := \mathrm{sign} \left( h_\sharp (\vct{x}_{\ell'}) \right) = y_{\ell'} \quad \text{for each $\ell \in \left\{1,\ldots,N\right\}$.} \label{eq:prediction-function}
\end{equation}
In the prediction stage, we use this function to assign a label $y_\sharp (\vct{x}) \in \left\{ \pm 1 \right\}$ to a new (and unlabeled) data point $\vct{x}$. 
The cost of evaluating $y_\sharp (\vct{x}_{\ell'})$ is dominated by the cost of evaluating $N$ kernel functions.
If the time to compute the kernel function is $t_{\mathrm{kernel}}$, then the prediction time for a new input vector $\vct{x}$ is bounded by
\begin{align}
    \mathcal{O}(t_{\mathrm{kernel}} N)&&& \mbox{(prediction time)}.
\end{align}
Similar to the training time~\eqref{eq:general-training-time}, a small training data size $N$ translates into a fast prediction time.

The hope is that training with an adequate kernel uncovers latent structure that generalizes beyond training data. 
Typically, larger training data sizes $N$ also increase the chance for learning good classifiers~\eqref{eq:prediction-function}. 
But generalization beyond training data often only makes sense if the new data point $\vct{x}$ is somewhat related to the training data (e.g.\ training a SVM on labeled cat-vs-dog images does not necessarily produce a classifier that can distinguish apples from oranges). 
Extra assumptions that address similarity of training and prediction data are  important when one aims at establishing rigorous bounds on the probability of making a wrong prediction, i.e.  $y_\sharp (\vct{x})= - y(\vct{x})$. A common assumption is that both the training data and new data points are generated independently from the same distribution: $(\vct{x}_1, y_1),\ldots,(\vct{x}_N, y_N), (\vct{x}, y) \sim  \mathcal{D}$.
The data distribution $\mathcal{D}$ is a joint distribution over both the input vector $\vct{x}$ and the label $y$.
Such an assumption encompasses the intuition that the label $y$ is correlated with the input vector $\vct{x}$, but is not necessarily a function of $\vct{x}$. 
Flexibility of this form is useful for describing situations where the data points $\vct{x}$ are corrupted by noise. This is often the case in quantum mechanics due to the inherent randomness in quantum measurements.
The underlying data distribution should be taken into account when reasoning about false predictions, motivating the probability 
\begin{align}
\mathrm{Pr}_{(\vct{x}, y) \sim \mathcal{D}} \left[ y_\sharp (\vct{x}) \neq y \right] \in & ~[0,1] & \text{(average-case prediction error)}\label{eq:average-prediction}
\end{align}
as a good figure of merit. Noting that there are in general many approaches to bounding the prediction error, 
we present a user-friendly theorem that bounds the average-case prediction error in terms of the training error $\mathrm{E}_{\mathrm{tr}} (\vct{\alpha}_\sharp)$ and training data size $N$.

\begin{theorem}[Prediction error for support vector machines] \label{thm:svm-prediction}
Fix a data distribution $(\vct{x},y) \sim \mathcal{D}$, a kernel function $k(\cdot,\cdot)$, a minimal margin $2/\Lambda$ and a training data size $N$. Assume $k(\vct{x},\vct{x}) \leq R^2$ almost surely.
Then, with probability (at least) $1-\delta$,
\begin{equation}
\mathrm{Pr}_{(\vct{x}, y) \sim \mathcal{D}} \left[ y_\sharp (\vct{x}) \neq y \right] \leq \frac{1}{N} \mathrm{E}_{\mathrm{tr}} (\vct{\alpha}_\sharp) + 7 (\Lambda R +1) \sqrt{\frac{\log (2/\delta)}{N}},
\end{equation}
where $y_\sharp (\vct{x})$ is the classifier~\eqref{eq:prediction-function} obtained from solving the training problem~\eqref{eq:training-problem} on independently sampled training data $\left(\vct{x}_1,y_1\right),\ldots,\left(\vct{x}_N,y_N \right) \sim \mathcal{D}$, and $\mathrm{E}_{\mathrm{tr}}$ denotes the associated training error.
\end{theorem}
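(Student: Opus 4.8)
The plan is to establish the bound through the standard margin-based generalization analysis: dominate the misclassification probability by a Lipschitz surrogate (the hinge loss) and then control the deviation between empirical and population surrogate risk using Rademacher complexity.

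First I would reduce the zero-one error to the hinge loss. Writing $h(\vct{x}) = \langle \vct{\alpha}_\sharp, \phi(\vct{x}) \rangle_{\mathcal{F}}$ so that $y_\sharp(\vct{x}) = \mathrm{sign}(h(\vct{x}))$, I would invoke the pointwise domination $\indicator[y\, h(\vct{x}) \le 0] \le \max\{0,\, 1 - y\, h(\vct{x})\}$, which holds for every $(\vct{x},y)$ since the right-hand side is at least $1$ when $y\,h(\vct{x}) \le 0$ and is always nonnegative. Taking the expectation over $\mathcal{D}$ gives
\[
\mathrm{Pr}_{(\vct{x},y)\sim\mathcal{D}}[y_\sharp(\vct{x}) \neq y] \le \E_{(\vct{x},y)\sim\mathcal{D}} \max\{0,\, 1 - y\, h(\vct{x})\}.
\]
The corresponding empirical average is exactly $\tfrac{1}{N}\mathrm{E}_{\mathrm{tr}}(\vct{\alpha}_\sharp)$ by the definition of the training error in Eq.~\eqref{eq:training-problem}. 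It therefore suffices to bound the generalization gap uniformly over the admissible hypothesis class $\mathcal{H} = \{\vct{x} \mapsto \langle \vct{\alpha}, \phi(\vct{x})\rangle_{\mathcal{F}} : \langle \vct{\alpha}, \vct{\alpha}\rangle_{\mathcal{F}} \le \Lambda^2\}$, since the (data-dependent) optimizer $\vct{\alpha}_\sharp$ lies in this class by the constraint in Eq.~\eqref{eq:training-problem}.

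Second I would bound the Rademacher complexity of the induced loss class. The assumption $k(\vct{x},\vct{x}) \le R^2$ together with Cauchy--Schwarz in feature space yields $|h(\vct{x})| \le \norm{\vct{\alpha}} \sqrt{k(\vct{x},\vct{x})} \le \Lambda R$ for every $h \in \mathcal{H}$, so the hinge loss takes values in $[0, 1 + \Lambda R]$; this boundedness is precisely what will later enable concentration. The empirical Rademacher complexity of $\mathcal{H}$ is controlled by the standard kernel computation
\[
\hat{\mathcal{R}}_N(\mathcal{H}) = \tfrac{1}{N}\E_{\sigma} \sup_{\norm{\vct{\alpha}}\le\Lambda} \Big\langle \vct{\alpha}, \sum_{\ell} \sigma_\ell\, \phi(\vct{x}_\ell)\Big\rangle_{\mathcal{F}} \le \tfrac{\Lambda}{N}\sqrt{\textstyle\sum_{\ell} k(\vct{x}_\ell,\vct{x}_\ell)} \le \tfrac{\Lambda R}{\sqrt{N}},
\]
where $\sigma_\ell$ are independent Rademacher signs. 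Because $t \mapsto \max\{0, 1-t\}$ is $1$-Lipschitz, Talagrand's contraction lemma transfers this estimate to the composed loss class without changing the constant.

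Finally I would assemble the high-probability statement by the textbook symmetrization argument combined with McDiarmid's bounded-differences inequality. Replacing a single training point changes the empirical surrogate risk by at most $(1+\Lambda R)/N$, so McDiarmid produces a deviation term of order $(1+\Lambda R)\sqrt{\log(1/\delta)/N}$, while symmetrization contributes the Rademacher term bounded by $2\hat{\mathcal{R}}_N(\mathcal{H}) \le 2\Lambda R/\sqrt{N}$. Summing these with the empirical term $\tfrac{1}{N}\mathrm{E}_{\mathrm{tr}}(\vct{\alpha}_\sharp)$ and folding all absolute constants into a single factor of $7$ gives the advertised $7(\Lambda R + 1)\sqrt{\log(2/\delta)/N}$. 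The one genuinely delicate point --- and the step I would be most careful about --- is the interplay between the a-priori unbounded hinge loss and the concentration inequality: one must use $|h(\vct{x})| \le \Lambda R$ to make McDiarmid applicable, and track how the loss bound $1 + \Lambda R$ enters both the contraction step and the bounded-differences increment, so that the Rademacher and concentration contributions merge cleanly into the single prefactor $(\Lambda R + 1)$ rather than remaining as separate, less tidy terms.
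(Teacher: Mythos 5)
Your proposal is correct and follows essentially the same route as the paper: reduce the zero--one error to the hinge loss, note the loss class is bounded by $\Lambda R + 1$ via Cauchy--Schwarz in feature space, bound the Rademacher/empirical-width term by $\Lambda R$ using contraction plus the kernel trace bound, and combine with a symmetrization-plus-McDiarmid concentration bound (which the paper imports as Theorem~3.3 of Mohri et al.\ rather than rederiving). The constant bookkeeping merging everything into the single $7(\Lambda R+1)$ prefactor matches the paper's as well.
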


This rigorous statement bounds the average prediction error in terms of the training error plus an error term that decays as $1/\sqrt{N}$ in training data size. The core assumption is that training and prediction data is sampled in an independent and identically distributed (\textit{iid}) fashion. The proof is based on specializing a standard result from high dimensional probability theory to the task at hand.

\begin{theorem}[Theorem 3.3 in \cite{mohri2018foundations}] \label{thm:rademacher}
Fix a probability distribution $\mathcal{D}$ over elements in a set $\mathsf{X}$, a family of functions $\mathcal{G}$ from $\mathsf{X}$ to the interval $[0,\gamma_{\max}]$, as well as $\delta \in (0,1)$ and $N \in \mathbb{N}$. Then, with probability $1-\delta$, the following bound is valid for \emph{all} functions $g \in \mathcal{G}$ simultaneously:
\begin{equation}
    \E_{x \sim \mathcal{D}} [g(x)] \leq \frac{1}{N} \sum_{\ell = 1}^N g(x_\ell) + 3 \gamma_{\max} \sqrt{\frac{\log(2 / \delta)}{2 N}} + \frac{2}{\sqrt{N}} \E_{\varepsilon_1,\ldots,\varepsilon_N} \left[ \sup_{g \in \mathcal{G}} \frac{1}{\sqrt{N}} \sum_{\ell = 1}^N \varepsilon_\ell g(x_\ell) \right] . \label{eq:rademacher}
\end{equation}
Here, $x_1,\ldots,x_N \overset{\textit{iid}}{\sim}\mathcal{D}$ are sampled from $\mathsf{X}$ and $\varepsilon_1,\ldots,\varepsilon_N \overset{\textit{iid}}{\sim} \left\{ \pm 1 \right\}$ are Rademacher random variables (the failure probability $\leq \delta$ addresses these random selections).
\end{theorem}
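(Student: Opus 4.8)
The plan is to prove this via the standard \emph{symmetrization} argument combined with McDiarmid's bounded-differences inequality, the canonical route to uniform-convergence bounds phrased through Rademacher complexity. Throughout, write $S = (x_1,\ldots,x_N)$ for the sample, $\widehat{\E}_S[g] = \frac{1}{N}\sum_{\ell=1}^N g(x_\ell)$ for the empirical average, and observe that the third term of \eqref{eq:rademacher} is exactly twice the \emph{empirical Rademacher complexity} $\widehat{\mathcal{R}}_S(\mathcal{G}) = \E_\varepsilon\big[\sup_{g \in \mathcal{G}} \frac{1}{N}\sum_{\ell=1}^N \varepsilon_\ell g(x_\ell)\big]$, since $\frac{2}{\sqrt N}\cdot\frac{1}{\sqrt N}\E_\varepsilon[\sup_g\sum_\ell \varepsilon_\ell g(x_\ell)] = 2\widehat{\mathcal{R}}_S(\mathcal{G})$. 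The object to control is the one-sided uniform deviation $\Phi(S) = \sup_{g \in \mathcal{G}}\big(\E_{x\sim\mathcal{D}}[g(x)] - \widehat{\E}_S[g]\big)$; an upper bound on $\Phi(S)$ holding with high probability is precisely the theorem. First I would note that $\Phi$ has the bounded-differences property: replacing a single coordinate $x_i$ by $x_i'$ shifts $\widehat{\E}_S[g]$ by at most $\gamma_{\max}/N$ uniformly in $g$ (as $g \in [0,\gamma_{\max}]$), and using $|\sup_g A_g - \sup_g B_g| \le \sup_g |A_g - B_g|$ this forces $|\Phi(S) - \Phi(S')| \le \gamma_{\max}/N$. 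McDiarmid's inequality then yields $\Phi(S) \le \E_S[\Phi(S)] + \gamma_{\max}\sqrt{\log(2/\delta)/(2N)}$ with probability at least $1-\delta/2$.

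The heart of the argument is then to bound $\E_S[\Phi(S)] \le 2\mathcal{R}_N(\mathcal{G})$, where $\mathcal{R}_N(\mathcal{G}) = \E_S[\widehat{\mathcal{R}}_S(\mathcal{G})]$ is the population Rademacher complexity. For this I would introduce a \emph{ghost sample} $S' = (x_1',\ldots,x_N')$ drawn independently from $\mathcal{D}^{\otimes N}$, rewrite $\E_{x}[g(x)] = \E_{S'}[\widehat{\E}_{S'}[g]]$, and pull the inner expectation outside the supremum by Jensen's inequality (convexity of $\sup$), obtaining $\E_S[\Phi(S)] \le \E_{S,S'}\big[\sup_{g}\frac{1}{N}\sum_\ell (g(x_\ell') - g(x_\ell))\big]$. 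Because each pair $(x_\ell, x_\ell')$ is exchangeable, swapping its two entries leaves the joint distribution invariant, so I may insert independent Rademacher signs $\varepsilon_\ell \in \{\pm1\}$ in front of each difference without changing the expectation. Splitting the supremum of a sum into a sum of suprema and exploiting that $-\varepsilon_\ell$ and $\varepsilon_\ell$ are equidistributed then gives $\E_S[\Phi(S)] \le 2\,\E_{S,\varepsilon}\big[\sup_g \frac{1}{N}\sum_\ell \varepsilon_\ell g(x_\ell)\big] = 2\mathcal{R}_N(\mathcal{G})$.

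The final step converts the population complexity $\mathcal{R}_N$ into the empirical complexity $\widehat{\mathcal{R}}_S$ appearing in \eqref{eq:rademacher}. Viewing $S \mapsto \widehat{\mathcal{R}}_S(\mathcal{G})$ as a function of the sample, the same single-coordinate perturbation bounds its variation by $\gamma_{\max}/N$, so a second McDiarmid application gives $\mathcal{R}_N(\mathcal{G}) \le \widehat{\mathcal{R}}_S(\mathcal{G}) + \gamma_{\max}\sqrt{\log(2/\delta)/(2N)}$ with probability at least $1-\delta/2$. Combining the two high-probability estimates via a union bound over the two failure events (total probability $\le \delta$) and tallying constants — one copy of $\gamma_{\max}\sqrt{\log(2/\delta)/(2N)}$ from the first McDiarmid step and two copies from passing $\mathcal{R}_N \to \widehat{\mathcal{R}}_S$ through the factor $2$ — produces exactly $\E_x[g] \le \widehat{\E}_S[g] + 2\widehat{\mathcal{R}}_S(\mathcal{G}) + 3\gamma_{\max}\sqrt{\log(2/\delta)/(2N)}$, valid simultaneously for all $g \in \mathcal{G}$. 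I expect the symmetrization step to be the main obstacle: the ghost-sample substitution and the claim that Rademacher signs can be introduced ``for free'' hinge on exchangeability together with the Jensen convexity step, and these are the only genuinely non-mechanical ingredients, whereas the two McDiarmid applications and the union bound are routine once the bounded-differences constants are pinned down.
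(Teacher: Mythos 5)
Your proof is correct, and it is worth noting that the paper itself does not prove this statement at all: it imports it verbatim as Theorem~3.3 of Mohri et al.\ (\emph{Foundations of Machine Learning}), so the only proof to compare against is the textbook's, which is exactly the argument you give — define the one-sided uniform deviation $\Phi(S)$, apply McDiarmid with bounded differences $\gamma_{\max}/N$, symmetrize with a ghost sample and Rademacher signs to get $\E_S[\Phi(S)] \le 2\mathcal{R}_N(\mathcal{G})$, apply McDiarmid a second time to replace the population Rademacher complexity by the empirical one, and union-bound the two $\delta/2$ failure events. Your constant accounting (one deviation term from the first McDiarmid step plus two from pushing the factor $2$ through the second) correctly reproduces the coefficient $3\gamma_{\max}\sqrt{\log(2/\delta)/(2N)}$, matching the cited result after rescaling the textbook's $[0,1]$-valued functions to $[0,\gamma_{\max}]$.
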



The right hand side of this upper bound contains three qualitatively different contributions. The first term describes an empirical average over $N$ independent samples. It approximates the true expectation value by Monte Carlo sampling, and can underestimate the true average.
As $N$ increases, the approximation accuracy becomes better and, simultaneously, the probability of sampling a poor approximation diminishes exponentially.
This is precisely the content of the second term. Larger sampling rates $N$ suppress it and also allow for insisting on ever smaller failure probabilities $\delta$. However, these two terms are still not enough for an upper bound because we would like to have a bound for \emph{all functions} $g \in \mathcal{G}$. This is where the third term comes into play. It contains the empirical width, a statistical summary parameter for the extent of the function set $\mathcal{G}$, see e.g.\ \cite{vershynin2018high}. Suppose, for instance, that $\mathcal{G}=\left\{g \right\}$ contains only a single function. Then, we can ignore the supremum (over a single element) and the contribution vanishes entirely (Rademacher random variables have zero expectation). 
The empirical width parameter can, however, grow with the size of the function set $g \in \mathcal{G}$. 

In the context of bounding the performance of support vector machines, the domain variable $x$ becomes $(\vct{x},y)$, and the function family consists of the training error $g_{\vct{\alpha}}$ from Eq.~(\ref{eq:SVM-classifier}), indexed by $\vct{\alpha}$. The third term in Theorem~\ref{thm:rademacher} can then be bounded by the largest norm of the feature vectors.

\begin{lemma} \label{lem:svm-width}
Fix a feature map $\phi:\mathbb{R}^D \times \left\{ \pm 1 \right\} \to \mathcal{F}$ and define
$g_{\vct{\alpha}} (\vct{x},y) = \max \left\{0,1-y \langle \vct{\alpha}, \phi (\vct{x}) \rangle_{\mathcal{F}} \right\}$ for $\vct{\alpha} \in \mathcal{F}^*$. Then, 
\begin{align}
\E_{\varepsilon_1,\ldots,\varepsilon_N} \left[ \sup_{\langle \vct{\alpha},\vct{\alpha} \rangle_{\mathcal{F}} \leq \Lambda^2} \tfrac{1}{\sqrt{N}} \sum_{\ell=1}^N \varepsilon_\ell g_{\vct{\alpha}} \left(\vct{x}_\ell,y_\ell\right) \right] \leq \Lambda  \max_{1 \leq \ell \leq N} \sqrt{ \langle \phi (\vct{x}_\ell),\phi (\vct{x}_\ell) \rangle_{\mathcal{F}}}
\end{align}
for any collection $\left(\vct{x}_1,y_1\right),\ldots,\left( \vct{x}_N,y_N\right) \in \mathbb{R}^D \times \left\{ \pm 1 \right\}$.
\end{lemma}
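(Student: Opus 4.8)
The goal is to bound the Rademacher complexity of the hinge-loss function class $g_{\vct{\alpha}}(\vct{x},y) = \max\{0, 1 - y\langle \vct{\alpha}, \phi(\vct{x})\rangle_{\mathcal{F}}\}$ indexed by $\vct{\alpha}$ in the norm ball $\langle \vct{\alpha},\vct{\alpha}\rangle_{\mathcal{F}} \leq \Lambda^2$. The plan is to peel off the bounded structure layer by layer: first strip away the hinge nonlinearity using a contraction (Talagrand-type) argument, then reduce to the linear functional class, and finally bound the resulting expectation by Cauchy--Schwarz together with Jensen's inequality.

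\textbf{Step 1: Contraction to remove the hinge.} The key observation is that the map $z \mapsto \max\{0, 1 - z\}$ is $1$-Lipschitz. Writing $g_{\vct{\alpha}}(\vct{x}_\ell, y_\ell) = \varphi(y_\ell \langle \vct{\alpha}, \phi(\vct{x}_\ell)\rangle_{\mathcal{F}})$ with $\varphi$ this $1$-Lipschitz function, I would invoke Talagrand's contraction lemma for Rademacher averages, which states that composing with a $1$-Lipschitz function can only decrease (up to constants already absorbed here) the empirical Rademacher width. This lets me replace $g_{\vct{\alpha}}$ by the linear functional $y_\ell \langle \vct{\alpha}, \phi(\vct{x}_\ell)\rangle_{\mathcal{F}}$ inside the supremum. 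Since the $y_\ell \in \{\pm 1\}$ and the $\varepsilon_\ell$ are symmetric Rademacher variables, the product $\varepsilon_\ell y_\ell$ has the same distribution as $\varepsilon_\ell$, so the labels can be absorbed and we are left to control $\E_{\varepsilon}\left[\sup_{\vct{\alpha}} \tfrac{1}{\sqrt{N}}\sum_\ell \varepsilon_\ell \langle \vct{\alpha}, \phi(\vct{x}_\ell)\rangle_{\mathcal{F}}\right]$.

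\textbf{Step 2: Optimize over the norm ball and apply Cauchy--Schwarz.} By linearity of the inner product, $\sum_\ell \varepsilon_\ell \langle \vct{\alpha}, \phi(\vct{x}_\ell)\rangle_{\mathcal{F}} = \langle \vct{\alpha}, \sum_\ell \varepsilon_\ell \phi(\vct{x}_\ell)\rangle_{\mathcal{F}}$. The supremum of this over $\langle \vct{\alpha},\vct{\alpha}\rangle_{\mathcal{F}} \leq \Lambda^2$ is attained by aligning $\vct{\alpha}$ with the vector $\sum_\ell \varepsilon_\ell \phi(\vct{x}_\ell)$, yielding exactly $\Lambda \big\lVert \sum_\ell \varepsilon_\ell \phi(\vct{x}_\ell)\big\rVert_{\mathcal{F}}$. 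Taking the expectation over $\varepsilon$ and applying Jensen's inequality to pull the expectation inside the square root gives
\begin{equation}
\E_{\varepsilon}\Big\lVert \textstyle\sum_{\ell=1}^N \varepsilon_\ell \phi(\vct{x}_\ell)\Big\rVert_{\mathcal{F}} \leq \sqrt{\E_{\varepsilon}\Big\lVert \textstyle\sum_{\ell=1}^N \varepsilon_\ell \phi(\vct{x}_\ell)\Big\rVert_{\mathcal{F}}^2}.
\end{equation}
Expanding the squared norm and using that the $\varepsilon_\ell$ are independent with $\E[\varepsilon_\ell \varepsilon_{\ell'}] = \delta_{\ell \ell'}$, all cross terms vanish and only the diagonal survives, so $\E_{\varepsilon}\lVert \sum_\ell \varepsilon_\ell \phi(\vct{x}_\ell)\rVert_{\mathcal{F}}^2 = \sum_{\ell=1}^N \langle \phi(\vct{x}_\ell), \phi(\vct{x}_\ell)\rangle_{\mathcal{F}}$. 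Bounding each summand by the maximum and dividing by the $\tfrac{1}{\sqrt{N}}$ prefactor gives $\Lambda \max_\ell \sqrt{\langle \phi(\vct{x}_\ell), \phi(\vct{x}_\ell)\rangle_{\mathcal{F}}}$, as claimed.

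\textbf{Anticipated obstacle.} The routine parts (Cauchy--Schwarz, Jensen, and the orthogonality of Rademacher cross terms) are entirely mechanical. The one step demanding genuine care is the contraction argument in Step~1: the standard Talagrand contraction lemma is usually stated for scalar-valued function classes and real-valued Lipschitz compositions, and I must verify that its hypotheses are cleanly met here, including the handling of the label variable $y_\ell$ and the absorption of the constant offset $1$ inside the hinge (the constant shift contributes a term whose Rademacher average vanishes by symmetry). I would either cite the contraction lemma in the precise form needed or give the short direct argument; this is the only place where a subtlety could hide, so it merits explicit attention rather than being waved through.
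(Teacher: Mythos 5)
Your proposal is correct and follows essentially the same route as the paper's proof: a Talagrand/Ledoux-type contraction to strip the hinge (handling the constant offset by a shift whose Rademacher average vanishes), followed by Cauchy--Schwarz over the norm ball, Jensen, and orthogonality of the Rademacher cross terms. The only cosmetic difference is that you absorb the labels $y_\ell$ into the Rademacher variables by distributional symmetry, whereas the paper keeps them and uses $y_\ell^2=1$ at the end; both are fine.
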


\begin{proof}
Let us abbreviate the expectation over all $N$ Rademacher random variables by $\E_{\varepsilon}$. Note that the empirical width is invariant under a constant shift of the hinge loss function: $\max \left\{0,1-z \right\} \mapsto \max \left\{0,1-z\right\}-1$.
In turn, the shifted loss function $L (z)= \max \left\{0,1-z \right\}-1$ describes a contraction, i.e.\ $L (0) = 0$ and $\left| L (z_1) - L (z_2) \right| \leq \left|z_1 -z_2 \right|$ for all $z_1,z_2 \in \mathbb{R}$.
Such contractions can only decrease the empirical width. More precisely,  the Rademacher comparison principle \cite[Eq.~(4.20)]{ledoux2013probability} asserts
\begin{subequations}
\begin{align}
\E_{\varepsilon}\left[\sup_{\langle\vct{\alpha},\vct{\alpha}\rangle_{\mathcal{F}}\leq\Lambda^{2}}\tfrac{1}{\sqrt{N}}\sum_{\ell=1}^{N}\varepsilon_{\ell}g_{\vct{\alpha}}(\vct{x}_{\ell},y_{\ell})\right]&=\E_{\varepsilon}\left[\sup_{\langle\vct{\alpha},\vct{\alpha}\rangle_{\mathcal{F}}\leq\Lambda^{2}}\tfrac{1}{\sqrt{N}}\sum_{\ell=1}^{N}\varepsilon_{\ell}\left(\max\left\{ 0,1-y_{\ell}\langle\vct{\alpha},\phi(\vct{x}_{\ell})\rangle_{\mathcal{F}}\right\} -1\right)\right]\\&\leq\E_{\varepsilon}\left[\sup_{\langle\vct{\alpha},\vct{\alpha}\rangle_{\mathcal{F}}\leq\Lambda^{2}}\tfrac{1}{\sqrt{N}}\sum_{\ell=1}^{N}\varepsilon_{\ell}y_{\ell}\langle\vct{\alpha},\phi(\vct{x}_{\ell})\rangle_{\mathcal{F}}\right]\\&=\E_{\varepsilon}\left[\sup_{\langle\vct{\alpha},\vct{\alpha}\rangle_{\mathcal{F}}\leq\Lambda^{2}}\langle\vct{\alpha},h_{\varepsilon}\rangle_{\mathcal{F}}\right].
\end{align}
\end{subequations}
In the last step, we have introduced the short-hand notation $h_{\varepsilon} = \tfrac{1}{\sqrt{N}} \sum_{\ell=1}^N \varepsilon_\ell y_\ell \phi (\vct{x}_\ell) \in \mathcal{F}$.
Applying a Cauchy-Schwarz inequality in feature space allows us to separate the supremum from the Rademacher randomness:
\begin{align}
\E_\varepsilon \left[ \sup_{\langle \vct{\alpha}, \vct{\alpha} \rangle_{\mathcal{F}} \leq \Lambda^2} \langle \vct{\alpha}, h_\varepsilon \rangle_{\mathcal{F}} \right] 
\leq  \sup_{\langle \vct{\alpha}, \vct{\alpha} \rangle_{\mathcal{F}} \leq \Lambda^2} \sqrt{\langle \vct{\alpha}, \vct{\alpha} \rangle_{\mathcal{F}}}
\;\E_\varepsilon \left[ \sqrt{\langle h_\varepsilon, h_\varepsilon \rangle_{\mathcal{F}}} \right]
\leq  \Lambda \sqrt{ \E_\varepsilon \langle h_\varepsilon, h_\varepsilon \rangle_{\mathcal{F}}}.
\end{align}
The last inequality is Jensen's.
We complete the argument 
using $\E_\varepsilon \left[ \varepsilon_\ell \varepsilon_{\ell'} \right]=\delta_{\ell,\ell'}$ and $y_\ell^2=1$:
\begin{equation}
\E _\varepsilon \langle h_\varepsilon, h_\varepsilon \rangle_{\mathcal{F}}
= \frac{1}{N}\sum_{\ell,\ell'=1}^N \E _{\varepsilon} \left[ \varepsilon_{\ell} \varepsilon_{\ell'} \right]  y_\ell y_{\ell'} \langle \phi (\vct{x}_\ell), \phi (\vct{x}_{\ell'}) \rangle_{\mathcal{F}}
= \frac{1}{N}\sum_{\ell=1}^N \langle \phi (\vct{x}_\ell), \phi (\vct{x}_\ell) \rangle_{\mathcal{F}} \leq \max_{1 \leq \ell \leq N} \langle \phi (\vct{x}_\ell),\phi (\vct{x}_\ell) \rangle_{\mathcal{F}}.
\end{equation}
\end{proof}

We are now ready to prove the general connection between average prediction (\ref{eq:average-prediction}) and training error.

\begin{proof}[Proof of Theorem~\ref{thm:svm-prediction}]
We consider functions $y_{\vct{\alpha}} (\vct{x}) = \mathrm{sign} \left( \langle {\vct{\alpha}}, \phi \left( \vct{x} \right) \rangle_{\mathcal{F}} \right) \in \left\{\pm 1 \right\}$, such that ${\vct{\alpha}} \in \mathcal{F}^*$ obeys $\langle {\vct{\alpha}}, {\vct{\alpha}} \rangle_{\mathcal{F}} \leq \Lambda^2$. This family of functions includes all classifiers that are feasible points in the training stage~\eqref{eq:training-problem} of our support vector machine. 
For ${\vct{\alpha}}$ fixed, but otherwise arbitrary, we want to compare the corresponding classifier $y_{\vct{\alpha}} \left( \vct{x} \right)$ to the true data label $y \in \left\{ \pm 1 \right\}$. 
Elementary reformulations then allow us to re-express the failure probability as
\begin{equation}
    \mathrm{Pr}_{(\vct{x}, y) \sim \mathcal{D}} \left[ y_{{\vct{\alpha}}} (\vct{x}) \neq y \right]
= \mathrm{Pr}_{(\vct{x}, y) \sim \mathcal{D}} \left[ \mathrm{sign} \left( \langle {\vct{\alpha}}, \phi(\vct{x}) \rangle_{\mathcal{F}}\right) \neq y \right]
= \mathrm{Pr}_{(\vct{x}, y) \sim \mathcal{D}} \left[ y \langle {\vct{\alpha}}, \phi(\vct{x}) \rangle_{\mathcal{F}} < 0 \right],
\end{equation}
because the sign is negative if and only if the number itself is. 
Next, we rewrite this probability as the expectation value of the associated indicator function and use $\mathbf{1} \left\{ z \leq 0 \right\} \leq \max \left\{0,1-z \right\}$ for all $z \in \mathbb{R}$ to obtain
\begin{equation}
\mathrm{Pr}_{(\vct{x}, y) \sim \mathcal{D}} \left[ y \langle {\vct{\alpha}}, \phi(\vct{x}) \rangle_{\mathcal{F}} < 0 \right]
= \E _{(\vct{x}, y) \sim \mathcal{D}} \left[ \vct{1} \left\{y \langle {\vct{\alpha}}, \phi(\vct{x}_\ell) \rangle_{\mathcal{F}} < 0  \right\} \right]
\leq \E _{(\vct{x}, y) \sim \mathcal{D}} \left[\max \left\{0,1- y \langle {\vct{\alpha}}, \phi (\vct{x}) \rangle_{\mathcal{F}} \right\} \right].
\end{equation}
This upper bound is the expected value of a certain hinge loss function
\begin{equation}
g_{{\vct{\alpha}}} \left( \vct{x},y \right) = \max \left\{0,1-y\langle {\vct{\alpha}}, \phi (\vct{x}) \rangle_{\mathcal{F}} \right\}\quad \text{with} \quad \langle {\vct{\alpha}},{\vct{\alpha}} \rangle_{\mathcal{F}} \leq \Lambda^2.
\end{equation}
The function is a specific element of an entire family, namely
\begin{equation}
    \mathcal{G} =\left\{ g_{\vct{\alpha}} (\cdot,\cdot): \langle {\vct{\alpha}},{\vct{\alpha}}\rangle_{\mathcal{F}} \leq \Lambda^2\right\}: \mathbb{R}^D \times \left\{ \pm 1 \right\} \to \left[0,\infty \right).
\end{equation}
The associated  function values are always nonnegative and bounded. Indeed, the Cauchy-Schwarz inequality in feature space asserts
\begin{align}
g_{\vct{\alpha}} (\vct{x},y) \leq & \left| y \langle {\vct{\alpha}}, \phi (\vct{x}) \rangle_{\mathcal{F}} \right| +1
\leq \sqrt{\langle {\vct{\alpha}},{\vct{\alpha}} \rangle_{\mathcal{F}} \langle \phi (\vct{x}),\phi (\vct{x}) \rangle_{\mathcal{F}}} +1
\leq \Lambda \sqrt{k (\vct{x},\vct{x})} + 1 \leq \Lambda R+1 =: \gamma_{\max}.
\end{align}
We are now in a position to use Theorem~\ref{thm:rademacher}. 
With probability (at least) $1-\delta$,
\begin{align}
\E _{(\vct{x}, y) \sim \mathcal{D}} \left[ g_{\vct{\alpha}} (\vct{x},y) \right]
\leq & \frac{1}{N} \sum_{\ell=1}^N g_{\vct{\alpha}} (\vct{x}_\ell,y_\ell) + 3 \gamma_{\max} \sqrt{\frac{\log (2/\delta)}{2N}}
+ \frac{2}{\sqrt{N}}\E _{\varepsilon}\left[ \sup_{\langle {\vct{\alpha}},{\vct{\alpha}} \rangle_{\mathcal{F}} \leq \Lambda^2} \frac{1}{\sqrt{N}}\sum_{\ell=1}^N \varepsilon_\ell g_{\vct{\alpha}} (\vct{x}_\ell,y_\ell) \right],
\end{align}
is true for \emph{all} dual vectors ${\vct{\alpha}} \in \mathcal{F}^*$ that obey $\langle {\vct{\alpha}},{\vct{\alpha}} \rangle_{\mathcal{F}} \leq \Lambda^2$. Here, $\left(\vct{x}_1,y_1\right),\ldots,\left(\vct{x}_N,y_N\right)\sim \mathcal{D}$ is a randomly sampled (but fixed) collection of labeled data points. We now use $\sqrt{k(\vct{x}_\ell,\vct{x}_\ell)} \leq R$ almost surely to apply Lemma~\ref{lem:svm-width} and control the empirical width term:
\begin{subequations}
\begin{align}
\E _{(\vct{x},y)\sim\mathcal{D}}\left[g_{\vct{\alpha}}(\vct{x},y)\right]&\leq\frac{1}{N}\sum_{\ell=1}^{N}g_{\vct{\alpha}}(\vct{x}_{\ell},y_{\ell})+3(\Lambda R+1)\sqrt{\frac{\log(2/\delta)}{2N}}+\frac{2\Lambda R}{\sqrt{N}}\\&\leq\frac{1}{N}\sum_{\ell=1}^{N}g_{\vct{\alpha}}(\vct{x}_{\ell},y_{\ell})+7(\Lambda R+1)\sqrt{\frac{\log(2/\delta)}{N}}.
\end{align}
\end{subequations}
With probability (at least) $1-\delta$, this bound is valid for all hyperplane vectors ${\vct{\alpha}} \in \mathcal{F}$.
The tightest bound is achieved for minimizing the right hand side. This is precisely what training a support vector machine does, as the first term is precisely the training error that is minimized in the training stage~\eqref{eq:training-problem}.
The optimal solution ${\vct{\alpha}}^\sharp$ to this problem simultaneously produces the actual classifier $y_\sharp (\vct{x})$ on the left hand side and the (minimal) training error on the right hand side.
\end{proof}

\subsection{Kernel functions for classical shadows} \label{sub:shadow-kernel}


We have reviewed the classical shadow formalism in Appendix~\ref{sec:classical-shadows}.
For randomized single-qubit Pauli measurements,  a classical shadow approximates a $n$-qubit state $\rho$ by means of $T$ elementary tensor products.
Each shadow raw data corresponds to a two-dimensional array
\begin{align}
S_T (\rho) = S_T (\rho) =\left\{ |s_i^{(t)} \rangle :\; i \in \{1,\ldots,n\}, t \in \{1,\ldots,T\}\right\} 
\in \left\{ |0 \rangle, |1 \rangle, |+ \rangle, |- \rangle, |\mathrm{i}+\rangle,|\mathrm{i}-\rangle \right\}^{n \times T}
\end{align}
and is combined into an approximator of the state as
\begin{align}
\sigma_T (\rho) = \frac{1}{T} \sum_{t=1}^T \left( 3|s_1^{(t)} \rangle \! \langle s_1^{(t)}|-\mathbb{I} \right) \otimes \cdots \otimes \left( 3 |s_n^{(t)} \rangle \! \langle s_n^{(t)}| -\mathbb{I} \right) 
= \frac{1}{T}\sum_{t=1}^T \sigma_1^{(t)} \otimes \cdots \otimes \sigma_n^{(t)},
\end{align}
where we have introduced the short-hand notation $\sigma_i^{(t)}=3|s_i^{(t)} \rangle \! \langle s_i^{(t)}|-\mathbb{I}$.
For these quantum state representations, we fix parameters $\tau, \gamma >0$ and introduce a suggestive, yet finite-dimensional feature map. 
For large, but finite, integers $D,R >0$ we define
\begin{equation}
\phi^{\text{(finite)}}\left( S_T (\rho) \right)
= \bigoplus_{d=0}^D \sqrt{\frac{\tau^d}{d!}} \Big( \bigoplus_{r=0}^R \sqrt{\frac{1}{r!} \left( \frac{\gamma}{n}\right)^r} \bigoplus_{i_1=1}^r \cdots \bigoplus_{i_r=1}^r \frac{1}{T}\sum_{t=1}^T \mathrm{vec} \left( \sigma_{i_1}^{(t)} \right) \otimes \cdots \otimes \mathrm{vec} \left( \sigma_{i_r}^{(t)} \right) 
\Big)^{\otimes d}, \label{eq:feature-space-linearization}
\end{equation}
Here, $\mathrm{vec}(\cdot)$ denotes an appropriate vectorization operation that maps the real-valued vector space $\mathbb{H}_2$ of Hermitian $2 \times 2$ matrices to $\mathbb{R}^4$ such that the Hilbert-Schmidt inner product is preserved: $\langle \mathrm{vec}(A),\mathrm{vec}(B) \rangle = \mathrm{tr}(AB)$. 

This feature map embeds classical shadows in a very large-dimensional, real-valued feature space $\mathcal{F}^{\text{(finite)}}$. This feature space arises from taking direct sums and tensor products of $\mathrm{vec} (\mathbb{H}_2) \simeq \mathbb{R}^4$. We can extend the standard inner product $\langle \cdot,\cdot \rangle$ on $\mathbb{R}^4$ to this feature space by setting $\langle x_1 \oplus x_2, y_1 \oplus y_2 \rangle = \langle x_1, y_1 \rangle + \langle x_2,y_2 \rangle $ (direct sums), as well as $\langle x_1 \otimes x_2, y_1 \otimes y_2 \rangle = \langle x_1, y_1 \rangle \langle x_2, y_2 \rangle$ (tensor products) and extend these definitions linearly. Doing so equips the feature space $\mathcal{F}^{\text{(finite)}}$ with a well-defined inner product $\langle \cdot, \cdot \rangle_{\mathcal{F}^{\text{(finite)}}}$.
The inner product and feature map induce a kernel function on pairs of classical shadows of equal size $T$:
\begin{subequations}
\begin{align}
& k^{\text{(finite)}}\left( S_T (\rho_1), \tilde{S}_T (\rho_2) \right)
= \left\langle \phi^{\text{(finite)}} \big( S_T (\rho_1) \big), \phi^{\text{(finite)}} \big( \tilde{S}_T (\rho_2) \big) \right\rangle_{\mathcal{F}^{\text{(finite)}}}   \\
=& \sum_{d=0}^D \frac{\tau^d}{d!} \left( \sum_{r=0}^R \frac{1}{r!} \left( \frac{\gamma}{n}\right)^r \sum_{i_1=1}^n \cdots \sum_{i_r=1}^n \frac{1}{T^2} \sum_{t,t'=1}^T \left\langle \mathrm{vec} \left( \sigma_{i_1}^{(t)} \right), \mathrm{vec} \left( \tilde{\sigma}_{i_1}^{(t')}\right) \right\rangle \cdots \left\langle \mathrm{vec} \left( \sigma_{i_r}^{(t)} \right), \mathrm{vec} \left( \tilde{\sigma}_{i_r}^{(t')}\right) \right\rangle \right)^d  \\
=& \sum_{d=0}^D \frac{\tau^d}{d!} \left( \sum_{r=0}^R \frac{1}{r!} \left( \frac{\gamma}{n}\right)^r \sum_{i_1=1}^n \cdots \sum_{i_r=1}^n \mathrm{tr} \left( \left( \frac{1}{T}\sum_{t=1}^T \sigma_{i_1}^{(t)} \otimes \cdots \otimes \sigma_{i_r}^{(t)} \right) \right) 
\left( \frac{1}{T}\sum_{t'=1}^T \tilde{\sigma}_{i_1}^{(t')} \otimes \cdots \otimes \tilde{\sigma}_{i_r}^{(t')} \right) \right)^d \\
=& \sum_{d=0}^D \frac{1}{d!} \left( \frac{\tau}{T^2} \sum_{t,t'=1}^T \sum_{r=0}^R \frac{1}{r!} \left( \frac{\gamma}{n}\right)^r \sum_{i=1}^n \cdots \sum_{i_r=1}^r \mathrm{tr} \left( \sigma_{i_1}^{(t)} \tilde{\sigma}_{i_1}^{(t')} \right) \cdots \mathrm{tr} \left( \sigma_{i_r}^{(t)} \tilde{\sigma}_{i_r}^{(t')}\right) \right)^d \\
=& \sum_{d=0}^D \frac{1}{d!} \left( \frac{\tau}{T^2} \sum_{t,t'=1}^T \sum_{r=0}^R \frac{1}{r!} \left( \frac{\gamma}{n} \sum_{i=1}^n  \mathrm{tr} \left( \sigma_i^{(t)} \tilde{\sigma}_i^{(t')} \right)\right)^r \right)^d.
\end{align}
\end{subequations}
This kernel function sill looks somewhat complicated, but it simplifies considerably if we first take $R \to \infty$ and then $D \to \infty$:
\begin{subequations}
\begin{align}
k^{\mathrm{(shadow)}} \left(S_T (\rho_1),\tilde{S}_T (\rho_2) \right) := &
\lim_{D \to \infty} \lim_{R \to \infty} k^{\text{(finite)}}  \left(S_T (\rho_1),\tilde{S}_T (\rho_2) \right)  \label{eq:shadow-kernel} \\
=& 
\lim_{D \to \infty} \sum_{d=0}^D \frac{1}{d!} \left( \frac{\tau}{T^2}\sum_{t,t'=1}^T  \lim_{R \to \infty} \sum_{r=0}^R \frac{1}{r!} \left( \frac{\gamma}{n}\sum_{i=1}^n \mathrm{tr} \left( \sigma_i^{(t)} \tilde{\sigma}_i^{(t')} \right)\right)^r \right)^d \\
=& \exp \left( \frac{\tau}{T^2} \sum_{t,t'=1}^T \exp \left( \frac{\gamma}{n}\sum_{i=1}^n \mathrm{tr} \left( \sigma_i^{(t)} \tilde{\sigma}_i^{(t')} \right) \right) \right)  
\end{align}
\end{subequations}
We call this kernel function a \emph{shadow kernel}. In contrast to its finite approximations, this kernel function can be computed very efficiently. Trace inner products between single-qubit shadow constituents assume one out of 3 values only:
\begin{align}
\mathrm{tr} \left( \sigma_i^{(t)} \tilde{\sigma}_i^{(t)} \right) =& \mathrm{tr} \left( (3 |s_i^{(t)}\rangle \! \langle s_i^{(t)}| -\mathbb{I} ) (3 |\tilde{s}_i^{(t)}\rangle \! \langle \tilde{s}_i^{(t)}|-\mathbb{I} ) \right)
= 9 \left| \langle s_i^{(t)}| \tilde{s}_i^{(t)} \rangle \right|^2 -4 \in \left\{-4,1/2,5 \right\}.
\label{eq:shadow-inner-products}
\end{align}
And we need to compute exactly $nT^2$ of them to unambiguously characterize the shadow kernel~\eqref{eq:shadow-kernel}. The total cost for evaluating shadow kernels also amounts to
\begin{align}
\mathcal{O} \left( n T^2 \right) && \text{(shadow kernel evaluation cost)} \label{eq:shadow-kernel-evaluation-cost}
\end{align}
arithmetic operations. As long as $T$ is not too large, this is extremely efficient, given that we combine classical approximations of $n$-qubit quantum states $\rho_1,\rho_2$ which way well have $(4^n-1)$ degrees of freedom. Eq.~\eqref{eq:shadow-inner-products} also ensures that shadow kernels remain bounded functions:
\begin{equation}
0 \leq k^{\mathrm{(shadow)}} \left( S_T (\rho_1),\tilde{S}_T (\rho_2) \right)
\leq 
\exp \left( \tau \exp \left( 5 \gamma \right) \right), \label{eq:kernel-upper-bound}
\end{equation}
because exponential functions are nonnegative and monotonic.

While easy to evaluate and conceptually appealing, the shadow kernel does have its downsides. By construction, the associated feature space is not finite-dimensional anymore. This can complicate a thorough analysis of support vector machines substantially. In particular, it is a priori not clear if powerful results, like Theorem~\ref{thm:svm-prediction}, cover the shadow kernel as well. 
Fortunately, we can bypass such mathematical subtleties by approximating $k^{\mathrm{(shadow)}}(\cdot,\cdot)$ with $k^{\text{(finite)}}(\cdot,\cdot)$, where $D$ and $R$ are large, but finite, numbers. 
This incurs an additional approximation error, but allows us to formulate theoretical prediction and training guarantees exclusively for finite-dimensional feature spaces.
What is more, elementary approximation results from calculus ensure that we can make this additional approximation error arbitrarily small by making the cutoffs sufficiently large. 
Taylor's approximation theorem, for instance, shows that $D=\mathrm{e}^2 \tau \exp (5 \gamma) + \log (1/\eta) -1$, as well as $R=5 \mathrm{e}^2 \gamma + \tau \exp (5 \gamma) + \log (\tau/\eta)-1$ 
ensure 
\begin{equation}
\left|k^{\mathrm{(shadow)}} \left( S_T (\rho_1),\tilde{S}_T (\rho_2)\right)-k^{\text{(finite)}}\left( S_T (\rho_1),\tilde{S}_T (\rho_2) \right) \right| \leq 2 \eta
\end{equation}
for all pairs of classical shadows with compatible size $T$.
Properly tuning $\gamma$ and $\tau$ would yield better prediction performance in practice. Nevertheless, for simplicity, we will assume $\gamma = \tau = 1$ in the following theoretical analysis.

Finite-dimensional feature space approximations also allow us to highlight the expressiveness behind the shadow kernel~\eqref{eq:shadow-kernel}. 
It describes (the limit of) a feature map that extracts \emph{all} tensor powers of \emph{all}
subsystem operators $X_A =\mathrm{tr}_{\neg A} (X) \in \mathbb{H}_2^{\otimes |A|}$, where $A \subset \left[n\right]=\left\{1,\ldots,n\right\}$.
In particular, any function that can be written as a finite power series, of degree at most $d_p$, in reduced subsystem operators, of size at most $r$, becomes a \emph{linear} function in feature space, represented by the dual vector $\vct{\alpha}_f$:
\begin{subequations}
\begin{align}
f\left( S_T (\rho) \right)&= \sum_{d=0}^{d_p} \frac{1}{d!} \sum_{A_1 \ldots A_d \subset \{1,\ldots,n\}, |A_i| \leq r} \mathrm{tr} \left( O_{A_1,\ldots,A_d} \mathrm{tr}_{\neg A_1} \left( \sigma_T (\rho) \right) \otimes \cdots \otimes \mathrm{tr}_{\neg A_d} \left( \sigma_T (\rho) \right) \right) \\
&= \langle \vct{\alpha}_f, \phi^{\text{(finite)}} (S_T (\rho))\rangle_{\mathcal{F}^{\text{(finite)}}}, \label{eq:feature-linearization}
\end{align}
\end{subequations}
provided that $d_p \leq D, r \leq R$.
The (extended) Euclidean norm of $\vct{\alpha}_f$ is also bounded. Use Eq.~\eqref{eq:feature-space-linearization} (with tuning parameters $\gamma,\tau = 1$) to compute
\begin{subequations}
\label{eq:separator-size}
\begin{align}
\langle \vct{\alpha}_f, \vct{\alpha}_f \rangle_{\mathcal{F}^{\text{(finite)}}} 
& \leq  \sum_{d=0}^{d_p}  \frac{\left(r! n^r\right)^d}{d!}
\sum_{A_1,\ldots,A_d \subset \{1,\ldots,n\}, |A_i| \leq r} \mathrm{tr} \left( O_{A_1,\ldots,A_d}^2 \right) \\
& \leq \sum_{d=0}^{d_p}  \frac{\left(r! n^r\right)^d}{d!} \sum_{A_1,\ldots,A_d \subset \{1,\ldots,n\}, |A_i| \leq r}  2^{rd} \| O_{A_1,\ldots,A_d} \|_\infty^2 \\
& \leq \left(2nr\right)^{r {d_p}} \max_{\substack{d \leq d_p, A_1, \ldots, A_d\\ \subset \{1, \ldots, n\}, |A_i| \leq r}} \norm{O_{A_1, \ldots, A_d}}_\infty \,\,\, \sum_{d=0}^{d_p} \frac{1}{d!}
\sum_{A_1,\ldots,A_d \subset \{1,\ldots,n\}, |A_i| \leq r} \| O_{A_1,\ldots,A_d} \|_\infty \\ 
& \leq \left(2nr\right)^{r {d_p}} d_p^{d_p} \left( \sum_{d=0}^{d_p} \frac{1}{d!}
\sum_{A_1,\ldots,A_d \subset \{1,\ldots,n\}, |A_i| \leq r} \| O_{A_1,\ldots,A_d} \|_\infty\right)^2
\end{align}
\end{subequations}
Here, we have used the fundamental Schatten-$p$ norm relation $\|X \|_2 \leq \sqrt{\mathrm{dim}(X)} \|X \|_\infty$, as well as the assumption that each $O_{A_1,\ldots,A_d}$ is supported on a total tensor product space with dimension $2^{rd}$ (a tensor product of  $d$ subsystems comprised of at most $r$ qubits each).
The second to last inequality follow from using $\sum_i x_i^2 \leq \max_i |x_i| \sum_i |x_i|$, and Stirling's formula.
The final simplifications uses Stirling's formula again as well as the fact that $\sum_i |x_i| \geq \max_i |x_i|$.

\subsection{Physical assumptions about classifying quantum phases of matter} \label{sec:phases-assumption}

We want to learn how to classify two phases of $n$-qubit states: either $\rho$ belongs to phase $A$ ($y(\rho)=+1$) or $\rho$ belongs to phase $B$ ($y(\rho) = -1$). We assume that we have access to labeled classical shadows: $\left\{ \big(S_T (\rho_\ell), y(\rho_\ell) \big):\; \ell \in \{1,\ldots,N\} \right\}$, where each $S_T (\rho_\ell)$  is classical shadow data obtained from performing $T$ randomized single-qubit measurements on independent copies of $\rho_\ell$.
We can use this raw data to form classical representations $\sigma_T (\rho_\ell)$  of the underlying quantum state $\rho_\ell$, see Eq.~\eqref{eq:sigma-T-shadow}.
The number $T$ determines the resolution of these approximations. Note that $\sigma_T (\rho_\ell) \approx \rho_\ell$ can only become exact for $T \geq \exp \left( \Omega (n) \right)$ \cite{guta2020fast,haah2017sample}. This would be far too costly for experimental implementations and efficient data processing. For instance, recall from Eq.~\eqref{eq:shadow-kernel-evaluation-cost} that a single shadow kernel evaluation scales quadratically in $T$. In this section, we show  that we can choose much coarser resolutions if the underlying phase can be classified by a nice analytic function on reduced density matrices.

\begin{assumption}[well-conditioned phase separation] \label{as:phase-separation}
Consider two phases among $n$-qubit states. For $\epsilon >0$, we assume that there exists a 
function $f$ on reduced $r$-body density matrices $\rho_{A} = \mathrm{tr}_{\neg A} (\rho)$ that can distinguish the two phases in question. In particular,
\begin{subequations}
\label{eq:classification-assumption}
\begin{align}
f(\rho) &= f\left( \left\{\rho_A:\; A \subset\{1,\ldots,n\},|A| \leq r\right\} \right) \quad \text{satisfies} \\
f(\rho) &\quad \begin{cases} >  +1&  \text{for all $\rho$ that belong to phase $A$ ($y(\rho)=+1$)}, \\
<  -1 &\text{for all $\rho$ that belong to phase $B$ ($y(\rho)=-1$)}.
\end{cases}
\end{align}
\end{subequations}
Moreover, we assume that $f(\rho)$ can be approximated by a truncated power series
\begin{equation}
    f^{(d_p)}(\rho) = \sum_{d=0}^{{d_p}} \frac{1}{d!} \sum_{A_1,\ldots,A_d\subset\{1,\ldots,n\},|A_i| \leq r} \mathrm{tr} \left( O_{A_1,\ldots,A_d} \rho_{A_1} \otimes \cdots \otimes \rho_{A_d}\right), \label{eq:polynomial-approximation}
\end{equation}
up to constant accuracy: $\left|f(\rho) - f^{(d_p)}(\rho)\right| \leq 0.25$ for all $n$-qubit quantum states $\rho$.
We refer to $d_p$ as the truncation degree and define the normalization constant 
\begin{equation}
    C = \sum_{d=0}^{d_p} \frac{1}{d!}  \Big(\sum_{A_1,\ldots,A_d \subset\{1,\ldots,n\},|A_i| \leq r} \| O_{A_1,\ldots,A_d}\|_\infty \Big).
\end{equation}
\end{assumption}

We don't need to know the normalization constant exactly. An upper bound is fully adequate for the theoretical analysis presented in this section.

Morally, the second part of Assumption~\ref{as:phase-separation} requires that the phase classication function can be well-approximated by a degree-$d_p$-polynomial in reduced density matrices. 
The actual formulation is general enough to encompass most physically relevant functions. 
Let us illustrate this by means of three popular examples. 

\paragraph{Subsystem purity:} Fix a subsystem $A \subset \left\{1,\ldots,n\right\}$ comprised of $|A|=r$ qubits and let $\rho_A = \mathrm{tr}_{\neg A}(\rho)$ be the associated $r$-body density matrix. The subsystem purity $f(\rho)=\mathrm{tr}(\rho_A^2)$ is a quadratic polynomial in this reduced density matrix. We can rewrite this as $f^{(2)}(\rho) = \mathrm{tr}(S_A \rho_A \otimes \rho_A)$, where $S_A$ denotes the swap operator between two copies of the subsystem $A$. This reformulation is also an \emph{exact} approximation of $f(\rho)$ with degree $d_p=2$ and normalization constant $C=\frac{1}{2!} \|S_A \|_\infty = \frac{1}{2}$.
These arguments readily extend to averages of multiple subsystem purities.

\paragraph{Subsystem R\'enyi entropy:} Let us consider the subsystem R\'enyi entropy of order two $H_2 (\rho_A) = - \log \left( \mathrm{tr}(\rho_A^2)\right)$ (the argument will generalize straightforwardly to higher order entropies). This function is closely related to the subsystem purity, but also features a logarithm. And, although the logarithm is \emph{not} a polynomial, $-\log (1-x)$ can be accurately approximated by the truncated Mercator series. A crude, but sufficient, bound ensures
\begin{align}
l^{(d_p)}(x) = \sum_{d=1}^{d_p} \frac{1}{d}x^d
\quad \text{obeys} \quad \left| l^{(d_p)}(x)- \log (1-x)\right| \leq x^{d_p}  \log \left(1/(1-x)\right)
\quad \text{for $x \in (-1,1)$.}
\end{align}
We can now approximate $H_2 (\rho_A)=-\log \left( 1- (1-\mathrm{tr}(\rho_A^2))\right)$ by $l^{(d_p)}(1-\mathrm{tr}(\rho_A^2))$. Subsystem purities necessarily obey $\mathrm{tr}(\rho_A^2) \geq 2^{-|A|}=2^{-r}$.
This allows us to conclude
\begin{align}
\left| l^{(d_p)}\left(1- \mathrm{tr}(\rho_A^2) \right) - H_2 (\rho_A) \right| 
\leq \left(1-2^{-r}\right)^{d_p} r \log (2)
\leq \log (2) r \exp \left( - d_p/2^r\right)
\end{align}
which drops beneath $0.25$ if we set $d_p = \log ( 4\log(2) r) 2^r=\mathcal{O} \left( \log (r) 2^r \right)$. This degree scales exponentially in the subsystem size $r$, but is independent of total dimension. 
We can also use $1=\mathrm{tr}(\rho_A)^2=\mathrm{tr} \left( \mathbb{I}_A^{\otimes 2} \rho_A^{\otimes 2} \right)$
and $\mathrm{tr}(X) \mathrm{tr}(Y)=\mathrm{tr}(X \otimes Y)$
to bring this polynomial approximation onto the form advertised in Eq.~\eqref{eq:polynomial-approximation}. Indeed,
\begin{subequations}
\begin{align}
l^{(d_p)}\left( 1- \mathrm{tr}(\rho_A^2) \right) =& l^{(d_p)} \left( \mathrm{tr} \left( (\mathbb{I}_A^{\otimes 2}- S_A) \rho_A^{\otimes 2}\right) \right)
= \sum_{d=1}^{d_p} \frac{1}{d!} \mathrm{tr} \left( (d-1)! \left( \mathbb{I}_A^{\otimes 2} - S_A \right)^{\otimes d} \rho_A^{\otimes 2d} \right) \quad \text{and} \\
C =& \sum_{d=1}^{d_p} \frac{1}{d!} \left\| (d-1)! (\mathbb{I}_A^{\otimes 2} - S_A )^{\otimes d} \right\|_\infty = \sum_{d=1}^{d_p} \frac{1}{d} \left\| \mathbb{I}_A^{\otimes 2}-S_A \right\|_\infty^d= \sum_{d=1}^{d_p} \frac{1}{d}
\approx \log (d_p).
\end{align}
\end{subequations}
This analysis readily extends to higher order R\'enyi entropies, as well as averages over multiple subsystems.

\paragraph{Entanglement entropy:} This is where things start to get somewhat interesting, because the entanglement (von Neumann) entropy $H(\rho_A) = -\mathrm{tr} \left( \rho_A \log (\rho_A) \right) \in [0,r \log (2)]$ of a $r$-body subsystem is notoriously difficult to accurately approximate with a polynomial \cite{Fawzi2019}. Fortunately, Assumption~\ref{as:phase-separation} does not require an accurate approximation -- a constant error of size $1/4$ is fine. To achieve this goal, we make the following polynomial ansatz in the reduced density matrix $\rho_A$:
\begin{align}
H^{(d_p)}(\rho_A) = - \Tr\left( \left(\rho_A - \mathbb{I}_A\right) + \sum_{k=2}^{d_p} \frac{\left(\mathbb{I}_A - \rho_A\right)^k}{k(k-1)} \right)
\end{align}
Let $\lambda_i$ denote the eigenvalues of a subsystem density matrix $\rho_A$ and note that there are $2^r$ eigenvalues in $\rho_A$. We can rewrite the entanglement entropy and the polynomial ansatz as
\begin{subequations}
\begin{align}
H(\rho_A) &= - \sum_{i=1}^{2^r} \lambda_i \log (\lambda_i) \quad \text{and} \\
H^{(d_p)}(\rho_A) & = - \sum_{i=1}^{2^r} \left( (\lambda_i - 1) + \sum_{k=2}^{d_p} \frac{\left(1 - \lambda_i \right)^k}{k(k-1)} \right),
\end{align}
\end{subequations}
respectively.
Using Taylor's theorem in the interval $[0, 1]$, we have
\begin{equation}\label{eq:xlogx-taylor}
    x \log (x) = (x - 1) + \left( \sum_{k=2}^{\infty} \frac{(1-x)^k}{k(k-1)} \right).
\end{equation}
Note that at $x = 0$, $x\log x = 0$ and the infinite sum comprising the second term on the right hand side also converges to $1$. 
This ensures that the above equality is valid for the closed interval $[0,1]$.
We shall also use the following identity
\begin{equation} \label{eq:partialsum}
    \sum_{k=2}^{n} \frac{1}{k(k-1)} = 1 - \frac{1}{n},
\end{equation}
which remains valid even in the limit $n \to \infty$.
We can combine Eq.~\eqref{eq:xlogx-taylor}~and~\eqref{eq:partialsum} to obtain an approximation error for our polynomial ansatz function.
For all $x \in [0, 1]$, we have
\begin{equation}
    \left| x \log (x) - \left( (x - 1) + \left( \sum_{k=2}^{d_p} \frac{(1-x)^k}{k(k-1)} \right) \right)\right| \leq \sum_{k=d_p + 1}^{\infty} \frac{(1-x)^k}{k(k-1)} \leq \sum_{k=d_p + 1}^{\infty} \frac{1}{k(k-1)} = \frac{1}{d_p}.
\end{equation}
This allows us to bound the approximation error for each individual eigenvalue $\lambda_i \in [0,1]$ of $\rho_A$. There are in total $2^r$ eigenvalues and a triangle inequality asserts
\begin{align}
 |H(\rho_A) - H^{(d_p)}(\rho_A)| 
 \leq \sum_{i=1}^{2^r} \left| \lambda_i \log (\lambda_i)- \left( (\lambda_i-1) + \left( \sum_{k=2}^{d_p} \frac{(1-\lambda_i)^k}{k(k-1)}\right) \right) \right| \leq \frac{2^d}{d_p}.
\end{align}
By choosing $d_p = 2^{r+2}$, we can approximate the entanglement entropy in $r$-body subsystem by a polynomial function.
As long as the subsystem size $r$ is a constant independent of total system size $n$, the polynomial approximation degree $d_p$ is also a constant. And it is not hard to check that the same is true for the normalization constant $C$.
This analysis readily extends to averages of multiple entanglement entropies.

\subsection{Training with shadow kernels} \label{sec:train-shadowkernel}

We are now ready to dive into the main results of this section: converting Assumption~\ref{as:phase-separation} into a statement about classical shadows and their expressiveness when it comes to training a support vector machine.
Our measure of similarity is the \emph{shadow kernel}~\eqref{eq:shadow-kernel} evaluated on classical shadows. The kernel matrix is
\begin{equation}
\left[\mtx{K}\right]_{\ell \ell'}= k^{\mathrm{(shadow)}} \left( S_T(\rho_\ell),S_T (\rho_{\ell'}) \right) \quad \text{for $\ell, \ell' \in \{1,\ldots,N\}$},
\end{equation}
and implicitly specifies the feature map, as well as the nonlinear geometry with respect to which we want to find classifiers for phases.
We begin by approximating the true classifier, given as a nonlinear function $f(\rho)$ in Assumption~\ref{as:phase-separation}, by a finite power series $f^{(d_p)}(\rho)$ with degree-$d_p$.
We will then use $f^{(d_p)}(\rho)$ as an approximate phase classifier.
Recalling Eq.~\eqref{eq:feature-linearization}, a finite power series $f^{(d_p)}(S_T (\rho))$ is linear in feature space, with its corresponding dual vector $\vct{\alpha}_f$ defining a candidate hyperplane for separating the two phases.
To complete the connection to the support vector machines from Section~\ref{sec:trainSVM}, we must ensure that $f^{(d_p)}(S_T (\rho))$ does not differ substantially from the approximate phase classifier $f^{(d_p)}(\rho)$ from Assumption~\ref{as:phase-separation}. This is the content of the following auxiliary statement.

\begin{lemma} \label{lem:function-approximation-shadows}
Suppose that Assumption~\ref{as:phase-separation} is valid for a function on reduced $r$-body density matrices with the two constants $C \geq 1$ and ${d_p} \in \mathbb{N}$.
For any $0 < \epsilon < 1$,  classical shadows of size 
\begin{equation}
T= (32/3)d_p^2 C^2 12^r \left( r \left( \log (n) + \log (12) \right) + \log (1/\delta) \right)/\epsilon^2
\end{equation}
suffice to $\epsilon$-approximate $f^{(d_p)}(\rho)$ with high probability. 
In particular, for any density matrix $\rho \in \mathbb{H}_2^{\otimes n}$,
\begin{equation}
\left| f^{(d_p)}(S_T (\rho)) - f^{(d_p)}(\rho) \right| \leq \epsilon    
\end{equation}
with probability at least $1-\delta$ (over the randomized measurement settings and outcomes producing $S_T (\rho)$).
\end{lemma}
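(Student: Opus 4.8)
The plan is to bound the difference $f^{(d_p)}(S_T(\rho)) - f^{(d_p)}(\rho)$ by controlling, term by term, how well the reduced density matrices $\mathrm{tr}_{\neg A}(\sigma_T(\rho))$ approximate the true $\mathrm{tr}_{\neg A}(\rho)$. The crucial input is Lemma~\ref{lem:subsystem-approximation}, which guarantees that if $T = \mathcal{O}(r\,12^r \log(n/\delta)/\tilde\epsilon^2)$, then with probability at least $1-\delta$ we have $\|\mathrm{tr}_{\neg A}(\sigma_T(\rho)) - \mathrm{tr}_{\neg A}(\rho)\|_1 \leq \tilde\epsilon$ \emph{simultaneously} for all subsystems $A$ of size at most $r$. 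I would invoke this once, with a shadow size calibrated so that $\tilde\epsilon$ is small enough to push the accumulated error below $\epsilon$, and then work deterministically on the event that this uniform bound holds.

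First I would recall that $f^{(d_p)}(\rho)$ is a sum over degrees $d=0,\ldots,d_p$ of terms $\frac{1}{d!}\sum_{A_1,\ldots,A_d}\mathrm{tr}(O_{A_1,\ldots,A_d}\,\rho_{A_1}\otimes\cdots\otimes\rho_{A_d})$, and $f^{(d_p)}(S_T(\rho))$ is the same expression with each $\rho_{A_i}$ replaced by $\mathrm{tr}_{\neg A_i}(\sigma_T(\rho))$. The key algebraic step is a telescoping identity for a product of $d$ tensor factors: writing $\hat\rho_{A_i} = \mathrm{tr}_{\neg A_i}(\sigma_T(\rho))$, I would expand
\begin{equation}
\hat\rho_{A_1}\otimes\cdots\otimes\hat\rho_{A_d} - \rho_{A_1}\otimes\cdots\otimes\rho_{A_d} = \sum_{j=1}^d \hat\rho_{A_1}\otimes\cdots\otimes\hat\rho_{A_{j-1}}\otimes(\hat\rho_{A_j}-\rho_{A_j})\otimes\rho_{A_{j+1}}\otimes\cdots\otimes\rho_{A_d}.
\end{equation}
Taking the trace against $O_{A_1,\ldots,A_d}$ and applying the matrix Hölder inequality $|\mathrm{tr}(OX)| \leq \|O\|_\infty\|X\|_1$ together with the tensor-product factorization of the trace norm, each summand is bounded by $\|O_{A_1,\ldots,A_d}\|_\infty \cdot \tilde\epsilon$, where I use $\|\rho_{A_i}\|_1 = 1$ and $\|\hat\rho_{A_i}\|_1 \leq 1 + \tilde\epsilon \leq 2$ for the surviving factors, and $\|\hat\rho_{A_j}-\rho_{A_j}\|_1 \leq \tilde\epsilon$ for the difference factor. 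Summing over $j$ gives a factor of $d \leq d_p$, and the product-of-norms overhead is absorbed into constants.

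Collecting everything, the total error is at most
\begin{equation}
\big|f^{(d_p)}(S_T(\rho)) - f^{(d_p)}(\rho)\big| \leq \tilde\epsilon\, d_p \sum_{d=0}^{d_p}\frac{1}{d!}\sum_{A_1,\ldots,A_d}\|O_{A_1,\ldots,A_d}\|_\infty \cdot (\text{const})^d = \mathcal{O}(\tilde\epsilon\, d_p\, C),
\end{equation}
where $C$ is exactly the normalization constant from Assumption~\ref{as:phase-separation}. Setting $\tilde\epsilon = \epsilon/(\text{const}\cdot d_p C)$ and substituting into the shadow-size formula of Lemma~\ref{lem:subsystem-approximation} yields the claimed $T = \mathcal{O}(d_p^2 C^2 12^r \log(n/\delta)/\epsilon^2)$, with the $d_p^2 C^2$ arising from $1/\tilde\epsilon^2$. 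The main obstacle I anticipate is bookkeeping the overhead factors cleanly: the surviving $\rho_{A_i}$ factors contribute $\|\cdot\|_1 = 1$, but the $\hat\rho_{A_i}$ factors carry $\|\cdot\|_1 \leq 2$, so a naive bound gives $2^{d_p}$ rather than a benign constant. I would handle this by noting that in the telescoping sum, each summand has at most $d-1$ hatted surviving factors, contributing at most $2^{d_p}$, which is dominated provided the normalization constant $C$ already encodes the correct combinatorial weighting—so I must be careful whether this exponential factor is genuinely absorbed by $C$'s definition or whether the constant in $T$ needs an extra $d_p$-dependent adjustment. Verifying that the stated constant $(32/3)d_p^2 C^2$ is consistent (rather than, say, requiring $(\text{const})^{d_p}$) is the delicate point, and I would resolve it by a tighter accounting that reuses the $\|O_{A_1,\ldots,A_d}\|_\infty$ weights already present in $C$.
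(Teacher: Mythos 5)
Your proposal matches the paper's proof: the same telescoping decomposition of the tensor-product difference, the same matrix H\"older bound, and the same single invocation of Lemma~\ref{lem:subsystem-approximation} with accuracy $\tilde\epsilon = \Theta(\epsilon/(d_p C))$. The ``delicate point'' you flag about a potential $2^{d_p}$ overhead from the hatted surviving factors dissolves once you keep the accumulated bound in the exact form $(1+\tilde\epsilon)^d - 1$ instead of crudely replacing $1+\tilde\epsilon$ by $2$: with $\tilde\epsilon = \epsilon/(2Cd_p)$ one gets $(1+\tilde\epsilon)^{d_p} - 1 \leq e^{\epsilon/(2C)} - 1 \leq \epsilon/C$, so the total error is at most $C \cdot \epsilon/C = \epsilon$ with no exponential factor and no adjustment to $C$ or to the stated constant in $T$.
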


A proof can be found at the end of this subsection. With high probability, this statement ensures that existence of a well-conditioned phase separation implies the existence of a separating hyperplane in shadow feature space. 
This, in turn, is enough to ensure that the SVM training stage can be executed perfectly: solving the training problem~\eqref{eq:training-problem} efficiently yields a separating hyperplane parametrization $\vct{\alpha}_\sharp$ that (1) lies in the subspace $\mathbb{R}^N$ of $\mathcal{F}^{\text{(shadow)}}$ spanned by the $N$ training vectors, and (2) performs at least as well as $\vct{\alpha}_f$. Since we are guaranteed that $\vct{\alpha}_f$ separates training data perfectly and achieves zero training error, $\vct{\alpha}_\sharp$ must be at least as good: $\mathrm{E}_{\mathrm{tr}}(\vct{\alpha}_\sharp) \leq \mathrm{E}_{\mathrm{tr}} (\vct{\alpha}_f) =0$ with high probability.
The main result of this section formalizes this observation.

\begin{proposition} \label{prop:zero-training-error}
Suppose that Assumption~\ref{as:phase-separation} is valid for some function on reduced $r$-body density matrices with normalization constant $C$ and truncation degree $d_p$.
Then, for $\delta \in (0,1)$, a (joint) classical shadow size
$T = (512/3)d_p^2 C^2 12^r \left( r \left( \log (n) + \log (12) \right) + \log (N/\delta) \right)$
ensures that we can achieve zero training error when solving~\eqref{eq:training-problem} with squared margin constant
$\Lambda^2 = 4 \left( 2rn \right)^{r d_p} d_p^{d_p} C^2$.
\end{proposition}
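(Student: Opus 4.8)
The plan is to exhibit an explicit feasible point of the training problem~\eqref{eq:training-problem} that already attains zero hinge loss on every training sample, and then to invoke optimality of the SVM solution. The natural candidate is built from the approximate classifier furnished by Assumption~\ref{as:phase-separation}. I work throughout in the finite-dimensional feature space $\mathcal{F}^{\text{(finite)}}$ with cutoffs $D \geq d_p$ and $R \geq r$, so that the degree-$d_p$, range-$r$ truncated power series $f^{(d_p)}$ is exactly a linear function in feature space: Eq.~\eqref{eq:feature-linearization} provides a dual vector $\vct{\alpha}_f$ with $f^{(d_p)}(S_T(\rho)) = \langle \vct{\alpha}_f, \phi^{\text{(finite)}}(S_T(\rho))\rangle_{\mathcal{F}^{\text{(finite)}}}$, and Eq.~\eqref{eq:separator-size} bounds its length by $\langle \vct{\alpha}_f, \vct{\alpha}_f\rangle_{\mathcal{F}^{\text{(finite)}}} \leq (2nr)^{rd_p} d_p^{d_p} C^2$. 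By the representer theorem we may further assume $\vct{\alpha}_f$ lies in the span of the training feature vectors, since projecting onto this span preserves all training-point inner products and cannot increase the norm.

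First I would convert the population-level separation into a constant margin on the idealized states. Assumption~\ref{as:phase-separation} gives $f(\rho) > 1$ on phase $A$ and $f(\rho) < -1$ on phase $B$, together with the uniform bound $|f(\rho) - f^{(d_p)}(\rho)| \leq 0.25$. Combining these yields $y(\rho)\, f^{(d_p)}(\rho) > 0.75$ for every $n$-qubit state $\rho$, so the truncated classifier already separates the two phases with a constant margin on the shadow-free features. Next I would control the discrepancy between these idealized features and the shadow features $f^{(d_p)}(S_T(\rho_\ell))$ actually seen during training. Applying Lemma~\ref{lem:function-approximation-shadows} with accuracy $\epsilon = 1/4$ and confidence parameter $\delta/N$ to each training state, and then taking a union bound over the $N$ samples, guarantees $|f^{(d_p)}(S_T(\rho_\ell)) - f^{(d_p)}(\rho_\ell)| \leq 1/4$ simultaneously for all $\ell$ with probability at least $1-\delta$.

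Substituting $\epsilon = 1/4$ (hence $1/\epsilon^2 = 16$) and the union-bound confidence $\log(N/\delta)$ into the shadow size of Lemma~\ref{lem:function-approximation-shadows} is precisely what produces the advertised $T = (512/3)\, d_p^2 C^2 12^r \big( r(\log n + \log 12) + \log(N/\delta) \big)$; this accounting is where the stated constants get pinned down. Combining the two constant-error bounds gives $y_\ell\, f^{(d_p)}(S_T(\rho_\ell)) > 0.75 - 0.25 = 0.5$ for every training point, i.e. $y_\ell \langle \vct{\alpha}_f, \phi^{\text{(finite)}}(S_T(\rho_\ell))\rangle_{\mathcal{F}^{\text{(finite)}}} > 1/2$. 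Rescaling to $\vct{\alpha} = 2\vct{\alpha}_f$ promotes this to $y_\ell \langle \vct{\alpha}, \phi^{\text{(finite)}}(S_T(\rho_\ell))\rangle_{\mathcal{F}^{\text{(finite)}}} > 1$ for all $\ell$, so every hinge term $\max\{0,\, 1 - y_\ell \langle \vct{\alpha}, \phi^{\text{(finite)}}(S_T(\rho_\ell))\rangle\}$ vanishes and $\mathrm{E}_{\mathrm{tr}}(\vct{\alpha}) = 0$. Meanwhile $\langle \vct{\alpha}, \vct{\alpha}\rangle_{\mathcal{F}^{\text{(finite)}}} = 4 \langle \vct{\alpha}_f, \vct{\alpha}_f\rangle_{\mathcal{F}^{\text{(finite)}}} \leq 4(2nr)^{rd_p} d_p^{d_p} C^2 = \Lambda^2$, so $\vct{\alpha}$ is feasible with the claimed squared margin constant.

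To conclude, since~\eqref{eq:training-problem} minimizes $\mathrm{E}_{\mathrm{tr}}$ over all feasible points, its optimizer $\vct{\alpha}_\sharp$ satisfies $\mathrm{E}_{\mathrm{tr}}(\vct{\alpha}_\sharp) \leq \mathrm{E}_{\mathrm{tr}}(\vct{\alpha}) = 0$, which gives zero training error. I expect this argument to be essentially mechanical once Assumption~\ref{as:phase-separation}, Eq.~\eqref{eq:separator-size}, and Lemma~\ref{lem:function-approximation-shadows} are available; the only genuine care needed lies in the two-sided accounting of the constant $0.25$ slack from Assumption~\ref{as:phase-separation} against the $\epsilon = 1/4$ shadow error (which together leave a margin of $0.5$ before the factor-$2$ rescaling), and in the union bound that upgrades the single-state guarantee of Lemma~\ref{lem:function-approximation-shadows} into the simultaneous statement over all $N$ training samples that fixes the shadow size $T$.
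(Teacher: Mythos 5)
Your proposal is correct and follows essentially the same route as the paper's own proof: exhibit the dual vector associated with the truncated classifier $f^{(d_p)}$ (rescaled by a factor of $2$), use Assumption~\ref{as:phase-separation} plus Lemma~\ref{lem:function-approximation-shadows} with accuracy $1/4$ and a union bound at confidence $\delta/N$ to certify margin at least $1$ on every training shadow, bound its feature-space norm via Eq.~\eqref{eq:separator-size} to verify feasibility under the stated $\Lambda^2$, and conclude by optimality of the SVM solution. Your accounting of the constants ($1/\epsilon^2 = 16$ turning $32/3$ into $512/3$, and the $0.25+0.25$ slack leaving margin $0.5$ before the factor-$2$ rescaling) matches the paper exactly.
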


The extra constraint $\Lambda^2 \geq \langle \vct{\alpha}_f, \vct{\alpha}_f \rangle_{\mathcal{F}^{\text{(finite)}}}$ ensures that the ideal separating hyperplane is a feasible point of the training problem~\eqref{eq:training-problem}.

\begin{proof}[Proof of Proposition~\ref{prop:zero-training-error}]

We establish the claim not for the shadow kernel itself ($k^{\text{(shadow)}}(\cdot,\cdot)$), but for large finite-dimensional approximations ($k^{\text{(finite)}}(\cdot,\cdot)$) thereof.
We begin by utilizing Eq.~\eqref{eq:polynomial-approximation} that approximates the nonlinear function $f(\rho)$ by a finite power series $f^{(d_p)}(\rho)$ with the approximation error,
\begin{equation}
    |f(\rho) - f^{(d_p)}(\rho)| \leq 0.25.
\end{equation}
For each $\ell \in \{1,\ldots,N\}$, we invoke Lemma~\ref{lem:function-approximation-shadows} using the truncated Taylor series to conclude
\begin{equation}
\mathrm{Pr} \left[ | f^{(d_p)}(\rho_\ell) - f^{(d_p)} \left( S_T (\rho_\ell) \right) | \geq 0.25\right] \leq \delta/N,
\end{equation}
provided that
$T= (512/3)d_p^2 C^2 12^r \left( r \left( \log (n) + \log (12) \right) + \log (N/\delta) \right)$.
Triangle inequality and a union bound allows us to combine these approximation guarantees into a single statement:
\begin{equation}
\max_{1 \leq \ell \leq N} \left| f (\rho_\ell) - f^{(d_p)} \left( S_T (\rho_\ell) \right) \right| \leq 0.5 \quad \text{with probability (at least) $1-\delta$.}
\label{eq:training-error-aux1}
\end{equation}
Let us
condition on this desirable event and also assume hat the cutoff values of our finite kernel approximation are large enough, i.e.\ $D \geq d_p$, $R \geq r$). Then, the function
\begin{equation}
    2f^{(d_p)}(S_T (\rho_\ell)) = \langle \vct{\alpha}_f, \phi^{\text{(finite)}}(S_T (\rho_\ell)) \rangle
\end{equation} 
describes a linear function in feature space $\mathcal{F}^{\text{(finite)}}$ that is guaranteed to achieve zero training error. Indeed, combine Eq.~\eqref{eq:classification-assumption} and Eq.~\eqref{eq:training-error-aux1} to ensure $\left| 2f^{(d_p)} (S_T (\rho_\ell) ) \right| \geq 2(\left| f(\rho_\ell) \right| - \left| f(\rho_\ell) - f^{(d_p)}(S_T (\rho_\ell)) \right|) \geq 2(1 -0.5) =1$ and, moreover, $\mathrm{sign} \left( f^{(d_p)}(S_T (\rho_\ell))\right) = \mathrm{sign} \left( f (\rho_\ell) \right) = y (\rho_\ell) \in \left\{ \pm 1 \right\}$
for all $\ell \in \{1,\ldots,N\}$. In turn,
\begin{subequations}
\begin{align}
\sum_{\ell=1}^N \max \left\{0,1 - y (\rho_\ell) \langle \vct{\alpha}_f, \phi^{\text{(finite)}} \left( S_T (\rho_\ell) \right) \rangle \right\} 
&= \sum_{\ell=1}^N \max \left\{ 0, 1- \mathrm{sign} \left( f^{(d_p)} (S_T (\rho_\ell))\right) 2 f^{(d_p)} \left( S_T (\rho_\ell) \right) \right\} \\
&= \sum_{\ell=1}^N \max \left\{0,1-\left| f^{(d_p)}\left( S_T (\rho_\ell) \right) \right| \right\} =0.
\end{align}
\end{subequations}
Since zero is the smallest possible training error,
the minimizer of the original training problem~\eqref{eq:training-problem} must also achieve zero, provided that $\vct{\alpha}_f$ is actually a feasible point of this optimization. We can, however, ensure this by choosing the squared margin constant large enough. Eq.~\eqref{eq:separator-size} and Assumption~\ref{as:phase-separation} ensures
\begin{equation}
\langle \vct{\alpha}_f, \vct{\alpha}_f \rangle_{\mathcal{F}^{\text{(finite)}}} \leq 4 \left( 2rn \right)^{r d_p} d_p^{d_p} C^2.
\end{equation}
Choosing a squared margin size $\Lambda^2$ that exceeds this bound ensures that $\vct{\alpha}_f$ is indeed a feasible point of the training problem~\eqref{eq:training-problem} and the claim follows.
\end{proof}

We conclude our discussion on training with shadow kernels by providing a rigorous proof of the auxiliary statement.

\begin{proof}[Proof of Lemma~\ref{lem:function-approximation-shadows}]

It suffices to analyze implications of Lemma~\ref{lem:subsystem-approximation}: for $\eta,\delta \in (0,1)$
\begin{equation}
T \geq (8/3) 12^r \left( r \left( \log (n) + \log (12) \right) + \log (1/\delta) \right)/\eta^2 \; \Rightarrow \; \max_{A \subset \{1,\ldots,n\}, |A|\leq r}\left\| \mathrm{tr}_{\neg A} \left( \sigma_T (\rho) \right) - \mathrm{tr}_{\neg A} (\rho) \right\|_1 \leq \eta \label{eq:reduced-densities-claim}
\end{equation}
with probability at least $1-\delta$. Here, $\|\cdot \|_1$ denotes the trace norm. 
Abbreviate $\mathrm{tr}_{\neg A_i} \left( \sigma_T (\rho) \right)$ and $\mathrm{tr}_{\neg A_i} (\rho)$ as $\sigma_{A_i}$ and $\rho_{A_i}$, respectively.
A combination of triangle inequalities and Matrix Hoelder ($\mathrm{tr}(XY) \leq \|X \|_\infty \|Y \|_1$) asserts
\begin{subequations}
\label{eq:approximation-aux1}
\begin{align}
& \left| f^{(d_p)} (\rho)- f^{(d_p)} \left( S_T (\rho) \right)   \right|\\
& \leq \sum_{d=0}^{d_p} \frac{1}{d!} \sum_{A_1,\ldots,A_d \subset \{1,\ldots,n\}, |A_i| \leq r} \left| \mathrm{tr} \left( O_{A_1,\ldots,A_r} \left( \rho_{A_1} \otimes \cdots \otimes \rho_{A_d}- \sigma_{A_1} \otimes \cdots \otimes \sigma_{A_d}   \right) \right)\right|  \\
& \leq \sum_{d=0}^{d_p} \frac{1}{d!} \sum_{A_1,\ldots,A_d \subset\{1,\ldots,n\},|A_i| \leq r} \left\| O_{A_1,\ldots,A_d} \right\|_\infty 
\left\|   \rho_{A_1} \otimes \cdots \otimes \rho_{A_d} - \sigma_{A_1} \otimes \cdots \otimes \sigma_{A_d}\right\|_1.
\end{align}
\end{subequations}
Next, we fix a trace norm contribution and 
use a telescoping trick ($A_1 \otimes A_2 - B_1 \otimes B_2 = (A_1 - B_1) \otimes A_2 + B_1 \otimes (A_2 - B_2)$), as well as a triangle inequality and $\|\rho_{A_i} \|_1 = \mathrm{tr}(\rho_{A_i}) =1$ to infer
\begin{subequations}
\begin{align}
& \left\|   \rho_{A_1} \otimes \cdots \otimes \rho_{A_d} - \sigma_{A_1} \otimes \cdots \otimes \sigma_{A_d}\right\|_1 \\
&= \left\| \left( \rho_{A_1} - \sigma_{A_1} \right) \otimes \rho_{A_2}\otimes \cdots \otimes \rho_{A_d}
+ \sigma_{A_1} \otimes \left( \rho_{A_2} \otimes \cdots \otimes \rho_{A_d} - \sigma_{A_2} \otimes \cdots \otimes \sigma_{A_d} \right) \right\|_1 \\
&\leq \| \rho_{A_1} - \sigma_{A_1}\|_1 \| \rho_{A_2} \|_1 \cdots \| \rho_{A_d} \|_1
+ \| \sigma_{A_1} \|_1 \left\| \rho_{A_2} \otimes \cdots \otimes \rho_{A_d} - \sigma_{A_1}\otimes \cdots \otimes \sigma_{A_d} \right\|_1 \\
&\leq \| \rho_{A_1}- \sigma_{A_1} \|_1 + \left( 1 + \| \rho_{A_1}-\sigma_{A_1} \|_1 \right) \left\| \rho_{A_2} \otimes \cdots \otimes \rho_{A_d} - \sigma_{A_1}\otimes \cdots \otimes \sigma_{A_d} \right\|_1 \\
&\leq \eta + (1+\eta) \| \rho_{A_2} \otimes \cdots \otimes \rho_{A_d} - \sigma_{A_1} \otimes \sigma_{A_d}\|_1.
\end{align}
\end{subequations}
The last line follows from Rel.~\eqref{eq:reduced-densities-claim}. Iterating this simplification procedure ensures
\begin{align}
\left\|   \rho_{A_1} \otimes \cdots \otimes \rho_{A_d} - \sigma_{A_1} \otimes \cdots \otimes \sigma_{A_d}\right\|_1
\leq \eta  \sum_{k=0}^{d-1}(1+\eta)^k
= (1+\eta)^d -1.
\end{align}
According to Rel.~\eqref{eq:reduced-densities-claim}, such an upper bound is valid for every trace norm contribution in Eq.~\eqref{eq:approximation-aux1}.
This allows us to obtain
\begin{subequations}
\begin{align}
    \left| f^{(d_p)} (\rho)- f^{(d_p)} \left( S_T (\rho) \right)   \right| &\leq \sum_{d=0}^{d_p} \frac{1}{d!} \sum_{A_1,\ldots,A_d \subset\{1,\ldots,n\},|A_i| \leq r} \left\| O_{A_1,\ldots,A_d} \right\|_\infty \left[ (1+\eta)^d - 1 \right]\\
    &\leq \left[ (1+\eta)^{d_p} - 1 \right] \sum_{d=0}^{d_p} \frac{1}{d!} \sum_{A_1,\ldots,A_d \subset\{1,\ldots,n\},|A_i| \leq r} \left\| O_{A_1,\ldots,A_d} \right\|_\infty\\
    &= C \left[ (1+\eta)^{d_p} - 1 \right].
\end{align}
\end{subequations}
Here, we have used Assumption~\ref{as:phase-separation}.
Finally, by choosing $\eta = \epsilon / (2C d_p)$, we can see that
\begin{equation}
     \left| f^{(d_p)} (\rho)- f^{(d_p)} \left( S_T (\rho) \right)   \right|  \leq C\left[ \left(1+\frac{\epsilon}{2C d_p}\right)^{d_p} - 1\right] \leq C [\exp(\epsilon / 2C) - 1] \leq \epsilon.
\end{equation}
The second inequality follows from $(1+x / n)^n \leq \exp(x), \forall |x| \leq n, n \geq 1$.
The third inequality utilizes $\exp(x) \leq 1+2x, \forall x \in [0, 1]$.
The claim of Lemma~\ref{lem:function-approximation-shadows} now follows from inserting this specific choice of $\eta$ into Rel.~\eqref{eq:reduced-densities-claim}.
\end{proof}

\subsection{Prediction based on shadow kernels} \label{sec:predict-shadowkernel}

We now have all pieces in place to prove strong bounds on the prediction error of a SVM based on shadow kernels. The main result of this section will be a consequence of Theorem~\ref{thm:svm-prediction}.
For fixed parameters $\tau,\gamma = 1$, the shadow kernel~\eqref{eq:shadow-kernel} (and finite approximations thereof) is always bounded when applied to classical shadows. Eq.~\eqref{eq:kernel-upper-bound} (under $\tau = \gamma = 1$) asserts
\begin{align}
k^{\mathrm{(shadow)}} \left( S_T (\rho_1), \tilde{S}_T (\rho_2) \right) \leq \exp \left( \exp \left( 5 \right) \right)
\end{align}
for any $T$ and quantum states $\rho_1,\rho_2$. This bound readily extends to finite dimensional approximations $k^{\text{(finite)}}(\cdot, \cdot)$. 
Next, we need to specify a distribution. We assume that $\tilde{\mathcal{D}}$ is a distribution over $n$-qubit quantum states $\rho$ that either belong to phase $A$ or phase $B$. We sample quantum states $\rho_\ell \sim \tilde{\mathcal{D}}$ accordingly, but are not permitted to process them directly.
Instead, we obtain a (randomly generated) classical shadow of size $T$. Denote the raw data by $S_T (\rho_\ell)$ which allows us to produce a state approximation $\sigma_T (\rho_\ell)$.
 We do, however, require that we have direct access to the label $y(\rho_\ell) \in \left\{ \pm 1 \right\}$ associated with the phase of $\rho_\ell$.
This produces a joint distribution over input data $S_T (\rho_\ell)$ and the label $y(\rho_\ell)$ which we call $\mathcal{D}$.
In summary, we assume that training data and new data are generated independently from this data distribution: $\left(S_T (\rho_1),y(\rho_1)\right),\ldots,\left( S_T (\rho_N),y(\rho_N)\right), \left( S_T (\rho),y\right) \sim \mathcal{D}$.
We are now ready to combine Theorem~\ref{thm:svm-prediction} (the prediction error is bounded by the training error) and Proposition~\ref{prop:zero-training-error} (the training error vanishes if a good phase classifier exists) 
to obtain a powerful result about generalization.

\begin{corollary} \label{cor:shadow-prediction}
Fix $\delta,\epsilon \in (0,1)$ and
suppose there exists an analytic function on reduced $r$-body density matrices that can distinguish phases: $f (\rho) >1$ if $\rho \in \text{phase $A$}$ and $f(\rho) < -1$ else if $\rho \in \text{phase $B$}$. Let $C$ be the normalization constant and $d_p$ be the truncation degree given in Assumption~\ref{as:phase-separation}.
Suppose that we obtain identically distributed training data $\left(S_T (\rho_1),y(\rho_1)\right),\ldots,\left( S_T (\rho_N),y(\rho_N)\right) \sim \mathcal{D}$ such that
\begin{subequations}
\begin{align}
T &\geq (512/3)d_p^2 C^2 12^r \left( r \left( \log (n) + \log (12) \right) + \log (N / \delta) \right) 
\quad \text{and} \\
N &\geq 256 \left( 2rn \right)^{r d_p} d_p^{d_p} C^2 \exp ( \exp (5))  \log (4/\delta)/\epsilon^2. 
\end{align}
\end{subequations}
Then, solving the training problem~\eqref{eq:training-problem} for the shadow kernel with squared margin constant $\Lambda^2 = 4 \left( 2rn \right)^{r d_p} d_p^{d_p} C^2$ will produce a hyperplane $\vct{\alpha}_\sharp \in \mathbb{R}^N$ in shadow feature space that achieves zero training error with probability (at least) $1-\delta/2$.
Conditioned on perfect training, the resulting classifier 
\begin{equation}
    y_\sharp \left( S_T (\rho) \right) = \mathrm{sign} \big( \sum_{\ell=1}^N \left[ \vct{\alpha}_\sharp \right]_\ell k^{\mathrm{(shadow)}} (S_T (\rho_\ell),S_T (\rho) ) \big) \in \left\{ \pm 1 \right\}
\end{equation} achieves, with probability (at least) $1-\delta/2$,
\begin{equation}
\mathrm{Pr}_{(S_T (\rho),y(\rho))} \left[ y_\sharp \left( S_T (\rho) \right) \neq y(\rho) \right] \leq \epsilon~.
\end{equation}
\end{corollary}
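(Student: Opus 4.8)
The plan is to obtain Corollary~\ref{cor:shadow-prediction} by chaining together the two main results of this appendix: Proposition~\ref{prop:zero-training-error}, which certifies that the training stage attains \emph{zero} training error, and Theorem~\ref{thm:svm-prediction}, which converts a small training error into a small prediction error. Essentially no new machinery is needed; the work consists of intersecting the two probabilistic guarantees with a union bound and checking that the advertised sample size $N$ drives the generalization term below $\epsilon$. Throughout I would fix the feature-space parameters $\tau=\gamma=1$.

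First I would invoke Proposition~\ref{prop:zero-training-error} with failure probability $\delta/2$. Since the hypothesized analytic classifier $f(\rho)$ satisfies Assumption~\ref{as:phase-separation} with the constants $C$ and $d_p$, and since the chosen shadow size $T \geq (512/3)d_p^2 C^2 12^r(r(\log n + \log 12) + \log(N/\delta))$ meets that proposition's requirement (the replacement of $\delta$ by $\delta/2$ only shifts $\log(N/\delta)$ by $\log 2$, absorbed into the generous leading constant), solving the training problem~\eqref{eq:training-problem} with $\Lambda^2 = 4(2rn)^{rd_p}d_p^{d_p}C^2$ yields a minimizer $\vct{\alpha}_\sharp$ with $E_{\mathrm{tr}}(\vct{\alpha}_\sharp)=0$ on an event $\mathcal{E}_1$ of probability at least $1-\delta/2$. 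This pins down the margin constant $\Lambda$. Next I would apply Theorem~\ref{thm:svm-prediction} with failure probability $\delta/2$, using the almost-sure bound $k^{\mathrm{(shadow)}}(S_T(\rho),S_T(\rho)) \leq \exp(\exp(5))$ from Eq.~\eqref{eq:kernel-upper-bound} to supply $R^2 = \exp(\exp(5))$. The theorem then gives, on an event $\mathcal{E}_2$ of probability at least $1-\delta/2$,
\begin{equation}
\mathrm{Pr}_{(S_T(\rho),y(\rho))}\!\left[y_\sharp(S_T(\rho)) \neq y(\rho)\right] \leq \frac{1}{N}E_{\mathrm{tr}}(\vct{\alpha}_\sharp) + 7(\Lambda R + 1)\sqrt{\frac{\log(4/\delta)}{N}}.
\end{equation}

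On $\mathcal{E}_1 \cap \mathcal{E}_2$, which has probability at least $1-\delta$ by a union bound, the first term vanishes and it remains to bound the second by $\epsilon$. Writing $(\Lambda R)^2 = 4C^2(2rn)^{rd_p}d_p^{d_p}\exp(\exp(5))$, the stated threshold rearranges to $N = 64(\Lambda R)^2\log(4/\delta)/\epsilon^2$, which exceeds the required $49(\Lambda R+1)^2\log(4/\delta)/\epsilon^2$ because $\Lambda R \geq 7$ (indeed $\Lambda R$ is enormous, owing to the factor $\exp(\exp(5)/2)$); hence $7(\Lambda R + 1)\sqrt{\log(4/\delta)/N} \leq \epsilon$ and the prediction error is at most $\epsilon$. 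Reporting $\mathcal{E}_1$ and $\mathcal{E}_2$ separately reproduces exactly the two probability-$(1-\delta/2)$ statements in the corollary, with the prediction bound read off ``conditioned on perfect training'' by substituting $E_{\mathrm{tr}}(\vct{\alpha}_\sharp)=0$.

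The one genuine subtlety, and the step I expect to require the most care, is that both Proposition~\ref{prop:zero-training-error} and Theorem~\ref{thm:svm-prediction} are stated for a finite-dimensional feature space, whereas $k^{\mathrm{(shadow)}}$ corresponds to the genuinely infinite-dimensional map $\phi^{\mathrm{(shadow)}}$. I would handle this exactly as anticipated in Appendix~\ref{sub:shadow-kernel}: run the whole argument for the finite truncation $k^{\text{(finite)}}$ with cutoffs $D \geq d_p$ and $R \geq r$ taken large enough that $|k^{\mathrm{(shadow)}}-k^{\text{(finite)}}|$ is uniformly smaller than the margin slack $1/2$ used inside the proof of Proposition~\ref{prop:zero-training-error}, and then pass to the limit. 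The zero-training-error certificate rests on the explicit separating vector $\vct{\alpha}_f$ whose norm bound~\eqref{eq:separator-size} is already established for the finite feature map, and the Rademacher/width estimate of Lemma~\ref{lem:svm-width} depends only on kernel \emph{values}; consequently both conclusions are stable under this approximation and transfer verbatim to the shadow kernel. Verifying that the truncation error never breaches the slack is the only place where the finite-versus-infinite distinction must be tracked explicitly.
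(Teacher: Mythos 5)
Your proposal is correct and follows essentially the same route as the paper's proof: combine Proposition~\ref{prop:zero-training-error} (zero training error) with Theorem~\ref{thm:svm-prediction} (generalization bound) via a union bound over two probability-$(1-\delta/2)$ events, verify that the stated $N$ threshold suppresses the $\mathcal{O}(\Lambda R\sqrt{\log(4/\delta)/N})$ term below $\epsilon$, and run the whole argument for a finite truncation $k^{\text{(finite)}}$ of the shadow kernel. Your bookkeeping of the constant (checking $64(\Lambda R)^2 \geq 49(\Lambda R+1)^2$ via $\Lambda R\geq 7$) is if anything slightly more explicit than the paper's simplification $7(\Lambda R+1)\leq 8\Lambda R$.
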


The total probability of success is (at least) $1-\delta$ and follows from a union bound over either desirable event failing.
Theorem~\ref{cor:shadow-prediction} is contingent on four core assumptions:
\begin{enumerate}
\item It must be possible to distinguish phases $A$ and $B$ by evaluating a well-conditioned analytical function on reduced $r$-body density matrices. The coefficients in the power series of the analytical function should also be bounded, but explicit knowledge is \emph{not} necessary. This is the content of Assumption~\ref{as:phase-separation}.

\item We use classical shadow raw data to read-in training data ($\rho_\ell \mapsto S_T (\rho_\ell)$) and process new states in the prediction phase ($\rho \mapsto S_T (\rho)$). We assume that each classical shadow arises from $T$ randomized single-qubit Pauli measurements on independent state copies. 
The larger $T$, the more accurate these representations become. Theorem~\ref{cor:shadow-prediction} requires $T \geq (512/3)d_p^2 C^2 12^r \left( r \left( \log (n) + \log (12) \right) + \log (N/\delta) \right)  = \mathcal{O} \left( r 12^r d_p^2 C^2 \log (nN/\delta) ) \right)$.
If $r, C, d_p$ are constants, this resolution only scales polylogarithmically in system size $n$ because $N$ scales polynomially in $n$; see the next bullet point.

\item The training data size must not be too small either. We need to have a training data size $N$ of order at least $\left( 2rn \right)^{r d_p} d_p^{d_p} C^2 \exp ( \exp (5))  \log (4/\delta)/\epsilon^2$. As long as $r, C, d_p$ are constants (independent of system size $n$), this requirement simplifies to
$N = \mathcal{O} \left( n^{r d_p} \log (1/\delta)/\epsilon^2 \right)$. Hence, the number scales polynomially in system size $n$.

\item The squared margin constant also scales polynomially with system size $n$: $\Lambda^2 = 4 \left( 2rn \right)^{rd_p} d_p^{d_p} C^2 = \mathcal{O}\left( n^{r d_p} \right)$ if $r, C, d_p =\mathrm{const}$. This is equivalent to demanding that the minimal margin $2/\Lambda$ scales inverse polynomially in system size $n$.

\end{enumerate}

Corollary~\ref{cor:shadow-prediction} does not only bound a hypothetical training error. The required shadow size $T$ and training data size $N$ both scale favorably in the number of qubits $n$. This also ensures that the numerical costs behind this procedure remain tractable for a wide range of system sizes.
The costs associated with storage (classical shadows are sums of $T$ elementary tensor products), training (can be reduced to a QCQP in $N$ dimensions per Section~\ref{sec:trainSVM}) and prediction (execute Formula~\eqref{eq:prediction-function}) all scale polynomially in system size $n$, shadow size $T$, and training data size $N$. 

\begin{proof}[Proof of Corollary~\ref{cor:shadow-prediction}]

Again, we establish the claim for large, but finite-dimensional, approximations to the shadow kernel ($1 \leq d_p \ll D < \infty$ and $1 \leq r \ll R <\infty$).
Fix $\delta \in (0,1)$ (probability of failure) and $\epsilon \in (0,1)$ (bound on average prediction error).
Consider the data distribution $\left( S_T (\rho),y(\rho) \right) \sim \mathcal{D}$, the kernel $k^{\text{(finite)}} (\cdot,\cdot)$ -- which obeys $k^{\text{(finite)}} \left( S_T (\rho), S_T (\rho) \right) \leq \exp \left( \exp (5) \right)$ -- and a squared margin constant $\Lambda^2$ to be specified later. 
Assume $\Lambda^2 \exp ( \exp (5 ) )\geq 1$ for simplicity (the other case is similar).
Then, for training data size $N$, Theorem~\ref{thm:svm-prediction} asserts
\begin{equation}
\mathrm{Pr}_{(S_T (\rho),y(\rho))\sim \mathcal{D}} \left[ y_\sharp \left( S_T (\rho) \right) \neq y (\rho) \right] 
\leq \frac{1}{N}\mathrm{E}_{\mathrm{tr}} (\vct{\alpha}_\sharp) + 8\sqrt{ \Lambda^2 \exp ( \exp (5) )\frac{\log (4/\delta)}{N}},
\end{equation}
with probability (at least) $1-\delta/2$. Choosing $N$ large enough allows us to suppress the second contribution beneath the desired approximation error bound:
\begin{equation}
N \geq 64  \Lambda^2\exp ( \exp (5 ))  \log (4/\delta)/\epsilon^2
\quad \Rightarrow \quad \mathrm{Pr}_{(S_T (\rho),y(\rho))\sim \mathcal{D}} \left[ y_\sharp \left( S_T (\rho) \right) \neq y (\rho) \right]
\leq \frac{1}{N}\mathrm{E}_{\mathrm{tr}}(\vct{\alpha}_\sharp) + \epsilon, \label{eq:prediction-aux1}
\end{equation}
with probability (at least) $1-\delta/2$. Here, $\mathrm{E}_{\mathrm{tr}} (\vct{\alpha}_\sharp)$ is the training error obtained from solving problem~\eqref{eq:training-problem} for $N$ independently sampled training data points $\left(S_T (\rho_1),y(\rho_1)\right),\ldots,\left( S_T (\rho_N),y(\rho_N)\right) \sim \mathcal{D}$. Proposition~\ref{prop:zero-training-error} asserts that this training error can vanish with high probability, provided that a well-conditioned analytical function on reduced $r$-body density matrices exists that can distinguish the phases (see Assumption~\ref{as:phase-separation}). 
The classical shadow size $T$ and the squared margin constant $\Lambda^2$ depend on the number of body $r$, the normalization constant $C$, and the truncation degree $d_p$ of this classifier:
\begin{equation}
\left.
\begin{array}{ccl}
T &\geq & (512/3)d_p^2 C^2 12^r \left( r \left( \log (n) + \log (12) \right) + \log (N/\delta) \right)  \\
\Lambda &\geq & 4 \left( 2rn \right)^{r d_p} d_p^{d_p} C^2
\end{array}
\right\} \quad \Rightarrow \mathrm{E}_{\mathrm{tr}}(\vct{\alpha}_\sharp)= 0
\end{equation}
with probability (at least) $1-\delta/2$. The claim now follows from inserting this squared margin size into the expression~\eqref{eq:prediction-aux1} for training data size.
\end{proof}

\section{Classifying SPT phases with $O(2)$ symmetry using a few-body observable}
\label{sec:tasaki}

\subsection{Symmetry-protected topological phases} \label{sec:tasaki1}

We consider a scenario similar to that of Section~\ref{sub:smoothness},
namely, a family of Hamiltonians $H(x)$ parameterized by $x$. We
additionally enforce that $H(x)$ be invariant under certain symmetry
transformations, which can include tensor products of on-site rotations,
``spatial'' transformations permuting the sites, or antiunitary
maps characterizing time-reversal. These additional symmetry constraints
allow for a fine-grained characterization of $H(x)$ into various
symmetry-protected topological (SPT) phases. Removing said constraints
reduces this characterization to the coarser one involving purely
topological phases. Similar to the coarser characterization, ground
states of $H(x)$ remain in a particular SPT when the parameters $x$
are varied continuously, as long as the spectral gap of the Hamiltonian
remains finite. In other words, the gap has to close at some $x$
in order for the ground states to transition into another phase.
When there is a constant spectral gap, it is expected that an operator acting on a local region larger than some constant size independent of the full system size $n$ can classify different SPT phases.
The existence of a classifying function of local density matrices has been rigorously established for a handful of cases:  $U(1)$-symmetric systems in two dimensions (either noninteracting fermionic \cite{Kitaev2006,zhang2017quantum} or interacting \cite{Hastings2015,Kapustin2020,Bachmann2020}), and certain spin-$1$ chains in one dimension \cite{Bachmann2014a,tasakiPRL2018,Tasaki2020}.

SPT phases of one-dimensional spin chains with unique ground states,
symmetric under tensor-product unitaries forming a symmetry
group $G$, are in one-to-one correspondence with the various projective representations realized
by $G$ \cite{Chen2011}. Projective representations are
those in which the group's multiplication table is decorated with
phases in a way that is consistent with associativity \cite{Arovas}. A genuine (i.e.,
linear) representation corresponds to the unique trivial projective
representation.

Consider, for example, spin chains symmetric under $G=SO(3)$. This
group admits two distinct classes of projective representations:
one class corresponds to integer spin, and one corresponds to half-integer
spin. Thus, there are two different phases for such chains --- the
trivial phase and the ``Haldane phase'' \cite{Haldane1983a,Chen2011}.

Relaxing the symmetry group down to its $O(2)$ subgroup maintains
the two-phase classification, because $O(2)$ also admits two projective representations
\cite{Chen2013}. In fact, one can relax the symmetry all the way
down to the simplest dihedral subgroup $Z_{2}\times Z_{2}$ \cite{Gu2009,Pollmann2010}; such a classification is similar to that of the model in Appendix~\ref{sec:numdetail-phases}. We investigate systems admitting the larger $O(2)$ symmetry below, noting that the work we rely on \cite{tasakiPRL2018,Tasaki2020} also studies symmetry groups that include spatial inversion and time reversal.

\subsection{$O(2)$-symmetric qutrit spin chains} \label{sec:tasaki2}

The representative states for each of the two $O(2)$-symmetric phases for qutrit spin chains are the product state, representing the trivial phase, and the valence-bond-solid (VBS) state \cite{Affleck1988}, admitting a projective representation of the symmetry \cite{Tasaki2020} and thus representing the Haldane phase. It has long been known that the expectation value of a nonlocal ``twist'' operator $O_L$ \cite{Totsuka1995,nakamura2002a} distinguishes these two representative states: $\text{sign}(\langle O_L \rangle)$ is $+1$ for the product state, and $-1$ for the VBS state. We will see later that, by continuity arguments, this sign will stay constant for other states within the same phase.

In order to work efficiently, our phase classification algorithms require a \textit{local} operator whose expectation value (a) has the same sign as that of $O_L$; and (b) is above or below a margin (here, $1/2$), in order to determine the required accuracy of the classical shadows.
Recently, criterion (a) was explicitly demonstrated by Tasaki \cite{tasakiPRL2018,Tasaki2020} using a local version $O_\ell$ of the twist, see Eq.~\eqref{eq:twist} below. We collect relevant parts of his results to prove both criteria in the theorem below.
Due to the existence of a local operator for classifying the SPT phases, our ML algorithms are guaranteed to predict the SPT phases accurately based on the proof given in Appendix~\ref{sec:proofthmPHASEC}.

\begin{theorem}
Consider the triple $\left\{ H(x),\ket{\psi(x)} ,\Delta(x)\right\} $ containing
$(2L+2)$-site spin-one chains with periodic boundary conditions
\begin{equation}
H(x)=\sum_{j=-(L-r)}^{L-r+1}h_{j}(x)+h_{-L}(x)+h_{L+1}(x)
\end{equation}
that admit corresponding unique ground states $\ket{\psi(x)}$ and spectral
gaps $\Delta(x) \geq \gamma = \Omega(1)$,
bounded interaction strength $\norm{h_j(x)}_\infty \leq R = \mathcal{O}(1)$,
and whose terms $h_{j}(x)$ are supported on sites $k$ such that $|j-k|\leq r = \mathcal{O}(1)$.
Assume that $H(x)$ is $O(2)$-symmetric, with the symmetry group
generated by\begin{enumerate}
\item a collective $z$-axis rotation by any angle, and
\item an $x$-axis rotation by $\pi$.
\end{enumerate}
There exists a few-body observable $A$, such that for all $x$, we have
\begin{subequations}
\label{eq:toprove}
\begin{align}
    \mathrm{sign}(\bra{\psi(x)} A \ket{\psi(x)}) & = \mathrm{sign}\left( \bra{\psi(x)} O_L \ket{\psi(x)}\right), \quad \text{as well as}
    \label{eq:toprove-2}\\ |\bra{\psi(x)} A \ket{\psi(x)}| & \geq 1/2    \label{eq:toprove-1}~.
\end{align}
\end{subequations}
\end{theorem}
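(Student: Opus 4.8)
The plan is to follow Tasaki's construction and connect the nonlocal twist operator $O_L$ to a genuinely local observable $A$ via a telescoping/continuity argument, then verify the two quantitative criteria. First I would recall the explicit form of the local twist operator from \cite{tasakiPRL2018,Tasaki2020}: on a window of $2\ell+1$ consecutive sites one defines
\begin{equation}
    O_\ell = \exp\left( \mathrm{i} \frac{\pi}{2\ell+1} \sum_{j=-\ell}^{\ell} (j+\ell+1) S_j^z \right),
    \label{eq:twist}
\end{equation}
where $S_j^z$ is the spin-$1$ generator of $z$-rotation on site $j$. The global twist $O_L$ is the $\ell=L$ instance. The key structural fact, established by Tasaki, is that $O(2)$-symmetry together with the unique gapped ground state forces $\langle \psi(x)| O_L |\psi(x)\rangle$ to be real, and its sign is a topological invariant: $+1$ in the trivial phase and $-1$ in the Haldane phase. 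I would take this as given from the cited work rather than reprove it.

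The central step is to show that the expectation value of the \emph{local} twist $O_\ell$ (for a constant window size $\ell = \mathcal{O}(1)$ large compared to the correlation length $1/\gamma$) reproduces the sign of $\langle O_L\rangle$ and stays bounded away from zero. The mechanism is exponential clustering: in a gapped phase with gap $\gamma$, correlations of the local rotation generators decay like $\mathrm{e}^{-\gamma\, \mathrm{dist}}$, so truncating the twist to a finite window of size $\ell$ changes its expectation value by an amount that is exponentially small in $\gamma\ell$. Concretely I would write $O_L$ as a product of commuting local pieces and use a telescoping estimate together with the Lieb-Robinson/clustering bound (the same clustering estimate underlying Lemma~\ref{lem:truncatedFourier}) to obtain
\begin{equation}
    \left| \langle \psi(x)| O_\ell |\psi(x)\rangle - \langle \psi(x)| O_L |\psi(x)\rangle \right| \leq c\, \mathrm{e}^{-c' \gamma \ell},
\end{equation}
for constants $c,c'$ depending only on $R$ and the Lieb-Robinson velocity. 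Since $|\langle O_L\rangle|$ is bounded below by a phase-dependent constant (it is $\pm 1$ on the representative states and, by the continuity argument below, stays a definite fraction away from $0$ throughout the phase), choosing $\ell$ a large enough constant makes the right-hand side smaller than, say, $1/4$, which pins the sign and guarantees $|\langle O_\ell\rangle| \geq 1/2$. Setting $A$ to be $O_\ell$ (or its Hermitian part, since only the sign of a real quantity matters) then yields both Eq.~\eqref{eq:toprove-2} and Eq.~\eqref{eq:toprove-1}, and $A$ acts on only $2\ell+1 = \mathcal{O}(1)$ sites as required.

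For the continuity claim underpinning the lower bound, I would invoke the spectral-flow/quasi-adiabatic machinery of Appendix~\ref{sub:smoothness}: since $\Delta(x)\geq\gamma$ uniformly and the $h_j(x)$ are smooth with bounded derivatives, $\langle\psi(x)|O_\ell|\psi(x)\rangle$ varies smoothly in $x$, and because it cannot vanish without the gap closing (the sign being a topological invariant), it retains a definite sign and magnitude along any path within a single phase, matching the representative-state values at the endpoints. The main obstacle I anticipate is making the clustering-to-twist estimate fully rigorous: the twist generator is a \emph{sum} over the whole window rather than a single local operator, so the naive application of a two-point clustering bound is not immediate. The careful move is Tasaki's, rewriting the expectation of the exponentiated twist in terms of correlators of the boundary rotation operators and controlling the accumulation of errors across the window; I would cite \cite{tasakiPRL2018,Tasaki2020} for the precise bookkeeping and only sketch how the exponential gap-dependence enters, deferring the detailed combinatorial estimate to that reference.
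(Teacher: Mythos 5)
Your overall architecture (local twist $O_\ell$, reality of its expectation value via the $\pi$-rotation symmetry, $A$ as the Hermitian part of $O_\ell$, a continuity argument to pin the sign) matches the paper's, but the central quantitative step you propose is not the right mechanism and, as stated, would fail. You claim that exponential clustering yields
\begin{equation}
\left| \langle \psi(x)| O_\ell |\psi(x)\rangle - \langle \psi(x)| O_L |\psi(x)\rangle \right| \leq c\, \mathrm{e}^{-c' \gamma \ell},
\end{equation}
by ``writing $O_L$ as a product of commuting local pieces'' and truncating to a window. This picture is wrong: $O_\ell$ is not a truncation of $O_L$. The two operators twist at different rates ($2\pi/(2\ell+1)$ versus $2\pi/(2L+1)$ per site), so even on the common window they disagree, and the sites of the chain outside the window are rotated by $O_L$ through angles of order $1$ (a site near the middle of the chain is rotated by roughly $\pi$), not by nearly-identity factors that clustering could discard. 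There is no sense in which $O_\ell$ is a small perturbation of $O_L$, so no two-point clustering or Lieb--Robinson estimate controls their difference; indeed the true control is only polynomial, not exponential, in $\ell$. Relatedly, your lower bound on $|\langle O_L\rangle|$ throughout the phase is asserted via ``continuity in $x$,'' which is circular: continuity alone cannot exclude the expectation value approaching zero.

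The ingredient you are missing is the variational energy estimate, which is the heart of Tasaki's argument and of the paper's proof. One shows \emph{directly for each} sufficiently large $\ell$ that the twisted state $O_\ell|\psi\rangle$ has energy within $C_1/\ell$ of the ground energy --- this uses the $U(1)$ part of the symmetry to rotate each local term $h_j$ so that the twist imparts only phases that are multiples of $2\pi/(2\ell+1)$, whence $\cos(\cdot)-1 = O(\ell^{-2})$ per term and $O(\ell^{-1})$ after summing over $O(\ell)$ terms. Combined with the spectral gap this gives $|\langle\psi|O_\ell|\psi\rangle|^2 \geq 1 - C_1/(\Delta\ell)$ for every $\ell \geq C_0$, including $\ell = L$; the sign is then transported from the local $\ell$ to $L$ by continuity of $\langle O_\ell\rangle$ in $\ell$ \emph{treated as a real parameter} (the expectation value is uniformly bounded away from zero for all $\ell \geq \max(4\gamma/(3C_1), C_0)$, so it cannot change sign). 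Without the energy estimate you have neither the margin $|\langle A\rangle| \geq 1/2$ nor a non-circular reason the sign agrees with that of $O_L$.
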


\begin{proof}
We use spin-one operators $S^{(\alpha)}$ with $\alpha\in \{x,y,z\}$ that
have eigenvalues $\{0,\pm 1 \}$ and satisfy angular-momentum commutation
relations $[S^{(x)},S^{(y)}]=\mathrm{i}S^{(z)}$. Eigenstates of $S^{(z)}$
are denoted by $|\sigma\rangle$ with $\sigma\in\{0,\pm 1\}$. A rotation
around axis $\alpha$ is a unitary operator generated by the corresponding
$S^{(\alpha)}$. The two symmetry group generators are, for $\theta\in[0,2\pi)$,
\begin{equation}
U\left(\theta\right)=\bigotimes_{j=-L}^{L+1}e^{-\mathrm{i}\theta S_{j}^{(z)}}\,\,\,\,\,\,\,\,\,\,\,\,\,\,\,\,\,\,\,\,\,\,\,\,\,\,\,\,\,\,\,\,\,\,\,\,\,\,\,\text{and}\,\,\,\,\,\,\,\,\,\,\,\,\,\,\,\,\,\,\,\,\,\,\,\,\,\,\,\,\,\,\,\,\,\,\,\,\,\,\,V=\bigotimes_{j=-L}^{L+1}e^{-\mathrm{i}\pi S_{j}^{(x)}}\,.\label{eq:rotinv}
\end{equation}
By assumption, both symmetries commute with each Hamiltonian term
$h_{j}$; we will explicitly use both to prove the theorem. We will also need superimposed versions $S^{(\pm)}=S^{(x)}\pm \mathrm{i}S^{(y)}$,
which satisfy
\begin{equation}
e^{\mathrm{i}\phi S^{(z)}}S^{(\pm)}e^{-\mathrm{i}\phi S^{(z)}}=S^{(\pm)}e^{\pm \mathrm{i}\phi}\,.\label{eq:spin-rotation}
\end{equation}

The family of unitary twist operators \cite{Affleck1986a}, acting on an interval of $2 \ell$ spins centered at the origin, is
\begin{align}
O_{\ell}=\bigotimes_{k\,,\,\left|k-\half\right|\leq\ell+\half}\exp\left(-\mathrm{i}2\pi\frac{k+\ell}{2\ell+1}S_{k}^{(z)}\right)\,.\label{eq:twist}
\end{align}
Each site's rotation is by a multiple of $2\pi/(2\ell+1)$ that is
proportional to the site index, forming the namesake twist pattern. The $\ell=L$ case reduces to the aforementioned nonlocal twist operator $O_L$, while $\ell \ll L$ are its local versions.

Suppressing $x$ dependence, the key property is that the twisted
ground state $O_{\ell}|\psi\rangle$ has energy close to that of the ground state.
In particular, there exists $C_0, C_1 > 0$, such that for all $\ell \geq C_0$, Lemma~\ref{lem:vanenergy} below yields
\begin{equation}
\langle\psi|O_{\ell}HO_{\ell}^{\dagger}|\psi\rangle-\langle\psi|H|\psi\rangle\leq\frac{C_1}{\ell}\,.\label{eq:energies}
\end{equation}
The ground state is unique by our assumption of a gap, so the twisted ground state must then become proportional to the ground state as $\ell\to\infty$. In other
words, the magnitude of their overlap must be close to one as long as $\ell \geq C_0$,
\begin{equation}
\left|\langle\psi|O_{\ell}|\psi\rangle\right|^{2}\geq1-\frac{C_1}{\Delta \ell}\,;\label{eq:overlap}
\end{equation}
see Lemma~\ref{lem:highoverlap} 
below.
The phase of this overlap is either $0$ or $\pi$ because the $\pi$-rotation
$V$ leaves the ground state invariant:
\begin{equation}
\langle\psi|O_{\ell}|\psi\rangle=\langle\psi|V^{\dagger}O_{\ell}V|\psi\rangle=\langle\psi|O_{\ell}^{\dagger}|\psi\rangle=\overline{\langle\psi|O_{\ell}|\psi\rangle}\in\mathbb{R}\,.
\end{equation}
Hence, the few-body Hermitian observable
$A = (O_\ell + O^\dagger_\ell) / 2$ with $\ell = \max( 4 \gamma / (3 C_1), C_0)$ satisfies
\begin{equation}
    |\langle\psi| A |\psi\rangle| = |\langle\psi| O_\ell |\psi\rangle| \geq \sqrt{1 - \frac{C_1}{\Delta \ell}} \geq \frac{1}{2}~,\label{eq:sign}
\end{equation}
proving Eq.~(\ref{eq:toprove-1}). Note that the required value of $\ell$ depends on the gap, and thus also on $x$.

To prove Eq.~(\ref{eq:toprove-2}), we need to show that the sign of the twist's expectation value remains the same for any $\ell \geq \max( 4 \gamma / (3 C_1), C_0)$. To do this, first notice that, when $\ell$ is relaxed to be a nonnegative real, the twist (\ref{eq:twist}) is \textit{continuous} in $\ell$. (This can be verified, e.g., by studying the twist's eigenvalues.) 
Continuity implies that the expectation value cannot change sign; otherwise, it would have to cross zero, thus violating Eq.~(\ref{eq:sign}). Therefore, the sign remains the same, confirming Eq.~(\ref{eq:toprove-2}). Similarly, by continuity in $\ell$ and $x$, the expectation value maintains its sign within each phase.
\end{proof}

The above argument is contingent on two auxiliary statements, which we now prove.

\begin{lemma}[Vanishing~energy~difference \cite{tasakiPRL2018}; Eq.~\eqref{eq:energies}] \label{lem:vanenergy}
For constants $C_0, C_1$, as long as $\ell \geq C_0$, we have
\begin{equation}
\langle\psi|O_{\ell}HO_{\ell}^{\dagger}|\psi\rangle-\langle\psi|H|\psi\rangle\leq\frac{C_1}{\ell}.
\end{equation}
\end{lemma}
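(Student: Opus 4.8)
The plan is to treat $O_\ell|\psi\rangle$ as a variational state and bound its excess energy term by term, exploiting \emph{both} generators of the $O(2)$ symmetry: the collective $z$-rotation $U(\theta)$ to trivialize the bulk of the twist, and the $\pi$-rotation $V$ about the $x$-axis to annihilate the first-order contribution. First I would expand
\begin{equation}
\langle\psi|O_{\ell}HO_{\ell}^{\dagger}|\psi\rangle-\langle\psi|H|\psi\rangle=\sum_{j}\left(\langle\psi|O_{\ell}h_{j}O_{\ell}^{\dagger}|\psi\rangle-\langle\psi|h_{j}|\psi\rangle\right),
\end{equation}
and observe that only terms whose support meets the twisted interval $\{-\ell,\dots,\ell+1\}$ can contribute. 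Since each $h_j$ has range $\le r=\mathcal{O}(1)$ and there are $\mathcal{O}(1)$ terms per site, at most $\mathcal{O}(\ell)$ terms contribute; the boundary terms $h_{\pm L}$ and all bulk terms far from the twist are left invariant, provided (as in the regime of interest, where $\ell$ is a constant while $L$ is large) the twist support is disjoint from them. At the two ends of the twist the angle $\theta_{k}=2\pi(k+\ell)/(2\ell+1)$ equals $0$ and $2\pi$, and because the chain carries integer spin $e^{-\mathrm{i}2\pi S^{(z)}}=\mathbb{I}$, so even terms straddling either endpoint see only near-trivial rotations.

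The core reduction uses the $z$-symmetry. For a contributing $h_j$, I would fix an appropriate reference angle $\theta_{\mathrm{ref}}$ on its support and factor $O_\ell$ restricted to $\mathrm{supp}(h_j)$ into a uniform rotation times a residual $R_j=e^{-\mathrm{i}\Phi_j}$ with $\Phi_j=\sum_{k\in\mathrm{supp}(h_j)}(\theta_k-\theta_{\mathrm{ref}})S_k^{(z)}$. As each $h_j$ commutes with $U(\theta)$, the uniform factor commutes through $h_j$ and $R_j$ and cancels, leaving $O_\ell h_j O_\ell^{\dagger}=R_j h_j R_j^{\dagger}$. The neighboring-angle gap is $\theta_{k+1}-\theta_k=2\pi/(2\ell+1)$, so across a window of diameter $\le 2r$ one has $\|\Phi_j\|_\infty=\mathcal{O}(r^2/\ell)$, i.e.\ $R_j$ lies within $\mathcal{O}(r^2/\ell)$ of the identity.

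Next I would Taylor-expand $R_j h_j R_j^{\dagger}=h_j-\mathrm{i}[\Phi_j,h_j]+\mathcal{O}(\|\Phi_j\|^2\|h_j\|_\infty)$. The first-order piece vanishes in expectation: $\Phi_j$ is a real linear combination of the $S_k^{(z)}$, and the $\pi$-rotation gives $V|\psi\rangle=|\psi\rangle$, $Vh_jV^\dagger=h_j$, and $VS_k^{(z)}V^\dagger=-S_k^{(z)}$, whence
\begin{equation}
\langle\psi|[S_k^{(z)},h_j]|\psi\rangle=\langle\psi|[VS_k^{(z)}V^\dagger,Vh_jV^\dagger]|\psi\rangle=-\langle\psi|[S_k^{(z)},h_j]|\psi\rangle=0,
\end{equation}
so that $\langle\psi|[\Phi_j,h_j]|\psi\rangle=0$ term by term. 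The surviving second-order remainder is bounded by $\mathcal{O}(\|\Phi_j\|^2\|h_j\|_\infty)=\mathcal{O}(r^4 R/\ell^2)$ for each $j$; summing over the $\mathcal{O}(\ell)$ contributing terms gives the advertised bound with $C_1=\mathcal{O}(r^4 R)$, and the resulting two-sided estimate in particular yields the stated one-sided inequality once $\ell\ge C_0$.

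The main obstacle is precisely this first-order cancellation: a naive count gives $\mathcal{O}(\ell)$ terms each shifting the energy by $\mathcal{O}(1/\ell)$, i.e.\ a useless $\mathcal{O}(1)$ total. The argument closes only because the linear-in-$\Phi_j$ contribution vanishes exactly, which is where the $x$-axis $\pi$-rotation is indispensable; the $z$-rotation factorization, the integer-spin gluing at the twist endpoints, and the second-order remainder bound are all routine operator-norm bookkeeping. A secondary point needing care is checking that the edge and straddling terms genuinely fall under the same $\mathcal{O}(r^2/\ell)$ residual bound, which relies on $e^{-\mathrm{i}2\pi S^{(z)}}=\mathbb{I}$ to identify the angle-$2\pi$ endpoint with the untwisted exterior.
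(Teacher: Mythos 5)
Your proof is correct, but it reaches the crucial first-order cancellation by a genuinely different mechanism than the paper. The paper first symmetrizes using the variational principle: since $\langle\psi|O_\ell^{\dagger}HO_\ell|\psi\rangle-\langle\psi|H|\psi\rangle\geq 0$, it upper-bounds the quantity of interest by $\langle\psi|(O_\ell HO_\ell^{\dagger}+O_\ell^{\dagger}HO_\ell-2H)|\psi\rangle$ and then controls this in operator norm: after using the collective $z$-rotation invariance to reduce each conjugation to a twist generated by $M_j=\sum_{|k-j|\leq r}(j-k)S_k^{(z)}$ and expanding $h_j$ in a spin-operator basis graded by the imparted phase $\mu$, the symmetrized combination produces factors $2[\cos(2\pi\mu/(2\ell+1))-1]=\mathcal{O}(\mu^2/\ell^2)$, so the linear-in-$1/\ell$ contribution cancels identically at the operator level and only the $U(1)$ part of the symmetry is used. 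You instead keep the one-sided expression, Taylor-expand $R_jh_jR_j^{\dagger}$, and kill the linear term in expectation via the $\pi$-rotation $V$ (using $VS_k^{(z)}V^{\dagger}=-S_k^{(z)}$, $Vh_jV^{\dagger}=h_j$, and $V|\psi\rangle\propto|\psi\rangle$), which is a valid alternative; your bookkeeping of the $\mathcal{O}(\ell)$ contributing terms, the $\mathcal{O}(r^2/\ell)$ residual rotation (including the integer-spin identification $e^{-\mathrm{i}2\pi S^{(z)}}=\mathbb{I}$ at the twist endpoints), and the $\mathcal{O}(r^4R/\ell^2)$ second-order remainder all check out. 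The trade-off: the paper's route is more economical with the symmetry, proving the lemma from the $z$-rotation invariance alone and reserving $V$ for the later step establishing that $\langle\psi|O_\ell|\psi\rangle$ is real, and it yields an operator-norm statement rather than an expectation-value one; your route spends the $V$ symmetry here but avoids the (slightly non-obvious) variational-principle symmetrization trick and directly produces the two-sided estimate.
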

\begin{proof}
Using the variational principle (which says that the difference in energy between any state and the ground state is nonnegative), plugging in $O_{\ell}$ and $H$, applying
$\langle\psi|O|\psi\rangle\leq\left\Vert O\right\Vert _{\infty}$,
and distributing the norm over the sum yields\begin{subequations}
\begin{align}
\langle\psi|O_{\ell}HO_{\ell}^{\dagger}|\psi\rangle-\langle\psi|H|\psi\rangle & \leq\langle\psi|\left(O_{\ell}HO_{\ell}^{\dagger}+O_{\ell}^{\dagger}HO_{\ell}-2H\right)|\psi\rangle\\
 & =\sum_{j=-\left(\ell+r\right)}^{\ell+r+1}\langle\psi|\left(O_{\ell}h_{j}O_{\ell}^{\dagger}+O_{\ell}^{\dagger}h_{j}O_{\ell}-2h_{j}\right)|\psi\rangle\\
 & \leq\sum_{j=-\left(\ell+r\right)}^{\ell+r+1}\left\Vert O_{\ell}h_{j}O_{\ell}^{\dagger}+O_{\ell}^{\dagger}h_{j}O_{\ell}-2h_{j}\right\Vert _{\infty}\label{eq:sumj}
\end{align}
\end{subequations}Next, we use the finite support and rotational
invariance of $h_{j}$ from Eq.~(\ref{eq:rotinv}) to rotate the
twist $O_{\ell}$,\begin{subequations}
\begin{align}
O_{\ell}h_{j}O_{\ell}^{\dagger} & =O_{\ell}U\left(\theta_j\right)h_{j}U^{\dagger}\left(\theta_j\right)O_{\ell}^{\dagger}\\
 & =\left(\bigotimes_{\left|k-j\right|\leq r}e^{-\mathrm{i}\left(\frac{2\pi}{2\ell+1}\left[k+\ell\right]+\theta_j\right)S_{k}^{(z)}}\right)h_{j}\left(\bigotimes_{\left|k-j\right|\leq r}e^{\mathrm{i}\left(\frac{2\pi}{2\ell+1}\left[k+\ell\right]+\theta_j\right)S_{k}^{(z)}}\right)\,,
\end{align}
\end{subequations}where we pick $\theta_j=-\frac{2\pi}{2\ell+1}\left(j+\ell\right)$
for each $j$. That way, the twist does not affect site $j$, with
\begin{align}
O_{\ell}h_{j}O_{\ell}^{\dagger} & =e^{\mathrm{i}\frac{2\pi}{2\ell+1}M_{j}}h_{j}e^{-\mathrm{i}\frac{2\pi}{2\ell+1}M_{j}}\,,\,\,\,\,\,\,\,\,\,\,\,\,\,\,\,\,\,\,\,\text{and}\,\,\,\,\,\,\,\,\,\,\,\,\,\,\,\,\,\,\,M_{j}=\sum_{|k-j|\leq r}\left(j-k\right)S_{k}^{(z)}\,.\label{eq:M}
\end{align}

We now expand $h_{j}$ as a polynomial in $\{S_{k}^{(z)},S_{k}^{(\pm)}\}$. This can be done because products of powers of these operators form a matrix basis for any operator on the chain. For a single site, the set $\{S^{(z)}S^{(\pm)},(S^{(+)})^{2}\}$, along with their complex conjugates and some powers of $S^{(z)}$, form the basis of nine matrix units for all $3\times3$ operators on the site. Tensor products of these operators therefore form a matrix-unit basis for all sites. The conjugation property (\ref{eq:spin-rotation}) and Eq.~(\ref{eq:M}) imply that each term in the expansion of $h_j$, upon conjugation by $O_\ell$, will be imparted with a phase that is some multiple $\mu$ of $2\pi/(2\ell+1)$. Combining all terms with the same phase into $h_{j,\mu}$, we have
\begin{equation}
e^{\mathrm{i}\frac{2\pi}{2\ell+1}M_{j}}h_{j,\mu}e^{-\mathrm{i}\frac{2\pi}{2\ell+1}M_{j}}=h_{j,\mu}e^{\mathrm{i}\frac{2\pi}{2\ell+1}\mu}\,.
\end{equation}
Moreover, $|\mu|\leq2\mu_{\text{max}}$, where $\mu_{\text{max}}=\sum_{\left|k-j\right|\leq r}\left|j-k\right|=r\left(r+1\right)$
is the largest eigenvalue of $M_{j}$. Plugging this in and expanding
the resulting cosine yields\begin{subequations}
\begin{align}
\left\Vert O_{\ell}h_{j}O_{\ell}^{\dagger}+O_{\ell}^{\dagger}h_{j}O_{\ell}-2h_{j}\right\Vert _{\infty} & =2\left\Vert \sum_{\left|\mu\right|\leq2r\left(r+1\right)}\left[\cos\left(\frac{2\pi}{2\ell+1}\mu\right)-1\right]h_{j,\mu}\right\Vert _{\infty}\\
 & \leq \left(\frac{2\pi}{2\ell+1}\right)^{2} \sum_{\left|\mu\right|\leq2r\left(r+1\right)}\mu^{2} \left\Vert h_{j,\mu} \right\Vert _{\infty}\,.
\end{align}
\end{subequations}
Since the spin operators form a matrix-unit basis, each $h_{j,\mu}$ is simply $h_j$ with some entries removed. Therefore, the norm of $h_{j,\mu}$ is bounded by $R$. Applying that and performing the remaining sum (\ref{eq:sumj}) over
$j$ yields
\begin{align}
\langle\psi|O_{\ell}HO_{\ell}^{\dagger}|\psi\rangle-\langle\psi|H|\psi\rangle & \leq\frac{\ell+r+1}{\left(2\ell+1\right)^{2}}4\pi^{2}R\left(\sum_{\left|\mu\right|\leq2r\left(r+1\right)}\mu^{2}\right)\,.
\end{align}
Thus, for $\ell \geq C_0$, the difference in energies between
the ground state and twisted ground state will be bounded by $C_1/\ell$,
where $C_0, C_1$ are two constants depending on the interaction range $r$ and norm bound $R$ of the Hamiltonian terms.
\end{proof}

\begin{lemma}[High overlap \cite{Tasaki2020}; Eq.~\eqref{eq:overlap}] \label{lem:highoverlap}
For constants $C_0, C_1$, as long as $\ell \geq C_0$, we have
\begin{equation}
    \left|\langle\psi|O_{\ell}|\psi\rangle\right|^{2}\geq1-\frac{C_1}{\Delta \ell}\,.
\end{equation}
\end{lemma}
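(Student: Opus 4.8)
The plan is to run the standard variational argument that converts the energy estimate of Lemma~\ref{lem:vanenergy} into an overlap estimate, using the spectral gap $\Delta$ and the uniqueness of the ground state as the two essential inputs. To match the form of Lemma~\ref{lem:vanenergy} exactly, I would work with the twisted state $\ket{\phi} = O_\ell^\dagger \ket{\psi}$. Since $O_\ell$ is unitary (it is a tensor product of on-site $z$-rotations, see Eq.~\eqref{eq:twist}), $\ket{\phi}$ is normalized, and I decompose it into its ground-state component and an orthogonal remainder,
\begin{equation}
\ket{\phi} = a \ket{\psi} + b \ket{\psi^\perp}, \qquad \braket{\psi | \psi^\perp} = 0, \qquad |a|^2 + |b|^2 = 1,
\end{equation}
where $a = \braket{\psi | O_\ell^\dagger | \psi}$, so that $|a|^2 = |\braket{\psi | O_\ell | \psi}|^2$ is precisely the quantity to be bounded.

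Next I would compute the energy of $\ket{\phi}$ in this decomposition. Writing $E_0 = \braket{\psi | H | \psi}$ for the ground-state energy and using $H\ket{\psi} = E_0 \ket{\psi}$ to eliminate the cross terms $\braket{\psi^\perp | H | \psi} = E_0 \braket{\psi^\perp | \psi} = 0$, I obtain
\begin{equation}
\braket{\phi | H | \phi} = |a|^2 E_0 + |b|^2 \braket{\psi^\perp | H | \psi^\perp}.
\end{equation}
Because the ground state is assumed unique and the spectral gap is at least $\Delta$, every state orthogonal to $\ket{\psi}$ has energy at least $E_0 + \Delta$, so $\braket{\psi^\perp | H | \psi^\perp} \geq E_0 + \Delta$. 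This gives the lower bound $\braket{\phi | H | \phi} \geq E_0 + |b|^2 \Delta$.

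Finally I would invoke Lemma~\ref{lem:vanenergy}, which (for $\ell \geq C_0$) reads $\braket{\psi | O_\ell H O_\ell^\dagger | \psi} - E_0 = \braket{\phi | H | \phi} - E_0 \leq C_1/\ell$. Combining this with the previous inequality yields $|b|^2 \Delta \leq C_1/\ell$, hence $|b|^2 \leq C_1/(\Delta \ell)$ and therefore $|a|^2 = 1 - |b|^2 \geq 1 - C_1/(\Delta \ell)$, which is exactly the claimed bound since $|a|^2 = |\braket{\psi | O_\ell | \psi}|^2$. I do not anticipate a genuine obstacle; the argument is elementary once Lemma~\ref{lem:vanenergy} is available. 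The only steps requiring care are the lower bound $\braket{\psi^\perp | H | \psi^\perp} \geq E_0 + \Delta$, which leans on both the uniqueness of the ground state and the definition of $\Delta$ as the gap to the first excited level (precisely the theorem's hypotheses), and the bookkeeping of $O_\ell$ versus $O_\ell^\dagger$; the latter is harmless because the variational comparison in Lemma~\ref{lem:vanenergy} controls both $O_\ell H O_\ell^\dagger$ and $O_\ell^\dagger H O_\ell$ simultaneously, so the bound holds for $\ket{\phi} = O_\ell^\dagger\ket{\psi}$ as used here.
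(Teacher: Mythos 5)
Your argument is correct and is essentially the paper's own proof: the paper states the gap as the operator inequality $H \geq E_{\text{gnd}}\mathbb{I} + \Delta(\mathbb{I} - \ketbra{\psi}{\psi})$ and evaluates it in the twisted state, which is exactly your decomposition of $O_\ell^\dagger\ket{\psi}$ into ground-state and orthogonal components written in a different notation. Both then close the argument by plugging in Lemma~\ref{lem:vanenergy}, so there is nothing to add.
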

\begin{proof}
All eigenvalues
of $H$ are bounded below by the sum of the ground state energy $E_{\text{gnd}}=\langle\psi|H|\psi\rangle$
and spectral gap $\Delta$,
\begin{align}
H & \geq E_{\text{gnd}}|\psi\rangle\!\langle\psi|+\left(E_{\text{gnd}}+\Delta\right)\left(\mathbb{I} -|\psi\rangle\!\langle\psi|\right)=E_{\text{gnd}} \mathbb{I} +\Delta\left(\mathbb{I} -|\psi\rangle\!\langle\psi|\right)\,.
\end{align}
Conjugating by $O_{\ell}$ and evaluating the result in the ground
state yields
\begin{align}
\langle\psi|O_{\ell}HO_{\ell}^{\dagger}|\psi\rangle & \geq E_{\text{gnd}}+\Delta\left(1-\left|\langle\psi|O_{\ell}|\psi\rangle\right|^{2}\right)\,.
\end{align}
Rearranging this and plugging in Lemma~\ref{lem:vanenergy} yields
the desired result.
\end{proof}

\bibliography{ref,ref_vva,ref_gt}

\begin{thebibliography}{100}

\bibitem{aaronson2004improved}
S.~Aaronson and D.~Gottesman.
\newblock Improved simulation of stabilizer circuits.
\newblock {\em Phys. Rev. A}, 70(5):052328, 2004.

\bibitem{abrahamsen2020polynomialtime}
N.~Abrahamsen.
\newblock A polynomial-time algorithm for ground states of spin trees.
\newblock {\em arXiv preprint arXiv:1907.04862}, 2019.

\bibitem{abrahamsen2020sub}
N.~Abrahamsen.
\newblock Sub-exponential algorithm for 2d frustration-free spin systems with
  gapped subsystems.
\newblock {\em arXiv preprint arXiv:2004.02850}, 2020.

\bibitem{Affleck1988}
I.~Affleck, T.~Kennedy, E.~H. Lieb, and H.~Tasaki.
\newblock {Valence bond ground states in isotropic quantum antiferromagnets}.
\newblock {\em Commun. Math. Phys.}, 115(3):477--528, 1988.

\bibitem{Affleck1986a}
I.~Affleck and E.~H. Lieb.
\newblock {A proof of part of Haldane's conjecture on spin chains}.
\newblock {\em Lett. Math. Phys.}, 12(1):57--69, 1986.

\bibitem{aharonov2008adiabatic}
D.~Aharonov, W.~Van~Dam, J.~Kempe, Z.~Landau, S.~Lloyd, and O.~Regev.
\newblock Adiabatic quantum computation is equivalent to standard quantum
  computation.
\newblock {\em SIAM review}, 50(4):755--787, 2008.

\bibitem{Andrejevic2020}
N.~Andrejevic, J.~Andrejevic, C.~H. Rycroft, and M.~Li.
\newblock Machine learning spectral indicators of topology.
\newblock {\em arXiv preprint arXiv:2003.00994}, 2020.

\bibitem{anurag2021sample}
A.~Anshu, S.~Arunachalam, T.~Kuwahara, and M.~Soleimanifar.
\newblock Sample-efficient learning of interacting quantum systems.
\newblock {\em Nat. Phys.}, 2021.

\bibitem{arad2017rigorous}
I.~Arad, Z.~Landau, U.~Vazirani, and T.~Vidick.
\newblock Rigorous rg algorithms and area laws for low energy eigenstates in
  1d.
\newblock {\em Commun. Math. Phys.}, 356(1):65--105, 2017.

\bibitem{arora2019exact}
S.~Arora, S.~S. Du, W.~Hu, Z.~Li, R.~R. Salakhutdinov, and R.~Wang.
\newblock On exact computation with an infinitely wide neural net.
\newblock In {\em NeurIPS}, pages 8139--8148, 2019.

\bibitem{Arovas}
D.~P. Arovas.
\newblock {\em {Lecture Notes on Group Theory in Physics}}.
\newblock online notes.
\newblock Available at
  \url{https://courses.physics.ucsd.edu/2016/Spring/physics220/LECTURES/GROUP_THEORY.pdf}.

\bibitem{arrieta2020explainable}
A.~B. Arrieta, N.~D{\'\i}az-Rodr{\'\i}guez, J.~Del~Ser, A.~Bennetot, S.~Tabik,
  A.~Barbado, S.~Garc{\'\i}a, S.~Gil-L{\'o}pez, D.~Molina, R.~Benjamins, et~al.
\newblock Explainable artificial intelligence (xai): Concepts, taxonomies,
  opportunities and challenges toward responsible ai.
\newblock {\em Information Fusion}, 58:82--115, 2020.

\bibitem{Bachmann2020}
S.~Bachmann, A.~Bols, W.~{De Roeck}, and M.~Fraas.
\newblock {A Many-Body Index for Quantum Charge Transport}.
\newblock {\em Commun. Math. Phys.}, 375(2):1249--1272, 2020.

\bibitem{bachmann2012automorphic}
S.~Bachmann, S.~Michalakis, B.~Nachtergaele, and R.~Sims.
\newblock Automorphic equivalence within gapped phases of quantum lattice
  systems.
\newblock {\em Commun. Math. Phys.}, 309(3):835--871, 2012.

\bibitem{Bachmann2014a}
S.~Bachmann and B.~Nachtergaele.
\newblock {On Gapped Phases with a Continuous Symmetry and Boundary Operators}.
\newblock {\em J. Stat. Phys.}, 154(1-2):91--112, 2014.

\bibitem{Bacry}
H.~Bacry, L.~Michel, and J.~Zak.
\newblock {Symmetry and classification of energy bands in crystals}.
\newblock In {\em Gr. Theor. Methods Phys.}, pages 289--308. Springer-Verlag,
  Berlin/Heidelberg.

\bibitem{Balabanov2021}
O.~Balabanov and M.~Granath.
\newblock {Unsupervised interpretable learning of topological indices invariant
  under permutations of atomic bands}.
\newblock {\em Mach. Learn. Sci. Technol.}, 2(2):025008, 2021.

\bibitem{Beach2017}
M.~J.~S. Beach, A.~Golubeva, and R.~G. Melko.
\newblock Machine learning vortices at the kosterlitz-thouless transition.
\newblock {\em Phys. Rev. B}, 97:045207, 2018.

\bibitem{becca_sorella_2017}
F.~Becca and S.~Sorella.
\newblock {\em Quantum Monte Carlo Approaches for Correlated Systems}.
\newblock Cambridge University Press, 2017.

\bibitem{topobook}
B.~A. Bernevig and T.~L. Hughes.
\newblock {\em {Topological Insulators and Topological Superconductors}}.
\newblock Princeton University Press, Princeton and Oxford, 2013.

\bibitem{bernien2017probing}
H.~Bernien, S.~Schwartz, A.~Keesling, H.~Levine, A.~Omran, H.~Pichler, S.~Choi,
  A.~S. Zibrov, M.~Endres, M.~Greiner, et~al.
\newblock Probing many-body dynamics on a 51-atom quantum simulator.
\newblock {\em Nature}, 551(7682):579--584, 2017.

\bibitem{Bohrdt2018}
A.~Bohrdt, C.~S. Chiu, G.~Ji, M.~Xu, D.~Greif, M.~Greiner, E.~Demler,
  F.~Grusdt, and M.~Knap.
\newblock Classifying snapshots of the doped {Hubbard} model with machine
  learning.
\newblock {\em Nat. Phys.}, 15(9):921--924, 2019.

\bibitem{boyd2004convex}
S.~Boyd, S.~P. Boyd, and L.~Vandenberghe.
\newblock {\em Convex optimization}.
\newblock Cambridge University Press, 2004.

\bibitem{Bradlyn2017}
B.~Bradlyn, L.~Elcoro, J.~Cano, M.~G. Vergniory, Z.~Wang, C.~Felser, M.~I.
  Aroyo, and B.~A. Bernevig.
\newblock {Topological quantum chemistry}.
\newblock {\em Nature}, 547(7663):298--305, 2017.

\bibitem{BravyiStoquastic}
S.~{Bravyi}, D.~P. {DiVincenzo}, R.~I. {Oliveira}, and B.~M. {Terhal}.
\newblock {The Complexity of Stoquastic Local Hamiltonian Problems}.
\newblock {\em arXiv e-prints}, pages quant--ph/0606140, 2006.

\bibitem{bravyi2006lieb}
S.~Bravyi, M.~B. Hastings, and F.~Verstraete.
\newblock Lieb-robinson bounds and the generation of correlations and
  topological quantum order.
\newblock {\em Phys. Rev. Lett.}, 97(5):050401, 2006.

\bibitem{browaeys_many-body_2020}
A.~Browaeys and T.~Lahaye.
\newblock Many-body physics with individually controlled {Rydberg} atoms.
\newblock {\em Nat. Phys.}, 16(2):132--142, 2020.

\bibitem{bru2016lieb}
J.-B. Bru and W.~de~Siqueira~Pedra.
\newblock {\em Lieb-Robinson bounds for multi-commutators and applications to
  response theory}, volume~13.
\newblock Springer, 2016.

\bibitem{sklearn_api}
L.~Buitinck, G.~Louppe, M.~Blondel, F.~Pedregosa, A.~Mueller, O.~Grisel,
  V.~Niculae, P.~Prettenhofer, A.~Gramfort, J.~Grobler, R.~Layton,
  J.~VanderPlas, A.~Joly, B.~Holt, and G.~Varoquaux.
\newblock {API} design for machine learning software: experiences from the
  scikit-learn project.
\newblock In {\em ECML PKDD Workshop: Languages for Data Mining and Machine
  Learning}, pages 108--122, 2013.

\bibitem{Cano2021}
J.~Cano and B.~Bradlyn.
\newblock {Band Representations and Topological Quantum Chemistry}.
\newblock {\em Annu. Rev. Condens. Matter Phys.}, 12(1):225--246, 2021.

\bibitem{CarleoRMP}
G.~Carleo, I.~Cirac, K.~Cranmer, L.~Daudet, M.~Schuld, N.~Tishby,
  L.~Vogt-Maranto, and L.~Zdeborov\'a.
\newblock Machine learning and the physical sciences.
\newblock {\em Rev. Mod. Phys.}, 91:045002, 2019.

\bibitem{Carleo_2017}
G.~Carleo and M.~Troyer.
\newblock Solving the quantum many-body problem with artificial neural
  networks.
\newblock {\em Science}, 355(6325):602--606, 2017.

\bibitem{APXReview}
J.~Carrasquilla.
\newblock Machine learning for quantum matter.
\newblock {\em Adv. Phys.: X}, 5(1):1797528, 2020.

\bibitem{carrasquilla2017nature}
J.~Carrasquilla and R.~G. Melko.
\newblock Machine learning phases of matter.
\newblock {\em Nat. Phys.}, 13:431, 2017.

\bibitem{carrasquilla2019reconstructing}
J.~Carrasquilla, G.~Torlai, R.~G. Melko, and L.~Aolita.
\newblock Reconstructing quantum states with generative models.
\newblock {\em Nat. Mach. Intell.}, 1(3):155, 2019.

\bibitem{CEPERLEY555}
D.~Ceperley and B.~Alder.
\newblock Quantum {M}onte {C}arlo.
\newblock {\em Science}, 231(4738):555--560, 1986.

\bibitem{CC01a}
C.-C. Chang and C.-J. Lin.
\newblock {LIBSVM}: A library for support vector machines.
\newblock {\em ACM Transactions on Intelligent Systems and Technology},
  2:27:1--27:27, 2011.
\newblock Software available at \url{http://www.csie.ntu.edu.tw/~cjlin/libsvm}.

\bibitem{chen2019finite}
C.-F. Chen and A.~Lucas.
\newblock Finite speed of quantum scrambling with long range interactions.
\newblock {\em Physical review letters}, 123(25):250605, 2019.

\bibitem{chen2020robust}
S.~Chen, W.~Yu, P.~Zeng, and S.~T. Flammia.
\newblock Robust shadow estimation.
\newblock {\em arXiv preprint arXiv:2011.09636}, 2020.

\bibitem{Chen2013}
X.~Chen, Z.-C. Gu, Z.-X. Liu, and X.-G. Wen.
\newblock {Symmetry protected topological orders and the group cohomology of
  their symmetry group}.
\newblock {\em Phys. Rev. B}, 87(15):155114, 2013.

\bibitem{Chen2011}
X.~Chen, Z.-C. Gu, and X.-G. Wen.
\newblock {Classification of gapped symmetric phases in one-dimensional spin
  systems}.
\newblock {\em Phys. Rev. B}, 83(3):035107, 2011.

\bibitem{choi2021emergent}
J.~Choi, A.~L. Shaw, I.~S. Madjarov, X.~Xie, J.~P. Covey, J.~S. Cotler, D.~K.
  Mark, H.-Y. Huang, A.~Kale, H.~Pichler, et~al.
\newblock Emergent randomness and benchmarking from many-body quantum chaos.
\newblock {\em arXiv preprint arXiv:2103.03535}, 2021.

\bibitem{chollet2015keras}
F.~Chollet et~al.
\newblock Keras.
\newblock 2015.
\newblock Available at \url{https://github.com/fchollet/keras}.

\bibitem{Choo2018}
K.~Choo, G.~Carleo, N.~Regnault, and T.~Neupert.
\newblock Symmetries and many-body excitations with neural-network quantum
  states.
\newblock {\em Phys. Rev. Lett.}, 121:167204, 2018.

\bibitem{choo_fermionicnqs2020}
K.~Choo, A.~Mezzacapo, and G.~Carleo.
\newblock Fermionic neural-network states for ab-initio electronic structure.
\newblock {\em Nat. Commun.}, 11(1):2368, May 2020.

\bibitem{Claussen2020}
N.~Claussen, B.~A. Bernevig, and N.~Regnault.
\newblock {Detection of topological materials with machine learning}.
\newblock {\em Phys. Rev. B}, 101(24):245117, 2020.

\bibitem{cong2019quantum}
I.~Cong, S.~Choi, and M.~D. Lukin.
\newblock Quantum convolutional neural networks.
\newblock {\em Nat. Phys.}, 15(12):1273--1278, 2019.

\bibitem{Corboz2016}
P.~Corboz.
\newblock Variational optimization with infinite projected entangled-pair
  states.
\newblock {\em Phys. Rev. B}, 94:035133, 2016.

\bibitem{cortes1995support}
C.~Cortes and V.~Vapnik.
\newblock Support-vector networks.
\newblock {\em Mach. Learn.}, 20(3):273--297, 1995.

\bibitem{cotler2021emergent}
J.~S. Cotler, D.~K. Mark, H.-Y. Huang, F.~Hernandez, J.~Choi, A.~L. Shaw,
  M.~Endres, and S.~Choi.
\newblock Emergent quantum state designs from individual many-body
  wavefunctions.
\newblock {\em arXiv preprint arXiv:2103.03536}, 2021.

\bibitem{cotler2020quantum}
J.~S. Cotler and F.~Wilczek.
\newblock Quantum overlapping tomography.
\newblock {\em Phys. Rev. Lett.}, 124(10):100401, 2020.

\bibitem{Dehghani2021}
H.~Dehghani, Z.-P. Cian, M.~Hafezi, and M.~Barkeshli.
\newblock {Extraction of the many-body Chern number from a single wave
  function}.
\newblock {\em Phys. Rev. B}, 103(7):075102, 2021.

\bibitem{dassarma2017}
D.-L. Deng, X.~Li, and S.~Das~Sarma.
\newblock Machine learning topological states.
\newblock {\em Phys. Rev. B}, 96:195145, 2017.

\bibitem{Deng2017}
D.-L. Deng, X.~Li, and S.~Das~Sarma.
\newblock Quantum entanglement in neural network states.
\newblock {\em Phys. Rev. X}, 7:021021, 2017.

\bibitem{du2019graph}
S.~S. Du, K.~Hou, B.~P{\'o}czos, R.~Salakhutdinov, R.~Wang, and K.~Xu.
\newblock Graph neural tangent kernel: Fusing graph neural networks with graph
  kernels.
\newblock {\em arXiv preprint arXiv:1905.13192}, 2019.

\bibitem{Misha256}
S.~{Ebadi}, T.~T. {Wang}, H.~{Levine}, A.~{Keesling}, G.~{Semeghini},
  A.~{Omran}, D.~{Bluvstein}, R.~{Samajdar}, H.~{Pichler}, W.~W. {Ho},
  S.~{Choi}, S.~{Sachdev}, M.~{Greiner}, V.~{Vuletic}, and M.~D. {Lukin}.
\newblock {Quantum Phases of Matter on a 256-Atom Programmable Quantum
  Simulator}.
\newblock {\em arXiv e-prints}, page arXiv:2012.12281, 2020.

\bibitem{Elben2020Entanglement}
A.~Elben, R.~Kueng, H.-Y. Huang, R.~van Bijnen, C.~Kokail, M.~Dalmonte,
  P.~Calabrese, B.~Kraus, J.~Preskill, P.~Zoller, and B.~Vermersch.
\newblock Mixed-state entanglement from local randomized measurements.
\newblock {\em Phys. Rev. Lett.}, 125:200501, 2020.

\bibitem{Elben2018}
A.~Elben, B.~Vermersch, C.~F. Roos, and P.~Zoller.
\newblock {Statistical correlations between locally randomized measurements: A
  toolbox for probing entanglement in many-body quantum states}.
\newblock {\em Phys. Rev. A}, 99(5):052323, 2019.

\bibitem{elben2020many}
A.~Elben, J.~Yu, G.~Zhu, M.~Hafezi, F.~Pollmann, P.~Zoller, and B.~Vermersch.
\newblock Many-body topological invariants from randomized measurements in
  synthetic quantum matter.
\newblock {\em Science advances}, 6(15):eaaz3666, 2020.

\bibitem{Endres2016}
M.~Endres, H.~Bernien, A.~Keesling, H.~Levine, E.~R. Anschuetz, A.~Krajenbrink,
  C.~Senko, V.~Vuletic, M.~Greiner, and M.~D. Lukin.
\newblock {Atom-by-atom assembly of defect-free one-dimensional cold atom
  arrays}.
\newblock {\em Science}, 354(6315):1024--1027, 2016.

\bibitem{evans2019scalable}
T.~J. Evans, R.~Harper, and S.~T. Flammia.
\newblock Scalable bayesian hamiltonian learning.
\newblock {\em arXiv preprint arXiv:1912.07636}, 2019.

\bibitem{EvertzLoop}
H.~G. Evertz.
\newblock The loop algorithm.
\newblock {\em Adv. Phys.}, 52(1):1--66, 2003.

\bibitem{Fawzi2019}
H.~Fawzi, J.~Saunderson, and P.~A. Parrilo.
\newblock {Semidefinite Approximations of the Matrix Logarithm}.
\newblock {\em Found. Comput. Math.}, 19(2):259--296, apr 2019.

\bibitem{Fendley2004}
P.~Fendley, K.~Sengupta, and S.~Sachdev.
\newblock Competing density-wave orders in a one-dimensional hard-boson model.
\newblock {\em Phys. Rev. B}, 69:075106, 2004.

\bibitem{Ferrari2019}
F.~Ferrari, F.~Becca, and J.~Carrasquilla.
\newblock Neural gutzwiller-projected variational wave functions.
\newblock {\em Phys. Rev. B}, 100:125131, 2019.

\bibitem{FerrisSampling}
A.~J. Ferris and G.~Vidal.
\newblock Perfect sampling with unitary tensor networks.
\newblock {\em Phys. Rev. B}, 85:165146, 2012.

\bibitem{pastaq}
M.~Fishman and G.~Torlai.
\newblock \mbox{PastaQ}: A package for simulation, tomography and analysis of
  quantum computers.
\newblock 2020.
\newblock Available at \url{https://github.com/GTorlai/PastaQ.jl}.

\bibitem{itensor}
M.~{Fishman}, S.~R. {White}, and E.~{Miles Stoudenmire}.
\newblock {The ITensor Software Library for Tensor Network Calculations}.
\newblock {\em arXiv e-prints}, page arXiv:2007.14822, 2020.

\bibitem{flammia2012quantum}
S.~T. Flammia, D.~Gross, Y.-K. Liu, and J.~Eisert.
\newblock Quantum tomography via compressed sensing: error bounds, sample
  complexity and efficient estimators.
\newblock {\em New J. Phys.}, 14(9):095022, 2012.

\bibitem{gilbert1952comparison}
E.~N. Gilbert.
\newblock A comparison of signalling alphabets.
\newblock {\em The Bell system technical journal}, 31(3):504--522, 1952.

\bibitem{gilmer2017neural}
J.~Gilmer, S.~S. Schoenholz, P.~F. Riley, O.~Vinyals, and G.~E. Dahl.
\newblock Neural message passing for quantum chemistry.
\newblock {\em arXiv preprint arXiv:1704.01212}, 2017.

\bibitem{Glasser2018}
I.~Glasser, N.~Pancotti, M.~August, I.~D. Rodriguez, and J.~I. Cirac.
\newblock Neural-network quantum states, string-bond states, and chiral
  topological states.
\newblock {\em Phys. Rev. X}, 8:011006, 2018.

\bibitem{Greplova_2020}
E.~Greplova, A.~Valenti, G.~Boschung, F.~Schäfer, N.~Lörch, and S.~D. Huber.
\newblock Unsupervised identification of topological phase transitions using
  predictive models.
\newblock {\em New J. Phys.}, 22(4):045003, 2020.

\bibitem{Gu2009}
Z.-C. Gu and X.-G. Wen.
\newblock {Tensor-entanglement-filtering renormalization approach and
  symmetry-protected topological order}.
\newblock {\em Phys. Rev. B}, 80(15):155131, 2009.

\bibitem{guta2020fast}
M.~Gu{\c{t}}{\u{a}}, J.~Kahn, R.~Kueng, and J.~A. Tropp.
\newblock Fast state tomography with optimal error bounds.
\newblock {\em Journal of Physics A: Mathematical and Theoretical},
  53(20):204001, 2020.

\bibitem{haah2017sample}
J.~Haah, A.~W. Harrow, Z.~Ji, X.~Wu, and N.~Yu.
\newblock Sample-optimal tomography of quantum states.
\newblock {\em IEEE Trans. Inf. Theory}, 63(9):5628--5641, 2017.

\bibitem{hadfield2020measurements}
C.~Hadfield, S.~Bravyi, R.~Raymond, and A.~Mezzacapo.
\newblock Measurements of quantum hamiltonians with locally-biased classical
  shadows.
\newblock {\em arXiv:2006.15788}, 2020.

\bibitem{Haegeman2012}
J.~Haegeman, D.~P{\'{e}}rez-Garc{\'{i}}a, I.~Cirac, and N.~Schuch.
\newblock {Order Parameter for Symmetry-Protected Phases in One Dimension}.
\newblock {\em Phys. Rev. Lett.}, 109(5):050402, 2012.

\bibitem{Haghshenas2019}
R.~Haghshenas, M.~J. O'Rourke, and G.~K.-L. Chan.
\newblock Conversion of projected entangled pair states into a canonical form.
\newblock {\em Phys. Rev. B}, 100:054404, 2019.

\bibitem{Haldane1983a}
F.~Haldane.
\newblock {Continuum dynamics of the 1-D Heisenberg antiferromagnet:
  Identification with the O(3) nonlinear sigma model}.
\newblock {\em Phys. Lett. A}, 93(9):464--468, 1983.

\bibitem{hastings2010locality}
M.~B. Hastings.
\newblock Locality in quantum systems.
\newblock {\em arXiv:1008.5137}.

\bibitem{hastings2006spectral}
M.~B. Hastings and T.~Koma.
\newblock Spectral gap and exponential decay of correlations.
\newblock {\em Communications in mathematical physics}, 265(3):781--804, 2006.

\bibitem{Hastings2015}
M.~B. Hastings and S.~Michalakis.
\newblock {Quantization of Hall Conductance for Interacting Electrons on a
  Torus}.
\newblock {\em Commun. Math. Phys.}, 334(1):433--471, 2015.

\bibitem{hastings2005quasiadiabatic}
M.~B. Hastings and X.-G. Wen.
\newblock Quasiadiabatic continuation of quantum states: The stability of
  topological ground-state degeneracy and emergent gauge invariance.
\newblock {\em Phys. Rev. B}, 72(4):045141, 2005.

\bibitem{hazan2011beating}
E.~Hazan, T.~Koren, and N.~Srebro.
\newblock Beating sgd: Learning svms in sublinear time.
\newblock In {\em NIPS}, pages 1233--1241. Citeseer, 2011.

\bibitem{helgaker2014molecular}
T.~Helgaker, P.~Jorgensen, and J.~Olsen.
\newblock {\em Molecular electronic-structure theory}.
\newblock John Wiley \& Sons, 2014.

\bibitem{hibat2020recurrent}
M.~Hibat-Allah, M.~Ganahl, L.~E. Hayward, R.~G. Melko, and J.~Carrasquilla.
\newblock Recurrent neural network wave functions.
\newblock {\em Physical Review Research}, 2(2):023358, 2020.

\bibitem{HohenbergKohn}
P.~Hohenberg and W.~Kohn.
\newblock Inhomogeneous electron gas.
\newblock {\em Phys. Rev.}, 136:B864--B871, 1964.

\bibitem{holevo}
A.~S. Holevo.
\newblock {Some estimates of the information transmitted by quantum
  communication channels}.
\newblock {\em Probl. Inf. Transm.}, 9(3):177--183, 1973.

\bibitem{hu2021hamiltonian}
H.-Y. Hu and Y.-Z. You.
\newblock Hamiltonian-driven shadow tomography of quantum states.
\newblock {\em arXiv preprint arXiv:2102.10132}, 2021.

\bibitem{huang2020power}
H.-Y. Huang, M.~Broughton, M.~Mohseni, R.~Babbush, S.~Boixo, H.~Neven, and
  J.~R. McClean.
\newblock Power of data in quantum machine learning.
\newblock {\em Nat. Commun.}, 12(1):1--9, 2021.

\bibitem{huang2020predicting}
H.-Y. Huang, R.~Kueng, and J.~Preskill.
\newblock Predicting many properties of a quantum system from very few
  measurements.
\newblock {\em Nat. Phys.}, 16:1050––1057, 2020.

\bibitem{huang2021efficient}
H.-Y. Huang, R.~Kueng, and J.~Preskill.
\newblock Efficient estimation of pauli observables by derandomization.
\newblock {\em Physical review letters}, 127 3:030503, 2021.

\bibitem{huang2021information}
H.-Y. Huang, R.~Kueng, and J.~Preskill.
\newblock Information-theoretic bounds on quantum advantage in machine
  learning.
\newblock {\em Phys. Rev. Lett.}, 126:190505, 2021.

\bibitem{Hyatt2019}
K.~{Hyatt} and E.~M. {Stoudenmire}.
\newblock {DMRG Approach to Optimizing Two-Dimensional Tensor Networks}.
\newblock {\em arXiv e-prints}, page arXiv:1908.08833, 2019.

\bibitem{jacot2018neural}
A.~Jacot, F.~Gabriel, and C.~Hongler.
\newblock Neural tangent kernel: Convergence and generalization in neural
  networks.
\newblock In {\em NeurIPS}, pages 8571--8580, 2018.

\bibitem{jiang2019vulnerability}
S.~Jiang, S.~Lu, and D.-L. Deng.
\newblock Vulnerability of machine learning phases of matter.
\newblock {\em arXiv preprint arXiv:1910.13453}, 2019.

\bibitem{joachims1998making}
T.~Joachims.
\newblock {\em Making Large-Scale Support Vector Machine Learning Practical},
  page 169–184.
\newblock MIT Press, Cambridge, MA, USA, 1999.

\bibitem{Jordan2008}
J.~Jordan, R.~Or\'us, G.~Vidal, F.~Verstraete, and J.~I. Cirac.
\newblock Classical simulation of infinite-size quantum lattice systems in two
  spatial dimensions.
\newblock {\em Phys. Rev. Lett.}, 101:250602, 2008.

\bibitem{Kapustin2020}
A.~Kapustin and N.~Sopenko.
\newblock {Hall conductance and the statistics of flux insertions in gapped
  interacting lattice systems}.
\newblock {\em J. Math. Phys.}, 61(10):101901, 2020.

\bibitem{pmlr-v23-karnin12}
Z.~Karnin, E.~Liberty, S.~Lovett, R.~Schwartz, and O.~Weinstein.
\newblock Unsupervised svms: On the complexity of the furthest hyperplane
  problem.
\newblock In S.~Mannor, N.~Srebro, and R.~C. Williamson, editors, {\em
  Proceedings of the 25th Annual Conference on Learning Theory}, volume~23 of
  {\em Proceedings of Machine Learning Research}, pages 2.1--2.17, Edinburgh,
  Scotland, 2012. JMLR Workshop and Conference Proceedings.

\bibitem{KaulQMC}
R.~K. Kaul, R.~G. Melko, and A.~W. Sandvik.
\newblock Bridging lattice-scale physics and continuum field theory with
  quantum monte carlo simulations.
\newblock {\em Annual Review of Condensed Matter Physics}, 4(1):179--215, 2013.

\bibitem{kawai2020predicting}
H.~Kawai and Y.~O. Nakagawa.
\newblock Predicting excited states from ground state wavefunction by
  supervised quantum machine learning.
\newblock {\em Machine Learning: Science and Technology}, 1(4):045027, 2020.

\bibitem{kingma2014adam}
D.~P. Kingma and J.~Ba.
\newblock Adam: A method for stochastic optimization.
\newblock {\em arXiv preprint arXiv:1412.6980}, 2014.

\bibitem{kitaev2003fault}
A.~Y. Kitaev.
\newblock Fault-tolerant quantum computation by anyons.
\newblock {\em Annals of Physics}, 303(1):2--30, 2003.

\bibitem{Kitaev2006}
A.~Y. Kitaev.
\newblock {Anyons in an exactly solved model and beyond}.
\newblock {\em Ann. Phys.}, 321(1):2--111, 2006.

\bibitem{kitaev2006topological}
A.~Y. Kitaev and J.~Preskill.
\newblock Topological entanglement entropy.
\newblock {\em Phys. Rev. Lett.}, 96(11):110404, 2006.

\bibitem{knuth1992problem}
D.~E. Knuth and A.~Raghunathan.
\newblock The problem of compatible representatives.
\newblock {\em SIAM Journal on Discrete Mathematics}, 5(3):422--427, 1992.

\bibitem{koh2020classical}
D.~E. Koh and S.~Grewal.
\newblock Classical shadows with noise.
\newblock {\em arXiv preprint arXiv:2011.11580}, 2020.

\bibitem{NobelKohn}
W.~Kohn.
\newblock Nobel lecture: Electronic structure of matter---wave functions and
  density functionals.
\newblock {\em Rev. Mod. Phys.}, 71:1253--1266, 1999.

\bibitem{Kottmann2021}
K.~Kottmann, P.~Corboz, M.~Lewenstein, and A.~Ac{\'{i}}n.
\newblock {Unsupervised mapping of phase diagrams of 2D systems from infinite
  projected entangled-pair states via deep anomaly detection}.
\newblock may 2021.

\bibitem{kozen1992design}
D.~C. Kozen.
\newblock {\em The design and analysis of algorithms}.
\newblock Springer Science \& Business Media, 1992.

\bibitem{Kruthoff2017}
J.~Kruthoff, J.~de~Boer, J.~van Wezel, C.~L. Kane, and R.-J. Slager.
\newblock {Topological Classification of Crystalline Insulators through Band
  Structure Combinatorics}.
\newblock {\em Phys. Rev. X}, 7(4):041069, dec 2017.

\bibitem{kuwahara2020strictly}
T.~Kuwahara and K.~Saito.
\newblock Strictly linear light cones in long-range interacting systems of
  arbitrary dimensions.
\newblock {\em Physical Review X}, 10(3):031010, 2020.

\bibitem{Labuhn}
H.~Labuhn, D.~Barredo, S.~Ravets, S.~de~L{\'e}s{\'e}leuc, T.~Macr{\`\i},
  T.~Lahaye, and A.~Browaeys.
\newblock Tunable two-dimensional arrays of single rydberg atoms for realizing
  quantum ising models.
\newblock {\em Nature}, 534:667 EP --, 2016.

\bibitem{landau2015polynomial}
Z.~Landau, U.~Vazirani, and T.~Vidick.
\newblock A polynomial time algorithm for the ground state of one-dimensional
  gapped local hamiltonians.
\newblock {\em Nat. Phys.}, 11(7):566--569, 2015.

\bibitem{ledoux2013probability}
M.~Ledoux and M.~Talagrand.
\newblock {\em Probability in Banach Spaces: isoperimetry and processes}.
\newblock Springer Science \& Business Media, 2013.

\bibitem{levin2006detecting}
M.~Levin and X.-G. Wen.
\newblock Detecting topological order in a ground state wave function.
\newblock {\em Phys. Rev. Lett.}, 96(11):110405, 2006.

\bibitem{Lewin2020}
M.~Lewin, E.~H. Lieb, and R.~Seiringer.
\newblock {The local density approximation in density functional theory}.
\newblock {\em Pure Appl. Anal.}, 2(1):35--73, 2020.

\bibitem{Li2008}
H.~Li and F.~D.~M. Haldane.
\newblock {Entanglement Spectrum as a Generalization of Entanglement Entropy:
  Identification of Topological Order in Non-Abelian Fractional Quantum Hall
  Effect States}.
\newblock {\em Phys. Rev. Lett.}, 101(1):010504, 2008.

\bibitem{li2019enhanced}
Z.~Li, R.~Wang, D.~Yu, S.~S. Du, W.~Hu, R.~Salakhutdinov, and S.~Arora.
\newblock Enhanced convolutional neural tangent kernels.
\newblock {\em arXiv preprint arXiv:1911.00809}, 2019.

\bibitem{Lichtenstein1982PlanarFA}
D.~Lichtenstein.
\newblock Planar formulae and their uses.
\newblock {\em SIAM J. Comput.}, 11:329--343, 1982.

\bibitem{Lieb1972}
E.~H. Lieb and D.~W. Robinson.
\newblock {The finite group velocity of quantum spin systems}.
\newblock {\em Commun. Math. Phys.}, 28(3):251--257, sep 1972.

\bibitem{Luo2021}
D.~{Luo}, Z.~{Chen}, K.~{Hu}, Z.~{Zhao}, V.~{Mikyoung Hur}, and B.~K. {Clark}.
\newblock {Gauge Invariant Autoregressive Neural Networks for Quantum Lattice
  Models}.
\newblock {\em arXiv e-prints}, page arXiv:2101.07243, 2021.

\bibitem{mcclean2016theory}
J.~R. McClean, J.~Romero, R.~Babbush, and A.~Aspuru-Guzik.
\newblock The theory of variational hybrid quantum-classical algorithms.
\newblock {\em New J. Phys.}, 18(2):023023, 2016.

\bibitem{mercer1909functions}
J.~Mercer.
\newblock {Functions of positive and negative type, and their connection the
  theory of integral equations}.
\newblock {\em Philos. T. Roy. Soc. A}, 209(441-458):415--446, jan 1909.

\bibitem{Ming2019}
Y.~Ming, C.-T. Lin, S.~D. Bartlett, and W.-W. Zhang.
\newblock {Quantum topology identification with deep neural networks and
  quantum walks}.
\newblock {\em npj Comput. Mater.}, 5(1):88, 2019.

\bibitem{mohri2018foundations}
M.~Mohri, A.~Rostamizadeh, and A.~Talwalkar.
\newblock {\em Foundations of machine learning}.
\newblock The MIT Press, 2018.

\bibitem{morawetz2020}
S.~{Morawetz}, I.~J.~S. {De Vlugt}, J.~{Carrasquilla}, and R.~G. {Melko}.
\newblock {U(1) symmetric recurrent neural networks for quantum state
  reconstruction}.
\newblock {\em arXiv e-prints}, page arXiv:2010.14514, 2020.

\bibitem{moreno2020deep}
J.~R. Moreno, G.~Carleo, and A.~Georges.
\newblock Deep learning the hohenberg-kohn maps of density functional theory.
\newblock {\em Physical Review Letters}, 125(7):076402, 2020.

\bibitem{murphy2012machine}
K.~P. Murphy.
\newblock {\em Machine learning: a probabilistic perspective}.
\newblock MIT press, 2012.

\bibitem{nachtergaele2018lieb}
B.~Nachtergaele, R.~Sims, and A.~Young.
\newblock Lieb--robinson bounds, the spectral flow, and stability of the
  spectral gap for lattice fermion systems.
\newblock {\em Mathematical Problems in Quantum Physics}, 717, 2018.

\bibitem{nakamura2002a}
M.~Nakamura and S.~Todo.
\newblock {Order Parameter to Characterize Valence-Bond-Solid States in Quantum
  Spin Chains}.
\newblock {\em Phys. Rev. Lett.}, 89(7):077204, 2002.

\bibitem{Nomura2017}
Y.~Nomura, A.~S. Darmawan, Y.~Yamaji, and M.~Imada.
\newblock Restricted boltzmann machine learning for solving strongly correlated
  quantum systems.
\newblock {\em Phys. Rev. B}, 96:205152, 2017.

\bibitem{neuraltangents2020}
R.~Novak, L.~Xiao, J.~Hron, J.~Lee, A.~A. Alemi, J.~Sohl-Dickstein, and S.~S.
  Schoenholz.
\newblock Neural tangents: Fast and easy infinite neural networks in python.
\newblock In {\em International Conference on Learning Representations}, 2020.

\bibitem{Ohliger_2013}
M.~Ohliger, V.~Nesme, and J.~Eisert.
\newblock Efficient and feasible state tomography of quantum many-body systems.
\newblock {\em New Journal of Physics}, 15(1):015024, Jan 2013.

\bibitem{ORUS2014117}
R.~Orús.
\newblock A practical introduction to tensor networks: Matrix product states
  and projected entangled pair states.
\newblock {\em Ann. Phys.}, 349:117--158, 2014.

\bibitem{osborne2007simulating}
T.~J. Osborne.
\newblock Simulating adiabatic evolution of gapped spin systems.
\newblock {\em Phys. Rev. A}, 75(3):032321, 2007.

\bibitem{paini2019approximate}
M.~Paini and A.~Kalev.
\newblock An approximate description of quantum states.
\newblock {\em arXiv preprint arXiv:1910.10543}, 2019.

\bibitem{Peano2021}
V.~Peano, F.~Sapper, and F.~Marquardt.
\newblock {Rapid Exploration of Topological Band Structures Using Deep
  Learning}.
\newblock {\em Phys. Rev. X}, 11(2):021052, 2021.

\bibitem{pearson1901liii}
K.~Pearson.
\newblock Liii. on lines and planes of closest fit to systems of points in
  space.
\newblock {\em The London, Edinburgh, and Dublin Philosophical Magazine and
  Journal of Science}, 2(11):559--572, 1901.

\bibitem{scikit-learn}
F.~Pedregosa, G.~Varoquaux, A.~Gramfort, V.~Michel, B.~Thirion, O.~Grisel,
  M.~Blondel, P.~Prettenhofer, R.~Weiss, V.~Dubourg, J.~Vanderplas, A.~Passos,
  D.~Cournapeau, M.~Brucher, M.~Perrot, and E.~Duchesnay.
\newblock Scikit-learn: Machine learning in {P}ython.
\newblock {\em J. Mach. Learn. Res.}, 12:2825--2830, 2011.

\bibitem{Garcia07}
D.~Perez-Garcia, F.~Verstraete, M.~M. Wolf, and J.~I. Cirac.
\newblock Matrix product state representations.
\newblock {\em Quantum Info. Comput.}, 7(5):401–430, 2007.

\bibitem{peruzzo2014variational}
A.~Peruzzo, J.~McClean, P.~Shadbolt, M.-H. Yung, X.-Q. Zhou, P.~J. Love,
  A.~Aspuru-Guzik, and J.~L. O’brien.
\newblock A variational eigenvalue solver on a photonic quantum processor.
\newblock {\em Nat. Commun.}, 5:4213, 2014.

\bibitem{Po2020}
H.~C. Po.
\newblock {Symmetry indicators of band topology}.
\newblock {\em J. Phys. Condens. Matter}, 32(26):263001, jun 2020.

\bibitem{Po2017}
H.~C. Po, A.~Vishwanath, and H.~Watanabe.
\newblock {Symmetry-based indicators of band topology in the 230 space groups}.
\newblock {\em Nat. Commun.}, 8(1):50, 2017.

\bibitem{pollmann2012detection}
F.~Pollmann and A.~M. Turner.
\newblock Detection of symmetry-protected topological phases in one dimension.
\newblock {\em Phys. Rev. B}, 86(12):125441, 2012.

\bibitem{Pollmann2010}
F.~Pollmann, A.~M. Turner, E.~Berg, and M.~Oshikawa.
\newblock {Entanglement spectrum of a topological phase in one dimension}.
\newblock {\em Phys. Rev. B}, 81(6):064439, 2010.

\bibitem{qiao2020orbnet}
Z.~Qiao, M.~Welborn, A.~Anandkumar, F.~R. Manby, and T.~F. Miller~III.
\newblock Orbnet: Deep learning for quantum chemistry using symmetry-adapted
  atomic-orbital features.
\newblock {\em J. Chem. Phys.}, 153(12):124111, 2020.

\bibitem{Read2012}
N.~Read.
\newblock {Topological phases and quasiparticle braiding}.
\newblock {\em Phys. Today}, 65(7):38, 2012.

\bibitem{Rem2018}
B.~S. Rem, N.~K{\"{a}}ming, M.~Tarnowski, L.~Asteria, N.~Fl{\"{a}}schner,
  C.~Becker, K.~Sengstock, and C.~Weitenberg.
\newblock {Identifying quantum phase transitions using artificial neural
  networks on experimental data}.
\newblock {\em Nat. Phys.}, 15(9):917--920, 2019.

\bibitem{ribeiro2016should}
M.~T. Ribeiro, S.~Singh, and C.~Guestrin.
\newblock " why should i trust you?" explaining the predictions of any
  classifier.
\newblock In {\em Proceedings of the 22nd ACM SIGKDD international conference
  on knowledge discovery and data mining}, pages 1135--1144, 2016.

\bibitem{rigollet201518}
P.~Rigollet and J.-C. H{\"u}tter.
\newblock High dimensional statistics.
\newblock {\em Lecture notes for course 18S997}, 813:814, 2015.
\newblock Available at
  \url{http://www-math.mit.edu/~rigollet/PDFs/RigNotes17.pdf}.

\bibitem{rivest1978method}
R.~L. Rivest, A.~Shamir, and L.~Adleman.
\newblock A method for obtaining digital signatures and public-key
  cryptosystems.
\newblock {\em Communications of the ACM}, 21(2):120--126, 1978.

\bibitem{rodriguez2019identifying}
J.~F. Rodriguez-Nieva and M.~S. Scheurer.
\newblock Identifying topological order through unsupervised machine learning.
\newblock {\em Nat. Phys.}, 15(8):790--795, 2019.

\bibitem{SandvikSSE}
A.~W. Sandvik.
\newblock Stochastic series expansion method with operator-loop update.
\newblock {\em Phys. Rev. B}, 59:R14157--R14160, 1999.

\bibitem{Schauss1455}
P.~Schau{\ss}, J.~Zeiher, T.~Fukuhara, S.~Hild, M.~Cheneau, T.~Macr{\`\i},
  T.~Pohl, I.~Bloch, and C.~Gross.
\newblock Crystallization in ising quantum magnets.
\newblock {\em Science}, 347(6229):1455--1458, 2015.

\bibitem{Scheurer2020}
M.~S. Scheurer and R.-J. Slager.
\newblock {Unsupervised Machine Learning and Band Topology}.
\newblock {\em Phys. Rev. Lett.}, 124(22):226401, 2020.

\bibitem{neupeurt2017}
F.~Schindler, N.~Regnault, and T.~Neupert.
\newblock Probing many-body localization with neural networks.
\newblock {\em Phys. Rev. B}, 95:245134, 2017.

\bibitem{Schleder2019}
G.~R. Schleder, A.~C.~M. Padilha, C.~M. Acosta, M.~Costa, and A.~Fazzio.
\newblock {From DFT to machine learning: recent approaches to materials
  science–a review}.
\newblock {\em J. Phys. Mater.}, 2(3):032001, 2019.

\bibitem{scholkopf2001generalized}
B.~Sch{\"o}lkopf, R.~Herbrich, and A.~J. Smola.
\newblock A generalized representer theorem.
\newblock In {\em International conference on computational learning theory},
  pages 416--426. Springer, 2001.

\bibitem{scholkopf1998nonlinear}
B.~Sch{\"o}lkopf, A.~Smola, and K.-R. M{\"u}ller.
\newblock Nonlinear component analysis as a kernel eigenvalue problem.
\newblock {\em Neural computation}, 10(5):1299--1319, 1998.

\bibitem{2020arXiv201212268S}
P.~{Scholl}, M.~{Schuler}, H.~J. {Williams}, A.~A. {Eberharter}, D.~{Barredo},
  K.-N. {Schymik}, V.~{Lienhard}, L.-P. {Henry}, T.~C. {Lang}, T.~{Lahaye},
  A.~M. {L{\"a}uchli}, and A.~{Browaeys}.
\newblock {Programmable quantum simulation of 2D antiferromagnets with hundreds
  of Rydberg atoms}.
\newblock {\em arXiv e-prints}, page arXiv:2012.12268, 2020.

\bibitem{SCHOLLWOCK201196}
U.~Schollwoeck.
\newblock The density-matrix renormalization group in the age of matrix product
  states.
\newblock {\em Ann. Phys.}, 326(1):96 -- 192, 2011.
\newblock January 2011 Special Issue.

\bibitem{Schuch2009}
N.~Schuch and F.~Verstraete.
\newblock {Computational complexity of interacting electrons and fundamental
  limitations of density functional theory}.
\newblock {\em Nat. Phys.}, 5(10):732--735, 2009.

\bibitem{schutt2019unifying}
K.~Sch{\"u}tt, M.~Gastegger, A.~Tkatchenko, K.-R. M{\"u}ller, and R.~J. Maurer.
\newblock Unifying machine learning and quantum chemistry with a deep neural
  network for molecular wavefunctions.
\newblock {\em Nat. Commun.}, 10(1):1--10, 2019.

\bibitem{Shapourian2017}
H.~Shapourian, K.~Shiozaki, and S.~Ryu.
\newblock {Many-Body Topological Invariants for Fermionic Symmetry-Protected
  Topological Phases}.
\newblock {\em Phys. Rev. Lett.}, 118(21):216402, 2017.

\bibitem{Stoudenmire12}
E.~Stoudenmire and S.~R. White.
\newblock Studying two-dimensional systems with the density matrix
  renormalization group.
\newblock {\em Annu. Rev. Condens.}, 3(1):111--128, 2012.

\bibitem{Struchalin2021Shadows}
G.~Struchalin, Y.~A. Zagorovskii, E.~Kovlakov, S.~Straupe, and S.~Kulik.
\newblock Experimental estimation of quantum state properties from classical
  shadows.
\newblock {\em PRX Quantum}, 2:010307, 2021.

\bibitem{su1979solitons}
W.~Su, J.~Schrieffer, and A.~J. Heeger.
\newblock Solitons in polyacetylene.
\newblock {\em Phys. Rev. Lett.}, 42(25):1698, 1979.

\bibitem{Sun2018}
N.~Sun, J.~Yi, P.~Zhang, H.~Shen, and H.~Zhai.
\newblock {Deep learning topological invariants of band insulators}.
\newblock {\em Phys. Rev. B}, 98(8):085402, 2018.

\bibitem{Moore2020}
A.~Szasz, J.~Motruk, M.~P. Zaletel, and J.~E. Moore.
\newblock Chiral spin liquid phase of the triangular lattice hubbard model: A
  density matrix renormalization group study.
\newblock {\em Phys. Rev. X}, 10:021042, 2020.

\bibitem{tasakiPRL2018}
H.~Tasaki.
\newblock Topological phase transition and z2 index for s=1 quantum spin
  chains.
\newblock {\em Phys. Rev. Lett.}, 121:140604, 2018.

\bibitem{Tasaki2020}
H.~Tasaki.
\newblock {\em {Physics and Mathematics of Quantum Many-Body Systems}}.
\newblock Graduate Texts in Physics. Springer International Publishing, Cham,
  2020.

\bibitem{torlai_Tomo}
G.~Torlai, G.~Mazzola, J.~Carrasquilla, M.~Troyer, R.~Melko, and G.~Carleo.
\newblock Neural-network quantum state tomography.
\newblock {\em Nat. Phys.}, 14(5):447--450, 2018.

\bibitem{torlai_learning_2016}
G.~Torlai and R.~G. Melko.
\newblock Learning thermodynamics with {Boltzmann} machines.
\newblock {\em Physical Review B}, 94(16):165134, 2016.

\bibitem{torlai_rydberg19}
G.~Torlai, B.~Timar, E.~P.~L. van Nieuwenburg, H.~Levine, A.~Omran,
  A.~Keesling, H.~Bernien, M.~Greiner, V.~Vuleti\ifmmode~\acute{c}\else
  \'{c}\fi{}, M.~D. Lukin, R.~G. Melko, and M.~Endres.
\newblock Integrating neural networks with a quantum simulator for state
  reconstruction.
\newblock {\em Phys. Rev. Lett.}, 123:230504, 2019.

\bibitem{Totsuka1995}
K.~Totsuka and M.~Suzuki.
\newblock {Matrix formalism for the VBS-type models and hidden order}.
\newblock {\em J. Phys. Condens. Matter}, 7(8):1639--1662, 1995.

\bibitem{tran2020hierarchy}
M.~C. Tran, C.-F. Chen, A.~Ehrenberg, A.~Y. Guo, A.~Deshpande, Y.~Hong, Z.-X.
  Gong, A.~V. Gorshkov, and A.~Lucas.
\newblock Hierarchy of linear light cones with long-range interactions.
\newblock {\em Physical Review X}, 10(3):031009, 2020.

\bibitem{Tro12:User-Friendly}
J.~A. Tropp.
\newblock User-friendly tail bounds for sums of random matrices.
\newblock {\em Found. Comput. Math}, 12(4):389--434, 2012.

\bibitem{TroyerSignProblem}
M.~Troyer and U.-J. Wiese.
\newblock Computational complexity and fundamental limitations to fermionic
  quantum monte carlo simulations.
\newblock {\em Phys. Rev. Lett.}, 94:170201, 2005.

\bibitem{VALIANT198685}
L.~Valiant and V.~Vazirani.
\newblock {NP} is as easy as detecting unique solutions.
\newblock {\em Theoretical Computer Science}, 47:85--93, 1986.

\bibitem{VanEnk2012}
S.~J. van Enk and C.~W.~J. Beenakker.
\newblock Measuring $\mathrm{Tr}{\ensuremath{\rho}}^{n}$ on single copies of
  $\ensuremath{\rho}$ using random measurements.
\newblock {\em Phys. Rev. Lett.}, 108:110503, 2012.

\bibitem{evert2017nature}
E.~P.~L. van Nieuwenburg, Y.-H. Liu, and S.~D. Huber.
\newblock Learning phase transitions by confusion.
\newblock {\em Nat. Phys.}, 13:435, 2017.

\bibitem{Vanderstraeten2016}
L.~Vanderstraeten, J.~Haegeman, P.~Corboz, and F.~Verstraete.
\newblock Gradient methods for variational optimization of projected
  entangled-pair states.
\newblock {\em Phys. Rev. B}, 94:155123, 2016.

\bibitem{vargas2018extrapolating}
R.~A. Vargas-Hern{\'a}ndez, J.~Sous, M.~Berciu, and R.~V. Krems.
\newblock Extrapolating quantum observables with machine learning: inferring
  multiple phase transitions from properties of a single phase.
\newblock {\em Physical review letters}, 121(25):255702, 2018.

\bibitem{Vermersch2018}
B.~Vermersch, A.~Elben, M.~Dalmonte, J.~I. Cirac, and P.~Zoller.
\newblock {Unitary n -designs via random quenches in atomic Hubbard and spin
  models: Application to the measurement of R{\'{e}}nyi entropies}.
\newblock {\em Phys. Rev. A}, 97(2):023604, 2018.

\bibitem{vershynin2018high}
R.~Vershynin.
\newblock {\em High-dimensional probability}, volume~47 of {\em Cambridge
  Series in Statistical and Probabilistic Mathematics}.
\newblock Cambridge University Press, Cambridge, 2018.

\bibitem{Vestraete2008}
F.~Verstraete, V.~Murg, and J.~Cirac.
\newblock Matrix product states, projected entangled pair states, and
  variational renormalization group methods for quantum spin systems.
\newblock {\em Adv. Phys.}, 57(2):143--224, 2008.

\bibitem{rbm_su2}
T.~Vieijra, C.~Casert, J.~Nys, W.~De~Neve, J.~Haegeman, J.~Ryckebusch, and
  F.~Verstraete.
\newblock Restricted boltzmann machines for quantum states with non-abelian or
  anyonic symmetries.
\newblock {\em Phys. Rev. Lett.}, 124:097201, 2020.

\bibitem{wan2020fast}
K.~Wan and I.~Kim.
\newblock Fast digital methods for adiabatic state preparation.
\newblock {\em arXiv preprint arXiv:2004.04164}, 2020.

\bibitem{leiwang2016}
L.~Wang.
\newblock Discovering phase transitions with unsupervised learning.
\newblock {\em Phys. Rev. B}, 94:195105, 2016.

\bibitem{weisz2012summability}
F.~Weisz.
\newblock Summability of multi-dimensional trigonometric fourier series.
\newblock {\em arXiv preprint arXiv:1206.1789}, 2012.

\bibitem{Wetzel2017}
S.~J. Wetzel.
\newblock Unsupervised learning of phase transitions: From principal component
  analysis to variational autoencoders.
\newblock {\em Phys. Rev. E}, 96:022140, 2017.

\bibitem{DMRG1}
S.~R. White.
\newblock Density matrix formulation for quantum renormalization groups.
\newblock {\em Phys. Rev. Lett.}, 69:2863--2866, 1992.

\bibitem{DMRG2}
S.~R. White.
\newblock Density-matrix algorithms for quantum renormalization groups.
\newblock {\em Phys. Rev. B}, 48:10345--10356, 1993.

\bibitem{wildebook}
M.~M. Wilde.
\newblock {\em {Quantum Information Theory}}.
\newblock Cambridge University Press, Cambridge, 2nd edition, 2013.

\bibitem{Wu2019}
H.-Q. Wu, S.-S. Gong, and D.~N. Sheng.
\newblock Randomness-induced spin-liquid-like phase in the spin-$\frac{1}{2}$
  ${J}_{1}\ensuremath{-}{J}_{2}$ triangular heisenberg model.
\newblock {\em Phys. Rev. B}, 99:085141, 2019.

\bibitem{yu1997assouad}
B.~Yu.
\newblock Assouad, fano, and le cam.
\newblock In {\em Festschrift for Lucien Le Cam}, pages 423--435. Springer,
  1997.

\bibitem{Zalatel2020}
M.~P. Zaletel and F.~Pollmann.
\newblock Isometric tensor network states in two dimensions.
\newblock {\em Phys. Rev. Lett.}, 124:037201, 2020.

\bibitem{zeng2019quantum}
B.~Zeng, X.~Chen, D.-L. Zhou, and X.-G. Wen.
\newblock {\em Quantum information meets quantum matter}.
\newblock Springer, 2019.

\bibitem{zhang2020interpreting}
Y.~Zhang, P.~Ginsparg, and E.-A. Kim.
\newblock Interpreting machine learning of topological quantum phase
  transitions.
\newblock {\em Physical Review Research}, 2(2):023283, 2020.

\bibitem{zhang2017quantum}
Y.~Zhang and E.-A. Kim.
\newblock Quantum loop topography for machine learning.
\newblock {\em Phys. Rev. Lett.}, 118(21):216401, 2017.

\bibitem{zhang2017machine}
Y.~Zhang, R.~G. Melko, and E.-A. Kim.
\newblock Machine learning z 2 quantum spin liquids with quasiparticle
  statistics.
\newblock {\em Physical Review B}, 96(24):245119, 2017.

\bibitem{zhao2021fermionic}
A.~Zhao, N.~C. Rubin, and A.~Miyake.
\newblock Fermionic partial tomography via classical shadows.
\newblock {\em Physical Review Letters}, 127(11):110504, 2021.

\end{thebibliography}
\bibliographystyle{abbrv}

\end{document}